%% file: main.tex
\documentclass[11pt,a4paper,reqno]{amsart}

\usepackage[foot]{amsaddr}
\usepackage[textwidth = 2cm]{todonotes}
\usepackage{amsmath}
\usepackage{amsfonts}
\usepackage{amssymb}
\usepackage{amsthm}
\usepackage{xcolor}
\usepackage{hyperref}
\usepackage{cleveref}

\usepackage{aliascnt}
\usepackage{xparse} % 保险起见（新 LaTeX 一般自带）

\makeatletter
\let\AC@oldnewtheorem\newtheorem
\RenewDocumentCommand{\newtheorem}{s m o m o}{%
  \IfBooleanTF{#1}{%
    \AC@oldnewtheorem*{#2}{#4}%
  }{%
    \IfNoValueTF{#3}{%
      \IfNoValueTF{#5}{%
        \AC@oldnewtheorem{#2}{#4}%
      }{%
        \AC@oldnewtheorem{#2}{#4}[#5]%
      }%
    }{%
      \newaliascnt{#2}{#3}%
      \AC@oldnewtheorem{#2}[#2]{#4}%
      \aliascntresetthe{#2}%
    }%
  }%
}
\makeatother
\usepackage{verbatim}
\usepackage{dsfont}
\usepackage{fullpage}
\usepackage{microtype}
\usepackage[libertine]{newtxmath}
\usepackage{newtxtext}
\allowdisplaybreaks

\usepackage[tt=false]{libertine} %% tt is ugly 
\usepackage{caption}
\usepackage{bm}
\usepackage{bbm}

\hypersetup{colorlinks=true,citecolor=blue, linkcolor=blue, urlcolor=blue}
\usepackage[linesnumbered,boxed,ruled,vlined]{algorithm2e}
\usepackage{algorithmicx}
\usepackage[numbers]{natbib}
\usepackage{xcolor}
\usepackage{enumerate} 
\usepackage[shortlabels]{enumitem}
\usepackage{tabularx}
\usepackage{array}
\usepackage{pifont}
\usepackage{xifthen}
\usepackage{xstring}

\usepackage{stackengine}
\usepackage{scalerel}
\usepackage{graphicx}
\usepackage{float}
\usepackage{tikz}
\usepackage{adjustbox}
\usepackage{fancyhdr, comment, environ}

\usetikzlibrary{positioning}
\usetikzlibrary{tikzmark, fit, backgrounds}
\usetikzlibrary{matrix,fit,backgrounds}  % ← 必须包含 matrix 库！

\DeclareMathAlphabet{\mathcal}{OMS}{cmsy}{m}{n}
\newcolumntype{L}[1]{>{\raggedright\arraybackslash}p{#1}}
\newcolumntype{C}[1]{>{\centering\arraybackslash}m{#1}}
\newcolumntype{R}[1]{>{\raggedleft\arraybackslash}p{#1}}

\usepackage{makecell}
\usepackage{footnote}
\makesavenoteenv{tabular}

\newtheorem{theorem}{Theorem}[section]

\newtheorem*{claim*}{Claim}

\newtheorem{lemma}[theorem]{Lemma}

\theoremstyle{definition}
\newtheorem{definition}[theorem]{Definition}

\newtheorem*{remark*}{Remark}

\def\^#1{\mathbb{#1}} % Use \^A for \mathbb{A}
\def\*#1{\mathbf{#1}} % Use \*A for \mathbf{A}
\def\+#1{\mathcal{#1}} % Use \+A for \mathcal{A}
\def\-#1{\mathrm{#1}} % Use \-A for \mathrm{A}
\def\=#1{\boldsymbol{#1}} % Use \=A for \boldsymbol{A}
\def\!#1{\mathtt{#1}}
\def\@#1{\mathscr{#1}}

\newcommand{\abs}[1]{\left | #1 \right |}

\newcommand{\id}[1]{\ensuremath{\mathbbm{1}}\left[#1\right]}

\renewcommand{\Pr}[2][]{ \ifthenelse{\isempty{#1}}
  {\mathop{\mathbf{Pr}}\left[#2\right]} {\mathop{\mathbf{Pr}}_{#1} \left[#2\right]} }
\newcommand{\E}[2][]{ \ifthenelse{\isempty{#1}}
  {\mathbf{E}\left[#2\right]}
  {\mathbf{E}_{#1}\left[#2\right]} }

\newcommand{\diag}{\mathsf{diag}}
\newcommand{\perman}{\mathsf{per}}
\newcommand{\norm}[1]{\left\Vert#1\right\Vert}
\newcommand{\poly}{\textnormal{\textsf{poly}}}

\title{On the Efficiency of Sinkhorn-Knopp for\\ Entropically Regularized Optimal Transport}

\author{Kun He}
\address[Kun He]{Renmin University of China. \textnormal{E-mail: \url{hekun2023@ruc.edu.cn}}}

\begin{document}

\begin{abstract}
The Sinkhorn--Knopp (SK) algorithm is a cornerstone method for matrix scaling and entropically regularized optimal transport (EOT). Despite its empirical efficiency, existing theoretical guarantees to achieve a target marginal accuracy $\varepsilon$ deteriorate severely in the presence of outliers, bottlenecked either by the global maximum regularized cost $\eta\|C\|_\infty$ (where $\eta$ is the regularization parameter and $C$ the cost matrix) or the matrix's minimum-to-maximum entry ratio $\nu$. This creates a fundamental disconnect between theory and practice. 

In this paper, we resolve this discrepancy. For EOT, we introduce the novel concept of \textit{well-boundedness}, a local bulk mass property that rigorously isolates the well-behaved portion of the data from extreme outliers. We prove that governed by this fundamental notion, SK recovers the target transport plan for a problem of dimension $n$ in $O(\log n - \log \varepsilon)$ iterations, completely independent of the regularized cost $\eta\|C\|_\infty$. Furthermore, we show that a virtually cost-free pre-scaling step eliminates the dimensional dependence entirely, accelerating convergence to a strictly dimension-free $O(\log(1/\varepsilon))$ iterations.

Beyond EOT, we establish a sharp phase transition for general $(\boldsymbol{u},\boldsymbol{v})$-scaling governed by a critical matrix density threshold. We prove that when a matrix's density exceeds this threshold, the iteration complexity is strictly independent of $\nu$. Conversely, when the density falls below this threshold, the dependence on $\nu$ becomes unavoidable; in this sub-critical regime, we construct instances where SK requires $\Omega(n/\varepsilon)$ iterations. 

Technically, our analysis relies on two synergistic techniques. 
First, a novel discretization framework reduces general $(\={u},\={v})$-scaling for rectangular matrices to uniform $(\={1},\={1})$-scaling, unlocking square-matrix combinatorial tools such as the permanent. 
Second, we establish a strengthened \textit{structural stability} property, demonstrating that matrix entries and their permanent bounds are robustly controlled as the scaling orbit approaches the doubly stochastic limit. Working in tandem, these two techniques allow us to tightly track the underlying matrix dynamics and rigorously prove the fast convergence of the SK algorithm.
\end{abstract}

\maketitle

\setcounter{tocdepth}{1}

\tableofcontents
\pagebreak

\input{Introduction.tex}
\input{technicaltools.tex}

\input{Reduction.tex}
\input{Permanent.tex}

\input{target.tex}

\input{Expansity}
\input{Tightness}

% \addtocontents{toc}{\setcounter{tocdepth}{-11}}

\newpage

\section*{Acknowledgement}
We thank Hu Ding for the helpful discussion. We also thank the anonymous reviewers for their valuable suggestions.

% \newpage
% \addtocontents{toc}{\protect\setcounter{tocdepth}{1}}
\bibliographystyle{alpha}
\bibliography{refs}

\clearpage

% \addtocontents{toc}{\protect\setcounter{tocdepth}{12}}
% \addtocontents{toc}{\protect\setcounter{tocdepth}{11}}
% \addtocontents{toc}{\protect\setcounter{tocdepth}{2}}

\appendix
\input{MissingProofs}

\end{document}

%% file: Introduction.tex
\section{Introduction}\label{sec-intro}
\medskip
\noindent 
The problem of matrix scaling seeks strictly positive diagonal matrices $X$ and $Y$ such that, given a nonnegative matrix $A$, the scaled matrix $XAY$ attains prescribed row and column sums. This fundamental primitive arises ubiquitously across theory and applications: it serves as a preconditioner to improve the numerical stability of linear systems \cite{osborne1960pre}; acts as the core mechanism for enforcing marginal constraints in optimal transport \cite{altschuler2017near}; and remains a cornerstone technique in statistical normalization \cite{deming1940least}, image processing \cite{rubner2000earth}, and numerous other domains \cite{idel2016review}.

A venerable approach for matrix scaling
is the Sinkhorn--Knopp (SK) algorithm \cite{sinkhorn1967diagonal,sinkhorn1967concerning}, also known as RAS \cite{bacharach1965estimating} or iterative proportional fitting \cite{ruschendorf1995convergence}. It alternates row and column normalizations, steadily moving \(A\) toward the target margins; its simplicity and parallel-friendliness explain its wide adoption. A central question is the convergence rate. Broadly speaking, prior analyses of SK fall into two categories. The first studies nonasymptotic, finite-\(\varepsilon\) iteration complexity: given an error metric and a threshold \(\varepsilon\), how many iterations suffice to drive the error below \(\varepsilon\)? The second study linear convergence: once the iterates enter the asymptotic regime, do they contract geometrically, and what quantity determines the contraction factor? The former yields explicit stopping guarantees under concrete norms such as \(\ell_1\) and \(\ell_2\), whereas the latter explains the contraction mechanism of SK, often in Hilbert's projective metric or through local spectral information at the limiting scaling. Despite substantial progress, sharp finite-\(\varepsilon\) bounds under standard norms remain incomplete for general nonnegative inputs.

On the finite-\(\varepsilon\) side, we call \(A \in \mathbb{R}_{\ge 0}^{n \times m}\) \((\boldsymbol{u},\boldsymbol{v})\)-scalable if there exist positive diagonal matrices \(X,Y\) such that \(XAY\) has row sums \(\boldsymbol{u}\) and column sums \(\boldsymbol{v}\). For general nonnegative inputs, SK computes, for any \(\varepsilon>0\), an \(\varepsilon\)-approximate scaling with \(\ell_{1}\)-error at most \(\varepsilon\) in
$t = O\Bigl(h^2 \varepsilon^{-2}\log\bigl(\Delta\,\mu/\nu\bigr)\Bigr)$
iterations, and an \(\varepsilon\)-approximate scaling with \(\ell_{2}\)-error at most \(\varepsilon\) in $t = O\Bigl(\mu\,h\,\log\bigl(\Delta\,\mu/\nu\bigr)\bigl(\varepsilon^{-1}+\varepsilon^{-2}\bigr)\Bigr)$
iterations \cite{chakrabarty2021better}. Here \(h\) is the sum of the target row entries (equivalently column entries), \(\mu\) is the largest target entry, \(\Delta\) is the maximum number of nonzeros in any column of \(A\), and \(\nu\) is the ratio of the smallest positive entry of \(A\) to its largest. In the special case of \((\boldsymbol{1},\boldsymbol{1})\)-scaling (i.e., scaling to a doubly stochastic matrix), these bounds specialize to \(O\bigl(n^2\varepsilon^{-2}\log(\Delta/\nu)\bigr)\) 
for \(\ell_{1}\)-error and \(O\bigl(n\log(\Delta/\nu)(\varepsilon^{-1}+\varepsilon^{-2})\bigr)\) for \(\ell_{2}\)-error. 
Under the stronger assumption that the input is strictly positive, earlier work obtained faster finite-\(\varepsilon\) \(\ell_2\)-bounds of order \(O\bigl(\sqrt{n}\log(1/\nu)/\varepsilon\bigr)\) for the doubly stochastic case, later extended to general \((\boldsymbol{u},\boldsymbol{v})\)-scaling \cite{kalantari1993rate,kalantari2008complexity}.

Also within the finite-\(\varepsilon\) line, a recent result establishes a density-based phase transition for the SK algorithm applied to the $(\={1},\={1})$-scaling problem \cite{he2025phase}. For an \(n\times n\) matrix, its normalized version is obtained by dividing each entry by the largest entry in the matrix. We say that a normalized matrix has density \(\gamma\) if there exists a constant \(\rho>0\) such that one row or column has exactly \(\lceil \gamma n\rceil\) entries of value at least \(\rho\), while every other row and column has at least \(\lceil \gamma n\rceil\) such entries. When \(\gamma>1/2\), SK reaches \(\ell_1\)-error at most \(\varepsilon\) in \(O(\log n-\log\varepsilon)\) iterations. In contrast, for every \(\gamma<1/2\), there exist matrices of density \(\gamma\) for which SK requires \(\Omega(n/\varepsilon)\) iterations to achieve \(\ell_1\)-error at most \(\varepsilon\), and \(\Omega(\sqrt{n}/\varepsilon)\) iterations to achieve \(\ell_2\)-error at most \(\varepsilon\) for sufficiently small \(\varepsilon\).

A different line of work studies linear convergence itself rather than explicit iteration counts to reach a prescribed accuracy. On the global side, Birkhoff's contraction theory for Hilbert's projective metric \cite{birkhoff1957extensions} provides the underlying framework, and Franklin and Lorenz used it to prove geometric convergence of SK's alternating normalization for strictly positive matrices, with a contraction factor estimated explicitly from the input data \cite{franklin1989scaling}. On the local side, Soules proved linear convergence in the doubly stochastic setting under total support by analyzing the Jacobian at the limiting point \cite{soules1991rate}. Knight later made the asymptotic factor explicit: if \(A\) is fully indecomposable and the SK iterates converge to a doubly stochastic limit \(P\), then the scaling vectors contract asymptotically by a factor \((\sigma_2(P))^2\), where \(\sigma_2(P)\) is the second singular value of \(P\) \cite{knight2008sinkhorn}. These results clarify the asymptotic contraction law of SK, but they are conceptually distinct from finite-\(\varepsilon\) bounds stated directly in terms of \(\ell_1\) or \(\ell_2\) marginal error after a prescribed number of iterations.

On the algorithmic side, the SK algorithm is only one representative method for matrix scaling. A variety of alternative routes have been developed, often trading the simplicity of SK iterations for stronger global complexity guarantees under various condition measures. A prominent theoretical milestone is the work of \cite{Linial2000Deterministic}, which provides the first deterministic, strongly polynomial-time algorithmic framework for matrix scaling and leverages it to approximate the permanent.
More recently, fast optimization-based approaches have yielded near-linear time guarantees with respect to the number of nonzeros, $m$. For instance, \cite{cohen2017matrix} designs algorithms for matrix scaling and balancing using both box-constrained Newton-type methods and interior-point techniques. 
These achieve running times on the order of $\widetilde{O}(m\log\kappa\log^2(1/\varepsilon))$ and $\widetilde{O}(m^{3/2}\log(1/\varepsilon))$, respectively, where $\kappa$ captures the spread of the optimal scaling factors. 
Concurrently, \cite{allen2017much} gives algorithms with a total complexity of $\widetilde{O}(m+n^{4/3})$ under mild condition-number assumptions, such as the existence of quasi-polynomially bounded scalings.
Beyond optimization, spectral structure can also be exploited. \cite{kwok2019spectral} demonstrates that if the input instance exhibits a spectral gap, a natural gradient flow and its gradient descent discretization enjoy linear convergence, leading to sharper guarantees for structured instances like expander graphs. 
Finally, recent breakthroughs in almost-linear time maximum flow, minimum-cost flow, and broader convex flow objectives provide a powerful reduction-based toolkit \cite{chen2025maximum}. This framework explicitly encompasses matrix scaling and entropy-regularized optimal transport among its applications, further expanding the landscape of fast scaling algorithms beyond SK-style alternating normalizations.

\vspace{0.3cm}
\noindent \underline{\textbf{Entropically regularized optimal transport.}}
A canonical application of the SK algorithm is entropically regularized optimal transport (EOT). In EOT one solves
$$
\min_{P\in U(\=u,\=v)}\ \langle C,P\rangle-\eta^{-1}\,H(P),
$$
where $U(\=u,\=v)$ is the transportation polytope of nonnegative matrices with row sums $\=u$ and column sums $\=v$; $C$ is the cost matrix with entries $C_{ij}$ measuring the distance between source $i$ and target $j$; and $H(P)$ denotes the Shannon entropy of $P$ (e.g., $H(P)=-\sum_{i,j} P_{ij}(\log P_{ij}-1)$), with $\eta>0$ the regularization parameter.
The unique optimizer has the Gibbs–scaling form
\begin{align}\label{eq-solution-eot}
P^\ast=\operatorname{Diag}(\=X)\,K\,\operatorname{Diag}(\=Y),\qquad \text{with  }
K\triangleq \exp(-\eta C),
\end{align}
where the exponential function is applied element-wise, and $\=X,\=Y$ are some positive scaling vectors. Hence one can recover $P^\ast$ by scaling the kernel $K$ with the SK iterations. 

For a fixed regularization parameter $\eta$, one should first distinguish the complexity of computing the regularized optimizer $P^\ast$ itself from the end-to-end complexity of approximating unregularized OT. In the former sense, two rather different regimes are known. When the entries of $K$ admit an effective positive lower bound, the classical projective-metric analysis implies geometric convergence, so the number of iterations required to reach an $\ell_{1}$-projection accuracy $\varepsilon$ is logarithmic in $\varepsilon^{-1}$, with constants determined by the data through the lower bound on $K$ and the associated projective diameter \cite{franklin1989scaling}. A different line of work avoids writing the complexity directly in terms of $\min_{ij}K_{ij}$ and instead controls the decrease of KL-type potentials or of the EOT dual objective. 
In this view, for fixed $\eta$, SK reaches projection accuracy $\varepsilon$ in a number of iterations of order $O(R/\varepsilon)$, where $R$ denotes a bound on the dual amplitudes and typically scales like $\eta |C|_{\infty}$ up to logarithmic marginal terms \cite{dvurechensky2018computational,luo2023improved}. Although many results of this second type are embedded in analyses of additive-$\varepsilon$ approximation to unregularized OT, their inner-loop statements should first be read as genuine fixed-$\eta$ guarantees for solving the EOT projection problem.

A different question, and the one emphasized in most modern OT complexity papers, is how to choose $\eta$ and how accurately the regularized problem must be solved in order to obtain an additive $\varepsilon$-approximation to the unregularized OT cost; here $\varepsilon$ denotes cost accuracy rather than projection accuracy. The key issue is to balance the regularization bias, the inexact solution of the EOT subproblem, and the final rounding back to the transportation polytope. Starting from the near-linear-time regularize-then-round framework of \cite{altschuler2017}, SK-based bounds improved from $\widetilde O(n^{2}|C|_{\infty}^{3}\varepsilon^{-3})$ to $\widetilde O(n^{2}|C|_{\infty}^{2}\varepsilon^{-2})$ in \cite{dvurechensky2018computational}. For Greenkhorn, the same framework replaces full row/column sweeps by greedy single-coordinate updates, and the original $\widetilde O(n^{2}|C|_{\infty}^{3}\varepsilon^{-3})$ guarantee was sharpened to $\widetilde O(n^{2}|C|_{\infty}^{2}\varepsilon^{-2})$ in \cite{lin2019efficient,lin2022efficiency}; moreover, \cite{luo2023improved} showed that the same $\varepsilon^{-2}$ order already holds for vanilla SK and vanilla Greenkhorn, without modifying the marginals. Thus the literature on approximating unregularized OT by EOT is conceptually different from the fixed-$\eta$ theory above: the former optimizes the interplay among $\eta$, inner-loop accuracy, and rounding error, whereas the latter concerns only the cost of scaling a prescribed Gibbs kernel.

Beyond SK-type scaling, another important route solves EOT through smooth dual or saddle-point formulations by general accelerated first-order methods. This direction was initiated by the accelerated primal-dual gradient framework of \cite{dvurechensky2018computational}, which was designed precisely to improve on the $\varepsilon^{-2}$ accuracy dependence arising in SK-based OT approximation. Subsequent work developed accelerated primal-dual mirror-descent variants, clarified the correct complexity of the accelerated gradient route, and established essentially $\widetilde O(\varepsilon^{-1})$-type guarantees, including deterministic and stochastic variance-reduced methods \cite{lin2019efficient,lin2022efficiency,luo2023firstorder}. These algorithms are genuinely different from SK: instead of exploiting the explicit matrix-scaling structure of $K$, they solve the smooth EOT dual by generic accelerated first-order machinery, trading the simplicity and robustness of alternating normalization for a better dependence on the target accuracy.

\vspace{0.5cm}
While the SK algorithm performs strikingly well empirically, often producing high-quality approximations for EOT in only a few iterations \cite{dufosse2022scaling}, existing theoretical analyses fail to explain this efficiency. All current discrete complexity analyses for SK-type methods applied to EOT deteriorate severely with $\eta \|C\|_{\infty}$. On the one hand, in the classical positive-kernel/Hilbert-metric line, $K=\exp(-\eta C)$ is strictly positive, but after the standard normalization $\min_{i,j} C_{i,j}=0$, the relevant lower-bound parameter becomes $\nu=\min_{i,j}K_{i,j}=\exp(-\eta \|C\|_{\infty})$. Hence, as $\eta \|C\|_{\infty}$ grows, the contraction factor approaches one, and the resulting iteration bounds become exponentially large \cite{franklin1989scaling}. On the other hand, in the KL- or dual-descent line, the bounds are controlled by dual-radius quantities such as $R$, which scale linearly with $\eta \|C\|_{\infty}$ up to logarithmic marginal terms; thus these guarantees also blow up \cite{dvurechensky2018computational,lin2019efficient,lin2022efficiency,luo2023improved}. This pessimism was explicitly observed in the experiments of \cite{dvurechensky2018computational}.

Crucially, the regime $\eta \|C\|_{\infty}\gg 1$ is not a pathological corner case, but the structural norm in EOT. In practice, the regularization parameter $\eta$ is typically tuned to the bulk scale of the empirical cost distribution \cite{Cuturi2013,Benamou2015,PeyreCuturi2019}. Meanwhile, $\|C\|_\infty$ is dictated by extreme outliers, corrupted measurements, or hard feasibility constraints where forbidden pairs are assigned arbitrarily large penalties \cite{Courty2017, PeyreCuturi2019}. 
Because $\eta \|C\|_{\infty}$ is a fragile, global $L_\infty$ quantity governed by a single extreme entry, every presently known worst-case complexity bound eventually ceases to be informative in real-world EOT settings.

This severe disconnect highlights a fundamental flaw in the existing theory: the practical complexity of SK is not governed by the global maximum cost, but rather by robust, local ``bulk" properties. Even if a small fraction of pairs have enormous costs, each source and target point typically retains a nontrivial amount of probability mass on moderate-cost partners.
Fundamentally, this issue arises because the standard complexity bounds for SK are naturally expressed in terms of $\nu$, the ratio of the smallest positive entry of the nonnegative matrix $A$ to its largest. In the EOT setting, where the kernel matrix is $A = \exp(-\eta C)$, the parameter $\nu$ shrinks exponentially as $\eta\|C\|_\infty$ grows. This structural bottleneck is exactly what causes existing worst-case guarantees to deteriorate rapidly, blinding them to the algorithm's actual efficiency on the well-behaved ``bulk" of the matrix. This observation motivates two closely related questions that we study in this paper:
\begin{itemize}
\item Under mild assumptions, can one obtain a complexity bound for SK applied to EOT that is completely independent of $\eta\|C\|_\infty$?
\item More broadly, when is the complexity of SK genuinely governed by $\nu$, and when can the dependence on $\nu$ be removed altogether?
\end{itemize}

\subsection{Main Results}
In this paper, we resolve these open questions. First, we show that, under mild assumptions, the SK algorithm recovers the target transport plan for EOT in \(O(\log n-\log \varepsilon)\) iterations. Second, we establish a sharp phase transition that precisely characterizes when the complexity of SK is governed by $\nu$, and when it completely decouples from this parameter.

\vspace{0.3cm}
\noindent \underline{\textbf{Iteration Complexity for EOT.}}
To overcome the fragility of the global infinity norm $\eta\| C\|_\infty$ against extreme outliers, we introduce the concept of $(\rho,\kappa)$-well-boundedness. This notion formalizes the idea that a matrix is fundamentally well-behaved as long as the vast majority of its weighted mass is concentrated on moderate entries.
Let $\rho\geq 0,\kappa>0$, let $m,n\in\mathbb{Z}_{>0}$, and let $\={u}\in\mathbb{R}_{>0}^{m}$ and $\={v}\in\mathbb{R}_{>0}^{n}$ be positive weight vectors normalized so that $\|\={u}\|_{1}=\|\={v}\|_{1}=1$. For a matrix $A\in\mathbb{R}_{\ge 0}^{m\times n}$, we define the row and column bulk capacities as:
$$r_{\rho}(A;\={v}) \triangleq \min_{i\in[m]} \sum_{j\in[n]} v_j\,\id{A_{ij}\le \rho}, \qquad c_{\rho}(A;\={u}) \triangleq \min_{j\in[n]} \sum_{i\in[m]} u_i\,\id{A_{ij}\le \rho}.$$
We say that $A$ is $(\rho,\kappa)$-well-bounded with respect to $(\={u},\={v})$ if$$r_{\rho}(A;\={v})+c_{\rho}(A;\={u})\ge 1+\kappa.$$Here, $\rho$ and $\kappa$ are treated as constants independent of the problem dimensions. In words, this condition requires that the weighted fraction of moderate entries (bounded by $\rho$) in the worst-case row, combined with the corresponding fraction in the worst-case column, strictly exceeds 1 by a constant margin $\kappa$.
Crucially, rather than being bottlenecked by the global maximum $\eta\|C\|_{\infty}$, this condition depends strictly on the cumulative weight of bounded entries. This means that as long as the moderate entries carry sufficient mass to satisfy the $1+\kappa$ threshold, the remaining unbounded entries are permitted to be arbitrarily large without violating the definition. 
A particularly transparent sufficient condition for this property is:
$$r_{\rho}(A;\bar{v})\geq \frac{1+\kappa}{2}, \qquad c_{\rho}(A;\bar{u})\geq \frac{1+\kappa}{2}.$$
Under this condition, every row and column places strictly more than half of its weighted mass on entries bounded by $\rho$. This formulation is highly motivated by practical settings where the cost matrix inherently contains exceedingly large elements, yet the vast majority of its entries admit a tight upper bound. Such behavior typically emerges either because the cost matrix has been explicitly pre-normalized, or because its entries are sampled from underlying distributions that concentrate most of their mass within a bounded region. Consequently, the scaled cost matrix $\eta C$ can easily be $(\rho,\kappa)$-well-bounded for a moderate constant $\rho$ and a remarkably small constant $\kappa$, even if sparse anomalies or heavy-tailed noise cause the global supremum to diverge ($\|\eta C\|_{\infty} \to \infty$). This rigorously and safely separates the well-behaved ``bulk" of the data from extreme outliers, making $(\rho,\kappa)$-well-boundedness a far more realistic and accommodating assumption than standard uniform bounds.

We specify the input to the SK algorithm as a matrix scaling instance, denoted by the tuple $(A, (\boldsymbol{u}, \boldsymbol{v}))$, where $A$ is the matrix to be scaled, and $\boldsymbol{u}$ and $\boldsymbol{v}$ specify the target row and column marginals, respectively.
For the presentation of our main results regarding $(\boldsymbol{u}, \boldsymbol{v})$-scaling, we assume without loss of generality that the target vectors are normalized such that $\|\boldsymbol{u}\|_1 = \|\boldsymbol{v}\|_1 = 1$.\footnote{However, for analytical convenience, we will frequently relax this normalization in our proofs to accommodate general balanced targets where $\|\boldsymbol{u}\|_1 = \|\boldsymbol{v}\|_1 \neq 1$.}
For each matrix $A$ of size $m\times n$ and each $i\in [m],j\in [n]$, let $A_{i,j}$ denote the element in row $i$ and column $j$ of $A$, $r_i(A)$ denote $\sum_{k\in [n]} A_{i,k}$,
and $c_j(A)$ denote $\sum_{k\in [m]} A_{k,j}$.
Let $\=r(A)$ denote the vector $(r_1(A),\dots,r_m(A))$
and $\=c(A)$ denote the vector $(c_1(A),\dots,c_n(A))$.
The following theorem explains the efficiency of SK for EOT.

\begin{theorem}\label{thm-upper-bound-general-ot-uv}
Let $m, n \in \mathbb{Z}_{>0}$, and let $\boldsymbol{u} \in \mathbb{R}_{>0}^m$ and $\boldsymbol{v} \in \mathbb{R}_{>0}^n$ be strictly positive vectors such that $\|\boldsymbol{u}\|_1 = \|\boldsymbol{v}\|_1 = 1$. 
Given parameters $\rho \geq 0$ and $\varepsilon, \kappa \in (0,1]$, let $C \in \mathbb{R}_{>0}^{m \times n}$ be a matrix and $\eta > 0$ be a scalar such that $\eta C$ is $(\rho,\kappa)$-well-bounded with respect to  $(\boldsymbol{u},\boldsymbol{v})$.
Then, given $\bigl(\exp(-\eta C),(\boldsymbol{u},\boldsymbol{v})\bigr)$ as input, the SK algorithm outputs a matrix $A$ satisfying
\[\norm{\=r\left(A\right)- \=u}_{1} + \norm{\=c\left(A\right)- \=v}_{1} \leq \varepsilon\]
in
$O\left(\exp(14\rho)\cdot \kappa^{-6} \cdot\left(\rho + \log n - \log \varepsilon - \log \kappa \right)\right)$ iterations.
\end{theorem}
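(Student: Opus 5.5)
The plan is to translate well-boundedness of $\eta C$ into a density statement about the kernel $K=\exp(-\eta C)$, and then run the two announced techniques: discretization to the $(\boldsymbol{1},\boldsymbol{1})$ case, followed by permanent-based structural stability.

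\textbf{Step 1: from well-boundedness to a bulk lower bound.} Since $C>0$, every entry of $K$ lies in $(0,1)$, so $K$ is already normalized. Well-boundedness says that every row $i$ carries $\bar v$-mass at least $r_\rho$ on entries with $K_{ij}\ge e^{-\rho}$. Likewise every column carries $\bar u$-mass at least $c_\rho$ on such entries, and $r_\rho+c_\rho\ge 1+\kappa$. This is the weighted analogue of the density $\gamma>1/2$ condition of \cite{he2025phase}. The threshold here is $e^{-\rho}$, and the margin is $\kappa$.

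\textbf{Step 2: discretize to a square doubly stochastic problem.} Use the discretization framework of the paper (the reduction section). Approximate $\boldsymbol u,\boldsymbol v$ by rationals with common denominator $N$, and replicate row $i$ into $N u_i$ copies and column $j$ into $N v_j$ copies, each copy rescaled appropriately. The result is an $N\times N$ matrix $B$ whose $(\boldsymbol 1,\boldsymbol 1)$-SK orbit projects exactly onto the $(\boldsymbol u,\boldsymbol v)$-SK orbit of $K$, so marginal errors transfer with a constant factor.

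Under this replication, a weighted bulk fraction becomes an ordinary count. Every row of $B$ has at least $r_\rho N$ entries $\ge e^{-\rho}$ (relative to the maximum), and every column has at least $c_\rho N$. The requirement $r_\rho+c_\rho\ge 1+\kappa$ then forces every such bulk row-set and bulk column-set to overlap in at least $\kappa N$ positions. One subtlety is that $N$ may be huge, so every bound must depend only on $\kappa,\rho$ and $n$ (through a $\log n$ start-up term), never on $N$. A second subtlety is that rounding $\boldsymbol u,\boldsymbol v$ must cost only an $\varepsilon/2$ perturbation; this is handled by continuity, letting $N\to\infty$.

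\textbf{Step 3: control the first iterate and its permanent.} After one row normalization, every entry of $B$ is at most $1/(r_\rho N e^{-\rho})$. The bulk entries, after the subsequent column normalization, are at least of order $e^{-2\rho}/N$ times a constant.

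Combine the overlap property with a Hall/van der Waerden type argument on the bulk bipartite graph, which has minimum degree $(1+\kappa)N/2$ on average. This gives a lower bound $\perman(B^{(1)})\ge (\text{const}\cdot e^{-c\rho}\kappa^{c})^N\, N!/N^N$. Equivalently, the potential $\log\perman$ is within $N\cdot O(\rho-\log\kappa)$ of its maximum $\log(N!/N^N)$.

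\textbf{Step 4: iterate and conclude.} Each SK step multiplies the permanent by $\prod_i r_i^{-1}$, which gains a factor of order $\exp(\Omega(\|\boldsymbol r-\boldsymbol 1\|_2^2))$. This alone gives only $O(1/\varepsilon)$-type rates, so geometric contraction is needed. For that, invoke the structural stability property: along the orbit, bulk entries stay $\Omega(e^{-c\rho}\kappa^{c}/N)$ and total entry mass stays controlled, uniformly in $t$.

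With that uniform lower bound, the dense-bulk overlap gives an expansion, or spectral-gap, inequality: the deficit $\log(N!/N^N)-\log\perman$ is at most $\mathrm{poly}(e^{\rho},1/\kappa)$ times the per-step gain measured in relative error. Hence the deficit, and with it the $\ell_1$ error, contracts by a factor $1-\Omega(e^{-14\rho}\kappa^{6})$ per iteration. Starting from an initial error at most $O(\rho+\log n-\log\kappa)$ in logarithmic scale, reaching $\varepsilon$ takes the stated number of iterations.

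\textbf{Main obstacle.} The hardest part is the stability step: showing that the entries of the bulk stay bounded below throughout the orbit independently of $N$ and of the outlier entries. An outlier entry can be arbitrarily small, and its scaling factor could in principle blow up. My first attempt would be an argument on the potential: if some bulk entry dropped too low, the overlap property would force a large permanent deficit, contradicting the monotone increase of the permanent.
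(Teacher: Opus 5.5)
Your Step~1 matches the paper, whose proof of this statement is that step followed by Theorem~\ref{thm-rc-scaling-log}. The gap is in how you prove that theorem: Steps~2--4 rest on a false claim, and that claim sits exactly where the $\log n$ must enter.

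\textbf{The false claim.} For the $(\boldsymbol 1,\boldsymbol 1)$-orbit of the replicated matrix $B$ to project onto the $(\boldsymbol u,\boldsymbol v)$-orbit of $K$, the rescaling is forced: up to row scaling, block $(i,j)$ must carry $K_{ij}/v'_j$. Dividing by $v'_j$ deflates bulk entries in heavy columns relative to light ones. So $B$ is \emph{not} dense w.r.t.\ $(\boldsymbol 1,\boldsymbol 1)$ with $n$-free constants, and neither are its first iterates.

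Take the instance of Theorem~\ref{thm-tightness-lognterm}, which satisfies your Step~1 condition with $\rho=0$, $\kappa=1/10$ (its zeros are the outliers). With $L=n$, so $N=n$, apply one row and one column normalization. The lower-middle bulk block of $B^{(1)}$ then equals $20/(n(2n+15))=\Theta(1/N^2)$, and all other entries are at most $2/n$. Every permutation of nonzero weight must use exactly $n/10$ entries of that block. Hence $\perman(B^{(1)})\le (N!/N^N)\,c^N N^{-N/10}$, which refutes both the $\Omega(e^{-2\rho}/N)$ bulk bound and the permanent bound of Step~3.

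In the same instance the $\ell_1$ error stays above $1/6$ for $\Theta(\log n)$ iterations. So the uniform-in-$t$ stability and the per-iteration contraction of Step~4 also fail. Taken at face value, your Steps~3--4 would give an $n$-free bound; no step produces the $\log n$ you quote at the end. That would contradict the $\Omega(\log n-\log\varepsilon)$ lower bound of that theorem.

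\textbf{The missing idea.} Treat $B$ not as dense but as a diagonal scaling of a dense matrix, and pay for the stretch once. The paper writes $B=\mathcal D(\boldsymbol u')^{-1}D\,\mathcal D(\boldsymbol v')^{-1}$, where the plain replica $D$ (block $(i,j)$ filled with $K_{ij}$) is dense w.r.t.\ $(\boldsymbol 1,\boldsymbol 1)$ (Lemma~\ref{lemma-gamma-rho-dense-error}). It bounds only the one-sided stretch $h=\max_{i',j'}\frac{r_{i'}(B)\,D_{i'j'}}{r_{i'}(D)\,B_{i'j'}}=O(ne^{\rho}/r_\rho)$ (Theorem~\ref{thm-dynamic-dense}). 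The factor $n$ appears because $r_{i'}(B)\propto\sum_{j\in[n]}K_{ij}$ is an unweighted sum over the $n$ original columns, while $v'_j\le N$ and $r_{i'}(D)=\sum_j v'_jK_{ij}=\Omega(r_\rho e^{-\rho}N)$.

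Structural stability (Lemma~\ref{lem-structural-stability}) is then proved for the scaling orbit of $D$, and only at iterations outside the set $\mathcal K$ of iterations whose error exceeds a constant threshold. There, entrywise, $B^{(k)}_{i'j'}\le\delta\,\widehat D_{i'j'}\le\delta h\,B^{(0)}_{i'j'}$, with $\widehat D$ the row-normalized $D$. Hence $\perman(B^{(k)})\le(h\delta)^N\perman(B^{(0)})$. Combined with the $e^{\Omega(N\kappa^2)}$ permanent gain per large-error step, this gives $|\mathcal K|=O(\kappa^{-2}\log(h\delta))$, which is the $\log n$ term.

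Two further points distinguish this from your sketch:
\begin{enumerate}
\item The bound compares with the \emph{initial} permanent via entrywise domination. It needs no lower bound relative to $N!/N^N$, which in any case is the minimum permanent of a doubly stochastic matrix, not the maximum of your potential.
\item Geometric decay starts only after $\mathcal K$. It comes from the max/min column-sum contraction of Lemma~\ref{lem-max-accuracy-decrease}: every row has at least $\lceil r_\rho N\rceil$ entries above $\theta/N$, with w.l.o.g.\ $r_\rho>1/2$. It does not come from a deficit-versus-gain inequality, which your sketch leaves unproven.
\end{enumerate}
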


This theorem provides a rigorous theoretical justification for the practical efficiency of the SK algorithm in EOT. Specifically, for any given target marginals $\boldsymbol{u}$ and $\boldsymbol{v}$, regularization parameter $\eta$, and cost matrix $C$, if the scaled cost $\eta C$ is $(\rho,\kappa)$-well-bounded with respect to  $(\={u},\={v})$ for some constants $\rho,\kappa$, our result establishes that SK converges to the solution of \eqref{eq-solution-eot} in $O(\log n-\log \varepsilon)$ iterations.

Note that each iteration of SK requires \(O(n^2)\) time. Consequently, \Cref{thm-upper-bound-general-ot-uv} demonstrates that the algorithm runs in \(\tilde{O}(n^2)\) time, where the \(\tilde{O}\) notation suppresses logarithmic factors and constants that depend on \(n\) and \(\varepsilon\). This running time is optimal, as merely reading the input matrix already takes \(\Omega(n^2)\) time.

To contextualize the efficient iteration bound in \Cref{thm-upper-bound-general-ot-uv}, it is instructive to contrast it with existing complexity guarantees. As discussed earlier, previous analyses essentially present a strict trade-off. The classical projective-metric approach~\cite{franklin1989scaling} yields a fast $O(\log(1/\varepsilon))$ rate, but its implicit requirement of $\eta\|C\|_\infty=O(1)$ severely limits its applicability. Conversely, modern dual-descent analyses (e.g., \cite{dvurechensky2018computational,lin2019efficient,lin2022efficiency,luo2023improved}) accommodate growing $\eta\|C\|_\infty$, but are bottlenecked by a slow $O(\eta\|C\|_\infty/\varepsilon)$ polynomial dependence on the target accuracy. Consequently, both lines of guarantees eventually become vacuous when $\eta\|C\|_\infty$ is unbounded. 
Our analysis resolves this bottleneck by shifting the structural requirement from a uniform bound on $\eta\|C\|_\infty$ to the $(\rho,\kappa)$-well-boundedness of $\eta C$. This row/column-wise bulk condition is substantially weaker; while a uniform bound on $\eta\|C\|_\infty$ trivially implies $(\rho,\kappa)$-well-boundedness, the converse fails in general. By operating under this relaxed framework, our theorem successfully removes the restrictive $\eta\|C\|_\infty=O(1)$ assumption without sacrificing the optimal logarithmic dependence on accuracy. This robustly explains the algorithm's empirical efficiency even in regimes where $\eta\|C\|_\infty$ is arbitrarily large.

A natural question is whether the $O(\log n - \log \varepsilon)$ iteration bound in \Cref{thm-upper-bound-general-ot-uv} can be further improved to $O(\log(1/\varepsilon))$ by removing the dimensional dependence. As we demonstrate via the counterexample in \Cref{thm-tightness-lognterm}, this iteration complexity is actually tight for the standard SK algorithm. The necessity of the $\log n$ term arises because SK inherently distorts the structure of the kernel matrix $\exp(-\eta C)$ during early iterations as it aggressively scales the rows and columns to match the target marginals $\={u}$ and $\={v}$. To circumvent this bottleneck, we propose pre-scaling the matrix $\exp(-\eta C)$ with $\diag(\={u})$ and $\diag(\={v})$. 
With this simple pre-scaling step, we prove that SK converges in an accelerated, strictly dimension-free $O(\log(1/\varepsilon))$ iterations.

\begin{theorem}\label{thm-upper-bound-general-ot}
Assume the same conditions and notation as in \Cref{thm-upper-bound-general-ot-uv}.
With $(\diag(\=u) \cdot \exp(-\eta C) \cdot \diag(\=v),(\=u,\=v))$ as input, SK outputs a matrix $A$ satisfying 
\[\norm{\=r\left(A\right)- \=u}_{1} + \norm{\=c\left(A\right)- \=v}_{1} \leq \varepsilon\]
in $O\left(\exp(14\rho)\cdot \kappa^{-6}\cdot \left(\rho - \log \varepsilon - \log \kappa\right)\right)$ iterations.
\end{theorem}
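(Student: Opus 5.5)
The plan is to reduce \Cref{thm-upper-bound-general-ot} to the machinery behind \Cref{thm-upper-bound-general-ot-uv}, and to show that the pre-scaling removes the only source of the $\log n$ term. That source is the initial potential, or initial marginal error, of the first SK iterate. Write $K=\exp(-\eta C)$ and $B=\diag(\=u)K\diag(\=v)$. SK is invariant under diagonal rescaling of its input, in the sense that the scaling orbit of $B$ is contained in that of $K$. Hence all structural facts used for \Cref{thm-upper-bound-general-ot-uv} still apply along the orbit of $B$. These facts are the discretization reduction of $(\=u,\=v)$-scaling to $(\=1,\=1)$-scaling, the permanent bounds, and structural stability near the doubly stochastic limit. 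The only difference is the starting point. I would therefore organize the argument as a two-phase bound. First, a phase bound with a per-step contraction $1-\Omega(\exp(-14\rho)\kappa^{6})$ in a potential $\Phi$: the KL divergence to the target marginals, or the log of the permanent ratio in the discretized square picture. Second, an estimate $\Phi(\text{first iterate})=O(\rho-\log\kappa)$ for the pre-scaled input, replacing the $O(\rho+\log n)$ estimate that the unscaled input gives.

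The key steps, in order, are as follows. First, since $\eta C>0$ we have $0<K_{ij}\le 1$. Well-boundedness says that for every row $i$, $\sum_j v_j\id{\eta C_{ij}\le\rho}\ge r_\rho$, so $\sum_j K_{ij}v_j\ge r_\rho e^{-\rho}$; the same holds for columns with $c_\rho$. Because $r_\rho+c_\rho\ge 1+\kappa$ and each is at most $1$, both are at least $\kappa$. Second, in $B$ the row sums are $u_i\sum_j K_{ij}v_j\in[u_i\kappa e^{-\rho},u_i]$. After the first row normalization, the matrix $\diag(\=u)\tilde K\diag(\=v)$, with $\tilde K$ row-normalized against $\=v$, has entries $u_iv_j\tilde K_{ij}$ where $\tilde K_{ij}\le e^{\rho}/\kappa$. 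Thus every entry is within a factor $e^{\rho}/\kappa$ of the product measure $u_iv_j$, uniformly and with no dependence on $n$. The same holds after the column step. Third, in the discretized $(\=1,\=1)$ instance of size $N$, this entrywise comparison to the rank-one matrix $J/N$ bounds the log-permanent gap between the current iterate and its doubly stochastic limit by $O(N(\rho-\log\kappa))$ in total. Equivalently, it bounds the KL potential by $O(\rho-\log\kappa)$ on a normalized scale. This is in contrast with the unscaled $K$, where entries compared to $u_iv_j$ can be off by factors as large as $1/\min u_i$, which is where $\log n$ enters. Fourth, I would invoke the contraction lemma established for \Cref{thm-upper-bound-general-ot-uv}, which is used after the orbit enters the stable region. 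This region is guaranteed here from iteration one, since the entrywise bounds above are exactly the stability hypotheses. With it I would get $\Phi_t\le (1-c\,e^{-14\rho}\kappa^{6})^{t}\Phi_1$. Finally, Pinsker converts $\Phi_t\le\varepsilon^2$ into $\ell_1$ error at most $\varepsilon$, giving $t=O(e^{14\rho}\kappa^{-6}(\rho-\log\varepsilon-\log\kappa))$.

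I expect the main obstacle to be the third step. The contraction argument of \Cref{thm-upper-bound-general-ot-uv} must be checked to depend only on ratios of entries to $u_iv_j$, not on absolute entry sizes. More precisely, the $\log n$ there must come solely from the initial potential and not from the discretization granularity. If instead the discretization introduces $\log N$ through the permanent bound (for instance via $n!/n^n$ type factors), I would first try to cancel it by comparing permanents of the iterate and of its limit, where both carry the same factorial terms, rather than bounding each permanent absolutely.
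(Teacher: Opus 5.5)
Your diagnosis of where $\log n$ enters is right in spirit. In the paper it enters only through the stretch factor $h\le n/(\alpha\rho\gamma)$ of \Cref{thm-dynamic-dense}, which fixes the permanent budget $\perman(A^{(k)})/\perman(A^{(0)})\le (h\delta)^N$. Pre-scaling makes the discretized matrix itself dense with respect to $(\mathbf 1,\mathbf 1)$, so $h=1$.

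\textbf{Your third step argues in the wrong direction.} The estimate $\tilde K_{ij}\le e^{\rho}/\kappa$ is one-sided. Outlier entries of $\exp(-\eta C)$ can be arbitrarily small, so the first iterate is \emph{not} within a factor $e^\rho/\kappa$ of $u_iv_j$ from below. An entrywise upper bound on the first iterate only bounds its permanent from above. The potential you need is $\log\perman(\text{limit})-\log\perman(\text{first iterate})$, or $\mathrm{KL}(P^\ast\,\|\,\text{first iterate})$, which is the dual-radius quantity prior work bounds by $\eta\norm{C}_\infty$. Bounding it requires the reverse inequality: an entrywise upper bound on the limit, or on later iterates, in terms of the first iterate. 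Pairing your bound with van der Waerden for the limit, as your last paragraph suggests, only yields $\log\perman(\text{first iterate})-\log\perman(\text{limit})\le N(\rho-\log\kappa)$. That is vacuous, since the permanent increases along SK anyway. The missing ingredient is Item (c) of \Cref{lem-structural-stability}: every $A^{(k)}$ with $k\notin K$ satisfies $A^{(k)}_{i,j}\le\delta C_{i,j}$. Its proof uses the balanced rescaling $x_k=y_\ell$ and the density of $C$ (\Cref{lem-lower-bound-elements-new}). It does not follow from bounds on the first iterate.

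\textbf{Your fourth step invokes a contraction that is not available.} No lemma behind \Cref{thm-upper-bound-general-ot-uv} gives $\Phi_t\le(1-ce^{-14\rho}\kappa^{6})^{t}\Phi_1$ for a KL or log-permanent potential. Permanent bookkeeping alone cannot give one either. An iteration with $\ell_1$ error $N\varepsilon$ in the discretized instance raises $\log\perman$ by $-\sum_j\log c_j$, which can be as small as $\Theta(N\varepsilon^2)$. A bounded budget therefore only yields $O(\varepsilon^{-2})$ iterations. A linear rate would need an inequality relating the remaining gap to the per-step gain, which you do not supply.

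The paper uses the permanent only to bound $|K|$, the set of iterations whose error exceeds $\frac{9N}{10}(1-\frac{1}{\gamma+\gamma'})$; by \Cref{lemma-condition-upperbound}, $\log\perman$ grows by $\Omega(N\kappa^2)$ per element of $K$. It obtains the linear rate for a different quantity, $\max_j c_j(A^{(k)})-1$ or $(\min_j c_j(A^{(k)}))^{-1}-1$. This uses \Cref{lem-max-accuracy-decrease} together with Item (b) of \Cref{lem-structural-stability}.

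Relatedly, the stable regime is not entered at iteration one. All items of \Cref{lem-structural-stability} are proved only for $k\notin K$, meaning the errors at $k-2,k-1,k$ must already be below the threshold. The first iterate need not satisfy this when $\kappa$ is small. Pre-scaling shortens this first phase to $O(\kappa^{-2}(\rho-\log\kappa))$ steps but does not remove it.

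Given the earlier results, the statement is a direct corollary. The matrix $\exp(-\eta C)$ is at least $(r_\rho,c_\rho,e^{-\rho})$-dense with $r_\rho+c_\rho\ge 1+\kappa$. Then \Cref{thm-rc-scaling-prenormalization}, which is \Cref{thm-two-phases} with $h=1$ after discretization, gives the bound with $\rho^{-14}$ replaced by $e^{14\rho}$.
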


We remark that the matrix $A$ in \Cref{thm-upper-bound-general-ot} is precisely an approximate solution to the $(\={u}, \={v})$-scaling of $\exp(-\eta C)$.
Because the pre-scaling step incurs virtually no computational overhead yet strictly eliminates the $O(\log n)$ penalty, it translates to substantial performance gains in high-dimensional settings. We therefore advocate for its adoption as a standard preprocessing step in practical EOT implementations.

\vspace{0.3cm}
\noindent \underline{\textbf{Phase transition for $(\={u},\={v})$-scaling.}}
We further investigate when the iteration complexity of the SK algorithm is strictly governed by $\nu$, and under what conditions this dependence can be eliminated. Our analysis reveals a sharp phase transition in the behavior of SK under $(\={u}, \={v})$-scaling.

We first extend the definition of density for $(\=1,\=1)$-scaling in \cite{he2025phase} to $(\=u,\=v)$-scaling.

\begin{definition}[Density]\label{def-gamma-rho-dense}
Let $\gamma, \gamma', \rho \in (0,1],m,n\in \mathbb{Z}_{>0}$, and let $\boldsymbol{u} \in \mathbb{R}_{>0}^m$ and $\boldsymbol{v} \in \mathbb{R}_{>0}^n$ be positive weight vectors such that $\|\boldsymbol{u}\|_1 = \|\boldsymbol{v}\|_1$. Let $A \in \mathbb{R}_{\ge 0}^{m \times n}$ be a nonzero matrix with maximum entry $t \triangleq \max_{i,j} A_{i,j}$. 
We say $A$ is $(\gamma, \gamma', \rho)$-dense with respect to  $(\boldsymbol{u}, \boldsymbol{v})$
if:
$$ \gamma = \min_{i \in [m]}  \sum_{k \in [n]} \frac{v_k\,\id{A_{ij}>\rho t}}{\|\boldsymbol{v}\|_1} , \quad \text{and} \quad \gamma' =  \min_{j \in [n]} \sum_{k \in [m]} \frac{u_k\,\id{A_{ij}>\rho t}}{\|\boldsymbol{u}\|_1}. $$

We say $A$ is at least $(\gamma, \gamma', \rho)$-dense with respect to  $(\boldsymbol{u}, \boldsymbol{v})$ if the minimums above are bounded below by $\gamma$ and $\gamma'$, respectively (i.e., replacing the equalities with $\ge$).
We say $A$ is $(\gamma, \gamma')$-dense with respect to  $(\boldsymbol{u}, \boldsymbol{v})$ if it is $(\gamma, \gamma', \rho)$-dense for some $\rho \in (0,1]$.
Finally, we say $A$ is dense with respect to  $(\boldsymbol{u}, \boldsymbol{v})$ if it is at least $(\gamma, \gamma')$-dense for some parameters satisfying $\gamma + \gamma' > 1$.
\end{definition}

The $(\gamma, \gamma', \rho)$-density captures the pervasive distribution of significant entries within a matrix. The threshold $\rho$ identifies ``active" elements relative to the maximum entry, while $\gamma$ and $\gamma'$ guarantee a strict minimum weighted proportion of these elements in every row and column. When a matrix is structurally ``dense" ($\gamma + \gamma' > 1$), these guaranteed minimums force a strong overlap of active entries, ensuring the matrix is highly interconnected and resistant to partitioning.

Fix two positive vectors $\boldsymbol{u}$ and $\boldsymbol{v}$. A pair $(\gamma, \gamma')$ with $\gamma, \gamma' \in (0,1]$ is said to be feasible if there exists a $(\gamma, \gamma')$-dense matrix with respect to  $(\boldsymbol{u}, \boldsymbol{v})$. Pairs that do not satisfy this condition are considered immaterial for the given $(\boldsymbol{u}, \boldsymbol{v})$ and are excluded from our analysis.

Given any nonnegative matrix $A$, define 
\begin{align}
\nu(A) \triangleq \frac{\min_{A_{i,j} > 0} \left( A_{i,j} / r_i(A) \right)}{\max_{i,j} \left( A_{i,j} / r_i(A) \right)}.
\end{align}
Intuitively, $\nu(A)$ measures the effective ratio between the smallest positive element and the largest element of $A$, independent of arbitrary row scaling. We normalize each entry by its row sum, $r_i(A)$, because the SK algorithm begins with row normalization and is therefore invariant to the absolute scale of individual rows. By pre-normalizing, $\nu(A)$ avoids being artificially skewed by heavily scaled rows, accurately capturing the true dynamic range of the matrix exactly as the algorithm perceives it.

Our results about the phase transition of SK for $(\={u},\={v})$-scaling are as follows.
Fix any $\=u,\=v$.
We will show that:
% \footnote{Fix any two positive vectors $\boldsymbol{u}$ and $\boldsymbol{v}$. Given any $\gamma,\gamma'\in(0,1]$, we assume throughout that $\gamma$ and $\gamma'$ are chosen so that there exist a matrix $A$ and some $\rho\in(0,1]$ for which $A$ is $(\gamma,\gamma',\rho)$-dense with respect to  $(\boldsymbol{u},\boldsymbol{v})$. Otherwise, such choices of $\gamma,\gamma'$ are immaterial for the given $(\boldsymbol{u},\boldsymbol{v})$ and will not be considered.}
\begin{enumerate}
\item Super-critical regime $\gamma+\gamma'>1$: If the matrix $A$ is $(\gamma,\gamma')$-dense with respect to $(\boldsymbol{u},\boldsymbol{v})$, then SK converges in $O(\log n - \log \varepsilon)$ iterations. In this regime, the complexity is fundamentally independent of $\nu(A)$. \label{phase-transition-item-first}

\item Sub-critical regime $\gamma+\gamma'<1$: For any feasible $\gamma,\gamma'$, there exists some matrix $A$ which is $(\gamma,\gamma')$-dense with respect to $(\boldsymbol{u},\boldsymbol{v})$ such that SK takes $\Omega(-\log \nu(A) - \log \varepsilon)$ iterations to converge. 
By constructing hard instances where $\nu(A) = \exp(-\Theta(n/\varepsilon))$, this lower bound translates to $\Omega(n/\varepsilon)$ iterations, heavily penalizing matrices with extreme dynamic ranges.\label{phase-transition-item-second}

\item Critical boundary $\gamma+\gamma'=1$: At the exact phase transition threshold, the dependence on $\nu(A)$ can still be circumvented for certain targets. Specifically, there exist target marginals $\boldsymbol{u}, \boldsymbol{v}$ such that for any matrix $A$ which is $(\gamma,\gamma')$-dense with respect to $(\boldsymbol{u},\boldsymbol{v})$, SK converges in $O(1/\varepsilon)$ iterations, remaining strictly independent of $\nu(A)$.
Furthermore, it has been proved that there exist some $\=u,\=v,A$ where $A$ is \((1/2,1/2)\)-dense with respect to  $(\boldsymbol{u},\boldsymbol{v})$ such that SK converges in $\Omega(1/\varepsilon)$ iterations~\cite{kalantari1993rate}.
Thus, the time complexity $O(\log n - \log \varepsilon)$ for the regime $\gamma+\gamma'>1$ cannot be extended to  $\gamma+\gamma'=1$.
\end{enumerate}

Our results establish a sharp phase transition in the iteration complexity of the SK algorithm, governed by the matrix density parameters $\gamma$ and $\gamma'$. The critical threshold $\gamma + \gamma' = 1$ separates universally efficient, structure-independent convergence from regimes susceptible to extreme computational degradation. Above this threshold, the algorithm exhibits rapid convergence entirely oblivious to the structural parameter $\nu(A)$. Conversely, in the sub-critical regime where $\gamma + \gamma' < 1$, this guarantee collapses. We prove the existence of matrices satisfying this looser density condition for which the SK algorithm slows down drastically, heavily depending on the structural parameter $\nu(A)$. 
Since $\nu(A)$ can take on arbitrarily small values, the number of iterations required by the algorithm suffers from an arbitrarily poor dependence on $n$ and $\varepsilon$.

We further show that our phase transition results are tight in the following aspects:
\begin{itemize}
\item There exist some $\=u,\=v$
$\gamma,\gamma',A$ where
$\gamma+\gamma'>1$ and $A$ is \((\gamma,\gamma')\)-dense with respect to  $(\boldsymbol{u},\boldsymbol{v})$ such that SK converges in $\Theta(\log n - \log \varepsilon)$ iterations. 
Thus, the time complexity $O(\log n - \log \varepsilon)$ in (\ref{phase-transition-item-first}) is tight.
\item There exist some $\=u,\=v$ such that for any feasible $\gamma,\gamma'$ with $\gamma+\gamma' <1$, any matrix which is \((\gamma,\gamma')\)-dense and $(\=u,\=v)$-scalable, SK converges in $O\left(-\log \nu (A) - \log \varepsilon\right)$ iterations. 
Thus, the time complexity $\Omega(-\log \nu(A) - \log \varepsilon)$ in (\ref{phase-transition-item-second}) is tight.
% \item It has been proved that there exist some $\=u,\=v,A$ where $A$ is \((1/2,1/2)\)-dense with respect to  $(\boldsymbol{u},\boldsymbol{v})$ such that SK converges in $\Omega(\varepsilon)$ iterations~\cite{kalantari1993rate}.
% Thus, the time complexity $O(\log n - \log \varepsilon)$ cannot be extended to the regime $\gamma+\gamma'=1$.
% Hence, the condition $\gamma+\gamma'>1$ in (\ref{phase-transition-item-first}) is tight.
\end{itemize}

Our phase transition results for $(\={u},\={v})$-scaling exhibit a striking difference from the $(\={1},\={1})$-scaling results established in \cite{he2025phase}. Fundamentally, the two transitions characterize different aspects of the algorithm's complexity. The phase transition identified in \cite{he2025phase} focuses on $(\={1},\={1})$-scaling under the assumption that the matrix is not overly extreme (i.e., $1/\nu(A) = O(\mathrm{poly}(n, 1/\varepsilon))$), determining when the iteration complexity shifts between a fast $O(\log n - \log \varepsilon)$ rate and a slow $\Omega(1/\varepsilon)$ rate based on matrix density. In contrast, our phase transition for general $(\={u},\={v})$-scaling explores a different dimension: the exact conditions under which the iteration count of SK is inherently dictated by the lower-bound parameter $\nu(A)$, and when it completely decouples from it.

Beyond this conceptual distinction, the algorithmic dynamics in these two settings contrast sharply. For $(\={1},\={1})$-scaling, even within the class of polynomially bounded matrices, a strong lower bound of $\Omega(1/\varepsilon)$ persists if the matrix density falls below a critical threshold. In stark contrast, our analysis reveals a unique structural phenomenon within general $(\boldsymbol{u},\boldsymbol{v})$-scaling: the density bottleneck can be fundamentally bypassed by certain target distributions. Specifically, there exist specific target marginals $\={u}$ and $\={v}$ that induce an extremely rapid mass flow across different regions of the matrix. Driven by this efficient mass transport, the SK algorithm can achieve a fast $O(\log n - \log \varepsilon)$ convergence rate, provided the matrix satisfies the same baseline condition $1/\nu(A) = O(\mathrm{poly}(n, 1/\varepsilon))$. Thus, our results highlight that the strict density limitations inherent to $(\={1},\={1})$-scaling are not an absolute barrier for the SK algorithm; as long as the matrix is not pathologically extreme, there exist specific $(\={u},\={v})$ configurations that unlock rapid mass flow and guarantee highly efficient convergence (see Theorems \ref{thm-lower-bound-main-rc-tight} and \ref{thm-tight-rc-scaling}).

The $\log n$ term in the time complexity of (\ref{phase-transition-item-first}) arises because SK distorts the dense structure of matrix $A$ as it scales the row and column sums to match $\boldsymbol{u}$ and $\boldsymbol{v}$. This distortion can be eliminated by pre-scaling $A$. Specifically, rather than using $(A, \boldsymbol{u}, \boldsymbol{v})$ as the input for SK, we can use the pre-scaling input $(\mathrm{diag}(\boldsymbol{u}) \cdot A \cdot \mathrm{diag}(\boldsymbol{v}), \boldsymbol{u}, \boldsymbol{v})$.
We further show that for any $\gamma,\gamma'$ where
$\gamma+\gamma'>1$, if the matrix $A$ is \((\gamma,\gamma')\)-dense with respect to  $(\boldsymbol{u},\boldsymbol{v})$, then SK converges in $O(\log \varepsilon)$ iterations with input $(\diag(\=u)\cdot A \cdot \diag(\=v),\=u,\=v)$.
Our analysis reveals that pre-scaling prevents the target vectors $\boldsymbol{u}$ and $\boldsymbol{v}$ from severely distorting the structure of $A$, thereby accelerating the convergence of the SK algorithm by shaving off a $\log n$ term. 
Given that this preprocessing step incurs negligible computational overhead while the $\log n$ reduction yields substantial efficiency gains on massive datasets, it constitutes a highly practical addition to existing algorithmic pipelines. 
This simple modification is particularly beneficial when the target marginals are highly skewed, containing elements of drastically varying magnitudes.

In the following, we introduce above results formally.
\begin{theorem}\label{thm-rc-scaling-log}
Let $\gamma,\gamma',\rho,\varepsilon \in (0,1],m, n \in \mathbb{Z}_{>0}$. Let $\boldsymbol{u} \in \mathbb{R}_{>0}^m$ and $\boldsymbol{v} \in \mathbb{R}_{>0}^n$ be vectors satisfying $\norm{\boldsymbol{u}}_1 = \norm{\boldsymbol{v}}_1 = 1$, and 
$B \in \mathbb{R}_{\ge 0}^{m \times n}$ be a \((\gamma,\gamma',\rho)\)-dense matrix with respect to  $(\boldsymbol{u},\boldsymbol{v})$. 
If $\gamma+\gamma'>1$, then with $(B,(\boldsymbol{u}, \boldsymbol{v}))$ as input, SK can output a matrix $A$ satisfying 
\[\norm{\=r\left(A\right)- \boldsymbol{u}}_{1} + \norm{\=c\left(A\right)- \boldsymbol{v}}_{1} \leq \varepsilon.\]
in $O\left(\rho^{-14} \cdot\left(\gamma+\gamma'-1\right)^{-6}\left(\log n - \log \varepsilon - \log \rho - \log (\gamma+\gamma'-1)\right)\right)$ iterations.
\end{theorem}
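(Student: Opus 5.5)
Following the route announced in the abstract, I would prove \Cref{thm-rc-scaling-log} in three stages. First discretize the $(\boldsymbol{u},\boldsymbol{v})$-scaling instance into a square $(\boldsymbol{1},\boldsymbol{1})$-instance. Then run a permanent-based potential argument on that square instance, using a structural stability lemma to keep the density alive along the SK orbit. Finally transfer the error bound back to the original instance.

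\emph{Reduction.} Approximate $\boldsymbol{u},\boldsymbol{v}$ by rationals with common denominator $N$, so that $u_i\approx a_i/N$ and $v_j\approx b_j/N$ with $\sum_i a_i=\sum_j b_j=N$. The rational-approximation error is absorbed into $\varepsilon$ by a continuity argument, since SK iterates depend continuously on the targets for a fixed number of steps. Blow up row $i$ into $a_i$ identical copies and column $j$ into $b_j$ identical copies, each copy of entry $B_{ij}$ being divided by $b_j$, so that the row normalizations match.

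One checks that SK on the $N\times N$ blown-up matrix with $(\boldsymbol{1},\boldsymbol{1})$ targets commutes with the blow-up. It therefore reproduces exactly the SK iterates for $(B,(N\boldsymbol{u},N\boldsymbol{v}))$, and the $\ell_1$ marginal errors correspond up to the factor $N$. Under this map, the weighted density conditions become unweighted: every row of the big matrix has at least $\gamma N$ entries exceeding $\rho$ times a row-dependent maximum, and every column has at least $\gamma' N$ such entries. Because the per-column division by $b_j$ distorts the entry magnitudes, this is where the $\log n$ (rather than $\log N$) dependence and the extra powers of $\rho$ will enter.

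\emph{Square analysis.} For the $N\times N$ matrix I would use the permanent potential from the $(\boldsymbol{1},\boldsymbol{1})$ theory (as in \cite{he2025phase}). Each SK step multiplies $\perman$ by $\prod_i r_i^{-1}\ge 1$, and by AM--GM the gain is $\exp(\Omega(\|\boldsymbol{r}-\boldsymbol{1}\|_2^2))$ when the error is small. The limit doubly stochastic matrix has permanent at most $1$. So I need a lower bound on the permanent after the first normalization, and a bound showing the $\ell_2$ error controls the $\ell_1$ error.

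Density $\gamma+\gamma'>1$ forces every pair of a row and a column to share at least $(\gamma+\gamma'-1)N$ common heavy positions. This yields a Hall-type expansion and hence, via van der Waerden/Bregman-style bounds, $\perman\ge (\rho(\gamma+\gamma'-1))^{O(N)}e^{-N}$. That alone gives $O(N)$ iterations. To get logarithmic dependence, I would instead prove linear contraction: show that the error decreases by a constant factor $1-\Omega(\rho^{14}(\gamma+\gamma'-1)^6)$ per step. The mechanism is that the overlapping dense structure makes the bipartite graph of heavy entries a strong expander. The SK update then acts like a Markov operator with a spectral gap of order $\rho^{O(1)}(\gamma+\gamma'-1)^{O(1)}$, in the spirit of Knight's $\sigma_2(P)^2$ rate \cite{knight2008sinkhorn}.

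\emph{Main obstacle: structural stability.} The spectral gap must persist along the entire orbit, not just at the start. I would show that every iterate remains $(\gamma,\gamma',\rho^{c})$-dense in the appropriate scaled sense, meaning the scaling vectors $X,Y$ never vary by more than a $\rho^{-O(1)}$ factor across heavy entries. The plan is to bound the ratios $X_i/X_{i'}$ through a common heavy column: since rows $i$ and $i'$ share many heavy columns and their row sums are nearly $u_i$ and $u_{i'}$, the two scalings are comparable. Symmetrically for columns.

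This makes the contraction uniform over all iterations. The initial error is at most $\operatorname{poly}(n/\rho)$ after the first two normalizations, which accounts for the $\log n-\log\rho$ term. Combining this with the per-step contraction gives the stated $O\bigl(\rho^{-14}(\gamma+\gamma'-1)^{-6}(\log n-\log\varepsilon-\log\rho-\log(\gamma+\gamma'-1))\bigr)$ bound. The exponents $14$ and $6$ would come from tracking the powers of $\rho$ and $\gamma+\gamma'-1$ lost in the stability and gap estimates.
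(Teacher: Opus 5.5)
Your reduction is the paper's: dividing only by $b_j$ is equivalent to dividing by $a_ib_j$, because SK starts with a row normalization, and your continuity step is \Cref{thm-reduction-precision}. The gap is in the square analysis.

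The blown-up input is not dense. After the first row normalization it equals $\mathrm{diag}(\cdot)\,C\,\mathcal{D}(\boldsymbol{v}')^{-1}$, where $C$ is the row-normalized dense matrix $\mathcal{D}(\boldsymbol{u}')G\,\mathcal{D}(\boldsymbol{v}')$. So, measured against $C$, the column scaling factors start with spread $\max_j v'_j/\min_j v'_j$, which is not bounded in terms of $n$ and $\rho$. Your claim that every iterate stays dense, with scaling factors varying by only $\rho^{-O(1)}$ across heavy entries, therefore fails in general on the early part of the orbit. It can only hold once the iterate is already close to doubly stochastic. Your justification (row sums ``nearly $u_i$ and $u_{i'}$'') presupposes exactly that, so it is circular.

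The contraction step inherits the same problem. A Knight-style spectral-gap rate is only asymptotic at the limit. The non-asymptotic mechanism the paper uses, \Cref{lem-max-accuracy-decrease}, needs every row of the current iterate to have more than $N/2$ entries of size $\Omega(1/N)$, which need not hold during the transient. Separately, two rows need not share any heavy column when $\gamma\le 1/2$. For example, take one row heavy on the first $0.4N$ columns, another heavy on the last $0.4N$, and all other rows fully heavy. The paper only uses that $\max\{\gamma,\gamma'\}>1/2$.

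What is missing is an $N$-independent bound on the number of iterations before the orbit enters the stable regime. The paper obtains it from the permanent:
\begin{itemize}
\item Let $K$ be the set of iterations at which one of the last three $\ell_1$ errors exceeds $\frac{9N}{10}\bigl(1-\frac{1}{\gamma+\gamma'}\bigr)$. Over these iterations the permanent grows by at least $\exp\bigl(\Omega(N(\gamma+\gamma'-1)^2)\,(|K|-2)\bigr)$ (\Cref{lemma-condition-upperbound}).
\item Structural stability is proved only for $k\notin K$, and it includes the new entrywise bound $A^{(k)}\le\delta C$. Combined with $A^{(0)}\ge C/h$, this caps the total growth at $(h\delta)^N$ at the iterate following the last element of $K$.
\item Hence $|K|=O\bigl((\gamma+\gamma'-1)^{-2}\log(h\delta)\bigr)$, and $h=O(n/(\rho\gamma))$ by \Cref{thm-dynamic-dense}. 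In the paper, this is where $\log n$ enters, not through the size of the initial error.
\end{itemize}
Only for $k\notin K$ does the stability-plus-contraction phase run. It produces the factor $\rho^{-14}(\gamma+\gamma'-1)^{-6}$ through $\theta$ in \Cref{lem-structural-stability} and the inequality $\gamma-\tfrac12\ge\tfrac12(\gamma+\gamma'-1)$.

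Your permanent idea pointed the right way. But you lower-bounded the permanent of the dense matrix rather than of the stretched input, which costs a factor $h^{-N}$. You then discarded it instead of using it to count the large-error iterations. As written, nothing in your proposal shows that the iterates reach the regime where your stability and contraction claims apply within $O(\log n)$ steps, and the exponents $14$ and $6$ are asserted rather than derived.
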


The theorem above implies that the SK algorithm converges in \(O(\log n-\log \varepsilon)\) iterations and runs in \(\tilde{O}(n^{2})\) time for constant \(\rho,\gamma,\gamma'\).
This is optimal, since merely reading the input matrix already requires \(\Omega(n^{2})\) time.

One might ask whether a stronger upper bound can be proved when \(\gamma+\gamma'>1\). The next theorem shows this is impossible: 
for some specific matrix and target marginals, the bound \(O(\log n - \log \varepsilon)\) in \Cref{thm-rc-scaling-log} is tight.

\begin{theorem}\label{thm-tightness-dense-complexity}
There exist positive vectors $\boldsymbol{u}\in \mathbb{R}_{>0}^m,\boldsymbol{v}\in \mathbb{R}_{>0}^n$ with 
$\norm{\boldsymbol{u}}_1 = \norm{\boldsymbol{v}}_1 = 1$, feasible $\gamma,\gamma'$ with $\gamma + \gamma' > 1$, and a $(\gamma, \gamma')$-dense, $(\boldsymbol{u}, \boldsymbol{v})$-scalable matrix such that, given this matrix and $(\=u,\=v)$ as input, 
SK takes $\Omega(\log n - \log \varepsilon)$ iterations to output a matrix $A$ satisfying 
\[\norm{\=r\left(A\right)- \boldsymbol{u}}_{1} + \norm{\=c\left(A\right)- \boldsymbol{v}}_{1} \leq \varepsilon.\]
\end{theorem}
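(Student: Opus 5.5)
The lower bound has two separate sources: a $\log(1/\varepsilon)$ part and a $\log n$ part. It suffices to produce one family of instances, indexed by $n$, in which both are visible, since $\max\{\log n,\log(1/\varepsilon)\}=\Omega(\log n-\log\varepsilon)$. I would build two gadgets and glue them as disjoint diagonal blocks. Each block gets its own share of the marginals, and the off-diagonal entries are set to moderate values so that the glued matrix is still $(\gamma,\gamma')$-dense with $\gamma+\gamma'>1$. Alternatively one can argue that the two phenomena occur in a single matrix.

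\emph{The $\log(1/\varepsilon)$ part.} Take a $2\times 2$ strictly positive block such as $\begin{pmatrix}2&1\\1&1\end{pmatrix}$ with uniform targets. It is $(1,1,1/2)$-dense.
\begin{itemize}
\item SK preserves the cross ratio $A_{11}A_{22}/(A_{12}A_{21})=2$.
\item After each row normalization the iterate is therefore determined by one scalar, namely its $(1,1)$ entry.
\item One full SK sweep acts on this scalar as an explicit Möbius map, whose fixed point is the doubly stochastic limit.
\item The derivative of this map at the fixed point lies strictly in $(0,1)$ and is nonzero; it is essentially $\sigma_2(P)^2$, in line with \cite{knight2008sinkhorn}.
\item The map is injective, so the orbit never reaches the fixed point exactly, and the deviation obeys $|x_t-x^\ast|\ge c\lambda^t$ for some constant $\lambda\in(0,1)$.
\item The column-sum $\ell_1$ error is proportional to this deviation, so $\Omega(\log(1/\varepsilon))$ iterations are needed.
\end{itemize}

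\emph{The $\log n$ part.} Here I exploit that density only constrains the \emph{heavy} entries: a $(1-\gamma)$ fraction of the weight in each row may sit on entries as small as $n^{-k}$.
\begin{itemize}
\item Construct the block so that one special column $j^\ast$ has a constant target $v_{j^\ast}$, but its heavy entries lie only in rows of small total $u$-mass.
\item Its remaining entries sit in the bulk rows and have size $\delta=n^{-k}$.
\item To meet $v_{j^\ast}$, SK must inflate the scaling factor $y_{j^\ast}$ until the tiny entries carry constant mass, i.e.\ by a factor of order $1/\delta$ up to constants.
\item Between consecutive iterations, each row and column scaling factor changes by at most a constant multiplicative factor that depends only on $(\gamma,\gamma',\rho)$. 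This needs a lemma.
\item Hence until $\Omega(\log(1/\delta))=\Omega(\log n)$ iterations have passed, column $j^\ast$ has sum bounded away from $v_{j^\ast}$, so the $\ell_1$ error stays at least a constant.
\end{itemize}

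\emph{Main obstacle: the bounded-multiplicative-change lemma.} This is where I expect the difficulty. A row rescaling multiplies row $i$ by $u_i/r_i$, so I must show that all row sums $r_i$ stay within constant factors of $u_i$ during the early phase. The plan for proving it:
\begin{itemize}
\item Rows avoiding $j^\ast$ have this property automatically after every column step, by density, once the other columns are near target.
\item The effect of $j^\ast$ on any row is bounded by its current mass, which is only $O(1)$.
\end{itemize}
Carrying this out carefully is most of the work. If it proves too delicate, I would fall back on a simplified instance: make every row except a few proportional, so the dynamics reduce to a scalar recursion for $y_{j^\ast}$ of the form $y\mapsto y\cdot\Theta(1)$, which can be analyzed exactly. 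Finally, I would verify the remaining conditions on the glued instance: feasibility of $(\gamma,\gamma')$, $(\boldsymbol{u},\boldsymbol{v})$-scalability (the matrix has full support), and $\gamma+\gamma'>1$.
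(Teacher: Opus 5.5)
Your $\log(1/\varepsilon)$ gadget is sound: a fixed $2\times 2$ positive block converges linearly at a nonzero rate. The $\log n$ half, however, rests on a mechanism that cannot operate in the dense regime. There are three problems.

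\textbf{Density already rules out your picture.} Your picture is a constant-target column fed mainly by tiny entries. Suppose some row has a non-heavy entry in column $j^\ast$. That row's heavy $\boldsymbol{v}$-mass is then at most $1-v_{j^\ast}$, so $\gamma\le 1-v_{j^\ast}$. Hence $\gamma+\gamma'>1$ forces the heavy entries of column $j^\ast$ to occupy rows of $\boldsymbol{u}$-mass at least $\gamma'>v_{j^\ast}$.

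More decisively, the paper's structural stability (item~\ref{lemma-structure-stablility-two} of \Cref{lem-structural-stability}, via \Cref{lem-lower-bound-elements-new}) applies to any iterate of the reduced $(\boldsymbol{1},\boldsymbol{1})$-instance that is near doubly stochastic. For such an iterate, every entry is at most a constant depending only on $(\rho,\gamma,\gamma')$ times the corresponding row-normalized entry of the dense matrix. Translated back, your entries of size $\delta=n^{-k}$ carry only $O(\delta)$ of column $j^\ast$'s mass near the limit. So SK never has to inflate $y_{j^\ast}$ by a factor $1/\delta$; column $j^\ast$ is fed by its heavy entries.

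\textbf{Your argument contradicts \Cref{thm-rc-scaling-log}.} Nothing in it uses $\delta=n^{-k}$ rather than, say, $\delta=e^{-n}$. With $\delta=e^{-n}$ it would give $\Omega(n)$ iterations on an instance with the same $(\gamma,\gamma',\rho)$. The paper's dense-regime bound is $O(\log n-\log\varepsilon)$, independent of $\nu$.

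\textbf{The bounded-multiplicative-change lemma is false as stated.} Take the all-ones $m\times n$ matrix with $\boldsymbol{u}$ uniform and one column of target $1/2$ among $n$ columns. The very first column normalization multiplies that column by $n/2$.

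\textbf{The missing idea.} In the dense regime the $\log n$ comes from skewed target marginals, not from small entries. This is also why pre-scaling by $\diag(\boldsymbol{u})$ and $\diag(\boldsymbol{v})$ removes it (\Cref{thm-rc-scaling-prenormalization}).

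The paper's instance (\Cref{thm-tightness-lognterm}) is a $0/1$ matrix. Take $\boldsymbol{u}$ uniform and $\boldsymbol{v}$ with $2n/5$ entries $1/n$ followed by $1/5$ and $2/5$. Put zeros on the top-left $\frac n2\times\frac{2n}{5}$ block and on the bottom half of the last column, and ones elsewhere; the matrix is $(3/5,1/2)$-dense. In the limit, each bottom row must send a $1/5$ fraction of its mass to the column of target $1/5$. Initially that entry holds only a $1/(2n/5+1)$ fraction, because the row's ones are spread over $2n/5+1$ equal entries.

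Your fallback idea of collapsing to a scalar recursion via block structure is exactly the right shape; only the driver must change. Here the recursion is $\theta\mapsto\theta(3-\theta)/(2(1+\theta-\theta^2))$. In the coordinate $\omega=(2\theta-1)/(1-\theta)$ it satisfies $\omega^{(k+2)}>\tfrac23\omega^{(k)}$, starting from $\omega^{(1)}=(2n-5)/10$, and the error equals $\tfrac15\lvert\omega/(\omega+2)\rvert$. Hence the error exceeds $\varepsilon$ for all $k\le 2\log_{3/2}\frac{2n-5}{200\varepsilon}$. A single instance thus exhibits both terms, with no gluing.

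Your gluing step was itself problematic. A block-diagonal matrix with zero off-diagonal blocks has $\gamma+\gamma'\le 1$. Moderate off-diagonal entries instead couple the blocks, so their separate analyses no longer apply.
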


When the sum $\gamma + \gamma'$ falls below $1$, the parameter $\nu$ begins to feature in the time complexity of the SK algorithm.

\begin{theorem}\label{thm-lower-bound-main-rc}
Let $m, n \in \mathbb{Z}_{>0}$. Let $\boldsymbol{u} \in \mathbb{R}_{>0}^m$ and $\boldsymbol{v} \in \mathbb{R}_{>0}^n$ be vectors satisfying $\norm{\boldsymbol{u}}_1 = \norm{\boldsymbol{v}}_1 = 1$.
For any feasible $\gamma,\gamma'$ with $\gamma+\gamma'< 1$ and $\varepsilon \in \mathbb{R}_{>0}$ with $3\varepsilon< 1- \gamma-\gamma'$,
there exists a $(\gamma,\gamma')$-dense, \((\boldsymbol{u}, \boldsymbol{v})\)-scalable matrix $A$ such that, with $(A, (\=u,\=v))$ as input, 
SK takes $\Omega(\log (1-\gamma) + \log \gamma'-\log \nu(A) - \log \varepsilon)$ iterations to output a matrix $B$ satisfying 
\[\norm{\=r\left(B\right)- \boldsymbol{u}}_{1} + \norm{\=c\left(B\right)- \boldsymbol{v}}_{1} \leq \varepsilon.\]
Furthermore, for this constructed instance, $-\log \nu(A) = \Omega(n/\varepsilon)$, implying an overall iteration complexity of $\Omega(\log (1-\gamma) + \log \gamma' + n/\varepsilon)$.
\end{theorem}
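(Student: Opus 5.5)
The plan is to give an explicit two-block construction in which a constant amount of mass must be carried by entries of size about $\delta$, and then to show that SK can increase the relevant ratio of scaling factors only by a bounded factor per iteration. Since $\nu(A)$ will be of order $\delta$, this yields $\Omega(-\log\nu(A))$ iterations. We then choose $\delta=\exp(-\Theta(n/\varepsilon))$.

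\textbf{Construction.} Partition the rows into $S\cup S^c$ and the columns into $T\cup T^c$. Choose the weights so that $v(T)\approx\gamma$ and $u(S^c)\approx\gamma'$, rounding to exact values where the atoms of $\boldsymbol{u},\boldsymbol{v}$ permit; this is where feasibility of $(\gamma,\gamma')$ is used. Set
\[
A=\begin{pmatrix} J_{S\times T} & \delta J_{S\times T^c}\\ 0 & J_{S^c\times T^c}\end{pmatrix},
\]
padded with a few ``large'' entries if needed so that the minima in the density definition are attained exactly. Then every row has weighted large-entry fraction $\gamma$, every column has fraction $\gamma'$, and $\nu(A)=\Theta(\delta)$.

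Scalability follows because the support has total support: the flow condition holds since $u(S)>v(T)$, which is equivalent to $1-\gamma'>\gamma$. Any matrix within $\ell_1$-distance $\varepsilon$ of the marginals must place mass at least $u(S)-v(T)-\varepsilon\ge 1-\gamma-\gamma'-\varepsilon\ge 2\varepsilon$ on the block $S\times T^c$. This lower bound is where the hypothesis $3\varepsilon<1-\gamma-\gamma'$ enters.

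\textbf{Key invariant.} Write the $t$-th iterate as $X^{(t)}AY^{(t)}$. By symmetry within blocks, the scalings are constant on each of $S,S^c,T,T^c$, so the dynamics reduce to a few scalars. Let
\[
\theta_t=\frac{x_S^{(t)}\,y_{T^c}^{(t)}}{x_S^{(t)}\,y_T^{(t)}}=\frac{y_{T^c}^{(t)}}{y_T^{(t)}}.
\]
The mass on $S\times T^c$ is at most roughly $\delta\theta_t\,v(T^c)/v(T)$ times the mass of the rows in $S$. For this mass to be at least $2\varepsilon$, we therefore need $\theta_t\gtrsim\varepsilon\cdot\gamma'(1-\gamma)/\delta$, up to constants.

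Next we bound the growth of $\theta_t$. One column normalization sets $y_j=v_j/c_j$. Using the $J$ blocks, the column sums of $T$ and of $T^c$ are each within a factor $O(1/(\gamma'(1-\gamma)))$ of the corresponding quantities from the previous step. Hence $\theta_{t+1}\le c\,\theta_t$ with $c=O\bigl(1/(\gamma'(1-\gamma))\bigr)$. Combined with $\theta_0=O(1)$, this gives
\[
t\ \ge\ \frac{\log(\varepsilon/\delta)+\log\gamma'+\log(1-\gamma)}{\log c}.
\]

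\textbf{Main obstacle and conclusion.} The delicate step is making the per-iteration growth bound and the terms $\log(1-\gamma)+\log\gamma'$ come out with the right dependence. The subtlety is that $\theta$ could in principle jump in the first iteration, because the initial row normalization interacts with $\delta$. To handle this, I would first try to track the block masses directly through the reduced four-variable recursion. I would show that the mass in $S\times T^c$ at most multiplies by a bounded factor per iteration while it is still below $\varepsilon$. The separate $-\log\varepsilon$ term requires a second phase: once the $\delta$-block mass reaches order $\varepsilon$, the reduced scalar recursion contracts at most geometrically. This costs $\Omega(\log(1/\varepsilon))$ further iterations, which is the standard one-dimensional linear-rate lower bound. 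Finally, taking $\delta=\exp(-\Theta(n/\varepsilon))$ gives $-\log\nu(A)=\Omega(n/\varepsilon)$, which yields the stated $\Omega(\log(1-\gamma)+\log\gamma'+n/\varepsilon)$ bound.
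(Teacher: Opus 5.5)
\textbf{Gap 1: the matrix is not $(\gamma,\gamma')$-dense in general.} In your matrix the rows of $S$ have large-entry weight $v(T)=\gamma$, but the rows of $S^c$ have weight $v(T^c)=1-\gamma$. Likewise, the columns of $T$ have weight $u(S)=1-\gamma'$ and the columns of $T^c$ have weight $u(S^c)=\gamma'$. So $A$ is $(\min\{\gamma,1-\gamma\},\min\{\gamma',1-\gamma'\})$-dense, which equals $(\gamma,\gamma')$ only when $\gamma,\gamma'\le 1/2$. The hypotheses allow cases such as $\gamma=0.7$, $\gamma'=0.1$.

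Padding cannot fix this while keeping your structure. To raise the rows of $S^c$ to weight $\gamma$, you must put large entries on $v$-weight at least $2\gamma-1$ inside the zero block $S^c\times T$. A partial filling destroys the block symmetry your reduction relies on. The natural repair is to fill $S^c\times T$ entirely with ones. Up to relabeling, this is exactly the paper's matrix: all-ones rows of weight $\gamma'$, and a $\delta$-block on rows of weight $1-\gamma'$ times columns of weight $1-\gamma$. The bottleneck survives, because rows of $S$ can still send at most $v(T)=\gamma$ into $T$. However, the dynamics then form a genuine four-block recursion. They are no longer your triangular one, for which, writing $\beta$ for the $S\times T^c$ mass after a row step, the recursion is exactly $\beta\mapsto(1-\gamma)(1-\gamma')\beta/(\gamma\gamma'+\beta)$.

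\textbf{Gap 2: the growth bound is asserted, not proved.} The bound $\theta_{t+1}\le c\,\theta_t$ with $c=O(1/(\gamma'(1-\gamma)))$ carries the whole argument, and as stated it has two defects.
\begin{enumerate}
\item Its $\gamma$-dependence is wrong. Already for your triangular matrix, the per-column-step factor is about $(1-\gamma)(1-\gamma')/(\gamma\gamma')$.
\item It fails at the first step. The scalings are $u_i$ and $v_j$ times block constants, not block constants. Moreover, $A^{(0)}$ is built from the unweighted sums $r_i(A)$. For the repaired matrix, the $\delta$-block mass of $A^{(0)}$ is about $\delta|T^c|/|T|\le\delta n$. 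The aggregated $T^c$-column sum of $A^{(0)}$ can be as small as about $\gamma'/n$, so the first column step can multiply that mass by $\Theta(n)$. This is the subtlety you flagged but left open.
\end{enumerate}

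What closes the argument is the following. The block masses evolve exactly by a $2\times 2$ SK with marginals $(1-\gamma',\gamma')$ and $(\gamma,1-\gamma)$. Since the total mass is $1$:
\begin{itemize}
\item every normalized row sum after a column step is at least $\min\{\gamma,1-\gamma\}$;
\item every normalized column sum from the second row step onward is at least $\min\{\gamma',1-\gamma'\}$.
\end{itemize}
This is the weighted analogue of \Cref{lem-monotone-rsum-csum}, as used in \Cref{lem-invariant-property-lowerbound-ell}. Hence the $\delta$-block mass grows by at most a constant factor per step after the first. The one-time $O(n)$ factor, together with $\beta_0\le\delta n$, is absorbed because $-\log\delta=\Theta(n/\varepsilon)\gg\log n$.

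\textbf{Smaller fixes and comparison.} Use the constant threshold $(1-\gamma-\gamma')-\varepsilon\ge\tfrac23(1-\gamma-\gamma')$ rather than $2\varepsilon$; your choice produces $\log(\varepsilon/\delta)$, which has the wrong sign on $\log\varepsilon$. With that threshold, your ``second phase'' is unnecessary. For the repaired matrix $\nu(A)=\delta$, and $-\log\delta\ge\log(1/\varepsilon)$, so an $\Omega(-\log\nu(A))$ bound already yields $\Omega(-\log\nu(A)-\log\varepsilon)$. The paper needs its decay regime only because \Cref{thm-main-lb-rc-ab} treats arbitrary $\nu$.

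With these repairs, your direct block aggregation is a lighter route than the paper's. The paper passes to $(\mathbf{1},\mathbf{1})$-scaling via $G(\cdot)$ with $L\to\infty$, recovers the block form at $C^{(2)}$ (\Cref{lem-lowerbound-reduction-rctooneone}), and then invokes \Cref{thm-main-lb-rc-ab}.
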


One might ask whether a stronger lower bound can be proved when \(\gamma+\gamma'<1\). The next theorem shows this is impossible: 
for some specific $(\=u,\=v)$, the bound \(\Omega(-\log \nu(A) - \log \varepsilon)\) in \Cref{thm-lower-bound-main-rc} is tight.

\begin{theorem}\label{thm-lower-bound-main-rc-tight}
There exist positive vectors $\boldsymbol{u},\boldsymbol{v}$ with 
$\norm{\boldsymbol{u}}_1 = \norm{\boldsymbol{v}}_1 = 1$ such that for any $\varepsilon \in (0,1]$,  any feasible $\gamma,\gamma'$ with $\gamma+\gamma' <1$, and any $(\gamma,\gamma')$-dense, \((\boldsymbol{u}, \boldsymbol{v})\)-scalable matrix $A$, with $(A,(\=u,\=v))$ as input, 
SK takes $O(-\log \nu(A) - \log \varepsilon)$ iterations to output a matrix $B$ satisfying 
\[\norm{\=r\left(B\right)- \boldsymbol{u}}_{1} + \norm{\=c\left(B\right)- \boldsymbol{v}}_{1} \leq \varepsilon.\]
\end{theorem}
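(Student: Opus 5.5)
The plan is to exhibit a very small pair of targets for which the Sinkhorn--Knopp (SK) dynamics collapse to a one-dimensional monotone map, and then to read off the bound $O(-\log\nu(A)-\log\varepsilon)$ directly. I would take $m=n=2$ and asymmetric marginals, for instance $\boldsymbol{u}=(1/4,3/4)$ and $\boldsymbol{v}=(3/4,1/4)$. The essential property is $u_1\neq v_1$. This prevents any scalable matrix from concentrating on a permutation (diagonal-type) support. It also forces every scaled limit $P$ to keep an off-diagonal entry at least $|u_1-v_1|$.

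First I would show that feasible pairs with $\gamma+\gamma'<1$ exist for these targets. For example, a matrix with one entry below $\rho t$ in a heavy row and a heavy column works, so the statement is not vacuous. I would then classify the $(\boldsymbol{u},\boldsymbol{v})$-scalable $2\times 2$ matrices. Either $A$ is strictly positive, or $A$ has exactly one zero in the position compatible with the margins. In the latter case the limit is forced, e.g. $P_{11}=u_1$, $P_{21}=v_1-u_1$, $P_{22}=v_2$ when $A_{12}=0$.

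In either case SK is conjugate to a one-dimensional iteration. After each row normalization, the matrix is determined by the single scalar $x_t=A^{(t)}_{11}/u_1\in[0,1]$, together with the invariant cross-ratio $\theta=A_{11}A_{22}/(A_{12}A_{21})\in(0,\infty]$. A full SK step is then an explicit monotone map $x_{t+1}=f_\theta(x_t)$, a composition of two linear-fractional maps. I would also bound the $\ell_1$ error by a constant times $|x_t-x^\ast|$, where $x^\ast$ is the fixed point.

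The analysis then splits into two phases.

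\textbf{Phase one (transient).} I would work with the log-odds $\log\bigl(x_t/(1-x_t)\bigr)$. Its initial distance from the fixed point is $O(-\log\nu(A))$, because after the initial row normalization every positive entry lies between $\nu(A)$ and $1$ in relative terms. Since $u_1\neq v_1$, the column-margin violation stays bounded below by a constant while $x_t$ is outside a fixed neighborhood of $x^\ast$. Consequently each iteration moves the log-odds toward $x^\ast$ by at least a constant additive amount depending only on $\boldsymbol{u},\boldsymbol{v}$. Hence $O(-\log\nu(A))$ steps suffice to enter that neighborhood.

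\textbf{Phase two (local contraction).} Inside the neighborhood I would show that $|f_\theta'|\le c<1$, with $c$ depending only on $\boldsymbol{u},\boldsymbol{v}$ and uniform over all $\theta\in(0,\infty]$. This yields $O(\log(1/\varepsilon))$ further iterations.

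The main obstacle is the uniformity of the contraction in phase two as $\theta\to 0$ or $\theta\to\infty$, which includes the zero-pattern cases. Generically SK can slow down to $1/\varepsilon$ exactly when the limit $P$ becomes nearly decomposable. My first attempt would use the explicit formula $f_\theta'(x^\ast)=\sigma_2(P)^2$, valid for the $2\times 2$ doubly-normalized form of the limit, consistent with Knight's asymptotic factor. I would then bound $\sigma_2(P)$ away from $1$ using the fact that every admissible limit $P$ has at least three entries bounded below by $\min\{u_1,v_2,|u_1-v_1|\}$, which here equals $1/4$. Such a $P$ cannot approach a block-diagonal matrix. Continuity of $f_\theta'$ on a compact $x$-neighborhood that is uniform in $\theta$, including the limit $\theta=\infty$, then extends this to the required bound $c<1$.

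Combining the two phases gives the claimed bound $O(-\log\nu(A)-\log\varepsilon)$, matching \Cref{thm-lower-bound-main-rc}.
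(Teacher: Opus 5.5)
There is a genuine gap, and it comes from your choice of targets. With $\boldsymbol{u}=(1/4,3/4)$ and $\boldsymbol{v}=(3/4,1/4)$ you have $u_1=v_2$ and $u_2=v_1$, so the anti-diagonal support is $(\boldsymbol{u},\boldsymbol{v})$-feasible. Your classification of scalable $2\times 2$ matrices misses this case, and the claim that every admissible limit has three entries at least $1/4$ is false. The admissible limits are exactly $P=\begin{pmatrix} a & 1/4-a\\ 3/4-a & a\end{pmatrix}$ with $a\in[0,1/4]$. Since $\diag(\boldsymbol{u})^{-1/2}P\,\diag(\boldsymbol{v})^{-1/2}$ has top singular value $1$, its second singular value is $|\det P|/\sqrt{u_1u_2v_1v_2}=|a-3/16|/(3/16)$, which tends to $1$ as $a\to 0$. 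So no continuity or compactness argument can give a contraction factor $c<1$ uniform in $\theta$.

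In fact the statement is false for these targets. Take $A=\begin{pmatrix}\delta&1\\1&\delta\end{pmatrix}$ with $\delta$ small: it is positive (hence scalable), $(1/4,1/4,1/2)$-dense, and $\nu(A)=\delta$. Let $p$ be the share of row $1$ lying in column $1$ after a row normalization. Then the column defect is $c_1-v_1\approx p/4-3\delta^2/(4p)$, the fixed point is $p^\ast\approx\sqrt{3}\,\delta$, and SK starts at $p_0\approx\delta$. In the coordinate $x=\log(p/p^\ast)$, one full sweep acts as $x\mapsto x-(8\delta/\sqrt{3})\sinh x$ up to lower-order terms. Thus the $\ell_1$ error starts at $\Theta(\delta)$ and shrinks by only a factor $1-\Theta(\delta)$ per sweep. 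Reaching $\varepsilon=\delta^2$ therefore takes $\Theta(\delta^{-1}\log(1/\delta))$ iterations, whereas $-\log\nu(A)-\log\varepsilon=3\log(1/\delta)$. The failure is in your phase two: the neighborhood of $x^\ast$ where you want uniform contraction contains this entire slow region.

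The missing idea is to choose targets that exclude both permutation supports quantitatively, so that every matrix with small marginal error is super-critically dense. The paper takes $\boldsymbol{u}=(5/6,1/6)$ and $\boldsymbol{v}=(7/8,1/8)$, with $v_1>u_1>1/2$. Any iterate with error at most $10^{-3}$ then satisfies $A^{(L)}_{1,1}\ge 1/5$, $A^{(L)}_{2,1}\ge 1/50$ and $A^{(L)}_{1,2}+A^{(L)}_{2,2}\ge 1/20$. Hence it is dense with $\gamma+\gamma'>1$, and \Cref{thm-rc-scaling-log} supplies the $O(-\log\varepsilon)$ tail.

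The transient is handled by an exact reduction to a $48\times 48$ $(\boldsymbol{1},\boldsymbol{1})$-instance. While the error is at least $10^{-3}$, the permanent grows by a fixed factor per iteration; it starts at least $(\nu/336)^{48}$ and can never exceed $1$, which gives $O(-\log\nu(A))$ iterations.

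For such targets your one-dimensional route would be a viable alternative. The limits are $\begin{pmatrix}a&5/6-a\\ 7/8-a&a-17/24\end{pmatrix}$ with $a\in[17/24,5/6]$, so the second singular value is at most $5/\sqrt{35}<1$ uniformly. With the targets you chose, however, the argument cannot be completed.
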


The above theorem demonstrates that, for the SK algorithm, $(\boldsymbol{u}, \boldsymbol{v})$-scaling diverges significantly from $(\mathbf{1}, \mathbf{1})$-scaling. While one can construct a matrix $A$ with $\nu(A) = \text{poly}(n, \varepsilon)$ such that SK requires $\Omega(-\log \varepsilon)$ iterations for $(\mathbf{1}, \mathbf{1})$-scaling~\cite{he2025phase}, the situation changes for general marginals. Specifically, for certain $\boldsymbol{u}$ and $\boldsymbol{v}$, SK converges in $O(\log n - \log \varepsilon)$ iterations for any matrix $A$ satisfying $\nu(A) = O(\text{poly}(n, \varepsilon))$ (see \Cref{thm-tight-rc-scaling}).

As established in Theorems \ref{thm-tightness-dense-complexity} and \ref{thm-lower-bound-main-rc}, the iteration complexity of the SK algorithm is independent of $\nu(A)$ when $\gamma_1 + \gamma_2 > 1$, but exhibits a dependence on this parameter when $\gamma_1 + \gamma_2 < 1$. The following theorem further demonstrates that for specific marginals $\boldsymbol{u}$ and $\boldsymbol{v}$, the complexity can remain independent of $\nu(A)$ even in the boundary case where $\gamma_1 + \gamma_2 = 1$.

\begin{theorem}\label{thm-lower-bound-main-rc-threshould-tight}
There exist positive vectors $\boldsymbol{u},\boldsymbol{v}$ with 
$\norm{\boldsymbol{u}}_1 = \norm{\boldsymbol{v}}_1 = 1$ such that for any $\varepsilon \in (0,1]$, any feasible $\gamma,\gamma'$ with $\gamma+\gamma' = 1$, and 
any $(\gamma,\gamma')$-dense, 
\((\boldsymbol{u}, \boldsymbol{v})\)-scalable matrix $A$, with $(A,(\=u,\=v))$ as input, 
SK takes $O(1/\varepsilon)$ iterations to output a matrix $B$ satisfying 
\[\norm{\=r\left(B\right)- \boldsymbol{u}}_{1} + \norm{\=c\left(B\right)- \boldsymbol{v}}_{1} \leq \varepsilon.\]
\end{theorem}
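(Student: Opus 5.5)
The plan is to exploit the freedom to choose the target marginals: take a degenerate shape for which SK terminates exactly after a single round, so that the $O(1/\varepsilon)$ bound holds trivially and uniformly over all admissible $\varepsilon$, $\gamma,\gamma'$ and $A$. Concretely, I would take $m=1$, $\boldsymbol{u}=(1)\in\mathbb{R}_{>0}^{1}$, and any $\boldsymbol{v}\in\mathbb{R}_{>0}^{n}$ with $\|\boldsymbol{v}\|_1=1$, for instance $\boldsymbol{v}=\boldsymbol{1}/n$. The argument will not use the density hypothesis at all. It only uses that $A$ is $(\boldsymbol{u},\boldsymbol{v})$-scalable.

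\textbf{Step 1 (scalability forces positivity).} If $A\in\mathbb{R}_{\ge 0}^{1\times n}$ is $(\boldsymbol{u},\boldsymbol{v})$-scalable, then there are positive $x,y_1,\dots,y_n$ with $x A_{1,j} y_j = v_j>0$. Hence every entry $A_{1,j}$ is strictly positive, and in particular $r_1(A)>0$.

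\textbf{Step 2 (one SK round is exact).} The row-normalization step replaces $A$ by $A' = A/r_1(A)$, which is well defined by Step 1. The column-normalization step then rescales column $j$ by $v_j/c_j(A') = v_j/A'_{1,j}$, giving the matrix $B$ with $B_{1,j}=v_j$. Therefore $\boldsymbol{c}(B)=\boldsymbol{v}$ and $r_1(B)=\sum_j v_j = 1 = u_1$. So after one iteration
\[
\norm{\boldsymbol{r}(B)-\boldsymbol{u}}_1+\norm{\boldsymbol{c}(B)-\boldsymbol{v}}_1 = 0 \le \varepsilon
\]
for every $\varepsilon\in(0,1]$. The iteration count is $1 = O(1/\varepsilon)$.

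\textbf{Step 3 (uniformity and vacuity check).} The bound in Step 2 holds for every scalable $A$, so it holds in particular for every $(\gamma,\gamma')$-dense one with $\gamma+\gamma'=1$, whichever such pairs happen to be feasible. I would add a remark that, with a single row, $\gamma'\in\{0,1\}$ by \Cref{def-gamma-rho-dense}. Since $\gamma,\gamma'\in(0,1]$, this forces $\gamma'=1$ and thus $\gamma=0$ when $\gamma+\gamma'=1$, so the feasible set of pairs on the critical line is actually empty. The theorem as stated still holds, but the example is degenerate.

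\textbf{Main obstacle.} The main obstacle is whether this trivial instance honours the spirit of the statement, since the authors likely intend a non-vacuous critical-case example. If so, my fallback is $m=n=2$ with skewed marginals, say $\boldsymbol{u}=(1-\delta,\delta)$ and $\boldsymbol{v}=(\delta,1-\delta)$. For these marginals I would verify by direct computation that the feasible critical pairs force a specific zero or small-entry pattern. I would then track the $2\times 2$ SK dynamics explicitly as a one-dimensional recursion in the ratio of the off-diagonal entries. The aim is to show the marginal error decays at least like $O(1/t)$, in the same way as the harmonic-type recursion of \cite{kalantari1993rate}, with constants independent of $\nu(A)$. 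That recursion analysis would be the hard part.
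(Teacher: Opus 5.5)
Your Steps 1--2 are correct, but, as you note yourself in Step 3, they prove the statement only vacuously. With $m=1$, any dense matrix has $\gamma'=1$ and therefore $\gamma=1$. So no feasible pair lies on $\gamma+\gamma'=1$, and the hypothesis on $A$ is never met. Read literally, this satisfies the wording, but it misses the content of the theorem, and that is the genuine gap. The paper treats non-emptiness of the critical feasible set as part of what must be shown: it explicitly checks that $(1/2,1/2)$ is feasible for its marginals, just as it exhibits a sub-critical feasible pair before \Cref{thm-tight-rc-scaling}. The point of the statement is that the dependence on $\nu(A)$ disappears on a critical line that actually contains feasible pairs.

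Your fallback is only a plan. The part you defer is exactly the part that must be done. For dense inputs such as $\begin{pmatrix}1&1\\1&\eta\end{pmatrix}$, the limiting scaling has entries of order $\sqrt{\eta}$. Hence no linear-rate argument is uniform in $\eta$, and a harmonic-type decay has to be proved uniformly over all scalable inputs.

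The missing idea is the symmetric choice $\boldsymbol{u}=\boldsymbol{v}=(1/2,1/2)$, for which that recursion is a short closed-form computation. The identity matrix is $(1/2,1/2)$-dense, so the critical line is non-empty.

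For any scalable $2\times 2$ input and even $k$, write $A^{(k)}=\begin{pmatrix}p&1/2-p\\ q&1/2-q\end{pmatrix}$ with $p,q\in[0,1/2]$. Assume without loss of generality $\Delta^{(k)}=2(p+q)-1\ge 0$, and set $s=p-q$. One column step followed by one row step gives $A^{(k+2)}_{1,1}=p(1-\Delta^{(k)})/\bigl(1+(1-4p)\Delta^{(k)}\bigr)$, and similarly for $A^{(k+2)}_{2,1}$. Summing yields
\[
\Delta^{(k+2)}=\frac{4s^2\Delta^{(k)}}{\bigl(1-(\Delta^{(k)})^2\bigr)^2-4s^2(\Delta^{(k)})^2}.
\]
Since $p,q\ge \Delta^{(k)}/2$, we have $|s|\le (1-\Delta^{(k)})/2$. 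This gives $\Delta^{(k+2)}\le \Delta^{(k)}/(1+2\Delta^{(k)})$, that is, $1/\Delta^{(k+2)}\ge 1/\Delta^{(k)}+2$. Hence $\Delta^{(K)}\le\varepsilon$ for some $K=O(1/\varepsilon)$.

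Like yours, this argument never uses density or $\nu(A)$. In this theorem, density serves only to make the statement non-vacuous, which is exactly what your instance lacks.
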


The following theorem demonstrates that pre-scaling the matrix with the target marginals accelerates the convergence of the SK algorithm by eliminating the $\log n$ term from the time complexity.
We remark that the matrix $A$ in the following theorem is precisely an approximate solution to the $(\={u}, \={v})$-scaling of $B$.

\begin{theorem}\label{thm-rc-scaling-prenormalization}
Under the same conditions and notation as in \Cref{thm-rc-scaling-log},
if $\gamma+\gamma'>1$, then with $(\diag(\=u) \cdot B \cdot \diag(\=v),(\={u}, \={v}))$ as input, SK can output a matrix $A$ satisfying 
\[\norm{\=r\left(A\right)- \boldsymbol{u}}_{1} + \norm{\=c\left(A\right)- \boldsymbol{v}}_{1} \leq \varepsilon.\]
in $O\left(\rho^{-14} \cdot\left(\gamma+\gamma'-1\right)^{-6}\left( - \log \varepsilon  - \log \rho - \log (\gamma+\gamma'-1)\right)\right)$ iterations.
\end{theorem}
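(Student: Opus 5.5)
The plan is to rerun the argument behind \Cref{thm-rc-scaling-log} and isolate where the $\log n$ term enters. I expect it to enter only through an initial ``burn-in'' bound: the potential used to track SK (the log-permanent of the discretized square matrix, or equivalently a KL-type divergence to the target marginals) starts at a value of order $\log n - \log\rho$. The rest of the argument, namely the per-iteration contraction of the $\ell_1$ marginal error at rate $1-\Omega\bigl(\rho^{14}(\gamma+\gamma'-1)^{6}\bigr)$ guaranteed by structural stability, does not depend on $n$. So it suffices to show that with input $\diag(\boldsymbol{u})\,B\,\diag(\boldsymbol{v})$ this initial potential is $O(-\log\rho-\log(\gamma+\gamma'-1))$. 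The contraction phase then gives the stated bound directly.

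First, I would record an invariance. SK on $\diag(\boldsymbol{u})B\diag(\boldsymbol{v})$ produces, after its first row normalization, the matrix $\diag(\boldsymbol{u})\diag(\boldsymbol{s})^{-1}B\diag(\boldsymbol{v})$, where $s_i=\sum_k v_kB_{ik}$. By density (rescale so $t=1$), $s_i\in[\gamma\rho,1]$. Hence every entry $A_{ij}$ of the first iterate satisfies $A_{ij}\le u_iv_j/(\gamma\rho)$, and on the dense positions ($B_{ij}>\rho$) also $A_{ij}\ge \rho\,u_iv_j$. In words, the first iterate is already entrywise within constant factors of the product coupling $\boldsymbol{u}\boldsymbol{v}^{\top}$ on a $(\gamma,\gamma')$-weighted dense support. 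Without pre-scaling, this comparability fails by factors as large as $\max u/\min u$. I expect this loss to be exactly the source of $\log n$ in \Cref{thm-rc-scaling-log}.

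Second, I would pass to the discretization framework that reduces $(\boldsymbol{u},\boldsymbol{v})$-scaling to $(\boldsymbol{1},\boldsymbol{1})$-scaling. Row $i$ is replicated in proportion to $u_i$ and column $j$ in proportion to $v_j$, and each cell $(i,j)$ is split uniformly among its copies. After this step, every block entry of the first iterate is within a factor in $[\rho,(\gamma\rho)^{-1}]$ of $1/N$, where $N$ is the discretized dimension, on a support in which every row and column has a fraction of at least $\gamma$ (respectively $\gamma'$) of its positions dense. The permanent lower bounds from the earlier sections, which are available since $\gamma+\gamma'>1$, then give $\log\perman$ of the normalized discretized matrix within $O(-\log\rho-\log(\gamma+\gamma'-1))$ of the doubly stochastic value. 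Since all entries are comparable to $1/N$, the Stirling-type factors $N!/N^N$ cancel against the upper bound and leave no residual $\log N$. This is the step I expect to be the main obstacle. The permanent estimates in the earlier sections are likely phrased for general dense matrices and lose a $\log n$ in the comparison. I would try to redo the comparison against the product matrix $\frac1N J$ restricted to the dense support, using the two-sided entry bounds above, rather than against $\max$-normalized entries.

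Finally, I would plug this $n$-free initial potential into the convergence phase of the proof of \Cref{thm-rc-scaling-log}. Structural stability keeps the iterates' entries and permanent bounds controlled, with constants depending only on $\rho$ and $\gamma+\gamma'-1$. Each round decreases the potential, or equivalently the squared marginal error, by a constant factor. Summing, the number of iterations needed to reach $\|\boldsymbol{r}(A)-\boldsymbol{u}\|_1+\|\boldsymbol{c}(A)-\boldsymbol{v}\|_1\le\varepsilon$ is
\[
O\!\left(\rho^{-14}(\gamma+\gamma'-1)^{-6}\left(-\log\varepsilon-\log\rho-\log(\gamma+\gamma'-1)\right)\right),
\]
which is the claimed bound.
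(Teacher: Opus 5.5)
Your overall architecture matches the paper's. In \Cref{thm-two-phases} the dimension enters only through the burn-in bound on $|K|$, via $\log h$, and the contraction phase after $K$ is dimension-free. However, the step you flag as the main obstacle is left open, and the route you sketch does not close it.

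Comparing the discretized first iterate with $\frac1N J$ on the dense support only gives $\perman(D^{(0)})\ge(\rho/N)^N\perman(M)$, where $M$ is the $0$--$1$ dense pattern. You would then still need $\perman(M)\ge c^N N!$ with $c$ depending only on $\gamma,\gamma'$. That is a nontrivial perfect-matching count, and you do not supply it. The standard bound $\perman(M)\ge(\lceil\gamma N\rceil)!$ (M.~Hall) falls short by a factor of order $N^{(1-\gamma)N}$ up to $e^{O(N)}$, i.e.\ $\Theta(\log N)$ per coordinate. Since $N\approx L$ must be taken arbitrarily large after the iteration count is fixed (\Cref{thm-reduction-precision}), such a loss is fatal, not merely a reinstatement of $\log n$. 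The paper also contains no absolute permanent lower bound for dense matrices that you could cite. Its permanent estimate does not lose $\log n$ intrinsically, contrary to your guess.

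The missing idea is that no absolute lower bound on $\perman(A^{(0)})$ is needed.
\begin{itemize}
\item Item \ref{lemma-structure-stablility-two} of \Cref{lem-structural-stability} gives, for every $k\notin K$, the entrywise domination $A^{(k)}_{i,j}\le\delta C_{i,j}\le h\delta A^{(0)}_{i,j}$. Hence $\perman(A^{(k)})\le(h\delta)^N\perman(A^{(0)})$.
\item Combined with \Cref{lemma-condition-upperbound}, this yields $|K|=O\bigl((\gamma+\gamma'-1)^{-2}\log(h\delta)\bigr)$. The only dimension dependence is therefore through $h$.
\item Your steps 1--2 essentially show that $D=G(\diag(\boldsymbol u)B\diag(\boldsymbol v),\boldsymbol u',\boldsymbol v')$ is itself at least $(\alpha\gamma,\alpha\gamma',\rho)$-dense; this is \Cref{lemma-gamma-rho-dense-error-prenormalization}.
\item So \Cref{thm-two-phases} applies with $D$ in both roles and $h=1$. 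It gives error at most $N\varepsilon/2$ within the claimed number of iterations.
\item \Cref{thm-reduction-precision} then transfers this to error $\varepsilon$ for the $(\boldsymbol u,\boldsymbol v)$-instance.
\end{itemize}

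Separately, your description of the post-burn-in phase is inaccurate. It is not a constant-factor decrease of the permanent potential; that potential grows additively on rounds in $K$. The geometric decay is of $\max_j c_j-1$ or $(\min_j c_j)^{-1}-1$, via \Cref{lem-max-accuracy-decrease} and Item \ref{lemma-structure-stablility-one} of \Cref{lem-structural-stability}.
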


\subsection{Technique overview}
In this section, we summarize the primary proof techniques utilized in this paper. 
Our results can be broadly categorized into two parts: upper bounds, which demonstrate the fast convergence of the SK algorithm (centered around Theorem \ref{thm-rc-scaling-log}), and lower bounds, which characterize scenarios where the SK algorithm converges slowly (principally Theorems \ref{thm-lower-bound-main-rc} and \ref{thm-lower-bound-main-rc-tight}). 
Below, we outline the core proof strategies for both.

\vspace{0.5cm}

\noindent \underline{\textbf{Techniques on upper bounds.}}
Our approach to proving \Cref{thm-rc-scaling-log} was inspired by the results of \cite{he2025phase}, which showed that the SK algorithm converges in $\mathcal{O}(\log n - \log \varepsilon)$ iterations for the $(\boldsymbol{1},\boldsymbol{1})$-scaling of dense matrices. 
To establish that the SK algorithm also exhibits fast convergence for the $(\boldsymbol{u},\boldsymbol{v})$-scaling of dense matrices, we reduce the $(\boldsymbol{u},\boldsymbol{v})$-scaling problem on an $m \times n$ dense matrix $A$ to a standard $(\boldsymbol{1},\boldsymbol{1})$-scaling problem on an $N \times N$ matrix $B$. 
This reduction is crucial, as it allows us to leverage powerful combinatorial tools, such as the permanent, that are otherwise strictly applicable to square matrices.

Notably, no linear transformation exists to directly reduce $(\boldsymbol{u},\boldsymbol{v})$-scaling to $(\boldsymbol{1},\boldsymbol{1})$-scaling. 
Instead, our reduction relies on discretization and subdivision. 
Given an instance $(A, (\boldsymbol{u},\boldsymbol{v}))$, we first choose a sufficiently large integer $L$. 
By appropriately rounding $(L\boldsymbol{u}, L\boldsymbol{v})$, we obtain positive integer vectors $(\={u}', \={v}')$ such that $\|\={u}'\|_1 = \|\={v}'\|_1$.
We then expand matrix $A$ into a block matrix $B$. 
This expanded matrix $B$ is constructed by partitioning each element $A_{i,j}$ into a block of $u'_i \times v'_j$ sub-entries, where each sub-entry is assigned a uniform value of $A_{i,j}/(u'_i \times v'_j)$. 
Through this process, we effectively reduce the instance $(A, (\boldsymbol{u},\boldsymbol{v}))$ to $(B, (\boldsymbol{1},\boldsymbol{1}))$.
To validate this reduction, we establish two critical components:
\begin{itemize}
\item Correctness of the Reduction: We prove that for any fixed iteration step $k$, by choosing a sufficiently large parameter $L$, the marginal error of the $(\boldsymbol{u}, \boldsymbol{v})$-scaling on $A$ at step $k$ can be made arbitrarily close to $1/L$ times the marginal error of the $(\={1}, \={1})$-scaling on the expanded matrix $B$ at step $k$. We achieve this in two steps. First, we establish an operational equivalence: performing $(\={u}', \={v}')$-scaling on matrix $A$ via the SK algorithm is strictly equivalent to performing standard $(\={1}, \={1})$-scaling on the expanded matrix $B$. This equivalence can be rigorously verified by tracing the row and column normalization steps throughout the SK iterations. Second, we prove that for a sufficiently large $L$, the marginal error of the $(\={u}, \={v})$-scaling on $A$ at step $k$ is tightly approximated by $1/L$ times the marginal error of the $(\={u}', \={v}')$-scaling on $A$. Combining these two results immediately confirms the correctness of our reduction.

\item Discrepancy Control and Structural Dynamics: 
A critical challenge arises during our reduction: even if the original matrix $A$ is dense with respect to  $(\={u}, \={v})$, the expanded matrix $B$ is generally not dense with respect to  $(\={1}, \={1})$. 
To bound the iteration complexity of the SK algorithm on the reduced input $(B, (\boldsymbol{1}, \boldsymbol{1}))$, we establish key properties concerning the dynamics of this dense structure. 
First, we show that although $B$ loses its density with respect to uniform marginals, this underlying dense structure can be recovered via appropriate row and column scalings. To see this connection, we introduce an intermediate matrix $C$, formed by partitioning each element $A_{i,j}$ into a $u'_i \times v'_j$ block of sub-entries, all set to the value $A_{i,j}$. Crucially, $B$ is simply a scaled version of $C$; it can be verified that $B = UCV$ for some positive diagonal matrices $U$ and $V$. Finally, we prove that for a sufficiently large $L$, this underlying matrix $C$ is indeed dense with respect to  $(\={1}, \={1})$.
Second, we rigorously characterize the discrepancy between $B$ and the well-structured matrix $C$. 
By comparing corresponding elements in their row-normalized counterparts, we demonstrate that the discrepancy between $A_{i,j}/r_i(A)$ and $B_{i,j}/r_i(B)$ is bounded by $\mathcal{O}(n)$.
Together, these two insights provide a foundational characterization of the dynamics under reduction, allowing us to precisely measure the extent to which the normalized matrix $B$ deviates from being dense under $(\={1}, \={1})$-scaling.
\end{itemize}

Our reduction establishes a fundamental connection between $(\={u},\={v})$-scaling and $(\={1},\={1})$-scaling. 
It not only allows combinatorial techniques designed for $(\={1},\={1})$-scaling to be seamlessly transferred to $(\={u},\={v})$-scaling (and vice versa), but it also reduces the dynamic analysis of rectangular matrices to that of square matrices. 
Consequently, square-matrix-exclusive properties like the permanent can now be applied to analyze the dynamics of rectangular matrices, suggesting that our framework holds potential for broader matrix analysis applications.

Through this reduction, we observe a critical phenomenon: even if the input matrix $A$ exhibits a well-behaved structure, highly skewed target marginals $\={u}$ and $\={v}$ with extreme dynamic ranges can severely degrade the density of the reduced matrix $B$.
Since performing $(\={u},\={v})$-scaling on $A$ is fundamentally equivalent to performing $(\={1},\={1})$-scaling on $B$, the SK algorithm may require up to $\Theta(\log n - \log \varepsilon)$ iterations to converge (see \Cref{thm-tightness-lognterm} for an example). 
Fortunately, the distortion introduced by $\={u}$ and $\={v}$ can be neutralized via pre-scaling, which accelerates the convergence of the SK algorithm by shaving off the $\log n$ term (\Cref{thm-rc-scaling-prenormalization}). 
These results illustrate that our reduction uncovers the intrinsic structural properties of the SK algorithm, accurately capturing how the target probability vectors $(\={u},\={v})$ influence the structural dynamics of the input matrix.

\vspace{0.5cm}
While powerful, our reduction introduces two primary analytical hurdles:
\begin{itemize}
\item As noted, the reduction destroys the dense structure of the matrix; $B$ is generally not dense with respect to  $(\boldsymbol{1}, \boldsymbol{1})$. 
Consequently, we must analyze the convergence time of the SK algorithm when the input is the stretched matrix $B = UCV$, rather than the perfectly dense matrix $C$. This requires relaxing the strict density requirement utilized in \cite{he2025phase}.

\item To guarantee the precision of the reduction, the dimension $N$ of the reduced matrix $B$ becomes exceptionally large. 
Consequently, for the $(\={1}, \={1})$-scaling, we must establish that the iteration complexity required for the SK algorithm to achieve an error of $N \varepsilon$ is independent of $N$. 
We set the target error to $N \varepsilon$ because, as mentioned above, an error of $N \varepsilon$ in the $(\mathbf{1}, \mathbf{1})$-scaling directly corresponds to an error of $\varepsilon$ in the $(\={u}, \={v})$-scaling (where $\|\={u}\|_1 = \|\={v}\|_1 = 1$). 
Notably, the iteration bound we prove here is significantly stronger than the one established in \cite{he2025phase}. 
Specifically, if we set $\varepsilon = 1$ (yielding a target error of $O(N)$), our result guarantees an iteration count completely independent of $N$. In stark contrast, Theorem 3.2 in \cite{he2025phase} demonstrates that achieving this exact same $O(N)$ error with the SK algorithm still necessitates $O(\log N)$ iterations.
\end{itemize}

To overcome these obstacles, we establish a stronger version of \emph{structural stability} for the SK algorithm. Structural stability, originally introduced in \cite{he2025phase}, describes a combinatorial invariance maintained by matrices during the iterative scaling process, allowing one to capture essential structural traits across a sequence of changing matrices. 
Specifically, recall that $C$ is dense matrix with respect to $(\={1},\={1})$. Let $C'$ denote its row-normalized counterpart. 
Furthermore, let $D$ be any scaled matrix produced by the SK algorithm with input $(C,(\=1,\=1))$, provided that $D$ is sufficiently close to a doubly stochastic matrix.
An entry of $C'$ is considered ``considerable" if it is $\Theta(1/N)$. While \cite{he2025phase} demonstrated that if SK takes $C$ as input, the considerable entries in $C'$ remain $\Theta(1/N)$ in the scaled matrix $D$, we significantly reinforce this property in two directions:
\begin{itemize}
\item First, we prove that this structural stability holds even when the SK algorithm takes an arbitrarily scaled matrix $B = UCV$ as input, rather than relying on the unscaled dense matrix $C$ itself. 
This relaxation means our structural stability does not depend on the initial matrix $C$ having a good density structure, but rather depends exclusively on the well-behaved properties of the scaling orbit $\left\{ \mathrm{diag}(x) C \, \mathrm{diag}(y) \mid x \in \mathbb{R}_{>0}^N, y \in \mathbb{R}_{>0}^N \right\}$ generated by $C$.

\item Second, we additionally prove that \emph{every} entry of $D$ is bounded above by a constant multiple $\delta$ of its corresponding entry in $C'$ (Item \ref{lemma-structure-stablility-two} of \Cref{lem-structural-stability}). 
This implies that the permanent of the matrix $D$ is bounded above by $\delta^N$ times the permanent of $C'$. 
By leveraging the permanent of $D$ to bound the number of SK iterations, we successfully prove that the iteration complexity required to reach an $N\varepsilon$ precision is independent of $N$. This critical enhancement allows structural properties to couple perfectly with the permanent, enabling a precise analysis of matrix dynamics.
\end{itemize}

In conclusion, the strengthened structural stability we establish relies solely on the intrinsic properties of the scaling orbit, yielding a robust upper bound for the permanent of near-doubly stochastic matrices within this orbit. 
This affords a fundamentally deeper understanding of the SK algorithm, illuminating its underlying matrix dynamics.

\vspace{0.5cm}

\noindent \underline{\textbf{Techniques on lower bounds.}}
The proof of \Cref{thm-lower-bound-main-rc} proceeds as follows. To establish the $\Omega(-\log \nu(A) - \log \varepsilon)$ lower bound for the $(\=u, \=v)$-scaling of $A$, 
our core strategy is to construct an $n \times n$ elementary block matrix $B$ that requires $\Omega(-\log \nu(B) - \log(n\varepsilon))$ SK iterations to converge under $(\={1}, \={1})$-scaling, and subsequently embed it into $C$, the reduced $(\={1}, \={1})$-scaling instance derived from $(A, (\={u}, \={v}))$.
Let $C^{(0)}, C^{(1)}, C^{(2)}, \dots$ denote the sequence of matrices generated by applying the SK algorithm to $C$. 
Here, \Cref{lem-lowerbound-reduction-rctooneone} plays a pivotal role in constructing our hard instance. 
While our non-uniform reduction inherently destroys the block structure at the early state $C^{(0)}$, this lemma guarantees that the block nature of $A$ is completely recovered exactly at state $C^{(2)}$. 
This property is highly advantageous: it allows us to safely focus on designing $B$ as a block matrix, 
without having to worry about the intermediate structural distortion. 
Thus, by carefully tuning the entries of $A$, we can seamlessly force $C^{(2)}$ to match $B$ exactly, while satisfying $n \cdot \nu(B) \leq \nu(A)$. Ultimately, proving the slow convergence of $B$ in the subsequent analysis will immediately yield the desired lower bound for $A$.

In the following, we introduce the construction of $B$, which is the core of the proof of \Cref{thm-lower-bound-main-rc}.
At a high level, our construction of $B$ bottlenecks the SK algorithm by combining an artificially tiny minimum entry with a massive initial marginal deviation.
We design $B$ as a block matrix with intentionally mismatched block dimensions and initialize its bottom-left block to an exponentially small value $\nu$.
Because the algorithm alternates between row and column normalizations, this dimensional imbalance creates a cascading ``push-pull" effect. 
During row normalizations (even iterations), the artificially tiny bottom-left entry forces the adjacent bottom-right block to absorb the bulk of the row mass. Due to the block size mismatch, this heavily inflates the subsequent column sums of the right columns. Symmetrically, during column normalizations (odd iterations), the tiny bottom-left entry forces the top-left block to absorb the bulk of the column mass, thereby inflating the subsequent row sums of the top rows. Regardless of the phase, these inflated marginals consistently compel the algorithm to shrink the top-right block. 
In essence, the top-right block is trapped in a decaying cycle until the bottom-left block accumulates enough mass.
This dynamic forces the SK algorithm to suffer through two distinct computational bottlenecks, which correspond exactly to the two parameter regimes in our formal analysis:
\begin{itemize}
\item The Growth Bottleneck (Escaping the initial trap): In the regime where the target error is relatively loose ($\varepsilon/n > n\nu$), the primary challenge for the algorithm is to grow the artificially tiny entry from $\nu$ up to a macroscopic scale ($\Theta(1/n)$) in order to eventually satisfy the marginal constraints. Because the SK updates are multiplicative, the per-iteration growth factor of this entry is strictly bounded by a constant. Consequently, this initial ``ramp-up" phase inescapably requires $\Omega(\log n - \log \nu)$ iterations.
\item The Decay Bottleneck (Slow asymptotic convergence): In the regime where the target error is extremely tight ($\varepsilon/n \leq n\nu$), the bottleneck shifts to the agonizingly slow geometric decay of the scaling error. In this phase, while the matrix entries have reached the correct order of magnitude, each SK update only alters the relevant entries by a relative amount proportional to the current error. This means the residual error shrinks by at most a constant factor per iteration. Since the initial unscaled error is macroscopic ($\Theta(n)$), geometrically shrinking this error down to $\varepsilon$ demands at least $\Omega(\log n - \log \varepsilon)$ iterations.
\end{itemize}
In summary, by accounting for the iterations required to overcome both the initial growth trap and the subsequent slow error decay, we establish the overall $\Omega(-\log \nu - \log \varepsilon)$ lower bound.

\vspace{0.2cm}
To establish the tightness of our bound, \Cref{thm-lower-bound-main-rc-tight} constructs a pair of target margins $(\=u,\=v)$ with engineered asymmetry, ensuring that for any $2 \times 2$ matrix $A'$, the $(\=u,\=v)$-scaling converges in at most $O(-\log \nu(A') - \log \varepsilon)$ iterations. Let $B'$ be the reduced $(\=1,\=1)$-instance derived from $(A',(\=u,\=v))$, which remains of constant size in our construction.

The core intuition behind our proof of \Cref{thm-lower-bound-main-rc-tight} is to mirror the exact matrix dynamics we established for the hard instance $B$. Because the target margins $(\=u,\=v)$ are highly asymmetric, the reduced instance $B'$ naturally exhibits the same intentionally mismatched block dimensions discussed previously. 
By exploiting this structural asymmetry, we can tightly bound the SK iterations on $B'$ by decomposing the process into two distinct phases that perfectly parallel our previous analysis:
\begin{itemize}
\item Phase 1: Rapid Growth (Overcoming the initial trap). In the initial stage of the execution, the marginal errors can be arbitrarily large. However, driven by the structural asymmetry, the minimal entry of $B'$ is guaranteed to increase by at least a constant factor in each iteration. It geometrically grows from $\nu(B')$ up to a macroscopic constant scale, at which point the marginal errors successfully fall below a specific constant threshold. This initial ``ramp-up" phase takes at most $O(-\log \nu(A'))$ iterations.
\item Phase 2: Dense Linear Convergence (Asymptotic decay).
Once the algorithm advances past the initial $O(-\log \nu(A'))$ steps, the error drops below the aforementioned threshold. At this stage, the intermediate matrices generated in each iteration fundamentally inherit the asymmetric block structure of $B'$. Crucially, the combination of these mismatched block dimensions and the already-small marginal errors actively forces the intermediate matrices to remain in a strictly \emph{dense regime}. 
Consequently, we can invoke \Cref{thm-rc-scaling-log} to guarantee that the SK algorithm linearly converges to an $\varepsilon$-error in an additional $O(-\log \varepsilon)$ iterations.
\end{itemize}
Combining these two phases, the total iteration complexity on $B'$ is strictly bounded by $O(-\log \nu(A') - \log \varepsilon)$, which ultimately completes the proof of \Cref{thm-lower-bound-main-rc-tight}.

\vspace{0.8cm}

The remainder of this paper is organized as follows. \Cref{sec-preliminary} introduces the necessary preliminaries and notations. 
\Cref{sec-upper-bounds} presents our core reduction framework from $(\=u, \=v)$-scaling to $(\=1, \=1)$-scaling. 
With this reduction in place, \Cref{sec-ub-oneonescaling} establishes our results regarding the fast convergence of the SK algorithm for the $(\=1, \=1)$-scaling problem. 
\Cref{sec-ub-uv-scaling} then synthesizes these ingredients to conclude the upper bound analysis, providing the formal proofs for Theorems \ref{thm-upper-bound-general-ot-uv}, \ref{thm-upper-bound-general-ot}, \ref{thm-rc-scaling-log}, and \ref{thm-rc-scaling-prenormalization}. Next, \Cref{sec-lb} is dedicated to the lower bound analysis, where we complete the proof of \Cref{thm-lower-bound-main-rc}. Finally, \Cref{sec-tightness} discusses the tightness of our bounds, detailing the proofs for Theorems \ref{thm-tightness-dense-complexity}, \ref{thm-lower-bound-main-rc-tight}, and \ref{thm-lower-bound-main-rc-threshould-tight}.

\newpage

%% file: technicaltools.tex
\section{Preliminary}\label{sec-preliminary}
Throughout this paper, we use $\mathbb{Z}_{>0}$ to denote the set of strictly positive integers. Let $\mathbb{R}_{>0}$ and $\mathbb{R}_{\ge 0}$ denote the sets of strictly positive and nonnegative real numbers, respectively. For any integers $m, n \in \mathbb{Z}_{>0}$, we use $\mathbb{R}_{>0}^m$ to represent the set of $m$-dimensional vectors with strictly positive entries. Similarly, $\mathbb{R}_{\ge 0}^{m \times n}$ denotes the set of $m \times n$ matrices with nonnegative entries.

A square matrix $A \in \mathbb{R}_{\ge 0}^{n \times n}$ is called doubly stochastic if its row and column sums all equal one.

Given any $m\in \mathbb{Z}_{>0}$ and $\=u = (u_1,u_2,\dots,u_m)\in \mathbb{Z}_{>0}^m$, 
define
\begin{equation}
\begin{aligned}
\+D(\=u) \triangleq \diag\Bigl(
\underbrace{u_1,\ldots,u_1}_{u_1\text{ entries}},\;
\underbrace{u_2,\ldots,u_2}_{u_2\text{ entries}},\;
\ldots,\;
\underbrace{u_m,\ldots,u_m}_{u_m\text{ entries}}
\Bigr).\label{eq-def-diag}
\end{aligned}
\end{equation}

\vspace{0.5cm}

\noindent \underline{\textbf{SK algorithm.}} Let $m, n \in \mathbb{Z}_{>0}$. Let $\boldsymbol{u} \in \mathbb{R}_{>0}^m$ and $\boldsymbol{v} \in \mathbb{R}_{>0}^n$ be vectors satisfying $\norm{\boldsymbol{u}}_1 = \norm{\boldsymbol{v}}_1$, and 
$A \in \mathbb{R}_{\ge 0}^{m \times n}$ be a nonzero matrix.
Given $(A,(\=u,\=v))$ as input, the SK algorithm iteratively generates a sequence of matrices $A^{(0)}, A^{(1)}, \ldots$ as follows:
\begin{itemize}
\item For each $i\in [m],j\in [n]$, let $A^{(0)}_{i,j} = u_i\cdot A_{i,j}/r_i(A)$;
\item For each integer $k>0$ and $i\in [m],j\in [n]$, if $k$ is odd, let $A^{(k)}_{i,j} = v_j\cdot A^{(k-1)}_{i,j}/c_j\left(A^{(k-1)}\right)$; 
otherwise, let $A^{(k)}_{i,j} = u_i\cdot A^{(k-1)}_{i,j}/r_i\left(A^{(k-1)}\right)$.
\end{itemize}

The following are some easy facts about the SK algorithm.

\begin{lemma}\label{fact-c-a-a0-a1}
Let $n \in \mathbb{Z}_{>0}$, and let
$A \in \mathbb{R}_{\ge 0}^{n \times n}$ be a nonzero matrix.
Let $A^{(0)},A^{(1)},\dots$ be the generated sequence of matrices by the SK algorithm with input $(A,(\=1,\=1))$.
Then we have $A^{(k)}_{i,j} \in [0,1]$ for each $i\in [n],j\in[n]$ and $k\geq 0$.
Moreover, for each $k\geq 0$ and $i\in [n]$, if $k$ is even,  $r_i\left(A^{(k)}\right) = 1$. Otherwise, $c_i\left(A^{(k)}\right) = 1$.
\end{lemma}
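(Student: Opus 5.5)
The plan is a direct induction on $k$ along the recursive definition of the SK iteration with $\=u=\=v=\=1$. Nonnegativity is immediate, since $A$ is nonnegative and every update multiplies an entry by a nonnegative factor. The real content of the lemma is well-definedness: each row or column sum we divide by must be strictly positive. For this reason we read the statement, as the paper implicitly does, for inputs on which SK is defined. This means every row and every column of $A$ contains a positive entry, which holds for instance whenever $A$ is $(\=1,\=1)$-scalable.

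We first record one invariant. Each update multiplies every entry of a fixed row, or of a fixed column, by a strictly positive scalar. Hence the support (zero pattern) of $A^{(k)}$ equals that of $A$ for all $k$. In particular every row sum and every column sum of $A^{(k)}$ is strictly positive, so all divisions are legitimate.

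\textbf{Normalization identities.} For $k=0$ we have $r_i(A^{(0)})=\sum_j A_{i,j}/r_i(A)=1$. Now take $k>0$. If $k$ is odd, then
\[
c_j\bigl(A^{(k)}\bigr)=\sum_i A^{(k-1)}_{i,j}\big/c_j\bigl(A^{(k-1)}\bigr)=1.
\]
If $k$ is even, the same computation applied to rows gives $r_i(A^{(k)})=1$. This covers every index $i\in[n]$, since the matrix is square.

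\textbf{Range $[0,1]$.} Fix $k$. If $k$ is even, each row of $A^{(k)}$ is a nonnegative vector summing to $1$, so every entry lies in $[0,1]$. If $k$ is odd, each column of $A^{(k)}$ is a nonnegative vector summing to $1$, so again every entry lies in $[0,1]$. No induction on the bound itself is required: the bound follows at each step from the normalization just established.

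The only step that needs any care is the positivity of the denominators, which is handled by the support-preservation observation above. The remaining steps are one-line computations.
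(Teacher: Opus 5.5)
Your argument is correct and is the direct verification from the definition of the SK updates that the paper intends; the paper records this lemma as an ``easy fact'' and gives no proof. Reading the hypothesis ``nonzero'' as ``every row and every column of $A$ has a positive entry'' is a legitimate clarification rather than a gap: it is what makes every division well-defined, via the support-preservation invariant that you establish inductively.
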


The following lemma from \cite{sinkhorn1964relationship} demonstrates that the extremal (i.e., maximum and minimum) row and column sums behave monotonically in the SK algorithm for $(\=1,\=1)$-scaling.

\begin{lemma}\label{lem-monotone-rsum-csum}
Suppose the conditions in \Cref{fact-c-a-a0-a1} hold. 
Then for any odd $k$, we have 
\[
\min_{i\in [n]}r_i\left(A^{(k)}\right) \leq  \min_{i\in [n]}r_i\left(A^{(k+2)}\right) \leq 1 \leq \max_{i\in [n]}r_i\left(A^{(k+2)}\right) \leq \max_{i\in [n]}r_i\left(A^{(k)}\right).
\]
Similarly, for any even $k$, we have
\[
\min_{j\in [n]}c_j\left(A^{(k)}\right) \leq  \min_{j\in [n]}c_j\left(A^{(k+2)}\right) \leq 1 \leq \max_{j\in [n]}c_j\left(A^{(k+2)}\right) \leq \max_{j\in [n]}c_j\left(A^{(k)}\right).
\]
Moreover, for any odd $k$, we have 
\[
\left(\max_{i\in [n]}c_i\left(A^{(k-1)}\right)\right)^{-1} \leq  \min_{i\in [n]}r_i\left(A^{(k)}\right) \leq 1 \leq \max_{i\in [n]}r_i\left(A^{(k)}\right) \leq \left(\min_{i\in [n]}c_i\left(A^{(k-1)}\right)\right)^{-1}.
\]
\end{lemma}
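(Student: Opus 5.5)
The plan is to write each marginal of the next iterate as a convex combination of reciprocals of the previous marginals; all three chains then follow from the fact that a convex combination lies between the minimum and maximum of the quantities being averaged. Throughout I assume, as the definition of the SK iteration implicitly does, that no row or column sum vanishes along the orbit, so that every normalization is well defined.

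\emph{Step 1 (third chain).} Fix an odd $k$. By \Cref{fact-c-a-a0-a1}, $A^{(k-1)}$ has all row sums equal to $1$, and $A^{(k)}_{i,j}=A^{(k-1)}_{i,j}/c_j(A^{(k-1)})$. Hence
\[
r_i\bigl(A^{(k)}\bigr)=\sum_{j} A^{(k-1)}_{i,j}\cdot \frac{1}{c_j\bigl(A^{(k-1)}\bigr)} .
\]
The weights $A^{(k-1)}_{i,j}$ are nonnegative and sum to $1$ over $j$. So $r_i(A^{(k)})$ lies in the interval $\bigl[(\max_j c_j(A^{(k-1)}))^{-1},\,(\min_j c_j(A^{(k-1)}))^{-1}\bigr]$, which gives the two outer inequalities. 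For the middle inequalities, note that $A^{(k)}$ has all column sums equal to $1$, so its total mass is $n$. The average row sum is therefore $1$, which forces $\min_i r_i(A^{(k)})\le 1\le \max_i r_i(A^{(k)})$.

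\emph{Step 2 (first chain).} Again take $k$ odd. Then $A^{(k+1)}_{i,j}=A^{(k)}_{i,j}/r_i(A^{(k)})$, and the columns of $A^{(k)}$ sum to $1$. Therefore
\[
c_j\bigl(A^{(k+1)}\bigr)=\sum_i A^{(k)}_{i,j}\cdot\frac{1}{r_i\bigl(A^{(k)}\bigr)}
\]
is a convex combination of the reciprocals $1/r_i(A^{(k)})$. It follows that
\[
c_j\bigl(A^{(k+1)}\bigr)\in\Bigl[\bigl(\max_i r_i(A^{(k)})\bigr)^{-1},\,\bigl(\min_i r_i(A^{(k)})\bigr)^{-1}\Bigr].
\]
Next, Step 1 applied with $k+2$ in place of $k$ (so that $k+1$ plays the role of $k-1$) places $r_i(A^{(k+2)})$ between $(\max_j c_j(A^{(k+1)}))^{-1}$ and $(\min_j c_j(A^{(k+1)}))^{-1}$. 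Combining the two bounds yields
\[
\min_i r_i\bigl(A^{(k)}\bigr)\le r_i\bigl(A^{(k+2)}\bigr)\le \max_i r_i\bigl(A^{(k)}\bigr)\qquad\text{for every } i .
\]
The middle inequality $\min_i r_i(A^{(k+2)})\le 1\le \max_i r_i(A^{(k+2)})$ is the mass-counting argument of Step 1 applied to $A^{(k+2)}$.

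\emph{Step 3 (second chain).} The statement for even $k$ is obtained by the symmetric argument, exchanging the roles of rows and columns. Every matrix in the sequence has either row sums or column sums identically $1$, so the total-mass identity is always available.

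There is no real obstacle here. The only point needing care is checking that the weights in each convex combination are exactly the entries of a row or column that \Cref{fact-c-a-a0-a1} guarantees sums to one.
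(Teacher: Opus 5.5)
Your argument is correct and complete: writing each new marginal as a convex combination of the reciprocals of the previous marginals gives all three chains. Your standing assumption that no row or column sum vanishes is exactly the paper's implicit one; since SK preserves the zero pattern, it amounts to $A$ having no zero row or column.

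The paper does not prove this lemma itself but cites Sinkhorn's 1964 work for it, and your weighted-average argument is the standard self-contained proof of that classical monotonicity fact.
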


\vspace{0.5cm}

\noindent \underline{\textbf{Permanent.}} 
For an $n\times n$ matrix $Z$,
its permanent is defined as 
\[\perman(Z) \triangleq \sum_{\sigma }\prod_{i\in [n]}Z_{i,\sigma(i)},\]
where the sum is over all permutations $\sigma$ of $[n]$.

The following lower bound on the permanent of doubly stochastic matrices was first conjectured by
Van der Waerden and later proved independently by Falikman~\cite{falikman1981proof} and Egorychev~\cite{egorychev1981solution}.

\begin{lemma}\label{lem-perm-double-stochastic}
For any doubly stochastic matrix $A$ of size $n\times n$, we have $\perman(A)\geq n!/n^n$.
\end{lemma}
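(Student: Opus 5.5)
The plan is to follow Gurvits' proof via real stable polynomials and capacity, which yields exactly the bound $n!/n^n$ through a telescoping product. For the doubly stochastic matrix $A$, I will work with the homogeneous polynomial
\[
p(x_1,\dots,x_n)\triangleq\prod_{i\in[n]}\Bigl(\sum_{j\in[n]}A_{i,j}x_j\Bigr).
\]
Expanding the product shows that $\perman(A)$ is exactly the coefficient of $x_1x_2\cdots x_n$, i.e.\ $\perman(A)=\partial_{x_1}\cdots\partial_{x_n}p\,(0)$. I will use the capacity
\[
\mathrm{cap}(q)\triangleq\inf_{x\in\mathbb{R}_{>0}^k} \frac{q(x)}{x_1\cdots x_k}.
\]

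\textbf{Step 1 (capacity of $p$).} I will show $\mathrm{cap}(p)=1$. Since each row of $A$ is a probability vector, weighted AM--GM gives $\sum_j A_{i,j}x_j\ge\prod_j x_j^{A_{i,j}}$. Multiplying over $i$ and using that the column sums equal one gives $p(x)\ge\prod_j x_j^{\sum_i A_{i,j}}=\prod_j x_j$. Equality holds at $x=\boldsymbol 1$.

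\textbf{Step 2 (stability and degrees).} Next I will check that $p$ is real stable with nonnegative coefficients: it is a product of linear forms with nonnegative coefficients, so it does not vanish when all $x_j$ lie in the open upper half plane. Define $q_n\triangleq p$ and $q_{k-1}(x_1,\dots,x_{k-1})\triangleq\partial_{x_k}q_k\big|_{x_k=0}$, so that $q_0=\perman(A)$. Real stability with nonnegative coefficients is preserved both by differentiating in a variable and by setting a variable to $0$, so every $q_k$ is real stable. The degree of $x_k$ in $q_k$ is at most $\min(n,k)$, because $q_k$ is homogeneous of degree $k$.

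\textbf{Step 3 (key univariate inequality).} Fix $x_1,\dots,x_{k-1}>0$ and set $f(t)=q_k(x_1,\dots,x_{k-1},t)$. Then $f$ has nonnegative coefficients, is real-rooted (so its roots are $\le 0$), and has degree at most $d=k$. The goal is
\[
f'(0)\ge \Bigl(\tfrac{d-1}{d}\Bigr)^{d-1}\inf_{t>0}\frac{f(t)}{t}.
\]
Write $f(t)=a\prod_{\ell\le d}(1+\alpha_\ell t)$ with $\alpha_\ell\ge0$, so that $f'(0)=a\sum_\ell\alpha_\ell=:b$. By AM--GM,
\[
f(t)\le a\Bigl(1+\tfrac{bt}{ad}\Bigr)^d .
\]
Minimizing the right-hand side divided by $t$ at $t=\frac{ad}{b(d-1)}$ gives exactly $b\,(d/(d-1))^{d-1}$. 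The degenerate cases $a=0$, $b=0$ and $d=1$ have to be checked separately, and they are easy. Taking the infimum over $x_1,\dots,x_{k-1}$ then yields
\[
\mathrm{cap}(q_{k-1})\ge\Bigl(\tfrac{k-1}{k}\Bigr)^{k-1}\mathrm{cap}(q_k).
\]

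\textbf{Step 4 (telescoping).} Chaining Step 3 from $k=n$ down to $k=1$ gives
\[
\perman(A)=\mathrm{cap}(q_0)\ge\prod_{k=2}^n\frac{(k-1)^{k-1}}{k^{k-1}}=\frac{n!}{n^n},
\]
where the last product telescopes.

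The main obstacle is Step 3 together with the stability-preservation facts used in Step 2. The univariate bound is what forces real-rootedness, since without it the AM--GM upper bound on $f$ fails. I would justify the closure of real stability under $\partial_{x_k}$ via the Gauss--Lucas theorem applied along lines in the upper half plane, and the closure under $x_k\mapsto0$ by Hurwitz's theorem, taking a limit from $x_k=i\epsilon$.
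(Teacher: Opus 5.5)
Your proposal is correct. It is Gurvits' capacity argument, and each step checks out:
\begin{itemize}
\item weighted AM--GM over the rows, together with the unit column sums, gives $\mathrm{cap}(p)\ge 1$;
\item the minimizer $t=\frac{ad}{b(d-1)}$ gives exactly $b\,(d/(d-1))^{d-1}$;
\item $\prod_{k=2}^{n}(k-1)^{k-1}/k^{k-1}$ telescopes to $n!/n^n$.
\end{itemize}

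The paper gives no proof of this lemma. It imports the statement as the solved van der Waerden conjecture, citing Falikman and Egorychev. Falikman's argument analyzes minimizing doubly stochastic matrices, and Egorychev's uses the Alexandrov--Fenchel inequality for mixed discriminants. Your route gives a self-contained proof from elementary tools (AM--GM, Gauss--Lucas, Hurwitz). It also yields more: replacing the degree bound $k$ by $\min\{k,\deg_{x_k}p\}$ gives Gurvits' refinement for sparse matrices. The citation buys brevity.

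Two details need care when you write it out.
\begin{itemize}
\item The closure facts should read ``real stable or identically zero''. This is harmless: $\mathrm{cap}(q_k)>0$ propagates down the chain and forces $q_{k-1}\not\equiv 0$.
\item The Gauss--Lucas step for $\partial_{x_k}$ needs the coefficient $c_m(z')$ of the top power $x_k^m$ in $q_k$ to be nonzero whenever $z'$ lies in the product of open upper half-planes. Otherwise the univariate restriction can drop degree, and $\partial_{x_k}q_k$ could vanish identically along that line. This follows from Hurwitz applied to $q_k(z',is)/(is)^m\to c_m(z')$ as $s\to\infty$.
\end{itemize}

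The only explicit use of this lemma in the paper (the proof of \Cref{lem-l-lognu}) needs just $\perman>0$ for a doubly stochastic matrix. Birkhoff's theorem already gives that.
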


The following properties regarding the permanent of the matrices generated during the SK algorithm are well-established in the literature \cite{Linial2000Deterministic}.

\begin{lemma}\label{lem-facts-ak}
Suppose the conditions in \Cref{fact-c-a-a0-a1} hold. 
For any $i\in [n]$, let $x^{(0)}_i = y^{(0)}_i = 1$.
For any $k > 0$ and $i\in [n]$, let
$x^{(k)}_i = 1/\prod_{j=0}^{k-1} r_i\left(A^{(j)}\right)$
and $y^{(k)}_i = 1/\prod_{j=0}^{k-1} c_i\left(A^{(j)}\right)$. Then we have the following facts:
\begin{itemize}
\item For any odd $k \geq 0$, we have 
\begin{align}\label{eq-ub-ri}
\prod_{i\in [n]} r_i\left(A^{(k)}\right) \leq 1
\end{align}
\begin{align}\label{eq-increase-perm-ri}
\perman\left(A^{(k+1)}\right) = \perman\left(A^{(k)}\right)\prod_{i\in [n]} r_i^{-1}\left(A^{(k)}\right). 
\end{align}
Similarly, for any even $k\geq 0$, we have 
\begin{align}\label{eq-ub-ci}
\prod_{i\in [n]} c_i\left(A^{(k)}\right) \leq 1
\end{align}
\begin{align}\label{eq-increase-perm-ci}
\perman\left(A^{(k+1)}\right) = \perman\left(A^{(k)}\right)\prod_{i\in [n]} c_i^{-1}\left(A^{(k)}\right). 
\end{align}
\item 
For any $k \geq 0$, 
\begin{align}\label{eq-ak-aprime-relation}
A^{(k)} = \diag\left(\frac{x^{(k)}_1}{r_1(A)},\dots,
\frac{x^{(k)}_n}{r_n(A)}\right)\cdot A\cdot\diag\left(y^{(k)}_1,\dots,y^{(k)}_n\right).
\end{align}
\end{itemize}
\end{lemma}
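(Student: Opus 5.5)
The plan is to verify the three assertions of \Cref{lem-facts-ak} directly from the definition of the SK iteration. The only tools needed are the AM--GM inequality and the multilinearity of the permanent in rows and columns. Throughout, \Cref{fact-c-a-a0-a1} supplies the normalization: for odd $k$ every column sum of $A^{(k)}$ equals $1$, and for even $k$ every row sum equals $1$.

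\emph{Product bounds \eqref{eq-ub-ri} and \eqref{eq-ub-ci}.}
\begin{itemize}
\item Fix an odd $k$. Since $c_j(A^{(k)})=1$ for all $j$, the total mass gives $\sum_{i\in[n]} r_i(A^{(k)})=\sum_{j\in[n]} c_j(A^{(k)})=n$.
\item The row sums are nonnegative. By AM--GM,
\[
\prod_{i} r_i(A^{(k)})\le \Bigl(\tfrac1n\sum_i r_i(A^{(k)})\Bigr)^n=1 .
\]
\item The even case is identical with the roles of rows and columns exchanged: there $r_i(A^{(k)})=1$ for all $i$, so $\sum_j c_j(A^{(k)})=n$.
\end{itemize}

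\emph{Permanent identities \eqref{eq-increase-perm-ri} and \eqref{eq-increase-perm-ci}.}
\begin{itemize}
\item For odd $k$, the next step $k+1$ is even, so it is a row normalization with $u_i=1$. Hence $A^{(k+1)}=\diag\bigl(r_1^{-1}(A^{(k)}),\dots,r_n^{-1}(A^{(k)})\bigr)A^{(k)}$.
\item Every permutation term $\prod_i Z_{i,\sigma(i)}$ picks exactly one entry from each row. Multiplying row $i$ by $\lambda_i$ therefore multiplies every term, and hence the permanent, by $\prod_i\lambda_i$. This gives \eqref{eq-increase-perm-ri}.
\item For even $k$, step $k+1$ is a column normalization, and the same argument applied to columns gives \eqref{eq-increase-perm-ci}.
\item The row and column sums are positive along the iteration. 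This holds because $A$ is nonzero and is implicitly assumed scalable or with no zero row or column, so that the iteration is well defined; I would note this assumption explicitly.
\end{itemize}

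\emph{Diagonal representation \eqref{eq-ak-aprime-relation}.}
I would argue by induction on $k$.
\begin{itemize}
\item \emph{Base case $k=0$.} Here $A^{(0)}_{i,j}=A_{i,j}/r_i(A)$ and $x^{(0)}=y^{(0)}=\mathbf 1$, which is exactly the claim.
\item \emph{Inductive step.} Going from $k$ to $k+1$, SK multiplies either row $i$ by $r_i(A^{(k)})^{-1}$ (when $k+1$ is even) or column $j$ by $c_j(A^{(k)})^{-1}$ (when $k+1$ is odd).
\item \emph{Bookkeeping subtlety.} The definition of $x^{(k+1)}$ divides by $r_i(A^{(j)})$ for \emph{every} $j\le k$, including odd $j$, and $y^{(k+1)}$ divides by $c_i(A^{(j)})$ for every $j$, including even $j$. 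These extra factors must equal $1$. By \Cref{fact-c-a-a0-a1}, $r_i(A^{(j)})=1$ for even $j$ and $c_i(A^{(j)})=1$ for odd $j$. So the extra factors equal $1$ exactly when they sit at the steps where SK performs no update on that side, and the formula is consistent with the alternation.
\item I would check this index bookkeeping carefully against the convention $A^{(0)}$ = row-normalized, $A^{(1)}$ = first column step. This is the only place an off-by-one could hide, and I expect it to be the main (mild) obstacle. Once the indices align, the products telescope to $x^{(k+1)}_i=x^{(k)}_i/r_i(A^{(k)})$ or $y^{(k+1)}_j=y^{(k)}_j/c_j(A^{(k)})$ as appropriate, and the induction closes.
\end{itemize}
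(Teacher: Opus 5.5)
Your proposal is correct. The paper gives no proof here and instead cites the lemma as well established (\cite{Linial2000Deterministic}); your direct verification is exactly the standard argument behind that citation: AM--GM on row or column sums that total $n$, multilinearity of the permanent in rows and columns, and induction for the diagonal form.

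One wording fix at the spot you flagged. In $x^{(k)}_i$ the ``extra'' factors are $r_i(A^{(j)})$ for \emph{even} $j$, and in $y^{(k)}_i$ they are $c_i(A^{(j)})$ for \emph{odd} $j$; the odd-$j$ row factors and even-$j$ column factors are the genuine SK updates, which is the reverse of how you labeled them. The facts you cite ($r_i(A^{(j)})=1$ for even $j$, $c_i(A^{(j)})=1$ for odd $j$) are exactly what makes those extra factors equal $1$. Consequently $x^{(k+1)}_i=x^{(k)}_i/r_i(A^{(k)})$ and $y^{(k+1)}_i=y^{(k)}_i/c_i(A^{(k)})$ both hold by definition, one of the two divisors is $1$, and the induction closes with no off-by-one.
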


\noindent \underline{\textbf{Accuracy.}}
The following are some key quantities used in our proof.
% \begin{definition}
% A matrix $Z$ of size $n\times n$ is called \emph{standardized} if either $r_{i}(Z) = 1$ for each $i\in [n]$ or $c_{i}(Z) = 1$ for each $i\in [n]$.
% A matrix $Z$ has \emph{column-accuracy} $\boldsymbol{\alpha} = (\alpha_1,\dots,\alpha_n)$ if $r_{i}(Z) = 1$ for each $i\in [n]$ and 
% \begin{align}
% \forall j\in [n], \quad \abs{c_j(Z) - 1}\leq \alpha_j.
% \end{align}
% The definition of the \emph{row-accuracy} is similar.
% We say a matrix $Z$ has accuracy $\boldsymbol{\alpha}$ if $Z$ has column-accuracy $\boldsymbol{\alpha}$ or row-accuracy $\boldsymbol{\alpha}$.
% Given a matrix $Z$ with accuracy $\boldsymbol{\alpha}$,
% define
% \begin{align}
% \alpha(Z) &\triangleq \frac{2}{n}\cdot \sum_{i\in [n]}\alpha_i.\label{eq-def-alpha}
% \end{align}
% \end{definition}
% Intuitively, $\alpha(Z)$ depicts how far $Z$ is from a doubly stochastic matrix. Given a matrix $Z$, when the notation $\alpha(Z)$ is used, we always assume that $Z$ is standardized.
% We say that an $n\times n$ matrix $Z$ has a maximum deviation $t$ if 
% $\abs{r_i(Z) - 1}\leq t$ and $\abs{c_i(Z) - 1}\leq t$ for each $i\in [n]$. 

\begin{definition}
An $n\times n$ matrix $A$ is called \emph{standardized} if either $r_i(A) = 1$ for all $i \in [n]$, or $c_i(A) = 1$ for all $i \in [n]$.
We say $A$ has \emph{column-accuracy} $\boldsymbol{\alpha} = (\alpha_1, \dots, \alpha_n)$ if $r_i(A) = 1$ for all $i \in [n]$ and
$$\forall j \in [n], \quad \abs{c_j(A) - 1} \leq \alpha_j.$$
The notion of \emph{row-accuracy} is defined similarly.
A matrix has \emph{accuracy} $\boldsymbol{\alpha}$ if it has either column-accuracy or row-accuracy $\boldsymbol{\alpha}$. 
Given a matrix $A$ with accuracy $\boldsymbol{\alpha}$, we define
\begin{align}
\alpha(A) &\triangleq \frac{1}{n}\cdot \sum_{i\in [n]}\alpha_i.\label{eq-def-alpha}
\end{align}
Intuitively, $\alpha(A)$ quantifies how far $A$ is from being a doubly stochastic matrix. Henceforth, whenever the notation $\alpha(A)$ is used, we implicitly assume that $A$ is standardized.
\end{definition}

\newpage

%% file: Reduction.tex
\section{Reduction from \texorpdfstring{$(\boldsymbol{u},\boldsymbol{v})$}{(u,v)}-scaling to \texorpdfstring{$(\mathbf{1},\mathbf{1})$}{(1,1)}-scaling}\label{sec-upper-bounds}

\subsection{Definition of the Reduction}\label{sec-def-reduction}

Given an instance of $(\boldsymbol{u}, \boldsymbol{v})$-scaling, the following two definitions, Definitions \ref{def-integer-vector} and \ref{def-splitted-matrix}, serve to reduce it to an instance of $(\mathbf{1}, \mathbf{1})$-scaling.

\Cref{def-integer-vector} first discretizes $\boldsymbol{u},\boldsymbol{v}$ to integer vectors.
With these integer vectors and $A$, \Cref{def-splitted-matrix} reduces this instance to another instance of $(\mathbf{1},\mathbf{1})$-scaling by subdividing each entry $A_{i,j}$ into a submatrix with identical subentries.

Given positive vectors $\boldsymbol{u}$ and $\boldsymbol{v}$, we discretize them into positive integer vectors $f_1(\boldsymbol{u},\boldsymbol{v},L)$ and $f_2(\boldsymbol{u},\boldsymbol{v},L)$ by multiplying each coordinate by a large integer $L$ and then rounding via a tailored rule. 
The rounding scheme is designed to satisfy the compatibility condition required by the SK algorithm, namely, $\norm{f_1(\boldsymbol{u},\boldsymbol{v},L)}_1 = \norm{f_2(\boldsymbol{u},\boldsymbol{v},L)}_1$.

\begin{definition}\label{def-integer-vector}
Let $m, n \in \mathbb{Z}_{>0}$,
and let $\boldsymbol{u} \in \mathbb{R}_{>0}^m$ and $\boldsymbol{v} \in \mathbb{R}_{>0}^n$ be vectors satisfying $\norm{\boldsymbol{u}}_1 = \norm{\boldsymbol{v}}_1 = 1$.
For any positive integer $L$, 
let
\[t \triangleq\sum_{i\in [m]} \lfloor L u_i \rfloor- \sum_{i\in [n]}\lfloor L v_i \rfloor.\]
If $t\geq 0$, define
\begin{align*}
f_1(\boldsymbol{u},\boldsymbol{v},L) \triangleq (\lfloor L u_1 \rfloor,\dots,\lfloor L u_m \rfloor),\quad 
f_2(\boldsymbol{u},\boldsymbol{v},L) \triangleq (\lfloor L v_1 \rfloor+1,\dots,\lfloor L v_t \rfloor+1,\lfloor L v_{t+1} \rfloor,\dots,\lfloor L v_n \rfloor).
\end{align*}
We remark that $f_2(\boldsymbol{u},\boldsymbol{v},L)$ is well-defined, because it can be verified that $t<n$ by the inequality
\[\sum_{i\in [m]}\lfloor L u_i \rfloor\leq L\norm{\boldsymbol{u}}_1 = L\norm{\boldsymbol{v}}_1 < \sum_{i\in [n]}(\lfloor L v_i \rfloor+1).\]
Similarly, if 
$t< 0$, define
\begin{align*}
f_1(\boldsymbol{u},\boldsymbol{v},L) \triangleq (\lfloor L u_1 \rfloor+1,\dots,\lfloor L u_{t} \rfloor+1,\lfloor L u_{t+1} \rfloor,\dots,\lfloor L u_{m} \rfloor),\quad
f_2(\boldsymbol{u},\boldsymbol{v},L) \triangleq (\lfloor L v_1 \rfloor,\dots,\lfloor L v_n \rfloor).
\end{align*}
It can be verified that $\norm{f_1(\boldsymbol{u},\boldsymbol{v},L)}_1 = \norm{f_2(\boldsymbol{u},\boldsymbol{v},L)}_1$.
We will always choose a sufficiently large integer 
$L$ such that both 
$f_1(\boldsymbol{u},\boldsymbol{v},L)$ and $f_2(\boldsymbol{u},\boldsymbol{v},L)$ are positive vectors.
Furthermore, define 
\[R(\boldsymbol{u},\boldsymbol{v},L) \triangleq \min\left\{\min_{i\in [m]} \lfloor L u_i \rfloor,\min_{j\in[n]} \lfloor L v_j \rfloor\right\}.\]
\end{definition}

Given an instance $\left(\={u},\={v},A\right)$
of $(\={u},\={v})$-scaling where $\={u},\={v}$ are positive integer vectors,
the following definition reduces it to another instance of $(\={1},\={1})$-scaling.

\begin{definition}\label{def-splitted-matrix}
Let $m, n \in \mathbb{Z}_{>0}$. 
Let $\boldsymbol{u} \in \mathbb{Z}_{>0}^m$ and $\boldsymbol{v} \in \mathbb{Z}_{>0}^n$ be vectors satisfying $\norm{\boldsymbol{u}}_1 = \norm{\boldsymbol{v}}_1$
and $A \in \mathbb{R}_{\ge 0}^{m \times n}$ be a nonzero matrix.
For each $i\in [m], j\in [n]$, 
let $S_i = \left[\sum_{k\leq i} u_k\right]\setminus \left[\sum_{k< i} u_k\right]$, $T_j = \left[\sum_{k\leq j} v_k\right]\setminus \left[\sum_{k< j} v_k\right]$.
Define $G(A,\boldsymbol{u},\boldsymbol{v})$ as
the matrix $B$ of size $\norm{\boldsymbol{u}}_1\times \norm{\boldsymbol{v}}_1$ where
\[\forall i\in [m],j\in [n], i'\in S_i,j'\in T_j,\quad  B_{i',j'} = \frac{A_{i,j}}{u_i\cdot v_j}.\]
Intuitively, $G(A,\boldsymbol{u},\boldsymbol{v})$ is the matrix obtained from $A$ by subdividing each entry $A_{i,j}$ into $u_i\cdot v_j$ identical subentries.
\end{definition}

Our reduction from $(\boldsymbol{u},\boldsymbol{v})$-scaling to $(\mathbf{1},\mathbf{1})$-scaling is by discretization and subdivision.
We remark that there is no trivial linear transformation for this reduction.

\subsection{Correctness of the Reduction}
Utilizing Definitions \ref{def-integer-vector} and \ref{def-splitted-matrix}, we can reduce an instance $(A, (\={u}, \={v}))$ of $(\={u}, \={v})$-scaling to an instance $(G(A, \={u}', \={v}'), (\={1}, \={1}))$ of $(\={1}, \={1})$-scaling, where the scaled integer vectors are given by $\={u}' = f_1(\={u}, \={v}, L)$ and $\={v}' = f_2(\={u}, \={v}, L)$.
In this section, we prove the correctness of this reduction. 
Specifically, we show that for any fixed iteration $k$ of the SK algorithm, by choosing a sufficiently large $L$, the marginal error of the $(\={u}, \={v})$-scaling on $A$ can be made arbitrarily close to $1/L$ times that of the $(\={1}, \={1})$-scaling on the expanded matrix $G(A, \={u}', \={v}')$.

The proof relies on two main insights. 
First, to establish the error bounds, we compare two matrix sequences generated by the SK algorithm: the sequence $B^{(0)}, B^{(1)}, \dots$ obtained by applying $(\={u}, \={v})$-scaling on matrix $A$, and the sequence $D^{(0)}, D^{(1)}, \dots$ resulting from $(\={u}', \={v}')$-scaling on $A$. 
Letting $R = R(\={u}, \={v}, L)$, we can show that the initial ratio between $L \cdot B^{(0)}_{i,j}$ and $D^{(0)}_{i,j}$ is bounded by $R/(R-1)$. 
Because each iteration of the SK algorithm amplifies this ratio by at most a power of 3, the ratio $L \cdot B^{(k)}_{i,j}/D^{(k)}_{i,j}$ at the $k$-th step remains strictly bounded by $\left(R/(R-1)\right)^{3^{k+1}}$.
Consequently, for any fixed $k$, we can choose a sufficiently large $L$ such that $R/(R-1)$ approaches 1. This ensures that the upper bound $\left(R/(R-1)\right)^{3^{k+1}}$ is also arbitrarily close to 1, effectively controlling the discrepancy between $L \cdot B^{(k)}_{i,j}$ and $D^{(k)}_{i,j}$. 
As a result, the marginal error of $B^{(k)}$ can be tightly approximated by $1/L$ times the marginal error of $D^{(k)}$. 
That is, the difference between $\norm{\={r}(B^{(k)}) - \={u}}_{1} + \norm{\={c}(B^{(k)}) - \={v}}_{1}$ and $\frac{1}{L} \left(\norm{\={r}(D^{(k)}) - \={u}'}_{1} + \norm{\={c}(D^{(k)}) - \={v}'}_{1}\right)$ can be made negligibly small (\Cref{lem-reduction-error}).

Second, we establish an operational equivalence: performing $(\={u}', \={v}')$-scaling on matrix $A$ via the SK algorithm is strictly equivalent to performing standard $(\={1}, \={1})$-scaling on the expanded matrix $G(A, \={u}', \={v}')$. 
This equivalence can be rigorously verified by tracing the row and column normalization steps throughout the SK iterations (\Cref{lemma-reduction-key}).

Combining these two insights yields our main result: for any fixed iteration $k$ of the SK algorithm and sufficiently large $L$, the discrepancy between the weighted scaling on the original matrix and the uniform scaling on the expanded matrix vanishes proportionally to $1/L$.

The main result of this subsection is the following theorem.

\begin{theorem}\label{thm-reduction-precision}
Let $\varepsilon \in (0,1), m, n \in \mathbb{Z}_{>0}$. Let $\boldsymbol{u} \in \mathbb{R}_{>0}^m$ and $\boldsymbol{v} \in \mathbb{R}_{>0}^n$ be vectors such that $\norm{\boldsymbol{u}}_1 = \norm{\boldsymbol{v}}_1 = 1$, and let $B \in \mathbb{R}_{\ge 0}^{m \times n}$ be a nonzero matrix. 
For any integer $t \ge 0$ and $L \in \mathbb{Z}_{>0}$ with $R(\boldsymbol{u}, \boldsymbol{v}, L) \geq 2$, let $A'$ and $A$ denote the outputs of the SK algorithm at step $t$ with inputs $(B, (\boldsymbol{u}, \boldsymbol{v}))$ and $(G(B, f_1(\boldsymbol{u}, \boldsymbol{v}, L), f_2(\boldsymbol{u}, \boldsymbol{v}, L)), (\boldsymbol{1}, \boldsymbol{1}))$, respectively.
Then, for any fixed iteration step $t \ge 0$, there exists a sufficiently large integer $\ell$ (which depends on $t$ and $\varepsilon$) such that for all $L \ge \ell$, we have
\begin{align}\label{thm-reduction-precision-condition}
\abs{\norm{\=r\left(A'\right)- \boldsymbol{u}}_{1} + \norm{\=c\left(A'\right)- \boldsymbol{v}}_{1} - \frac{\norm{\=r\left(A\right)- \boldsymbol{1}}_{1} + \norm{\=c\left(A\right)- \boldsymbol{1}}_{1}}{L}}\leq \varepsilon.
\end{align}
Moreover, if $L > 0$ is chosen such that $L\boldsymbol{u}$ and $L\boldsymbol{v}$ are integer vectors, then for any integer $t \ge 0$, we have
\begin{align}\label{thm-reduction-precision-condition-eq}
\norm{\=r\left(A'\right)- \boldsymbol{u}}_{1} + \norm{\=c\left(A'\right)- \boldsymbol{v}}_{1} = \frac{\norm{\=r\left(A\right)- \boldsymbol{1}}_{1} + \norm{\=c\left(A\right)- \boldsymbol{1}}_{1}}{L}.
\end{align}
\end{theorem}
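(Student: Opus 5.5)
The plan is to split the argument into the two lemmas announced in the overview: an operational equivalence (\Cref{lemma-reduction-key}) and a perturbation bound (\Cref{lem-reduction-error}), and then combine them.

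\textbf{Step 1: operational equivalence.} Fix positive integer vectors $\={u}'=f_1(\={u},\={v},L)$ and $\={v}'=f_2(\={u},\={v},L)$ with $\norm{\={u}'}_1=\norm{\={v}'}_1=N$. Let $D^{(k)}$ be the SK iterates on $(B,(\={u}',\={v}'))$ and let $A^{(k)}$ be the iterates on $(G(B,\={u}',\={v}'),(\={1},\={1}))$. I will prove by induction on $k$ that
\[
A^{(k)} = G\bigl(D^{(k)},\={u}',\={v}'\bigr) \qquad \text{for all } k.
\]
\begin{itemize}
\item \emph{Base case.} Row $i'\in S_i$ of $G(B,\={u}',\={v}')$ has sum $\sum_j v'_j\cdot B_{i,j}/(u'_iv'_j)=r_i(B)/u'_i$. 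Row normalization therefore gives entries $B_{i,j}/(v'_j r_i(B))$. This equals $D^{(0)}_{i,j}/(u'_iv'_j)$, because $D^{(0)}_{i,j}=u'_iB_{i,j}/r_i(B)$.
\item \emph{Inductive step.} If $A^{(k)}=G(D^{(k)},\={u}',\={v}')$, then every row $i'\in S_i$ has sum $r_i(D^{(k)})/u'_i$. Likewise every column $j'\in T_j$ has sum $c_j(D^{(k)})/v'_j$. Normalizing to $1$ divides each block entry $D^{(k)}_{i,j}/(u'_iv'_j)$ by that sum, which is exactly $G$ applied to the $(\={u}',\={v}')$-normalized $D^{(k)}$.
\end{itemize}
Summing over blocks then gives
\[
\norm{\=r(A^{(k)})-\={1}}_1=\sum_i u'_i\,\bigl|r_i(D^{(k)})/u'_i-1\bigr|=\norm{\=r(D^{(k)})-\={u}'}_1,
\]
and the same identity holds for columns. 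So the $(\={1},\={1})$ error of $A$ equals the $(\={u}',\={v}')$ error of $D^{(t)}$.

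\textbf{Step 2: the exact case \eqref{thm-reduction-precision-condition-eq}.} If $L\={u},L\={v}$ are integral, then $t=0$ in \Cref{def-integer-vector}, so $\={u}'=L\={u}$ and $\={v}'=L\={v}$. SK iterates are homogeneous in the targets: scaling both targets by $L$ scales every iterate by $L$. Hence $D^{(k)}=L\,B^{(k)}$, where $B^{(k)}$ denotes the iterates on $(B,(\={u},\={v}))$. The errors then scale exactly by $L$, and combining this with Step 1 gives \eqref{thm-reduction-precision-condition-eq}.

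\textbf{Step 3: the general case \eqref{thm-reduction-precision-condition}.} I will bound the entrywise ratio $L B^{(k)}_{i,j}/D^{(k)}_{i,j}$.
\begin{itemize}
\item Each coordinate satisfies $u'_i\in\{\lfloor Lu_i\rfloor,\lfloor Lu_i\rfloor+1\}$ and $Lu_i\in[\lfloor Lu_i\rfloor,\lfloor Lu_i\rfloor+1)$. With $R=R(\={u},\={v},L)\ge 2$, this gives $Lu_i/u'_i\in[(R-1)/R,\,R/(R-1)]$, and the same holds for $\={v}$.
\item Let $q=R/(R-1)$. Suppose the ratio $LB^{(k-1)}/D^{(k-1)}$ lies entrywise in $[q^{-a},q^{a}]$. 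A normalization step multiplies by the target ratio, contributing at most $q$, and divides by the ratio of row (or column) sums, contributing at most $q^{a}$. So the exponent recurses as $a\mapsto 2a+1$.
\item Starting from $a_0\le 1$, the recursion gives $a_k\le 3^{k+1}$, which matches the bound stated in the overview.
\end{itemize}
Zero entries of $B$ stay zero in both sequences, so they can be ignored.

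\textbf{Step 4: conclusion.} Given the ratio bound, each row sum satisfies $|L r_i(B^{(t)})-r_i(D^{(t)})|\le (q^{3^{t+1}}-1)\,r_i(D^{(t)})$. Summing over rows, and using that the sum of row sums of $D^{(t)}$ is $N\le L+m+n$, gives
\[
\Bigl|\tfrac{1}{L}\norm{\=r(D^{(t)})-\={u}'}_1-\norm{\=r(B^{(t)})-\={u}}_1\Bigr|
\le \frac{(q^{3^{t+1}}-1)(L+m+n)}{L}+\frac{1}{L}\norm{L\={u}-\={u}'}_1 .
\]
The same bound holds for columns. The last term is at most $(m+n)/L$. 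Since $R\to\infty$ as $L\to\infty$ with $t$ fixed, we get $q^{3^{t+1}}\to 1$, so the whole bound is at most $\varepsilon$ for $L\ge\ell(t,\varepsilon)$. Combining with Step 1 yields \eqref{thm-reduction-precision-condition}.

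\textbf{Main obstacle.} The step I expect to need the most care is the ratio-propagation bound in Step 3. It must be handled uniformly over entries and across both row and column normalizations, and it must also absorb the mismatch between the target ratios $Lu_i/u'_i$ and $Lv_j/v'_j$.
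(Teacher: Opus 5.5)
Your proposal is correct and follows essentially the same route as the paper. Step~1 is the paper's block-constancy induction (Lemma~\ref{lemma-reduction-key}), since $G(D^{(k)},\boldsymbol{u}',\boldsymbol{v}')$ is exactly the block-constant matrix with block sums $D^{(k)}_{i,j}$. Steps~3--4 reproduce the ratio-propagation bound of Lemma~\ref{lem-reduction-error} with the same $(R/(R-1))^{3^{t+1}}$ control, and your recursion $a\mapsto 2a+1$ is even slightly tighter. Your homogeneity argument for the integral case \eqref{thm-reduction-precision-condition-eq} spells out the step the paper only says ``proceeds similarly.''
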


% Let $\varepsilon \in (0,1), m, n \in \mathbb{Z}_{>0}$. 
% Let $\boldsymbol{u} \in \mathbb{R}_{>0}^m$ and $\boldsymbol{v} \in \mathbb{R}_{>0}^n$ be vectors satisfying $\norm{\boldsymbol{u}}_1 = \norm{\boldsymbol{v}}_1 = 1$
% and $B \in \mathbb{R}_{\ge 0}^{m \times n}$ be a nonzero matrix.
% Fix any $t,L \in \mathbb{Z}_{>0}$ with $R(\boldsymbol{u}, \boldsymbol{v}, L) \geq 2$,
% let $A'$ and $A$ denote the outputs of the SK algorithm at step $t$ with inputs $(B, (\boldsymbol{u}, \boldsymbol{v}))$ and $(G(B, f_1(\boldsymbol{u}, \boldsymbol{v}, L), f_2(\boldsymbol{u}, \boldsymbol{v}, L)), (\boldsymbol{1}, \boldsymbol{1}))$, respectively.
% Then, for any integer $t \geq 0$, there exists a sufficiently large $\ell$ such that for each $L\geq \ell$ we have 
% \begin{align}\label{thm-reduction-precision-condition}\abs{\norm{\=r\left(A'\right)- \boldsymbol{u}}_{1} + \norm{\=c\left(A'\right)- \boldsymbol{v}}_{1} - \frac{\norm{\=r\left(A\right)- \boldsymbol{1}}_{1} + \norm{\=c\left(A\right)- \boldsymbol{1}}_{1}}{L}}\leq \varepsilon.\end{align}
% Moreover, given any $L>0$ where $Lu_1,\dots,Lu_m,Lv_1,\dots,Lv_n$ are integers, for any integer $t\geq 0$ we have 
% \begin{align}\label{thm-reduction-precision-condition-eq}
% \norm{\=r\left(A'\right)- \boldsymbol{u}}_{1} + \norm{\=c\left(A'\right)- \boldsymbol{v}}_{1} = \frac{\norm{\=r\left(A\right)- \boldsymbol{1}}_{1} + \norm{\=c\left(A\right)- \boldsymbol{1}}_{1}}{L}.
% \end{align}

To prove \Cref{thm-reduction-precision}, it suffices to establish \eqref{thm-reduction-precision-condition}, which follows directly from Lemmas \ref{lem-reduction-error} and \ref{lemma-reduction-key}. The proof of \eqref{thm-reduction-precision-condition-eq} for the case where $Lu_1,\dots,Lu_m,Lv_1,\dots,Lv_n$ are integers proceeds similarly.

The following lemma compares two matrices generated from matrix $A$ at iteration step $t$ of the SK algorithm: matrix $B$, obtained via $(\={u}, \={v})$-scaling, and matrix $D$, obtained via $(f_1(\boldsymbol{u},\boldsymbol{v},L),f_2(\boldsymbol{u},\boldsymbol{v},L))$-scaling. It establishes that for a sufficiently large $L$, the marginal error of $B$ can be tightly approximated by $1/L$ times the marginal error of $D$.

\begin{lemma}\label{lem-reduction-error}
Let $\varepsilon\in (0,1), m, n, t \in \mathbb{Z}_{>0}$. 
Let $\boldsymbol{u} \in \mathbb{R}_{>0}^m$ and $\boldsymbol{v} \in \mathbb{R}_{>0}^n$ be vectors satisfying $\norm{\boldsymbol{u}}_1 = \norm{\boldsymbol{v}}_1 = 1$
and $A \in \mathbb{R}_{\ge 0}^{m \times n}$ be a nonzero matrix.
Given any positive integer $L$ with $R(\boldsymbol{u},\boldsymbol{v},L)\geq 2$,
let $B,D$ be the outputs of SK at step $t$ with input $(A,(\boldsymbol{u},\boldsymbol{v}))$ and $(A,f_1(\boldsymbol{u},\boldsymbol{v},L),f_2(\boldsymbol{u},\boldsymbol{v},L))$, respectively.
Then we have 
\begin{align}\label{eq-lem-reduction-error}
\abs{\norm{\=r\left(B\right) - \boldsymbol{u}}_{1}+\norm{\=c\left(B\right) - \boldsymbol{v}}_{1} - \frac{\norm{\=r\left(D\right) - \boldsymbol{u}'}_{1}+\norm{\=c\left(D\right) - \boldsymbol{v}'}_{1}}{L}}\leq \frac{n}{L} +\left(\frac{R(\boldsymbol{u},\boldsymbol{v},L)}{R(\boldsymbol{u},\boldsymbol{v},L)-1}\right)^{3^{t+1}}-1.
\end{align}
\end{lemma}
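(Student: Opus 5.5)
We compare the two SK trajectories entrywise through multiplicative ratios. Write $\boldsymbol{u}'=f_1(\boldsymbol{u},\boldsymbol{v},L)$, $\boldsymbol{v}'=f_2(\boldsymbol{u},\boldsymbol{v},L)$, $R=R(\boldsymbol{u},\boldsymbol{v},L)$ and $\lambda=R/(R-1)$. Let $B^{(k)}$ and $D^{(k)}$ denote the SK iterates with targets $(\boldsymbol{u},\boldsymbol{v})$ and $(\boldsymbol{u}',\boldsymbol{v}')$, respectively.

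\textbf{Step 1: target ratios.} We first record that every coordinate satisfies
\[
\lambda^{-1}\le \frac{Lu_i}{u'_i}\le \lambda,\qquad \lambda^{-1}\le \frac{Lv_j}{v'_j}\le \lambda .
\]
Indeed, $u'_i\in\{\lfloor Lu_i\rfloor,\lfloor Lu_i\rfloor+1\}$ and $\lfloor Lu_i\rfloor\ge R$, so
\[
\frac{Lu_i}{u'_i}\le \frac{\lfloor Lu_i\rfloor+1}{\lfloor Lu_i\rfloor}\le \frac{R+1}{R}\le\lambda ,
\]
and similarly $\frac{u'_i}{Lu_i}\le\frac{\lfloor Lu_i\rfloor+1}{\lfloor Lu_i\rfloor}\le\lambda$. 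The same argument applies to $\boldsymbol{v}$.

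\textbf{Step 2: induction on the entrywise ratio.} We claim that
\[
\lambda^{-e_k}\le \frac{L B^{(k)}_{i,j}}{D^{(k)}_{i,j}}\le \lambda^{e_k}\qquad\text{on the support,}
\]
with $e_0=1$ and $e_{k}=3e_{k-1}+1$, hence $e_k\le 3^{k+1}$.

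For $k=0$ both iterates share the same normalized row $A_{i,j}/r_i(A)$, so the ratio is exactly $Lu_i/u'_i$.

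For the inductive step, take an odd $k$. Then
\[
\frac{LB^{(k)}_{i,j}}{D^{(k)}_{i,j}}=\frac{Lv_j}{v'_j}\cdot\frac{LB^{(k-1)}_{i,j}}{D^{(k-1)}_{i,j}}\cdot \frac{c_j(D^{(k-1)})}{L\,c_j(B^{(k-1)})}.
\]
The column-sum ratio is a weighted average of the entrywise ratios, so it also lies in $[\lambda^{-e_{k-1}},\lambda^{e_{k-1}}]$. This gives exponent $e_{k-1}+e_{k-1}+1\le 3e_{k-1}+1$, so the crude recursion suffices. Even $k$ is symmetric, with rows in place of columns.

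\textbf{Step 3: transfer to marginal errors.} Put $\delta=\lambda^{3^{t+1}}-1$. From Step 2, with $B=B^{(t)}$ and $D=D^{(t)}$,
\[
\bigl|r_i(B)-r_i(D)/L\bigr|\le \delta\, r_i(D)/L ,
\]
and the analogous bound holds for columns. Summing, and using $\|\boldsymbol{r}(D)\|_1=\|\boldsymbol{u}'\|_1$,
\[
\sum_i\bigl|r_i(B)-r_i(D)/L\bigr|\le \delta\,\|\boldsymbol{u}'\|_1/L .
\]
Next we compare the targets. Since $\|\boldsymbol{u}\|_1=1$ and $L\boldsymbol{u}-\boldsymbol{u}'$ differs from rounding by at most $1$ per coordinate,
\[
\|\boldsymbol{u}-\boldsymbol{u}'/L\|_1+\|\boldsymbol{v}-\boldsymbol{v}'/L\|_1=\frac{\|L\boldsymbol{u}-\boldsymbol{u}'\|_1+\|L\boldsymbol{v}-\boldsymbol{v}'\|_1}{L}.
\]
Only one of the two vectors receives the $+1$ adjustments, and its total adjustment is bounded by its length. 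I expect this bookkeeping to give $n/L$ in the stated form; if the $+1$'s fall on $\boldsymbol{u}$, we get $m/L$ or a similar constant instead.

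Finally, the triangle inequality bounds the absolute difference in \eqref{eq-lem-reduction-error} by
\[
\frac{\delta\,(\|\boldsymbol{u}'\|_1+\|\boldsymbol{v}'\|_1)}{L}+\frac{n}{L}.
\]
Because one marginal of an SK iterate is exact, only one side actually contributes the $\delta$ term, giving $\delta\|\boldsymbol{u}'\|_1/L\le\delta$ since $\|\boldsymbol{u}'\|_1\le L$. This yields the bound $n/L+\delta$.

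\textbf{Main obstacle.} The delicate step is this last accounting. It requires using that $\|\boldsymbol{u}'\|_1\le L$ together with the exactness of one marginal, so the $\delta$ term is not doubled. It also requires fitting the rounding discrepancy exactly into $n/L$, which needs care about which vector received the $+1$'s in \Cref{def-integer-vector}.
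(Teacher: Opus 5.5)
Your Steps 1 and 2 are essentially the paper's argument. The paper proves exactly the two-sided bound $L B^{(k)}_{i,j}/\alpha(k)\le D^{(k)}_{i,j}\le LB^{(k)}_{i,j}\,\alpha(k)$ with $\alpha(k)=\lambda^{3^{k+1}}$, using the same one-step recursion. Your bound on the $\delta$-part via $D$ and $\|\boldsymbol u'\|_1\le L$ is equivalent to the paper's bound via $LB$ and $\|\boldsymbol r(B)\|_1=1$.

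\textbf{The gap is the target-discrepancy term in Step 3.} You plan to bound $\|\boldsymbol u-\boldsymbol u'/L\|_1+\|\boldsymbol v-\boldsymbol v'/L\|_1$ by $n/L$ according to which vector received the $+1$'s. This cannot work, because the floors alone create a discrepancy of nearly $1$ in every coordinate of both vectors. For instance, take $\boldsymbol u=\boldsymbol v=(1/3,1/3,1/3)$ and $L=8$. Then $R=2$, there are no $+1$ adjustments at all, and $\boldsymbol u'=\boldsymbol v'=(2,2,2)$. The sum equals $4/L>3/L=n/L$.

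The missing observation is that exactness of one marginal removes that whole side, not just its $\delta$ contribution. For even $t$ one has $\boldsymbol r(B)=\boldsymbol u$ and $\boldsymbol r(D)=\boldsymbol u'$ identically. So both row errors vanish and $\boldsymbol u'$ never enters. Then
\[
\Bigl|\|\boldsymbol c(B)-\boldsymbol v\|_1-\tfrac{1}{L}\|\boldsymbol c(D)-\boldsymbol v'\|_1\Bigr|\le \|\boldsymbol c(B)-\boldsymbol c(D)/L\|_1+\tfrac{1}{L}\|L\boldsymbol v-\boldsymbol v'\|_1\le \delta+\tfrac{n}{L}.
\]
The last step holds because $v'_j\in\{\lfloor Lv_j\rfloor,\lfloor Lv_j\rfloor+1\}$ gives $|Lv_j-v'_j|\le 1$ for every $j$, wherever the $+1$'s went. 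This is exactly what the paper does.

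The $m$-versus-$n$ asymmetry you sensed is real, but it comes from the parity of $t$, not from the rounding rule. For odd $t$ the columns are exact, and the same argument yields $m/L+\delta$. The paper simply says ``without loss of generality $t$ is even,'' so strictly the argument establishes the bound with $\max\{m,n\}/L$ in place of $n/L$. This is harmless for its use in \Cref{thm-reduction-precision}, where only $L\to\infty$ matters.
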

\begin{proof}
Without loss of generality, we assume that $t$ is even.
For simplicity, let $R = R(\boldsymbol{u},\boldsymbol{v},L)$, $\boldsymbol{u}' \triangleq (u'_1,\dots,u'_m) = f_1(\boldsymbol{u},\boldsymbol{v},L)$, $\boldsymbol{v}' \triangleq (v'_1,\dots,v'_n) = f_2(\boldsymbol{u},\boldsymbol{v},L)$.
Let $B^{(0)},B^{(1)},\dots,B^{(t)} = B$ be the generated matrices by SK with $(A, (\boldsymbol{u},\boldsymbol{v}))$ as input. 
Similarly, let $D^{(0)},D^{(1)},\dots,D^{(t)} = D$ be the generated matrices by SK with $(A, (\boldsymbol{u}',\boldsymbol{v}'))$  as input.
By $t$ is even, we have 
\begin{align*}
\forall i\in [m],\quad \sum_{j\in [n]}B^{(t)}_{i,j} = \sum_{j\in [n]}\frac{u_iB^{(t-1)}_{i,j}}{r_i\left(B^{(t-1)}
\right)} = u_i\cdot \sum_{j\in [n]}\frac{B^{(t-1)}_{i,j}}{r_i\left(B^{(t-1)}\right)} = u_i.
\end{align*}
Thus, we have
$\norm{r\left(B\right)- \boldsymbol{u}}_{1} = \norm{r\left(B^{(t)}\right)- \boldsymbol{u}}_{1} = 0$.
Similarly, we also have 
$\norm{r\left(D\right)- \boldsymbol{u}'}_{1} = 0$.
Furthermore, define
\[
\forall k\geq 0, \quad \alpha(k) \triangleq \left(\frac{R}{R-1}\right)^{3^{k+1}}.
\]
We claim
\begin{equation}\label{eq-reduction-errorbound}
\begin{aligned}
\forall i\in[m],j\in [n], k\in [t], &\quad L\cdot B^{(k)}_{i,j}/\alpha(k)\leq D^{(k)}_{i,j}\leq L\cdot B^{(k)}_{i,j} \cdot \alpha(k).
\end{aligned}
\end{equation}
For each $j\in [n]$, by $\alpha(t)>1$ and \eqref{eq-reduction-errorbound} 
we have
\begin{align*}
\left(\sum_{i\in [m]} D^{(t)}_{i,j}\right) - Lv_j - \left(\left(\sum_{i\in [m]}L\cdot B^{(t)}_{i,j}\right) - Lv_j\right) = \left(\sum_{i\in [m]} D^{(t)}_{i,j}\right)  - \left(\sum_{i\in [m]}L\cdot B^{(t)}_{i,j}\right)
\leq \sum_{i\in [m]}L\cdot B^{(t)}_{i,j}\left(\alpha(t) - 1\right).
\end{align*}
By $\alpha(t)>1$ and \eqref{eq-reduction-errorbound}, we also have 
\begin{align*}
&\quad \left(\sum_{i\in [m]}L\cdot B^{(t)}_{i,j}\right) - Lv_j - \left(\left(\sum_{i\in [m]} D^{(t)}_{i,j}\right) - Lv_j\right)  
= \left(\sum_{i\in [m]}L\cdot B^{(t)}_{i,j}\right) - \left(\sum_{i\in [m]} D^{(t)}_{i,j}\right) \\
&\leq \sum_{i\in [m]}L\cdot B^{(t)}_{i,j}\left(1 - \frac{1}{\alpha(t)}\right) \leq 
\sum_{i\in [m]}L\cdot B^{(t)}_{i,j}\left(\alpha(t) - 1\right).
\end{align*}
In summary, we always have 
\begin{align}\label{eq-abssumiinm-lbtijminuslvj}
&\quad \abs{\left(\sum_{i\in [m]}L\cdot B^{(t)}_{i,j}\right) - Lv_j - \left(\left(\sum_{i\in [m]} D^{(t)}_{i,j}\right) - Lv_j\right)} \leq 
\sum_{i\in [m]}L\cdot B^{(t)}_{i,j}\left(\alpha(t) - 1\right).
\end{align}
Moreover, by \Cref{def-integer-vector} we have
\[
\sum_{j\in [n]}\abs{Lv_j - v'_j} \leq 1\cdot n = n.
\]
Thus,
\begin{align*}
\sum_{j\in [n]}\abs{\left(\sum_{i\in [m]}D^{(t)}_{i,j}\right) - Lv_j} \leq \sum_{j\in [n]}\abs{\left(\sum_{i\in [m]}D^{(t)}_{i,j}\right) - v'_j} + \sum_{j\in [n]}\abs{Lv_j - v'_j} \leq n+\sum_{j\in [n]}\abs{\left(\sum_{i\in [m]}D^{(t)}_{i,j}\right) - v'_j}.
\end{align*}
Combined with \eqref{eq-abssumiinm-lbtijminuslvj}, we have
\begin{align*}
&\quad \sum_{j\in [n]}\abs{\left(\sum_{i\in [m]}L\cdot B^{(t)}_{i,j}\right) - Lv_j - \left(\left(\sum_{i\in [m]}D^{(t)}_{i,j}\right) - v'_j\right)} \\
&\leq n+\sum_{j\in [n]}\abs{\left(\sum_{i\in [m]}L\cdot B^{(t)}_{i,j}\right) - Lv_j - \left(\left(\sum_{i\in [m]}D^{(t)}_{i,j}\right) - L v_j\right)} \\
&\leq n + \sum_{j\in [n]}\sum_{i\in [m]}L\cdot B^{(t)}_{i,j}\left(\alpha(t)-1\right).
\end{align*}
Therefore, 
\begin{align*}
&\quad \abs{L\norm{\=c\left(B^{t}\right) - \boldsymbol{v}}_{1} - \norm{\=c\left(D^{t}\right) - \boldsymbol{v}'}_{1}}\leq \sum_{j\in [n]}\abs{\left(\sum_{i\in [m]}L\cdot B^{(t)}_{i,j}\right) - Lv_j - \left(\left(\sum_{i\in [m]}D^{(t)}_{i,j}\right) - v'_j\right)} \\
&\leq n + \sum_{j\in [n]}\sum_{i\in [m]}L\cdot B^{(t)}_{i,j}\left(\alpha(t)-1\right)
= n+ L\norm{\=r\left(B^{(t)}\right)}_{1}\left(\alpha(t)-1\right).
\end{align*}
Combined with $B = B^{(t)}$ and $D = D^{(t)}$, we have 
\begin{align*}
\abs{\norm{\=c\left(B\right) - \boldsymbol{v}}_{1} - \frac{1}{L}\cdot \norm{\=c\left(D\right) - \boldsymbol{v}'}_{1}  } \leq \frac{n}{L} + \norm{\=r\left(B\right)}_{1}(\alpha(t)-1).
\end{align*}
Combined with $\norm{\=r\left(B\right)- \boldsymbol{u}}_{1} = \norm{\=r\left(D\right)- \boldsymbol{u}'}_{1} = 0$ and $\norm{\=r\left(B\right)}_{1} = \norm{ \boldsymbol{u}}_{1}$ = 1,
Thus,
\eqref{eq-lem-reduction-error} is immediate.
In the following, we prove \eqref{eq-reduction-errorbound} by reduction.
Then the lemma is proved. 

The base step is $k = 0$. 
For any $i\in[m], j\in [n]$, we have
\begin{align*}
D^{(0)}_{i,j}  &= \frac{u'_iA_{i,j}}{r_{i}(A)} \geq \frac{\lfloor Lu_i \rfloor A_{i,j}}{r_{i}(A)} \geq \frac{Lu_iA_{i,j}}{r_{i}(A)}\cdot \frac{\lfloor Lu_i \rfloor}{\lfloor Lu_i \rfloor + 1} \geq \frac{Lu_iA_{i,j}}{r_{i}(A)}\cdot \frac{R}{R + 1} = L\cdot B^{(0)}_{i,j}\cdot \frac{R}{R + 1}> L\cdot B^{(0)}_{i,j}\left(1 - \frac{1}{R}\right),\\
D^{(0)}_{i,j}  &= \frac{u'_iA_{i,j}}{r_{i}(A)} \leq \frac{\left(\lfloor Lu_i \rfloor + 1\right) A_{i,j}}{r_{i}(A)} \leq \frac{Lu_iA_{i,j}}{r_{i}(A)}\frac{\lfloor Lu_i \rfloor+1}{\lfloor Lu_i \rfloor} \leq \frac{Lu_iA_{i,j}}{r_{i}(A)} \frac{R+1}{R} = L\cdot B^{(0)}_{i,j}\frac{R+1}{R}<L\cdot B^{(0)}_{i,j}\left(1 - \frac{1}{R}\right)^{-1}.
\end{align*}
Thus, \eqref{eq-reduction-errorbound} is immediate for $k = 0$.
The base step is proved.

For the inductive step where $k > 0$, without loss of generality, we assume $k$ is odd.
Thus, for any $j\in [n]$, we have
\begin{align*}
D^{(k)}_{i,j}  &= \frac{v'_jD^{(k-1)}_{i,j}}{c_{j}\left(D^{(k-1)}\right)} \geq \frac{\lfloor Lv_j \rfloor D^{(k-1)}_{i,j}}{c_{j}\left(D^{(k-1)}\right)} \geq \frac{Lv_jD^{(k-1)}_{i,j}}{c_{j}\left(D^{(k-1)}\right)}\cdot \frac{\lfloor Lv_j \rfloor}{\lfloor Lv_j \rfloor + 1} \\
&\geq \frac{Lv_jB^{(k-1)}_{i,j}/\alpha(k-1)}{\alpha(k-1)c_{j}\left(B^{(k-1)}\right)}\cdot \frac{R}{R + 1} 
\geq \frac{L\cdot B^{(k)}_{i,j}}{(\alpha(k-1))^{2}}\cdot \left(1-\frac{1}{R}\right) \geq \frac{L\cdot B^{(k)}_{i,j}}{\alpha(k)},
\end{align*}
where the third inequality follows from the induction hypothesis.
Similarly, we also have 
\begin{align*}
D^{(k)}_{i,j}  &= \frac{v'_jD^{(k-1)}_{i,j}}{c_{j}\left(D^{(k-1)}\right)} \leq \frac{\left(\lfloor Lv_j \rfloor +1\right)D^{(k-1)}_{i,j}}{c_{j}\left(D^{(k-1)}\right)} \leq \frac{Lv_jD^{(k-1)}_{i,j}}{c_{j}\left(D^{(k-1)}\right)}\cdot \frac{\lfloor Lv_j \rfloor+1}{\lfloor Lv_j \rfloor} \\
&\leq 
\frac{Lv_j\cdot \alpha(k-1)\cdot B^{(k-1)}_{i,j}}{c_{j}\left(B^{(k-1)}\right)/\alpha(k-1)}\cdot \frac{R+1}{R} 
\leq L\cdot B^{(k)}_{i,j} \cdot (\alpha(k-1))^2 \cdot \left(1-\frac{1}{R}\right)^{-1} \leq  L\cdot B^{(k)}_{i,j} \cdot \alpha(k),
\end{align*}
where the third inequality follows from the induction hypothesis.
Combining the above two inequalities, we see that \eqref{eq-reduction-errorbound} holds for step $k$. This completes the inductive step. Therefore, \eqref{eq-reduction-errorbound} is established, which concludes the proof of the lemma.
\end{proof}

The following lemma establishes that for any $\={u} \in \mathbb{Z}_{>0}^m$, $\={v} \in \mathbb{Z}_{>0}^n$, and $B \in \mathbb{R}_{\ge 0}^{m \times n}$,
performing $(\={u}, \={v})$-scaling on $B$ via the SK algorithm is strictly equivalent to performing standard $(\={1}, \={1})$-scaling on the expanded matrix $G(B, \={u}, \={v})$. 

\begin{lemma}\label{lemma-reduction-key}
Let $t,m,n\in \mathbb{Z}_{>0}$.
Let $\={u} \in \mathbb{Z}_{>0}^m$ and $\={v} \in \mathbb{Z}_{>0}^n$ be vectors satisfying $\norm{\={u}}_1 = \norm{\={v}}_1$
and $B \in \mathbb{R}_{\ge 0}^{m \times n}$ be a nonzero matrix.
Let $A$ and $A'$ denote the outputs of SK at step $t$, using the input pairs $(G(B,\boldsymbol{u},\boldsymbol{v}), (\boldsymbol{1},\boldsymbol{1}))$ and $(B, (\boldsymbol{u}, \boldsymbol{v}))$, respectively.
Then
\begin{align}\label{eq-rcscaling-condition}
\norm{\=r\left(A\right)- \boldsymbol{1}}_{1} + \norm{\=c\left(A\right)- \boldsymbol{1}}_{1} = \norm{\=r\left(A'\right)- \boldsymbol{u}}_{1} + \norm{\=c\left(A'\right)- \boldsymbol{v}}_{1}.
\end{align}
\end{lemma}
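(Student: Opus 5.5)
We prove, by induction on the step $k\in\{0,1,\dots,t\}$, the stronger structural claim that the $k$-th SK iterate on $(G(B,\boldsymbol{u},\boldsymbol{v}),(\boldsymbol{1},\boldsymbol{1}))$ equals $G\bigl(A'^{(k)},\boldsymbol{u},\boldsymbol{v}\bigr)$. Here $A'^{(0)},A'^{(1)},\dots$ are the SK iterates on $(B,(\boldsymbol{u},\boldsymbol{v}))$, and $G$ is applied with the same block index sets $S_i,T_j$ as in \Cref{def-splitted-matrix}. Concretely, we show
\[
A^{(k)}_{i',j'}=\frac{A'^{(k)}_{i,j}}{u_i v_j}\qquad\text{for all } i\in[m],\ j\in[n],\ i'\in S_i,\ j'\in T_j .
\]
The key elementary observation is this: for any $M\in\mathbb{R}_{\ge 0}^{m\times n}$ and any $i'\in S_i$,
\[
r_{i'}\bigl(G(M,\boldsymbol{u},\boldsymbol{v})\bigr)=\sum_{j}\,|T_j|\cdot\frac{M_{i,j}}{u_iv_j}=\frac{r_i(M)}{u_i},
\]
since $|T_j|=v_j$. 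Symmetrically, $c_{j'}\bigl(G(M,\boldsymbol{u},\boldsymbol{v})\bigr)=c_j(M)/v_j$ for $j'\in T_j$. So row (column) sums of the expanded matrix are constant on each block row (block column), and equal the original sums divided by the target weight.

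\textbf{Base case ($k=0$).} $A^{(0)}_{i',j'}$ is $G(B)_{i',j'}/r_{i'}(G(B)) = \frac{B_{i,j}}{u_iv_j}\cdot\frac{u_i}{r_i(B)}$. Multiplying and dividing by $u_i$ gives $\frac{1}{u_iv_j}\cdot\frac{u_iB_{i,j}}{r_i(B)}=\frac{A'^{(0)}_{i,j}}{u_iv_j}$.

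\textbf{Column step ($k$ odd).} Assuming the claim at $k-1$,
\[
A^{(k)}_{i',j'}=\frac{A^{(k-1)}_{i',j'}}{c_{j'}(A^{(k-1)})}=\frac{A'^{(k-1)}_{i,j}}{u_iv_j}\cdot\frac{v_j}{c_j(A'^{(k-1)})}=\frac{1}{u_iv_j}\cdot\frac{v_jA'^{(k-1)}_{i,j}}{c_j(A'^{(k-1)})},
\]
which is exactly $A'^{(k)}_{i,j}/(u_iv_j)$.

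\textbf{Row step ($k$ even, $k>0$).} This is symmetric to the column step. Nonzero denominators are inherited: a zero row or column in one matrix corresponds to a zero block in the other, so both sequences are defined or undefined together.

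\textbf{Conclusion.} Taking $k=t$ and applying the sum identity to $M=A'$ gives
\[
\sum_{i'\in S_i}\bigl|r_{i'}(A)-1\bigr| = u_i\left|\frac{r_i(A')}{u_i}-1\right| = \bigl|r_i(A')-u_i\bigr|,
\]
and similarly $\sum_{j'\in T_j}|c_{j'}(A)-1|=|c_j(A')-v_j|$. Summing over $i$ and $j$ yields \eqref{eq-rcscaling-condition}.

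No step is a real obstacle. The only point needing care is the bookkeeping that the block sizes $|S_i|=u_i$ and $|T_j|=v_j$ exactly cancel the $1/(u_iv_j)$ subdivision factor, which requires $\boldsymbol{u},\boldsymbol{v}$ to be integral. Without integrality the reduction is only approximate, which is why \Cref{lem-reduction-error} is needed separately.
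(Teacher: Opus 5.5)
Your proof is correct and follows essentially the same route as the paper. The paper's inductive invariant ($C^{(k)}$ constant on each block $S_i\times T_j$ with block sum equal to the $(i,j)$ entry of the $(\boldsymbol{u},\boldsymbol{v})$-iterate) is exactly your statement $A^{(k)}=G({A'}^{(k)},\boldsymbol{u},\boldsymbol{v})$, and both proofs finish by noting that row sums are constant on each $S_i$ (the paper phrases this as all $r_{i'}(A)-1$ sharing a sign), so that $\sum_{i'\in S_i}|r_{i'}(A)-1|=|r_i(A')-u_i|$, with the symmetric identity for columns.
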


\begin{proof}
For simplicity, let $C \triangleq G(B,\boldsymbol{u},\boldsymbol{v})$. 
Let $B^{(0)}, B^{(1)}, \ldots, B^{(t)} = A'$ and $C^{(0)}, C^{(1)}, \ldots, C^{(t)} = A$ denote the sequences of matrices generated by the SK algorithm on inputs $(B,(\boldsymbol{u},\boldsymbol{v}))$ and $(C,(\boldsymbol{1},\boldsymbol{1}))$, respectively.
For each $i\in [m], j\in [n]$, 
define $S_i = \left[\sum_{k\leq i} u_k\right]\setminus \left[\sum_{k< i} u_k\right]$, $T_j = \left[\sum_{k\leq j} v_k\right]\setminus \left[\sum_{k< j} v_k\right]$.
We claim that
\begin{align}\label{eq-reduction-key}
\forall i\in [m],j\in [n],k\in [t],h,h'\in S_i, \ell,\ell'\in T_j, \quad C^{(k)}_{h',\ell'} = C^{(k)}_{h,\ell}, \quad B^{(k)}_{i,j} = \sum_{i'\in S_i, j'\in T_j}C^{(k)}_{i',j'}.
\end{align}
Hence, by $C^{(t)} = A$ and $B^{(t)} = A'$ we have
\begin{align}\label{eq-def-aprime}
\forall i\in [m],j\in [n],\quad A'_{i,j} = B^{(t)}_{i,j}  = \sum_{i'\in S_i, j'\in T_j}C^{(t)}_{i',j'} =  \sum_{i'\in S_i, j'\in T_j}A_{i',j'}.
\end{align}
Thus,
\begin{equation*}
\begin{aligned}
\forall i\in [m], \quad r_i\left(A'\right) =\sum_{j\in [n]}A'_{i,j} =
\sum_{j\in [n]} \sum_{i'\in S_i,j'\in T_j,}A_{i',j'} = \sum_{i'\in S_i}\sum_{j\in [n],j'\in T_j} A_{i',j'}  = \sum_{i'\in S_i}r_{i'}\left(A\right).
\end{aligned}
\end{equation*}
Therefore, 
\begin{align}\label{eq-riaprime-minusui}
\forall i\in [m], \quad r_i\left(A'\right) - u_i = r_i\left(A'\right) - \abs{S_i} = \sum_{i'\in S_i}(r_{i'}\left(A\right) - 1).
\end{align}
In addition, by \eqref{eq-reduction-key} and $C^{(t)} = A$, for any $i\in[m],j\in [n], h,h'\in S_i, \ell\in T_j$ we have
$A_{h',\ell} = A_{h,\ell}$.
Thus, $r_{h'}(A) = r_{h}(A)$, which implies that for all $i' \in S_i$, the values $r_{i'}(A) - 1$ share the same sign.
Combined with \eqref{eq-riaprime-minusui}, we have
$\abs{r_i\left(A'\right) - u_i} = \sum_{i'\in S_i}\abs{r_{i'}\left(A\right) - 1}$.
Hence, 
\begin{align*}
\norm{\=r\left(A\right)- \boldsymbol{1}}_{1} = \norm{\=r\left(A'\right)- \boldsymbol{u}}_{1}.
\end{align*}
Similarly, we also have 
\begin{align*}
\norm{\=c\left(A\right)- \boldsymbol{1}}_{1} =  \norm{\=c\left(A'\right)- \boldsymbol{v}}_{1}.
\end{align*}
Thus, \eqref{eq-rcscaling-condition} follows immediately from the two identities above. Next, we prove \eqref{eq-reduction-key} by induction, which completes the proof of the lemma.

The base step is $k = 0$. 
For any $i\in[m],j\in [n], h,h'\in S_i, \ell,\ell'\in T_j$, 
by $C = G(B,\boldsymbol{u},\boldsymbol{v})$ and \Cref{def-splitted-matrix},
we have 
$C_{h',\ell'} = C_{h,\ell}$.
Thus, we also have
$r_{h'}(C) = r_{h}(C)$.
Therefore,
\begin{align*}
\quad C^{(0)}_{h',\ell'} = \frac{C_{h',\ell'}}{r_{h'}(C)} = \frac{C_{h,\ell}}{r_{h}(C)} =  C^{(0)}_{h,\ell}.
\end{align*}
Moreover, by \Cref{def-splitted-matrix}, we also have
$B_{i,j} = \abs{T_j}\cdot \abs{S_i}\cdot C_{h,\ell}$.
Thus,
\begin{align*}
B^{(0)}_{i,j} &= u_i \cdot \frac{ B_{i,j}}{r_i(B)} = u_i \cdot \frac{\abs{T_j}\cdot \abs{S_i}\cdot C_{h,\ell}}{\sum_{i'\in S_i} r_{i'}(C)}
= u_i\cdot \frac{\abs{T_j}\cdot \abs{S_i}\cdot C_{h,\ell}}{\abs{S_i}\cdot r_{h}(C)}
=  \frac{\abs{T_j}\cdot \abs{S_i}\cdot C_{h,\ell}}{r_{h}(C)}\\
& =  \abs{T_j}\cdot \abs{S_i}\cdot C^{(0)}_{h,\ell} =   \sum_{i'\in S_i, j'\in T_j}C^{(0)}_{i',j'}.
\end{align*}
The base step is proved.

For the inductive step where $k > 0$, without loss of generality, we assume that $k$ is odd.
By the induction hypothesis, for any $i\in[m],j\in [n], h,h'\in S_i, \ell,\ell'\in T_j$ we have
$C^{(k-1)}_{h',\ell'} = C^{(k-1)}_{h,\ell}$.
Thus, we also have $c_{\ell'}\left(C^{(k-1)}\right) = c_{\ell}\left(C^{(k-1)}\right)$.
Therefore, 
\begin{align}\label{eq-reduction-key-first}
C^{(k)}_{h',\ell'} = \frac{C^{(k-1)}_{h',\ell'}}{c_{\ell'}\left(C^{(k-1)}\right)} =\frac{C^{(k-1)}_{h,\ell}}{c_{\ell}\left(C^{(k-1)}\right)} = C^{(k)}_{h,\ell}.
\end{align}
Moreover, by the induction hypothesis, 
we also have 
\begin{align}\label{eq-reduction-key-second}
B^{(k-1)}_{i,j} =\sum_{i'\in S_i, j'\in T_j}C^{(k-1)}_{i',j'} = \abs{T_j}\cdot \abs{S_i}\cdot C^{(k-1)}_{h,\ell}.
\end{align}
Furthermore, by $c_{\ell'}\left(C^{(k-1)}\right) = c_{\ell}\left(C^{(k-1)}\right)$ for each $h,h'\in S_i, \ell,\ell'\in T_j$,
we have 
\[\sum_{j'\in T_j}c_{j'}\left(C^{(k-1)}\right) = \abs{T_j}\cdot c_{\ell}\left(C^{(k-1)}\right).\]
Combined with the induction hypothesis, we have
\begin{equation}
\begin{aligned}\label{eq-reduction-key-third}
c_j\left(B^{(k-1)}\right) &=\sum_{i\in [m]}B^{(k-1)}_{i,j} =
\sum_{i\in [m]} \sum_{i'\in S_i, j'\in T_j}C^{(k-1)}_{i',j'} = \sum_{j'\in T_j}\sum_{i\in [m],i'\in S_i} C^{(k-1)}_{i',j'}  = \sum_{j'\in T_j}c_{j'}\left(C^{(k-1)}\right) \\
&= \abs{T_j}\cdot c_{\ell}\left(C^{(k-1)}\right).
\end{aligned}
\end{equation}
Therefore, by \eqref{eq-reduction-key-first}, \eqref{eq-reduction-key-second} and \eqref{eq-reduction-key-third}, we have
\begin{align*}
\quad B^{(k)}_{i,j} = v_j \cdot \frac{ B^{(k-1)}_{i,j}}{c_j\left(B^{(k-1)}\right)} =  v_j \cdot \frac{\abs{T_j}\cdot \abs{S_i}\cdot C^{(k-1)}_{h,\ell}}{\abs{T_j}\cdot c_{\ell}\left(C^{(k-1)}\right)} =  \frac{\abs{T_j}\cdot \abs{S_i}\cdot C^{(k-1)}_{h,\ell}}{c_{\ell}\left(C^{(k-1)}\right)} = \abs{T_j}\cdot \abs{S_i}\cdot C^{(k)}_{h,\ell}
=  \sum_{i'\in S_i, j'\in T_j}C^{(k)}_{i',j'},
\end{align*}
The induction is finished, and the lemma is proved.
\end{proof}

Now we can prove \Cref{thm-reduction-precision}.
\begin{proof}[Proof of \Cref{thm-reduction-precision}]
Choose an integer $\ell$ sufficiently large such that 
$$
R(\boldsymbol{u},\boldsymbol{v},\ell)\ge 2,\quad\quad \frac{n}{\ell} \leq \frac{\varepsilon}{2},
\quad\quad
\left(\frac{R(\boldsymbol{u},\boldsymbol{v},\ell)}{R(\boldsymbol{u},\boldsymbol{v},\ell)-1}\right)^{3^{t+1}}-1\leq\frac{\varepsilon}{2}.
$$
Given any $L\geq \ell$, for simplicity, let $R = R(\boldsymbol{u},\boldsymbol{v},L)$, $\boldsymbol{u}' \triangleq (u'_1,\dots,u'_m) = f_1(\boldsymbol{u},\boldsymbol{v},L)$, $\boldsymbol{v}' \triangleq (v'_1,\dots,v'_n) = f_2(\boldsymbol{u},\boldsymbol{v},L)$.
By $L \geq \ell$, one can verify that
$$
R\ge 2,\quad\quad \frac{n}{L} \leq \frac{\varepsilon}{2},
\quad\quad
\left(\frac{R}{R-1}\right)^{3^{t+1}}-1\leq\frac{\varepsilon}{2}.
$$
Let $A''$ be the outputs of SK at step $t$ with input $(B,(\boldsymbol{u}' ,\boldsymbol{v}') )$.
By \Cref{lem-reduction-error}, we have 
\begin{align*}
&\quad \abs{\norm{\=r\left(A'\right)- \boldsymbol{u}}_{1} + \norm{\=c\left(A'\right)- \boldsymbol{v}}_{1} - \frac{1}{L} \cdot \left(\norm{\=r\left(A''\right) - \boldsymbol{u}'}_{1}+\norm{\=c\left(A''\right) - \boldsymbol{v}'}_{1}\right)}
\leq \frac{n}{L} +\left(\frac{R}{R-1}\right)^{3^{t+1}}-1\leq
\varepsilon.
\end{align*}
Moreover, by \Cref{lemma-reduction-key} we have
\begin{align*}
(\norm{\=r\left(A''\right)- \boldsymbol{u}'}_{1} + \norm{\=c\left(A''\right)- \boldsymbol{v}'}_{1}) = \norm{\=r\left(A\right)- \boldsymbol{1}}_{1} + \norm{\=c\left(A\right)- \boldsymbol{1}}_{1}.
\end{align*}
Thus, the lemma is immediate by combining the above two inequalities.
\end{proof}

\subsection{Dynamics of the Dense Structure under Reduction}

Given an instance $(A, (\boldsymbol{u}, \boldsymbol{v}))$ of $(\boldsymbol{u}, \boldsymbol{v})$-scaling, we can reduce it to an instance $(B, (\boldsymbol{1}, \boldsymbol{1}))$ of standard uniform scaling, utilizing Definitions \ref{def-integer-vector} and \ref{def-splitted-matrix},
where $\boldsymbol{u}' = f_1(\boldsymbol{u}, \boldsymbol{v}, L)$, $\boldsymbol{v}' = f_2(\boldsymbol{u}, \boldsymbol{v}, L)$, $B = G(A, \boldsymbol{u}', \boldsymbol{v}')$.
However, a critical challenge arises during this reduction: even if the original matrix $A$ is dense with respect to  $(\boldsymbol{u}, \boldsymbol{v})$, the expanded matrix $B$ is generally not dense with respect to  $(\boldsymbol{1}, \boldsymbol{1})$. 
To successfully bound the iteration complexity of the SK algorithm on the input $(B, (\boldsymbol{1}, \boldsymbol{1}))$, we establish the following key properties regarding the dynamics of the dense structure.

First, we demonstrate that while $B$ loses its density with respect to  $(\boldsymbol{1}, \boldsymbol{1})$, the dense structure can be recovered through appropriate row and column scalings. 
Concretely, we show that for a sufficiently large $L$, the scaled matrix $C = \+D(\boldsymbol{u}') \cdot B \cdot \+D(\boldsymbol{v}')$ recovers the dense structure of $A$, becoming dense with respect to  $(\boldsymbol{1}, \boldsymbol{1})$ (\Cref{lemma-gamma-rho-dense-error}), where $\+D(\cdot)$ is defined in \eqref{eq-def-diag}.
The intuition behind this is twofold:
\begin{itemize}
\item By choosing a sufficiently large $L$, the integer vectors $(\boldsymbol{u}', \boldsymbol{v}')$ become arbitrarily close to $(L\boldsymbol{u}, L\boldsymbol{v})$. According to \Cref{def-gamma-rho-dense}, if $A$ is dense with respect to $(\boldsymbol{u}, \boldsymbol{v})$, it strictly preserves this density with respect to the scaled targets $(L\boldsymbol{u}, L\boldsymbol{v})$.
Due to this arbitrary closeness, it follows that $A$ is also dense with respect to $(\boldsymbol{u}', \boldsymbol{v}')$.

\item Furthermore, based on the construction of the expanded matrix (\Cref{def-splitted-matrix}), $B$ partitions each entry $A_{i,j}$ into a block of $u'_i \times v'_j$ sub-entries, each holding the value $A_{i,j}/(u'_i \times v'_j)$. 
Therefore, the scaling operation $C = \+D(\boldsymbol{u}') \cdot B \cdot \+D(\boldsymbol{v}')$ effectively scales each sub-entry back up, restoring the values to the original $A_{i,j}$ elements. 
Thus, if $A$ is dense with respect to $(\boldsymbol{u}', \boldsymbol{v}')$, it inherently implies that $C$ is dense with respect to $(\boldsymbol{1}, \boldsymbol{1})$ (\Cref{lem-reduction-dense-structure}).
\end{itemize}

Second, to bound the iteration complexity of SK on the input matrix $B$, we must meticulously characterize the discrepancy between $B$ and the nicely structured matrix $C$. The raw absolute deviation between the entries of $B$ and $C$ can be arbitrarily large because the target marginals $\boldsymbol{u}$ and $\boldsymbol{v}$ may contain extremely large or small values.
However, we prove that this deviation is significantly mitigated after applying row-normalization. 
Specifically, by comparing the corresponding elements in the row-normalized matrices, we prove that the discrepancy between $A_{i,j}/r_i(A)$ and $B_{i,j}/r_i(B)$ is strictly reduced to $O(n)$ (\Cref{thm-dynamic-dense}).

Together, these two insights provide a foundational characterization of the dynamics under reduction, allowing us to precisely measure the extent to which the reduced matrix $B$ deviates from being dense under $(\={1}, \={1})$-scaling. 
Consequently, this structural preservation and discrepancy control equip us with the mathematical tools needed to rigorously bound the iteration complexity of the SK algorithm on the reduced instance $(B, (\boldsymbol{1}, \boldsymbol{1}))$.

Finally, to bound the iteration complexity on the pre-scaled input $(G(\diag(\boldsymbol{u}) \cdot A \cdot \diag(\boldsymbol{v}), \boldsymbol{u}', \boldsymbol{v}'), (\boldsymbol{1}, \boldsymbol{1}))$, we establish \Cref{lemma-gamma-rho-dense-error-prenormalization}, which follows a completely analogous proof structure to \Cref{lemma-gamma-rho-dense-error}.

\vspace{0.5cm}

As discussed above, while the reduction maps the original instance $(A, (\boldsymbol{u}, \boldsymbol{v}))$ to one of $(\boldsymbol{1}, \boldsymbol{1})$-scaling, the expanded matrix $G(A, f_1(\boldsymbol{u}, \boldsymbol{v}, L), f_2(\boldsymbol{u}, \boldsymbol{v}, L))$ generally loses the dense property of $A$. The following lemma formalizes our first key insight: provided the parameter $L$ is sufficiently large, the dense structure of the original matrix can be rigorously restored via appropriate row and column scalings.

\begin{lemma}\label{lemma-gamma-rho-dense-error}
Let $\gamma \in (0,1]$, $\gamma'\in(1-\gamma,1]$, $\rho\in (0,1]$, $m,n\in \mathbb{Z}_{>0}$.
Let $\boldsymbol{u} \in \mathbb{R}_{>0}^m$ and $\boldsymbol{v} \in \mathbb{R}_{>0}^n$ be vectors satisfying $\norm{\boldsymbol{u}}_1 = \norm{\boldsymbol{v}}_1 = 1$
and $A \in \mathbb{R}_{\ge 0}^{m \times n}$ be a nonzero matrix.
If $A$ is $(\gamma,\gamma',\rho)$-dense with respect to  $(\boldsymbol{u},\boldsymbol{v})$, then there exists a sufficiently large $\ell>0$, such that for each integer $L\geq \ell$,
$\+D(\boldsymbol{u'})\cdot G(A,\boldsymbol{u'} ,\boldsymbol{v'}) \cdot \+D(\boldsymbol{v'})$ is at least $( \alpha\gamma,  \alpha\gamma',\rho)$-dense with respect to  $(\boldsymbol{1},\boldsymbol{1})$ where $\alpha =  (\gamma+\gamma'+1)/(2(\gamma+\gamma')) $ , $\boldsymbol{u'} = f_1(\boldsymbol{u},\boldsymbol{v},L)$, $\boldsymbol{v'} = f_2(\boldsymbol{u},\boldsymbol{v},L)$.
\end{lemma}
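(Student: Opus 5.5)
The plan is to compute the matrix $C \triangleq \+D(\boldsymbol{u}')\cdot G(A,\boldsymbol{u}',\boldsymbol{v}')\cdot\+D(\boldsymbol{v}')$ explicitly and then reduce its density with respect to $(\boldsymbol{1},\boldsymbol{1})$ to a weighted density statement about $A$ with respect to $(\boldsymbol{u}',\boldsymbol{v}')$.

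\textbf{Step 1: structure of $C$.} By \Cref{def-splitted-matrix} and \eqref{eq-def-diag}, for $i'\in S_i$ and $j'\in T_j$ the entry $C_{i',j'}$ equals $u'_i\cdot\frac{A_{i,j}}{u'_iv'_j}\cdot v'_j = A_{i,j}$. So $C$ is the blow-up of $A$ in which entry $(i,j)$ is replaced by a constant $u'_i\times v'_j$ block. In particular $\max C=\max A=t$, and $C_{i',j'}>\rho t$ iff $A_{i,j}>\rho t$.

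Fix a row $i'\in S_i$ of $C$. The fraction of its entries exceeding $\rho t$ is
\[\frac{\sum_j v'_j\,\id{A_{ij}>\rho t}}{\norm{\boldsymbol{v}'}_1}.\]
The analogous expression holds for columns. Hence it suffices to show that for all large $L$,
\[\frac{\sum_j v'_j\,\id{A_{ij}>\rho t}}{\norm{\boldsymbol{v}'}_1}\ \ge\ \alpha\gamma\quad\text{for every } i,\]
together with the symmetric bound $\alpha\gamma'$ for columns.

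\textbf{Step 2: perturbation of the weights.} By \Cref{def-integer-vector}, $|v'_j-Lv_j|\le 1$ and $|u'_i-Lu_i|\le 1$. Together with $\norm{\boldsymbol{v}'}_1=\norm{\boldsymbol{u}'}_1$, which lies within $\max(m,n)$ of $L$, this gives
\[\frac{\sum_j v'_j\,\id{A_{ij}>\rho t}}{\norm{\boldsymbol{v}'}_1}\ \ge\ \frac{L\sum_j v_j\,\id{A_{ij}>\rho t}-n}{L+n}\ \ge\ \frac{L\gamma-n}{L+n},\]
using the hypothesis that $A$ is $(\gamma,\gamma',\rho)$-dense with respect to $(\boldsymbol{u},\boldsymbol{v})$.

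\textbf{Step 3: choice of $\ell$.} Note that $\alpha=\frac{\gamma+\gamma'+1}{2(\gamma+\gamma')}<1$ precisely because $\gamma+\gamma'>1$. This is where the hypothesis $\gamma'>1-\gamma$ is used; it guarantees slack to absorb the rounding error. Since $\frac{L\gamma-n}{L+n}\to\gamma>\alpha\gamma$, one can choose $\ell$ so large that, for all $L\ge\ell$:
\begin{itemize}
\item $\frac{L\gamma-n}{L+n}\ge\alpha\gamma$ and $\frac{L\gamma'-m}{L+m}\ge\alpha\gamma'$;
\item $R(\boldsymbol{u},\boldsymbol{v},L)\ge 1$, so that $\boldsymbol{u}',\boldsymbol{v}'$ are positive and $G$ is well defined.
\end{itemize}
Positivity of $\gamma,\gamma'$ ensures the threshold is finite. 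These inequalities give exactly the claimed ``at least $(\alpha\gamma,\alpha\gamma',\rho)$-dense'' statement.

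\textbf{Main obstacle.} The only delicate point is bookkeeping: $\norm{\boldsymbol{v}'}_1$ is not exactly $L$, and the $+1$ corrections fall on some coordinates. Both effects are bounded uniformly by $\max(m,n)$, so the estimate above suffices. The explicit form of $\alpha$ is not needed beyond $\alpha<1$.
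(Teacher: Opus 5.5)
Your proposal is correct and is essentially the paper's argument. You identify $\+D(\boldsymbol{u}')\cdot G(A,\boldsymbol{u}',\boldsymbol{v}')\cdot\+D(\boldsymbol{v}')$ as the blow-up of $A$ (this is \Cref{lem-reduction-dense-structure}), which reduces the claim to $A$ being at least $(\alpha\gamma,\alpha\gamma',\rho)$-dense with respect to $(\boldsymbol{u}',\boldsymbol{v}')$, and then take $L$ large using only $\alpha<1$. The one difference is minor: you control the rounding additively (the numerator loses at most $n$, and $\norm{\boldsymbol{v}'}_1\le L+n$, in fact $\le L$), which needs only $R(\boldsymbol{u},\boldsymbol{v},L)\ge 1$, whereas the paper proves the per-coordinate multiplicative bound $v'_k/\norm{\boldsymbol{v}'}_1\ge \frac{R}{R+1}\cdot\frac{L}{n+L}\,v_k$ and therefore needs $R(\boldsymbol{u},\boldsymbol{v},L)\to\infty$.
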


In \Cref{def-splitted-matrix}, even if \(A\) is dense with respect to  \((\boldsymbol{u},\boldsymbol{v})\), the matrix \(G(A,\boldsymbol{u},\boldsymbol{v})\) need not remain dense with respect to  \((\mathbf{1},\mathbf{1})\), since its entries are normalized by different scaling factors. In other words, the reduction in \Cref{def-splitted-matrix} may destroy the dense structure of the original matrix. 
Fortunately, as shown in the following lemma, the resulting matrix \(G(A,\boldsymbol{u},\boldsymbol{v})\) can be made dense again via appropriate row and column scalings.

\begin{lemma}\label{lem-reduction-dense-structure}
Assume the conditions in \Cref{def-splitted-matrix}.
Let $D = \+D(\=u) \cdot G(A,\boldsymbol{u},\boldsymbol{v})\cdot \+D(\=v)$. 
Let $\gamma,\gamma'\in  (0,1]$.
Then $D$ is of size $\norm{\boldsymbol{u}}_1\times \norm{\boldsymbol{v}}_1$ where
\[\forall i\in [m],j\in [n], i'\in S_i,j'\in T_j,\quad  D_{i',j'} = A_{i,j}.\]
Thus, if 
$A$ is $(\gamma,\gamma',\rho)$-dense with respect to  $(\boldsymbol{u},\boldsymbol{v})$, then $D$ is  $(\gamma,\gamma',\rho)$-dense with respect to  $(\boldsymbol{1},\boldsymbol{1})$. 
\end{lemma}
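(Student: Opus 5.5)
The plan is a direct entrywise computation, followed by a counting argument over blocks. The matrix $\+D(\=u)$ is the $\norm{\=u}_1\times\norm{\=u}_1$ diagonal matrix whose entries indexed by $S_i$ all equal $u_i$. Likewise, $\+D(\=v)$ has entries equal to $v_j$ on the indices in $T_j$. So $D$ has size $\norm{\=u}_1\times\norm{\=v}_1$. For $i'\in S_i$ and $j'\in T_j$, \Cref{def-splitted-matrix} gives
\[
D_{i',j'} = u_i\cdot G(A,\=u,\=v)_{i',j'}\cdot v_j = u_i\cdot\frac{A_{i,j}}{u_i v_j}\cdot v_j = A_{i,j}.
\]
Since the sets $S_1,\dots,S_m$ partition $[\norm{\=u}_1]$ and $T_1,\dots,T_n$ partition $[\norm{\=v}_1]$, this formula determines every entry of $D$.

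For density, note that $D$ is a blow-up of $A$, so $\max D=\max A=t$. An entry $D_{i',j'}$ is active (exceeds $\rho t$) exactly when $A_{i,j}$ is. Fix a row $i'\in S_i$ of $D$. The number of active entries in it is
\[
\sum_{j\in[n]}|T_j|\,\id{A_{i,j}>\rho t}=\sum_{j} v_j\,\id{A_{i,j}>\rho t}.
\]
Dividing by $\norm{\=1}_1=\norm{\=v}_1$ gives exactly the weighted row fraction of row $i$ of $A$ with respect to $\=v$. Every row of $A$ is represented by at least one row of $D$, because $u_i\ge1$. Conversely, every row of $D$ comes from some row of $A$. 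Hence the minimum over the rows of $D$ equals the minimum over the rows of $A$, which is $\gamma$. The column computation is symmetric, using the $S_i$ and $\=u$, and gives $\gamma'$. I should state explicitly that $\norm{\=1}_1$ here is taken in the appropriate dimension.

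The only subtle point is to get exact equality of the minima, not just inequality, since the definition of $(\gamma,\gamma',\rho)$-dense uses equality. This follows because each row index $i$ of $A$ has a nonempty block $S_i$, so the multiset of row fractions of $D$ is a replication of the set of row fractions of $A$. Beyond that, no obstacle is expected; the whole argument is bookkeeping.
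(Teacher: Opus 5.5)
Your proposal is correct and is the same argument the paper intends: the paper simply states that the lemma is immediate from \Cref{def-splitted-matrix} and \Cref{def-gamma-rho-dense}. Your entrywise identity $D_{i',j'}=u_i\cdot A_{i,j}/(u_iv_j)\cdot v_j=A_{i,j}$, the block count $\sum_{j}|T_j|\,\id{A_{i,j}>\rho t}=\sum_j v_j\,\id{A_{i,j}>\rho t}$, and the observation that every $S_i$ and $T_j$ is nonempty (which gives equality of the minima and $\max D=\max A$) are exactly the unpacking of that claim.
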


\Cref{lem-reduction-dense-structure} is immediate by Definitions \ref{def-splitted-matrix} and \ref{def-gamma-rho-dense}.

Now we can prove \Cref{lemma-gamma-rho-dense-error}.

\begin{proof}[Proof of \Cref{lemma-gamma-rho-dense-error}.]
By \Cref{lem-reduction-dense-structure}, we have
if $A$ is at least $( \alpha\gamma,  \alpha\gamma',\rho)$-dense with respect to  $(\boldsymbol{u'} ,\boldsymbol{v'} )$, then 
$\+D(\boldsymbol{u'}) \cdot G(A,\boldsymbol{u'} ,\boldsymbol{v'}) \cdot \+D(\boldsymbol{v'} )$ is at least $( \alpha\gamma,  \alpha\gamma',\rho)$-dense with respect to  $(\boldsymbol{1},\boldsymbol{1})$.
Thus, to prove this lemma, it is sufficient to show that $A$ is at least $( \alpha\gamma,  \alpha\gamma',\rho)$-dense with respect to  $(\boldsymbol{u'} ,\boldsymbol{v'} )$.
In the following, we prove this conclusion.

Note that each of the following functions increases monotonically as $\ell$ increases:
\begin{align}\label{eq-def-three-functions}
R(\boldsymbol{u}, \boldsymbol{v}, \ell),\quad \frac{R(\boldsymbol{u},\boldsymbol{v},\ell)}{R(\boldsymbol{u},\boldsymbol{v},\ell)+1}\cdot \frac{\ell\norm{\boldsymbol{v}}_1}{n+\ell\norm{\boldsymbol{v}}_1},\quad \frac{R(\boldsymbol{u},\boldsymbol{v},\ell)}{R(\boldsymbol{u},\boldsymbol{v},\ell)+1}\cdot \frac{\ell\norm{\boldsymbol{u}}_1}{m+\ell\norm{\boldsymbol{u}}_1}.
\end{align}
In addition, we have 
\[\lim_{\ell\rightarrow \infty}R(\boldsymbol{u}, \boldsymbol{v}, \ell) = \infty,\quad \lim_{\ell\rightarrow \infty}\frac{R(\boldsymbol{u},\boldsymbol{v},\ell)}{R(\boldsymbol{u},\boldsymbol{v},\ell)+1}\cdot \frac{\ell\norm{\boldsymbol{v}}_1}{n+\ell\norm{\boldsymbol{v}}_1} = 1,\quad \lim_{\ell\rightarrow \infty}\frac{R(\boldsymbol{u},\boldsymbol{v},\ell)}{R(\boldsymbol{u},\boldsymbol{v},\ell)+1}\cdot \frac{\ell\norm{\boldsymbol{u}}_1}{m+\ell\norm{\boldsymbol{u}}_1} = 1.\]
Combined with $\alpha =  (\gamma+\gamma'+1)/(2(\gamma+\gamma')) < 1$, one can choose an integer $\ell$ sufficiently large such that
\[R(\boldsymbol{u},\boldsymbol{v},\ell) > 10^4, \quad   \frac{R(\boldsymbol{u},\boldsymbol{v},\ell)}{R(\boldsymbol{u},\boldsymbol{v},\ell)+1}\cdot \frac{\ell\norm{\boldsymbol{v}}_1}{n+\ell\norm{\boldsymbol{v}}_1} \geq  \alpha, \quad \frac{R(\boldsymbol{u},\boldsymbol{v},\ell)}{R(\boldsymbol{u},\boldsymbol{v},\ell)+1}\cdot \frac{\ell\norm{\boldsymbol{u}}_1}{m+\ell\norm{\boldsymbol{u}}_1} \geq  \alpha.
\]
For simplicity, given a fixed $L\geq \ell$,
let $R = R(\boldsymbol{u},\boldsymbol{v},L)$, $\boldsymbol{u} = (u_1,\dots,u_m), \boldsymbol{v} = (v_1,\dots,v_n)$, $\boldsymbol{u}' = (u'_1,\dots,u'_m) $, $\boldsymbol{v}'= (v'_1,\dots,v'_n)$.
By the monotonicity of the functions in \eqref{eq-def-three-functions} and the fact that $L\geq \ell$, we have 
\begin{align}\label{eq-three-functions-conditions}
R \geq R(\boldsymbol{u},\boldsymbol{v},\ell) > 10^4, \quad   \frac{R}{R+1}\cdot \frac{L\norm{\boldsymbol{v}}_1}{n+L\norm{\boldsymbol{v}}_1} \geq  \alpha, \quad \frac{R}{R+1}\cdot \frac{L\norm{\boldsymbol{u}}_1}{m+L\norm{\boldsymbol{u}}_1} \geq  \alpha.
\end{align}

Let \[t \triangleq \max_{i\in[m],j\in[n]}A_{i,j}.\]
Recall that $A$ is $(\gamma,\gamma',\rho)$-dense with respect to  $(\boldsymbol{u},\boldsymbol{v})$ for some $\gamma'>1-\gamma$.
By \Cref{def-gamma-rho-dense}, we have for any $i\in [m]$,
\begin{align}\label{eq-kinn-vkovernormv1}
\sum_{k\in [n]}\frac{v_k}{\norm{\boldsymbol{v}}_1}\cdot\id{A_{i,k}\geq \rho t} \geq \gamma.
\end{align}
In addition, by \Cref{def-integer-vector}, we also have for each $k\in [n]$,
\begin{align*}
\frac{v'_k}{\norm{\boldsymbol{v}'}_1} &\geq \frac{\lfloor L v_k \rfloor}{\sum_{j\in [n]}v'_j } \geq \frac{\lfloor L v_k \rfloor}{\sum_{j\in [n]} \left( \lfloor L v_j \rfloor + 1\right)} \geq  \frac{Lv_k\cdot \lfloor L v_k \rfloor}{(\lfloor L v_k \rfloor+1)(n+L\sum_{j\in [n]}  v_j) } \\
&\geq \frac{R}{R+1}\cdot \frac{L v_k}{n+L\norm{\boldsymbol{v}}_1} = \frac{R}{R+1}\cdot \frac{L\norm{\boldsymbol{v}}_1}{n+L\norm{\boldsymbol{v}}_1} \cdot \frac{v_k}{\norm{\boldsymbol{v}}_1}.
\end{align*}
Combined with \eqref{eq-three-functions-conditions}, we have
\begin{align}
\frac{v'_k}{\norm{\boldsymbol{v}'}_1} \geq \alpha\cdot \frac{v_k}{\norm{\boldsymbol{v}}_1}.
\end{align}
Combined with \eqref{eq-kinn-vkovernormv1}, we have
\begin{align*}
\sum_{k\in [n]}\frac{v'_k}{\norm{\boldsymbol{v}'}_1}\cdot\id{A_{i,k}\geq \rho t }&\geq \sum_{k\in [n]}\alpha\cdot \frac{v_k}{\norm{\boldsymbol{v}}_1}\cdot\id{ A_{i,k} \geq \rho t} =  \alpha\gamma. 
\end{align*}
Similarly, one can also prove that for any $j\in [n]$,
\begin{align*}
\sum_{k\in [m]}\frac{u'_k}{\norm{\boldsymbol{u}'}_1}\cdot\id{A_{i,k}\geq \rho t } \geq \alpha\gamma'. 
\end{align*}
Combining the above two inequalities with \Cref{def-gamma-rho-dense}, 
we have $A$ is $( \alpha\gamma, \alpha\gamma',\rho)$-dense with respect to  $(\boldsymbol{u}', \boldsymbol{v}')$.
The lemma is proved.
\end{proof}

The following theorem characterizes the discrepancy between $G(A,\={u'} ,\={v'})$ and the nicely structured matrix $\+D(\={u'})\cdot G(A,\={u'} ,\={v'}) \cdot \+D(\={v'})$ where $\=u' = f_1(\={u},\={v},L) , \=v' = f_2(\={u},\={v},L)$.

\begin{theorem}\label{thm-dynamic-dense}
Let $\gamma \in (0,1]$, $\gamma'\in(1-\gamma,1]$, $\rho\in (0,1]$, $m,n \in \mathbb{Z}_{>0}$. 
Let $\boldsymbol{u} \in \mathbb{R}_{>0}^m$ and $\boldsymbol{v} \in \mathbb{R}_{>0}^n$ be vectors satisfying $\norm{\boldsymbol{u}}_1 = \norm{\boldsymbol{v}}_1 = 1$
and $A \in \mathbb{R}_{\ge 0}^{m \times n}$ be a nonzero matrix $(\gamma,\gamma',\rho)$-dense with respect to  $(\boldsymbol{u},\boldsymbol{v})$.
Given any positive integer $L$ with $R(\boldsymbol{u},\boldsymbol{v},L)\geq 2$, 
let 
\begin{align}\label{eq-thm-dynamic-dense-condition}
\=u' = f_1(\boldsymbol{u},\boldsymbol{v},L) ,\quad  \=v' = f_2(\boldsymbol{u},\boldsymbol{v},L), \quad  B \triangleq G(A,\boldsymbol{u'} ,\boldsymbol{v'}),\quad  C \triangleq\+D(\boldsymbol{u'})\cdot B \cdot \+D(\boldsymbol{v'}),\quad \alpha = \frac{\gamma+\gamma'+1}{2(\gamma+\gamma')}. 
\end{align}
Then there exists a sufficiently large $\ell>0$, such that for each integer $L\geq \ell$,
\begin{align}\label{eq-thm-dynamic-dense}
 \min_{i,j\in [n]} \left\{ \frac{ r_i(C)\cdot B_{i,j}}{r_i(B) \cdot C_{i,j}} \right\}\geq \frac{\alpha \rho \gamma}{n}.
\end{align}
\end{theorem}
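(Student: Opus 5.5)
The plan is to compute both sides of \eqref{eq-thm-dynamic-dense} explicitly from \Cref{def-splitted-matrix} and \Cref{lem-reduction-dense-structure}. Once that is done the ratio collapses to a quantity in which $B$ and $C$ no longer appear, and it can be bounded using only the density of $A$ with respect to $(\boldsymbol{u}',\boldsymbol{v}')$. That density is exactly what the proof of \Cref{lemma-gamma-rho-dense-error} already provides.

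\textbf{Step 1 (explicit formulas).} Fix a row index $i'\in S_i$ and a column index $j'\in T_j$ of the $\norm{\boldsymbol{u}'}_1\times\norm{\boldsymbol{v}'}_1$ matrices. The entries are given by \Cref{def-splitted-matrix} and \Cref{lem-reduction-dense-structure}:
\[
B_{i',j'}=\frac{A_{i,j}}{u'_iv'_j},\qquad C_{i',j'}=A_{i,j}.
\]
Summing over the column blocks $T_k$, each of size $v'_k$, gives
\[
r_{i'}(B)=\sum_{k\in[n]} v'_k\frac{A_{i,k}}{u'_iv'_k}=\frac{r_i(A)}{u'_i},\qquad r_{i'}(C)=\sum_{k\in[n]} v'_kA_{i,k}.
\]
Hence, for every entry with $A_{i,j}>0$,
\[
\frac{r_{i'}(C)\,B_{i',j'}}{r_{i'}(B)\,C_{i',j'}}=\frac{\sum_{k\in[n]}v'_kA_{i,k}}{v'_j\, r_i(A)}.
\]
Entries with $A_{i,j}=0$ make the ratio $0/0$. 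I will read the minimum as ranging over positive entries only (equivalently, use the convention $B_{i',j'}=\lambda C_{i',j'}$ with any $\lambda$), and state this convention explicitly.

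\textbf{Step 2 (bounding numerator and denominator).} Let $t=\max_{i,j}A_{i,j}$.

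For the denominator, $v'_j\le\norm{\boldsymbol{v}'}_1$ and $r_i(A)\le nt$, so
\[
v'_j\,r_i(A)\le nt\,\norm{\boldsymbol{v}'}_1.
\]

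For the numerator, choose $\ell$ exactly as in the proof of \Cref{lemma-gamma-rho-dense-error}, so that for all $L\ge\ell$ the condition \eqref{eq-three-functions-conditions} holds. That proof then gives $v'_k/\norm{\boldsymbol{v}'}_1\ge\alpha\, v_k/\norm{\boldsymbol{v}}_1$ for every $k$. Combining this with the row-density condition \eqref{eq-kinn-vkovernormv1}:
\[
\sum_{k\in[n]}v'_kA_{i,k}\;\ge\;\rho t\sum_{k\in[n]}v'_k\,\id{A_{i,k}\ge\rho t}\;\ge\;\rho t\,\alpha\gamma\,\norm{\boldsymbol{v}'}_1.
\]

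Dividing the two bounds, the factors $t\norm{\boldsymbol{v}'}_1$ cancel and the ratio is at least $\alpha\rho\gamma/n$, which is the claim.

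\textbf{Main obstacle.} The only delicate point is transferring the weighted density from $\boldsymbol{v}$ to the rounded vector $\boldsymbol{v}'$, i.e.\ checking that the rounding in \Cref{def-integer-vector} loses at most the factor $\alpha<1$. This is precisely the computation already carried out in the proof of \Cref{lemma-gamma-rho-dense-error}, so I will cite it rather than redo it. Beyond that, some bookkeeping is needed: the statement indexes the minimum by $i,j\in[n]$, but it should be read as ranging over all row and column indices of the expanded matrices. The condition $R(\boldsymbol{u},\boldsymbol{v},L)\ge2$ then only serves to guarantee that $\boldsymbol{u}'$ and $\boldsymbol{v}'$ are positive, so that the construction is well defined.
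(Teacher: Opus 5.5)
Your proposal is correct and follows essentially the same argument as the paper. The paper writes the ratio as $r_i(C)V_j^{-1}/\sum_t C_{i,t}V_t^{-1}$ and bounds it using $r_i(C)\ge\alpha\gamma\rho\tau N$ (from the density of $C$ in \Cref{lemma-gamma-rho-dense-error}), $V_j\le N$, and $\sum_t C_{i,t}V_t^{-1}\le n\tau$; these are exactly your numerator and denominator bounds, stated in terms of $C$ instead of $A$ and $\boldsymbol{v}'$ (your remarks on the $0/0$ entries and on the index range being $[N]$ are sensible clarifications the paper leaves implicit).
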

\begin{proof}
For simplicity, let $N \triangleq \norm{\=v'}_1$.
Assume
\begin{align}\label{eq-def-u1touelll-v1tovell}
\boldsymbol{v'} = (v'_1,\dots,v'_{n}), \quad \quad
\+D(\boldsymbol{u'}) = \diag\left(U_1,\dots,U_{N}\right),\quad \quad 
\+D(\boldsymbol{v'}) = \diag\left(V_1,\dots,V_{N}\right).
\end{align}
By \Cref{def-splitted-matrix}, we have $B$ and $C$ are of size $N\times N$.
Define
\begin{align}\label{eq-tau-def-reduction}
\tau = \max_{i\leq N,j\in N}C_{i,j}.
\end{align}
By \eqref{eq-thm-dynamic-dense-condition} and \eqref{eq-def-u1touelll-v1tovell}, 
we obtain
\begin{align}\label{eq-azeroij-relation-b}
\forall i,j\leq N, \quad  \quad \frac{B_{i,j}}{r_{i}\left(B\right)} = \frac{U^{-1}_iC_{i,j}V^{-1}_j}{\sum_{t\in [N]} U^{-1}_iC_{i,t}V^{-1}_t} = \frac{C_{i,j}V^{-1}_j}{\sum_{t\in [N]} C_{i,t}V^{-1}_t}.
\end{align}
Furthermore, by \eqref{eq-def-u1touelll-v1tovell} and the definition of $\+D(\cdot)$ in \eqref{eq-def-diag}, it follows that 
\begin{align*}
\max_{j\in [N]} V_j = \max_{j\in [n]} v'_j.
\end{align*}
Combined with $\=v' = f_2(\boldsymbol{u},\boldsymbol{v},L)$ and \Cref{def-integer-vector}, we have
\begin{align}\label{eq-reduction-upperbound-vj}
\max_{j\in [N]} V_j = \max_{j\in [n]} v'_j \leq 
\norm{\=v'}_1 = N.
\end{align}
Moreover, 
by \Cref{lemma-gamma-rho-dense-error} and that
$A$ is $(\gamma,\gamma',\rho)$-dense with respect to  $(\boldsymbol{u},\boldsymbol{v})$, 
we have there exists a sufficiently large $\ell>0$, such that for each integer $L\geq \ell$,
$C$ is at least $( \alpha\gamma,  \alpha\gamma',\rho)$-dense with respect to  $(\boldsymbol{1},\boldsymbol{1})$.
Fix any $L \geq \ell$.
We have each row of $C$ contains at least 
$\alpha \gamma \cdot N$ entries no less than $\rho \tau$. 
Thus,
\begin{align}\label{eq-reduction-ri-upperbound}
\forall i\in [n], \quad r_i(C) \geq \alpha \gamma \rho \tau N.
\end{align}
Moreover, let $T_j = \left[\sum_{k\leq j} v'_k\right]\setminus \left[\sum_{k< j} v'_k\right]$ for each $j\in [n]$.
By \Cref{def-splitted-matrix} and \eqref{eq-def-diag}, we have
\begin{align*}
\forall i\leq N,\quad \sum_{t\leq N} C_{i,t}V^{-1}_t = \sum_{j \in [n]}\sum_{t \in T_j} C_{i,t}V^{-1}_t = \sum_{j \in [n]}\sum_{t \in T_j} C_{i,t}v^{-1}_j \leq \max_{k \in [n]} C_{i,k}\sum_{j \in [n]}\sum_{t \in T_j} v^{-1}_j = n \max_{k \in [n]} C_{i,k}.
\end{align*}
Combined with \eqref{eq-tau-def-reduction}, we have
\begin{align}\label{eq-reduction-sumbv-upperbound}
\forall i\leq N,\quad \sum_{t\leq N} C_{i,t}V^{-1}_t \leq n \max_{k \in [n]} C_{i,k} = n\tau.
\end{align}
By \eqref{eq-azeroij-relation-b}, \eqref{eq-reduction-upperbound-vj}, \eqref{eq-reduction-ri-upperbound} and \eqref{eq-reduction-sumbv-upperbound}, we have
\begin{align}\label{eq-thm-dynamic-dense-1}
\forall i\leq N,j\leq N,\quad  \frac{ r_i(C)\cdot B_{i,j}}{r_i(B) \cdot C_{i,j}}= \frac{r_i(C)\cdot V^{-1}_j }{\sum_{t\in [L]} C_{i,t}V^{-1}_t} \geq \frac{\alpha \gamma \rho \tau N}{ n\tau N} \geq \frac{\alpha \rho \gamma}{n}.
\end{align}
The theorem is proved.
\end{proof}

\vspace{0.5cm}

Analogous to \Cref{lemma-gamma-rho-dense-error}, which characterizes the density of matrices when the splitting operation (\Cref{def-splitted-matrix}) precedes scaling, the following lemma addresses the reverse order: scaling followed by splitting.
The proof follows by an argument analogous to \Cref{lemma-gamma-rho-dense-error} and is therefore omitted.

\begin{lemma}\label{lemma-gamma-rho-dense-error-prenormalization}
Suppose the notations and conditions of \Cref{lemma-gamma-rho-dense-error} hold. If $A$ is $(\gamma,\gamma',\rho)$-dense with respect to $(\boldsymbol{u},\boldsymbol{v})$, then there exists a sufficiently large $\ell > 0$ such that for any integer $L \geq \ell$, the matrix $G(\diag(\boldsymbol{u}) \cdot A \cdot \diag(\boldsymbol{v}), \boldsymbol{u'}, \boldsymbol{v'})$ is at least $(\alpha\gamma, \alpha\gamma', \rho)$-dense with respect to $(\boldsymbol{1}, \boldsymbol{1})$.
\end{lemma}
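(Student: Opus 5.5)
Our plan mirrors the proof of \Cref{lemma-gamma-rho-dense-error}: first compute the entries of the split matrix, then transfer density from $(\boldsymbol{u},\boldsymbol{v})$ to the discretized weights. Set $A' \triangleq \diag(\boldsymbol{u})\cdot A\cdot \diag(\boldsymbol{v})$ and $E \triangleq G(A',\boldsymbol{u'},\boldsymbol{v'})$. By \Cref{def-splitted-matrix}, for $i'\in S_i$ and $j'\in T_j$ we have
\[
E_{i',j'} = \frac{u_i v_j A_{i,j}}{u'_i v'_j}.
\]
The key observation is that $u'_i/(Lu_i)$ and $v'_j/(Lv_j)$ both lie in $[R/(R+1),(R+1)/R]$ with $R=R(\boldsymbol{u},\boldsymbol{v},L)$. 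This follows from the rounding rule in \Cref{def-integer-vector}, as already used in the base step of \Cref{lem-reduction-error}. Consequently,
\[
E_{i',j'} = \frac{A_{i,j}}{L^2}\,\theta_{i,j}, \qquad \theta_{i,j}\in\Bigl[\Bigl(\tfrac{R}{R+1}\Bigr)^2,\Bigl(\tfrac{R+1}{R}\Bigr)^2\Bigr].
\]
So, up to the global factor $L^{-2}$, which density ignores, $E$ is an entrywise perturbation of the block-constant matrix $D$ of \Cref{lem-reduction-dense-structure}, with multiplicative error tending to $1$ as $L\to\infty$.

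The next step is to handle the threshold. Let $t=\max A_{i,j}$ and $\tau=\max E_{i',j'}$, so that $\tau\le t\,\theta_{\max}/L^2$. An entry with $A_{i,j}>\rho t$ then gives $E_{i',j'}\ge \rho t\,\theta_{\min}/L^2 \ge \rho\tau\,\theta_{\min}/\theta_{\max}$. This falls short of $\rho\tau$ by the factor $\theta_{\min}/\theta_{\max}=(R/(R+1))^4$, and this is the one genuine obstacle.

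To resolve it, I would exploit that density at level $\rho$ is defined with a strict inequality over finitely many entries. Let $\rho^\ast \triangleq \min\{A_{i,j}/t : A_{i,j}>\rho t\}>\rho$. Choosing $L$ large enough that $(R/(R+1))^4\rho^\ast\ge\rho$ makes every such block exceed $\rho\tau$, up to the $\ge$ versus $>$ bookkeeping, which I would match to whichever convention \Cref{lemma-gamma-rho-dense-error} uses. Since $R(\boldsymbol{u},\boldsymbol{v},L)\to\infty$, such an $\ell$ exists.

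Finally, I would count weights exactly as in the proof of \Cref{lemma-gamma-rho-dense-error}. In row $i'\in S_i$ of $E$, the number of qualifying columns is $\sum_k v'_k\,\id{A_{i,k}>\rho t}$. Dividing by $N=\norm{\boldsymbol{v'}}_1$ and using the already-established bound $v'_k/\norm{\boldsymbol{v'}}_1\ge \alpha v_k/\norm{\boldsymbol{v}}_1$ for large $L$, this fraction is at least $\alpha\gamma$. The column statement, giving at least $\alpha\gamma'$, is symmetric via the corresponding bound on $u'_k/\norm{\boldsymbol{u'}}_1$. Taking $\ell$ to be the maximum of the thresholds from the three requirements (the two weight-ratio bounds and the perturbation bound) completes the argument.
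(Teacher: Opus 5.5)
Your proposal is correct. It follows the route the paper indicates when it omits the proof as ``analogous'' to \Cref{lemma-gamma-rho-dense-error}, and your weight-transfer step, using $v'_k/\norm{\boldsymbol{v'}}_1\ge\alpha v_k/\norm{\boldsymbol{v}}_1$ and its $\boldsymbol{u'}$ counterpart, is exactly the one from that proof.

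Your threshold step is the one ingredient the analogy does not supply, and it is genuinely needed. In \Cref{lemma-gamma-rho-dense-error}, \Cref{lem-reduction-dense-structure} makes every block exactly $A_{i,j}$. Here the blocks are $A_{i,j}\theta_{i,j}/L^2$, so both the maximum and the cutoff $\rho\tau$ shift. Your margin $\rho^\ast>\rho$, which comes from finitely many entries and a strict inequality, is what absorbs the factor $(R/(R+1))^4$. It is fine that $\ell$ now depends on $A$, because $A$ is fixed before $L$ is chosen.

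One correction to your hedge: do not ``match whichever convention \Cref{lemma-gamma-rho-dense-error} uses.'' Its proof writes $A_{i,k}\ge\rho t$. Under that non-strict convention you can have $\rho^\ast=\rho$, and then the statement itself can fail. For example, take
\begin{itemize}
\item $\rho=1$ and $A$ the $2\times 2$ all-ones matrix, which is $(1,1,1)$-dense under ``$\ge$'';
\item $\boldsymbol{u}=(1/3,2/3)$ and $\boldsymbol{v}=(1/2,1/2)$;
\item $L\equiv 5\pmod 6$.
\end{itemize}
Then $\boldsymbol{u'}=((L-2)/3,(2L-1)/3)$ and $\boldsymbol{v'}=((L-1)/2,(L-1)/2)$. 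Every entry in the second row block equals $2/((2L-1)(L-1))$, which is strictly less than $1/((L-2)(L-1))=\tau$. So those rows contain no entry $\ge\rho\tau$ for arbitrarily large $L$.

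So commit explicitly to the strict inequality of \Cref{def-gamma-rho-dense}, and pick $L$ with $(R/(R+1))^4\rho^\ast>\rho$. Alternatively, for a convention-proof version, conclude density at level $(R/(R+1))^4\rho$ rather than $\rho$. That costs only constant factors in the downstream bounds, where the paper in any case works with non-strict thresholds such as $K_{ij}\ge e^{-\rho}$.
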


\subsection{Dynamics of the Block Structure under Reduction}
The main result of this subsection is the following lemma, whose proof is omitted as it follows from a straightforward computation of the SK scaling updates. 
Note that even if the original matrix admits a block structure, the reduction introduced in \Cref{def-splitted-matrix} generally destroys it, as the entries of the expanded matrix are divided by non-uniform scaling factors. 
However, this structural loss is only temporary. Specifically, letting $A$ denote the expanded matrix obtained from the reduction, the lemma demonstrates that the $2 \times 2$ block structure of the original matrix is completely recovered in $A^{(2)}$.

\begin{lemma}\label{lem-lowerbound-reduction-rctooneone}
Let $n,t,s\in \mathbb{Z}_{>0}$ with $t<s< n$, and $d \in \mathbb{R}_{>0}$.
Let $A$ be a nonnegative matrix of size $n\times n$ and $\=u = (u_1,\dots,u_n)$, $\=v = (v_1,\dots,v_n)$ be positive vectors where 
\begin{align*}
\forall i\leq t, j \leq s, \quad A_{i,j} &= \frac{1}{u_i\cdot v_j},\\
\forall i\leq t, j > s, \quad A_{i,j} &=  \frac{1}{u_i\cdot v_j},\\
\forall i> t, j \leq s, \quad A_{i,j} &= \frac{d}{u_i\cdot v_j},\\
\forall i> t, j > s, \quad A_{i,j} &= \frac{1}{u_i\cdot v_j}.
\end{align*}
Define 
\begin{align}\label{eq-def-lambda-s1-s2}
S_1 \triangleq \sum_{j\leq s}\frac{1}{v_j},\quad S_2 \triangleq \sum_{s< j\leq n}\frac{1}{v_j},\quad \lambda \triangleq \frac{S_1 + S_2}{d S_1 + S_2}.
\end{align}
Let $A^{(0)}, A^{(1)},\dots$ denote the sequence of matrices generated by SK with $A$ and $(\=1,\=1)$ as input.
Define
\[x \triangleq A^{(2)}_{1,1},\quad \quad  y \triangleq A^{(2)}_{1,s+1}, \quad  \quad z \triangleq A^{(2)}_{t+1,1},\quad \quad  q \triangleq A^{(2)}_{t+1,s+1}.\] 
Then we have 
\begin{align}\label{eq-reduction-block-partitioned-matrix}
\setlength{\jot}{12pt} 
A^{(2)}_{i,j}=
\begin{cases}
x = \dfrac{t+\lambda(n-t)}
{\,nt+\lambda(n-t)\bigl(s+d(n-s)\bigr)\,}, & i\le t,\ j\le s,\\[12pt]
y = \dfrac{t+d\lambda(n-t)}
{\,nt+\lambda(n-t)\bigl(s+d(n-s)\bigr)\,}, & i\le t,\ j> s,\\[12pt]
z = \dfrac{d\bigl(t+\lambda(n-t)\bigr)}
{\,t\bigl(n-s+ds\bigr)+d\lambda n(n-t)\,}, & i> t,\ j\le s,\\[12pt]
q = \dfrac{t+d\lambda(n-t)}
{\,t\bigl(n-s+ds\bigr)+d\lambda n(n-t)\,}, & i> t,\ j> s.
\end{cases}
\end{align}
\end{lemma}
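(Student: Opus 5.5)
The plan is a direct three-step computation of the SK updates. The key observation is that the weights $u_i$ and $v_j$ cancel after the first two normalizations. Write $A_{i,j}=w_{i,j}/(u_iv_j)$, where $w_{i,j}=d$ on the bottom-left block ($i>t$, $j\le s$) and $w_{i,j}=1$ elsewhere. For $j\le s$ put $\omega_j\triangleq d$ and for $j>s$ put $\omega_j\triangleq 1$; these are the bottom-row weights.

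\emph{Step $0$ (row normalization).} The factor $u_i^{-1}$ is common to row $i$, so it cancels and
\[
A^{(0)}_{i,j}=\frac{w_{i,j}/v_j}{\sum_{k}w_{i,k}/v_k}.
\]
Using the notation \eqref{eq-def-lambda-s1-s2}, the denominator equals $S_1+S_2$ for $i\le t$ and $dS_1+S_2$ for $i>t$. Hence $A^{(0)}_{i,j}=v_j^{-1}/(S_1+S_2)$ for the top rows and $A^{(0)}_{i,j}=w_{i,j}v_j^{-1}/(dS_1+S_2)$ for the bottom rows.

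\emph{Step $1$ (column normalization).} The $j$-th column sum is
\[
c_j\bigl(A^{(0)}\bigr)=\frac{v_j^{-1}}{S_1+S_2}\Bigl(t+\lambda(n-t)\,\omega_j\Bigr).
\]
Dividing, the factor $v_j^{-1}/(S_1+S_2)$ cancels. This gives $A^{(1)}_{i,j}=1/(t+\lambda(n-t)\omega_j)$ for $i\le t$ and $A^{(1)}_{i,j}=\lambda\omega_j/(t+\lambda(n-t)\omega_j)$ for $i>t$. This is the crucial point: $A^{(1)}$ is already constant on each of the four blocks, with all dependence on $\boldsymbol{u},\boldsymbol{v}$ absorbed into the single scalar $\lambda$.

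\emph{Step $2$ (row normalization).} Let $P\triangleq t+\lambda(n-t)$ and $Q\triangleq t+d\lambda(n-t)$.

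For a top row, the row sum is $s/Q+(n-s)/P$. Dividing the entries $1/Q$ and $1/P$ by it and clearing denominators gives
\[
x=\frac{P}{sP+(n-s)Q},\qquad y=\frac{Q}{sP+(n-s)Q}.
\]
Expanding, $sP+(n-s)Q=nt+\lambda(n-t)\bigl(s+d(n-s)\bigr)$, which is the stated denominator.

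For a bottom row, the entries are $d\lambda/Q$ and $\lambda/P$, and the row sum is $sd\lambda/Q+(n-s)\lambda/P$. The $\lambda$ cancels, leaving
\[
z=\frac{dP}{sdP+(n-s)Q},\qquad q=\frac{Q}{sdP+(n-s)Q}.
\]
Expanding, $sdP+(n-s)Q=t(n-s+ds)+d\lambda n(n-t)$, because the $\lambda$-terms combine as $d\lambda(n-t)(s+n-s)$. This yields \eqref{eq-reduction-block-partitioned-matrix}.

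The only mildly delicate step is Step $1$: one must check that the $v_j$-dependence cancels exactly in every column, which is what restores the block structure. The remaining work is the algebraic simplification of the two denominators, which should be verified carefully but involves no difficulty.
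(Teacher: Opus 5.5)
Your computation is correct. In the column-normalization step the common factor $v_j^{-1}/(S_1+S_2)$ cancels, so $A^{(1)}$ is block-constant and depends only on $\lambda$; the final row normalization then reproduces all four stated formulas, including both denominator expansions. This is exactly the ``straightforward computation of the SK scaling updates'' that the paper invokes when it omits the proof, so your route coincides with the intended one.
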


\newpage

%% file: Permanent.tex
\section{Upper Bound for \texorpdfstring{$(\mathbf{1},\mathbf{1})$}{(1,1)}-scaling}
\label{sec-ub-oneonescaling}

In this section, we focus on the $(\boldsymbol{1}, \boldsymbol{1})$-scaling and prove \Cref{thm-two-phases}.

\begin{theorem}\label{thm-two-phases}
Let $\gamma, \rho,\varepsilon \in (0,1]$, $\gamma' \in (1-\gamma,1]$, and $n\in \mathbb{Z}_{>0}$. 
Suppose $B\in \mathbb{R}_{\ge 0}^{n \times n}$ is a $(\gamma,\gamma',\rho)$-dense matrix with respect to $(\mathbf{1},\mathbf{1})$. 
Let $A$ be a matrix satisfying $UAV = B$ for some strictly positive diagonal matrices $U$ and $V$, and define
\begin{align}\label{eq-def-h-maximumratio}
h \triangleq \max_{i,j\in [n]} \left\{ \frac{r_i(A)\, B_{i,j}}{r_i(B) \, A_{i,j} }\right\}.
\end{align}
Let $A^{(0)}, A^{(1)}, \dots$ denote the sequence of matrices generated by the SK algorithm with input $(A, (\mathbf{1}, \mathbf{1}))$. 
Then there exists an iteration index
\begin{align}\label{eq-k-twophase}
k = O\left( \frac{\log h - \log \varepsilon - \log \rho - \log (\gamma+\gamma'-1)}{\rho^{14} (\gamma+\gamma'-1)^6} \right)
\end{align}
such that for all $\ell \geq k$, we have
\begin{align*}
\norm{\=r\left(A^{(\ell)}\right)- \boldsymbol{1}}_{1} + \norm{\=c\left(A^{(\ell)}\right)- \boldsymbol{1}}_{1} \leq n\varepsilon.
\end{align*}
\end{theorem}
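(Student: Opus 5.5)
We follow the permanent-potential strategy of \cite{Linial2000Deterministic,he2025phase}, adapted to the scaled input $A$ with $UAV=B$. Let $\delta\triangleq\gamma+\gamma'-1>0$. By \eqref{eq-ak-aprime-relation}, every iterate $A^{(k)}$ lies in the scaling orbit $\{\diag(x)B\diag(y)\}$ of $B$, so $\perman(A^{(k)})=\perman(B)\prod_i x_i\prod_j y_j$ up to the fixed factors $U,V$. By \eqref{eq-increase-perm-ri}--\eqref{eq-increase-perm-ci}, each step multiplies the permanent by $\prod_i r_i^{-1}(A^{(k)})$ or $\prod_j c_j^{-1}(A^{(k)})$. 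For a standardized matrix with column sums $c_j$ summing to $n$, a second-order AM--GM estimate gives
\[
-\sum_j\log c_j\;\ge\;\Omega\Bigl(\sum_j\min\{(c_j-1)^2,\abs{c_j-1}\}\Bigr).
\]
Consequently the log-permanent increases by at least $\Omega(n\,\alpha(A^{(k)})^2)$ (or $\Omega(n\alpha)$ when the accuracy is large). The whole proof therefore reduces to bounding the \emph{total} available increase of $\log\perman$ by $O(n\cdot\mathrm{polylog})$ in the relevant quantities, so that the per-coordinate budget is independent of $n$.

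\textbf{Step 1 (starting point).} We lower bound $\perman(A^{(0)})$. Write $A^{(0)}_{i,j}=A_{i,j}/r_i(A)$. By the definition \eqref{eq-def-h-maximumratio} of $h$, we have $A^{(0)}_{i,j}\ge h^{-1}B_{i,j}/r_i(B)$, so $\perman(A^{(0)})\ge h^{-n}\perman(B')$, where $B'$ is the row-normalization of $B$. Since $B$ is $(\gamma,\gamma',\rho)$-dense, every entry above $\rho t$ becomes at least $\rho/n$ in $B'$. Because $\gamma+\gamma'>1$, any row's large-entry set and any column's large-entry set intersect in at least $\delta n$ positions. A greedy/Hall-type count of permutations supported on large entries then gives $\perman(B')\ge (c\rho\delta)^{O(n)}\,n!/n^n$. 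Hence $\log\perman(A^{(0)})\ge \log(n!/n^n)-n\bigl(\log h+O(\log\tfrac1{\rho\delta})\bigr)$.

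\textbf{Step 2 (strengthened structural stability, the main obstacle).} We need an \emph{upper} bound $\perman(A^{(k)})\le \delta_0^{\,n}\,\perman(B')\cdot h^{O(n)}$ that holds once $A^{(k)}$ is close to doubly stochastic; this is what makes the budget $O(n)$ rather than $O(n\log n)$. I would show that whenever $\alpha(A^{(k)})$ is below a small constant $c(\rho,\delta)$, the following two facts hold.
\begin{enumerate}[(i)]
\item The large entries of $B$ are still $\Theta(1/n)$ in $A^{(k)}$. The argument: write $A^{(k)}=\diag(x)B\diag(y)$. Density with $\gamma+\gamma'>1$ forces $x_iy_j$ to be comparable across all large positions, because any two rows share $\delta n$ large columns. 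Near-stochasticity then pins $x_iy_j\approx 1/(n\cdot\text{typical }B)$ up to factors $\mathrm{poly}(1/\rho,1/\delta)$.
\item Every entry satisfies $A^{(k)}_{i,j}\le \delta_0 B'_{i,j}$ for some constant $\delta_0$. This yields $\perman(A^{(k)})\le\delta_0^n\perman(B')$ by entrywise monotonicity of the permanent.
\end{enumerate}
The delicate part is controlling the scaling vectors using only the orbit of $B$ rather than the input $A$. I would first attempt a two-row/two-column exchange argument: if $x_iy_j$ were much larger than the typical value, the row or column sum would exceed $1+\alpha$ through the $\delta n$ shared large entries.

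\textbf{Step 3 (phases and counting).} In the first phase, while $\alpha(A^{(k)})\ge c(\rho,\delta)$, each iteration gains at least $\Omega(n\,c^2)$ in log-permanent. The total gain is at most $\log(n^n/n!)+n(\log h+O(\log\frac1{\rho\delta}))$ (using $\perman\le 1$ for stochastic matrices). So this phase lasts $O((\log h+\log\frac1{\rho\delta})/c^2)$ steps. By \Cref{lem-monotone-rsum-csum}, extremal sums are monotone, which keeps us in the near-stochastic regime afterward. In the second phase, Step 2 caps the remaining gain at $O(n\log\frac{\delta_0}{\rho\delta})$. I would combine this with a contraction argument: density gives a spectral-gap-like bound, so the error shrinks by a constant factor $1-\Omega(\rho^{O(1)}\delta^{O(1)})$ per step. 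From that, reaching $n\varepsilon$ in $\ell_1$ takes $O(\log(1/\varepsilon)/(\rho^{14}\delta^6))$ further steps, and monotonicity gives the claim for all $\ell\ge k$. Tracking the constants in $c(\rho,\delta)$ and in the contraction factor should reproduce the exponents $14$ and $6$.
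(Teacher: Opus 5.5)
You handle Phase~1 soundly, and by a different route from the paper. The paper bounds the number of large-error steps by sandwiching $\perman(A^{(L+1)})/\perman(A^{(0)})$ between the growth in \Cref{lemma-condition-upperbound} and the cap $(h\delta)^n$ from item (c) of \Cref{lem-structural-stability}. You instead combine $\perman(A^{(k)})\le 1$ with an absolute bound $\perman(A^{(0)})\ge h^{-n}\perman(C)$, where $C$ is the row-normalization of $B$. Once that bound is in hand, item (c) and \Cref{lem-lower-bound-elements-new} become unnecessary, so your Step~2(ii) is redundant.

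Your justification of the lower bound on $\perman(C)$ does not work as written. A row's large set consists of columns and a column's large set consists of rows, so they do not ``intersect''. A greedy count of permutations also dies after about $\gamma n$ rows. What works is a regular factor. Put $\delta=\gamma+\gamma'-1$ and $d=\lfloor\delta n/2\rfloor$, and suppose $e(S,T)<d(|S|+|T|-n)$ for some row set $S$ and column set $T$ with $|S|+|T|>n$. The degree bounds then force $|T|/n-1+\gamma<d/n$ and $|S|/n-1+\gamma'<d/n$, and adding these gives $|S|+|T|<n$, a contradiction. By the max-flow criterion, the large-entry graph therefore has a $d$-regular spanning subgraph, and van der Waerden gives $\perman(C)\ge(\rho d/n)^n n!/n^n$. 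For $\delta n<2$, the same computation with integer degrees yields a perfect matching.

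The genuine gap is Step~2(i), on which your Phase~2 rests. Your argument uses only that $A^{(k)}$ lies in the scaling orbit of $B$ and that $\alpha(A^{(k)})$ is small, and this cannot suffice because $\alpha$ is an average. Take $B$ to be the all-ones matrix and the column-stochastic orbit element $A_{ij}=x_i/\sum_l x_l$ with $x_1=\eta$ and $x_i=1$ for $i\ge 2$. As $\eta\to0$, $\alpha\to 2/n$, yet every entry of row~$1$ tends to $0$. Taking $x_1=\epsilon n$ instead gives $\alpha=O(\epsilon)$ with entries of order $\epsilon$, which also breaks your $O(1/n)$ bound.

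The paper first obtains absolute bounds, item (a) of \Cref{lem-structural-stability}, from small error at steps $k-2$ and $k-1$. \Cref{lem-upper-bound-max-element-initial} bounds each entry by $O(1/n)$ times its row or column sum, so after the next normalization every entry is $O(1/n)$. The last part of \Cref{lem-monotone-rsum-csum} then turns this into constant lower bounds on all sums.

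Your mechanism ``any two rows share $\delta n$ large columns'' is also false unless $\gamma>1/2$; with $\gamma=0.4$ and $\gamma'=0.7$, two rows can have disjoint large sets. The condition $\gamma+\gamma'>1$ enters through mass, not overlap. In \Cref{lem-lower-bound-elements} the argument runs as follows:
\begin{enumerate}[(1)]
\item Normalize $x_k=y_\ell$ at a supposedly small considerable entry.
\item Use the lower bounds on sums to find a row $i$ with large $x_i$ and a column $j$ with large $y_j$.
\item From the $O(1/n)$ entry bound, deduce that $y$ is small on the large columns $C$ of row $i$ and $x$ is small on the large rows $R$ of column $j$.
\item Contradict the fact that a near-doubly-stochastic matrix carries mass at least $((\gamma+\gamma')(1-\alpha)-1)n$ on $R\times C$.
\end{enumerate}

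The Phase~2 contraction is asserted rather than proved. \Cref{lem-max-accuracy-decrease} needs every row to have more than $n/2$ entries above $\theta/n$, so it must be applied on the side where $\max\{\gamma,\gamma'\}>1/2$. Moreover, each good step contracts only one of $\max_j c_j-1$ and $(\min_j c_j)^{-1}-1$. This is why the paper needs a pigeonhole over the good steps, together with monotonicity of these extremal quantities to absorb interleaved bad steps.

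Separately, the claim that the error stays small after Phase~1 does not follow from \Cref{lem-monotone-rsum-csum}. It follows from $\sum_i\abs{\sum_j A_{ij}(c_j^{-1}-1)}\le\sum_j\abs{c_j-1}$, and your global count of bad steps makes it unnecessary anyway.
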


Compared to the convergence results established in \cite{he2025phase}, our theorem significantly strengthens the prior bound and operates under a more generalized setting.
First, our theorem allows the SK algorithm to operate on an arbitrarily stretched matrix $A = U^{-1}BV^{-1}$, relaxing the requirement in \cite{he2025phase} that directly utilizes the dense matrix $B$. 
Consequently, the initial input matrix $A$ is not required to be dense.
To quantify the degree of stretching from $B$ to $A$, we introduce a parameter $h$, defined as the maximum ratio between the corresponding entries of $A$ and $B$ after row normalization. 
We normalize each entry by its row sum because the SK algorithm begins with row normalization and is therefore invariant to the absolute scale of individual rows. 
Due to this initial stretch, our final time complexity incorporates an additional $O(\log h)$ term.

Second, we establish a strictly stronger, dimension-independent convergence rate. We prove that for an input matrix $A$ with a stretch factor $h$, 
achieving an error of at most $n\varepsilon$ requires only $O(\log h - \log \varepsilon)$ iterations. 
If both $h$ and $\varepsilon$ are constants, the required number of iterations is $O(1)$. We specifically target an error bound of $n\varepsilon$ because translating the error from a $(1,1)$-scaling to a $(\={u},\={v})$-scaling (where $|\={u}|_1 = |\={v}|_1 = 1$) inherently incurs the loss of a factor of $n$. By comparison, Theorem 3.2 in \cite{he2025phase} demonstrates that even when the SK algorithm is fed the unscaled dense matrix $B$, achieving an $O(n)$ error still requires $O(\log n)$ iterations. This scenario corresponds to $h$ and $\varepsilon$ being constants in our setting. Thus, our $O(1)$ time complexity bound is strictly stronger than the $O(\log n)$ bound in prior work.

In summary, we must overcome two main difficulties: 
\begin{itemize}
\item The input matrix $A$ fed into the SK algorithm lacks the guaranteed density properties of $B$. 
\item The targeted time complexity must be proven to be independent of the matrix dimension $n$.
\end{itemize}

% \begin{theorem}\label{thm-two-phases}
% Let $\gamma, \rho \in (0,1]$, $\gamma' \in (1-\gamma,1]$, and $\varepsilon > 0$, and let $n$ be a positive integer.
% Consider an $n \times n$ matrix $B$ that is $(\gamma,\gamma',\rho)$-dense with respect to  $(\mathbf{1},\mathbf{1})$. 
% Let $A$ be a matrix satisfying $UAV = B$ for some strictly positive diagonal matrices $U$ and $V$. 
% Define
% \begin{align}\label{eq-def-h-maximumratio}
% h \triangleq \max_{i,j\in [n]} \left\{ \frac{r_i(A)\cdot B_{i,j}}{r_i(B) \cdot A_{i,j} }\right\}.
% \end{align}
% Furthermore, let $A^{(0)}, A^{(1)}, \dots$ denote the sequence of matrices generated by the SK algorithm with inputs $(A, (\mathbf{1}, \mathbf{1}))$.
% Then there exists some 
% \begin{align}\label{eq-k-twophase}
% k &= O\left(\rho^{-14}\cdot\left(\gamma+\gamma'-1\right)^{-6}\cdot \left(\log h - \log \varepsilon - \log \rho - \log (\gamma+\gamma'-1)\right)\right)
% \end{align}
% such that for any $\ell\geq k$, we have 
% \begin{align*}
% \norm{r\left(A^{(\ell)}\right)- \boldsymbol{1}}_{1} + \norm{c\left(A^{(\ell)}\right)- \boldsymbol{1}}_{1} \leq n\varepsilon.
% \end{align*}
% \end{theorem}

\vspace{0.5cm}

For simplicity, define 
\[
\forall k\geq 0,\quad \Delta^{(k)} \triangleq\norm{\=r\left(A^{(k)}\right)- \boldsymbol{1}}_1 + \norm{\=c\left(A^{(k)}\right)- \boldsymbol{1}}_1.
\]
Also define 
\begin{align}\label{eq-def-K-set}
K &\triangleq \{0,1\}\cup \left\{k \geq 2 \mid \max\left\{\Delta^{(k-2)},\Delta^{(k-1)} ,\Delta^{(k)} \right\} > \frac{9n}{10}\left(1 - \frac{1}{\gamma+\gamma'}\right)\right\}.
\end{align}
Intuitively, 
\(K\) collects all indices \(k\) for which the error in at least one of the previous three steps exceeds the threshold. For convenience, we also include \(0,1\) in \(K\).

\subsection{Structural Stability}
In this subsection, we establish the structural stability of the scaled matrices. 
Following the notation of \Cref{thm-two-phases}, let $C$ denote the row-normalized counterpart of $B$. We define an entry of $C$ to be \emph{considerable} if it exceeds $\rho/n$. 
Our structural stability results differ from those established in \cite{he2025phase} in several key aspects.

First, we prove a strictly stronger form of structural stability. While \cite{he2025phase} demonstrated that if SK takes $B$ as input, the considerable entries in $C$ stay $\Theta(1/n)$ in the scaled matrix $A^{(k)}$ (provided $A^{(k)}$ is sufficiently close to being doubly stochastic), 
we additionally prove that every entry of $A^{(k)}$ is bounded above by a constant multiple of its corresponding entry in $C$ (Item \ref{lemma-structure-stablility-two} of \Cref{lem-structural-stability}). Second, our SK algorithm takes an arbitrarily scaled matrix $A$ (obtained from $B$) as input, rather than the unscaled matrix $B$ itself.

This arbitrary initialization removes the reliance on the restrictive conditions used in prior work, but it introduces new analytical difficulties. In \cite{he2025phase}, feeding $B$ directly into the algorithm ensures that the initial matrix is exactly $C$ (i.e., $A^{(0)} = C$). Because $B$ is $(\gamma,\gamma',\rho)$-dense, it is straightforward to obtain explicit upper and lower bounds on the row and column sums of $A^{(k)}$.
Furthermore, by expressing $A^{(k)} = X A^{(0)} Y = X C Y$ (where the diagonal matrices $X$ and $Y$ accumulate the respective normalizations), 
the author can exploit the property that the permanents of $X$ and $Y$ are strictly greater than $1$ to yield the desired result. By contrast, in our setting, the initial matrix $A^{(0)}$ is a stretched version of $C$, meaning $A^{(0)} = X' C Y'$. Consequently, we cannot directly bound the row and column sums of $A^{(k)}$. Moreover, the cumulative scaling becomes $A^{(k)} = X (X' C Y') Y = (X X') C (Y' Y)$, where the permanents of the composite matrices $(X X')$ and $(Y' Y)$ are no longer guaranteed to be greater than $1$.

To overcome these hurdles, we develop the following techniques:
\begin{itemize}
\item \emph{Bounding Row/Column Sums via Relative Mass:} 
For any $k \notin K$ (defined in \eqref{eq-def-K-set}), assume without loss of generality that $A^{(k-2)}$ is column-normalized. We first prove that each entry accounts for an $O(1/n)$ fraction of its respective row mass (\Cref{lem-upper-bound-max-element-initial}).
Symmetrically, we show that each entry of $A^{(k-1)}$ constitutes an $O(1/n)$ fraction of its column mass. By leveraging these two relative properties, we successfully derive the necessary bounds for the row sums of $A^{(k)}$.

\item \emph{Symmetric Shrinking Construction:} To overcome the limitations of the composite scaling matrices, we utilize the degrees of freedom inherent in matrix scaling. We construct alternative diagonal matrices $X''$ and $Y''$ satisfying $A^{(k)} = X'' C Y''$. In particular, we carefully choose $X''$ and $Y''$ such that their shrinking effects are perfectly balanced for the entry in $A^{(k)}$ that experiences the maximum relative shrinking compared to $C$. By leveraging the favorable properties of these newly constructed matrices $X''$ and $Y''$, we establish our final bounds (Lemmas \ref{lem-lower-bound-elements-new} and \ref{lem-lower-bound-elements}).
\end{itemize}

\vspace{0.3cm}

The main result of this subsection is the following lemma.

\begin{lemma}\label{lem-structural-stability}
Assume the conditions in \Cref{thm-two-phases}. Fix any $k\not\in K$.
Let $C$ denote the matrix 
\begin{align}\label{eq-def-c-normlized-b}
\forall i,j\in [n],\quad C_{i,j} = \frac{B_{i,j}}{r_i(B)}.
\end{align}
\begin{enumerate}[(a)]
\item \label{lemma-structure-stablility-regime} We have 
\begin{align}
&\forall i\in [n],\quad \frac{\rho^2(\gamma + \gamma' - 1) }{10}\leq  r_i\left(A^{(k)}\right)\leq \frac{10}{\rho^2(\gamma + \gamma' - 1) },\label{eq-structure-stablility-regime-row}\\
&\forall j\in [n],\quad \frac{\rho^2(\gamma + \gamma' - 1) }{10}\leq c_j\left(A^{(k)}\right)\leq \frac{10}{\rho^2(\gamma + \gamma' - 1)},\label{eq-structure-stablility-regime-column}\\
&\forall i,j\in [n],\quad \quad \quad\quad \ \ A^{(k)}_{i,j}  \leq \frac{10}{\rho^2(\gamma + \gamma' - 1)n }.\label{eq-structure-stablility-regime-item}
\end{align}

\item \label{lemma-structure-stablility-one}
For any $i,j\in [n]$ with $C_{i,j}\geq \rho/n$, we have $A^{(k)}_{i,j} > \theta/n$ where 
\begin{align}\label{eq-structure-stability-1}
\theta \triangleq 10^{-5}\cdot \rho^{14} \cdot \gamma^3 \cdot (\gamma + \gamma' - 1) ^{5}.
\end{align}
Thus, 
\begin{align}\label{eq-structure-stability-2}
\forall i\in [n], \quad
\abs{\left\{t\mid A^{(k)}_{i,t}> \frac{\theta}{n}\right\}}\geq \lceil\gamma n \rceil.
\end{align}

\item \label{lemma-structure-stablility-two}
For any $i,j\in [n]$, we have $A^{(k)}_{i,j} \leq \delta C_{i,j}$ where 
\begin{align}\label{eq-structure-stability-3}
\delta\triangleq 10^{3}\cdot \rho^{-7} (\gamma + \gamma' - 1) ^{-4}.
\end{align}
Thus,
\begin{align}\label{eq-structure-stability-4}
\frac{\perman\left(A^{\left(k\right)}\right)}{\perman\left(A^{\left(0\right)}\right)} \leq (h\delta)^n.
\end{align}
\end{enumerate} 
\end{lemma}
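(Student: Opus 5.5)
Fix $k\notin K$ and assume without loss of generality that $A^{(k)}$ is row-normalized, so $k$ is even, $A^{(k-1)}$ is column-normalized and $A^{(k-2)}$ is row-normalized. Since $k\notin K$, each of $\Delta^{(k-2)},\Delta^{(k-1)},\Delta^{(k)}$ is at most $\frac{9n}{10}(1-\frac{1}{\gamma+\gamma'})$. Throughout we use the representation $A^{(\ell)}=X^{(\ell)} C Y^{(\ell)}$ with positive diagonal $X^{(\ell)},Y^{(\ell)}$, which follows from \eqref{eq-ak-aprime-relation} and $UAV=B$. The point is to derive all bounds from the density of $C$ (each row has at least $\lceil\gamma n\rceil$ entries $\ge\rho/n$; each column has at least $\lceil\gamma' n\rceil$ such entries, up to a factor $\rho$ from the row normalization), and not from the initial matrix $A^{(0)}$.

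\textbf{Step 1 (item (a)).} We first show that in the row-normalized matrix $A^{(k-2)}$ every entry is $O(1/(\rho n))$ times its row sum, i.e. $O(1/(\rho n))$. Suppose $A^{(k-2)}_{i,j}$ were large. Then $X_iY_j$ is large relative to $X_i\max_t Y_t$ over the dense row support $S_i$, so the $Y_t$ with $t\in S_i$ are comparatively small. The rows supported on column $j$ with large $C$-entries form a set of measure at least $\gamma' n$. This set meets $S_i$-type supports because $\gamma+\gamma'>1$, with overlap at least $(\gamma+\gamma'-1)n$. That overlap forces a column sum of $A^{(k-2)}$ to deviate from $1$ by a constant on a linear-size set, which contradicts the bound on $\Delta^{(k-2)}$ (pigeonhole on the $\ell_1$ error). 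The symmetric argument applies to $A^{(k-1)}$ with columns. Combining the two yields $A^{(k)}_{i,j}\le 10/(\rho^2(\gamma+\gamma'-1)n)$. The row-sum bounds for $A^{(k)}$ are exactly $1$, since it is row-normalized. The column sums are controlled by Lemma~\ref{lem-monotone-rsum-csum}, namely $c_j(A^{(k)})\in[(\max r(A^{(k-1)}))^{-1},(\min r(A^{(k-1)}))^{-1}]$. The row sums of $A^{(k-1)}$ are bounded above and below using the $O(1/n)$ entry bound together with the density overlap. This gives \eqref{eq-structure-stablility-regime-row}--\eqref{eq-structure-stablility-regime-item}.

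\textbf{Step 2 (items (b), (c)).} We exploit the gauge freedom $X\to sX$, $Y\to Y/s$ and choose $X'',Y''$ with $A^{(k)}=X''CY''$, balanced at the entry $(i^*,j^*)$ that achieves the maximal ratio $A^{(k)}_{i,j}/C_{i,j}$ (for (c)) or the minimal ratio over considerable entries (for (b)).

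For (c), suppose $X''_{i^*}Y''_{j^*}>\delta$, and with the balancing assume $X''_{i^*},Y''_{j^*}\ge\sqrt\delta$. Row $i^*$ has sum $\le 10/(\rho^2(\gamma+\gamma'-1))$ by (a). Hence on its $\gamma n$ considerable columns $Y''_t$ is $O(1/(\rho^3(\gamma+\gamma'-1)\sqrt\delta))$ on average, and a Markov argument makes it small on most of them. Symmetrically, $X''$ is small on most of the considerable rows of column $j^*$. The density overlap $\gamma+\gamma'>1$ then produces a row $i$ that is considerable for $j^*$ and whose considerable columns mostly have small $Y''$. Its row sum is therefore tiny except through column $j^*$. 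But the entry at $(i,j^*)$ is at most the bound in (a), so the row sum falls below the lower bound in (a), a contradiction once $\delta=10^3\rho^{-7}(\gamma+\gamma'-1)^{-4}$.

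For (b), we argue symmetrically with the roles of large and small exchanged. This produces the polynomial $\theta$, where the extra $\gamma^3$ factor comes from the Markov steps. Then \eqref{eq-structure-stability-2} is immediate from the density of $C$.

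\textbf{Step 3 (permanent bound).} From (c), $\perman(A^{(k)})\le\delta^n\perman(C)$. The definition of $h$ gives $C_{i,j}\le h\,A^{(0)}_{i,j}$, since $A^{(0)}_{i,j}=A_{i,j}/r_i(A)$. Therefore $\perman(C)\le h^n\perman(A^{(0)})$, and \eqref{eq-structure-stability-4} follows.

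The main obstacle is Step 2(c). The difficulty is getting the balancing argument to produce a single witness row whose sum contradicts (a), while keeping every constant polynomial in $\rho$ and $\gamma+\gamma'-1$ and independent of $n$.
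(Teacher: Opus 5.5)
\textbf{Step 1.} This step rests on a false claim. A row-normalized $A^{(k-2)}$ with $k\notin K$ need not have all entries $O(1/(\rho n))$. Take $B=J$, the all-ones matrix, which is dense with $\gamma=\gamma'=1$ and $\rho=1/2$. Set $U=I$ and $V^{-1}=\diag(p)$ with $p_1=1/5$ and $p_j=\frac{4}{5(n-1)}$ for $j\ge 2$. Every row of $A^{(0)}$ equals $p$, so $\Delta^{(0)}=\frac{2n}{5}-2\le\frac{9n}{10}\cdot\frac{1}{2}$, and $A^{(1)}=A^{(2)}=J/n$. Hence $2\notin K$, yet $A^{(0)}_{i,1}=1/5$.

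Your pigeonhole argument cannot exclude this. The threshold defining $K$ is itself linear in $n$, so constant deviations on a linear-size set of columns are permitted. What is true is \Cref{lem-upper-bound-max-element-initial}. For a row-normalized iterate $Z$ in the orbit with $\alpha(Z)<\gamma+\gamma'-1$, one has $Z_{i,j}\le c_j(Z)/(\rho^2(\gamma+\gamma'-1-\alpha(Z))n)$. So the bound is relative to the \emph{other} marginal, and symmetrically for column-normalized iterates. It becomes an absolute $O(1/n)$ bound only after the next normalization.

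Applying this at $k-2$ and at $k-1$ gives the entry caps for $A^{(k-1)}$ and $A^{(k)}$, and \Cref{lem-monotone-rsum-csum} then gives the lower bounds. The upper bound on the column sums of $A^{(k)}$ comes directly from $c_j(A^{(k)})\le n\max_i A^{(k)}_{i,j}$. It should not come from a lower bound on $r_i(A^{(k-1)})$, which you assert without justification; an entry cap cannot yield such a lower bound.

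\textbf{Step 2.} In part (c), the single-witness-row argument does not close. Markov only makes $Y''$ small on most of $S=\{t: C_{i^*,t}\ge\rho/n\}$. The overlap guaranteed by $\gamma+\gamma'>1$ is between the considerable columns of a row and the considerable rows of a column, not between the considerable sets of two rows. So for $\gamma\le 1/2$ your witness row may have its considerable columns almost entirely outside $S$. Even when it does not, its mass on columns outside $S$, including its non-considerable entries, is uncontrolled, so you cannot push its row sum below the bound in (a).

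The missing idea is an aggregate count over a block.
\begin{itemize}
\item Use the pointwise cap $A^{(k)}_{i,j}\le b/n$ from (a), with $b=10/(\rho^2(\gamma+\gamma'-1))$. It gives $Y''_t<b/(\rho\sqrt\delta)$ for \emph{every} $t\in S$, and $X''_r<b/(\rho\sqrt\delta)$ for every $r\in R=\{r: C_{r,j^*}\ge\rho/n\}$, with no Markov step.
\item Combine this with $C_{i,j}\le a/n$, where $a=1/(\rho\gamma)$ follows from $r_i(B)\ge\rho\gamma n\max_{i',j'}B_{i',j'}$. The mass of $A^{(k)}$ on $R\times S$ is then less than $ab^2n/(\rho^2\delta)$.
\item Normalization together with $\alpha(A^{(k)})<1-1/(\gamma+\gamma')$ forces that block to carry at least $((\gamma+\gamma')(1-\alpha(A^{(k)}))-1)n$.
\end{itemize}
Comparing the two bounds is exactly where $\delta=ab^2/(\rho^2((\gamma+\gamma')(1-\alpha)-1))$ comes from (\Cref{lem-lower-bound-elements-new}).

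Part (b) uses the same block count, but reaching it needs an extra hop (\Cref{lem-lower-bound-elements}). At the offending considerable entry, $X''_{i^*}=Y''_{j^*}\le\sqrt{\theta/\rho}$. The lower bound $d$ on the sum of column $j^*$, together with $C\le a/n$, then produces a row with large $X''$, hence small $Y''$ on all its considerable columns. Symmetrically, the lower bound on the sum of row $i^*$ produces a column with large $Y''$, hence small $X''$ on all its considerable rows. The factor $\gamma^3$ in $\theta$ is $a^{-3}$, not a Markov loss.

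Step 3 is correct and matches the paper.
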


\vspace{0.5cm}

The following lemma, adapted from Lemma 3.5 in \cite{he2025phase}, is used in our proof of \Cref{lem-structural-stability}. In \cite{he2025phase}, the author establishes that all entries of $A^{(k)}$ are $O(1/n)$ by utilizing explicit upper and lower bounds on its row and column sums, combined with the fact that $\alpha\left(A^{(k)}\right)$ is small. 
In our setting, however, such bounds on the row and column sums are not directly available. 
Instead, we leverage the smallness of $\alpha\left(A^{(k)}\right)$ to demonstrate a relative property: when $A^{(k)}$ is row-normalized, each entry accounts for an $O(1/n)$ fraction of its respective column mass. Symmetrically, when $A^{(k)}$ is column-normalized, each entry constitutes an $O(1/n)$ fraction of its respective row mass. The detailed proof is deferred to the appendix.

\begin{lemma}\label{lem-upper-bound-max-element-initial}
Let \( \gamma,\rho \in (0,1] \), and \(\gamma' \in (1-\gamma,1]\) . Let \( A \) be a $(\gamma,\gamma',\rho)$-dense \( n \times n \) matrix with respect to  $(\=1,\=1)$, and let \( X \) and \( Y \) be diagonal matrices with positive diagonal entries. Suppose that 
\( B = XAY \) satisfies the following conditions:
\begin{itemize}
\item $B$ is standardized and has entries in \([0,1]\),
\item $\gamma + \gamma' - 1 - \alpha(B) > 0$.
\end{itemize}
If $r_t(B) = 1$ for each $t\in [n]$, we have
\begin{align}\label{eq-upper-bound-max-element-older}
\forall i,j\in [n],\quad B_{i,j} \leq \frac{c_j(B)}{\rho^2(\gamma + \gamma' - 1 - \alpha(B))n }.
\end{align}
Otherwise, $c_t(B) = 1$ for each $t\in [n]$. We have
\begin{align}\label{eq-upper-bound-max-element-older-column}
\forall i,j\in [n],\quad B_{i,j} \leq \frac{r_i(B)}{\rho^2(\gamma + \gamma' - 1 - \alpha(B))n }.
\end{align} 
\end{lemma}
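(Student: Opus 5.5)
We prove the row-normalized case ($r_t(B)=1$ for all $t$); the column-normalized case \eqref{eq-upper-bound-max-element-older-column} then follows by applying the same argument to the transposes. Transposing $A$ swaps the roles of $\gamma$ and $\gamma'$, which leaves $\gamma+\gamma'-1$ unchanged.

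\emph{Setup.} Write $X=\diag(x_1,\dots,x_n)$, $Y=\diag(y_1,\dots,y_n)$, and let $t\triangleq\max_{i,j}A_{i,j}$, so that $B_{l,k}=x_lA_{l,k}y_k$. Fix $i,j\in[n]$. Let
\[R_i\triangleq\{k: A_{i,k}>\rho t\},\qquad C_j\triangleq\{l: A_{l,j}>\rho t\}.\]
Density with respect to $(\mathbf{1},\mathbf{1})$ gives $|R_i|\ge\gamma n$ and $|C_j|\ge\gamma' n$.

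\emph{Step 1: two one-sided bounds.} Restricting the row sum of $B$ to $R_i$ gives
\[1=r_i(B)\ge \sum_{k\in R_i}x_iA_{i,k}y_k\ge \rho t\,x_i\sum_{k\in R_i}y_k.\]
Restricting the column sum to $C_j$ gives
\[c_j(B)\ge \rho t\,y_j\sum_{l\in C_j}x_l.\]
Since $A_{i,j}\le t$, combining these yields
\[B_{i,j}\le t\,x_iy_j\le \frac{c_j(B)}{\rho^2\, t\,\bigl(\sum_{k\in R_i}y_k\bigr)\bigl(\sum_{l\in C_j}x_l\bigr)}.\]
So it suffices to show
\[t\Bigl(\sum_{l\in C_j}x_l\Bigr)\Bigl(\sum_{k\in R_i}y_k\Bigr)\ge(\gamma+\gamma'-1-\alpha(B))\,n.\]

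\emph{Step 2: mass in the block $C_j\times R_i$.} This is the key step, where $\gamma+\gamma'>1$ enters. The block mass is bounded above by
\[\sum_{l\in C_j,\,k\in R_i}B_{l,k}=\sum x_lA_{l,k}y_k\le t\Bigl(\sum_{l\in C_j}x_l\Bigr)\Bigl(\sum_{k\in R_i}y_k\Bigr).\]
For the lower bound, first use the accuracy of $B$. Since $B$ has column-accuracy $\boldsymbol{\alpha}$, we have $\sum_k|c_k(B)-1|\le\sum_k\alpha_k=\alpha(B)\,n$. Hence
\[\sum_{k\in R_i}c_k(B)\ge |R_i|-\alpha(B)n\ge(\gamma-\alpha(B))n.\]
Next, the rows outside $C_j$ each have sum $1$, so their total mass, in particular within columns $R_i$, is at most $n-|C_j|\le(1-\gamma')n$. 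Subtracting, the mass in the block $C_j\times R_i$ is at least $(\gamma+\gamma'-1-\alpha(B))n$, which is positive by hypothesis. Combining with the upper bound proves the inequality required in Step 1, and hence \eqref{eq-upper-bound-max-element-older}.

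I expect Step 2 to be the only nontrivial point. The natural first attempt, which I would avoid, is to bound $x_i$ and $y_j$ individually. That fails because $X$ and $Y$ are arbitrary. The argument works only because the product $t\,(\sum_{C_j}x)(\sum_{R_i}y)$ is controlled jointly through the mass of a single block, which the overlap condition $\gamma+\gamma'>1$, together with the near-stochasticity of $B$, forces to be $\Omega(n)$. The hypothesis that $B$ has entries in $[0,1]$ is not needed beyond ensuring that the quantities involved are well defined.
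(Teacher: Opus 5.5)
Your proof is correct and uses the same idea as the paper's. Both lower-bound the mass of $B$ on the block (rows $C_j$) $\times$ (columns $R_i$) by $(\gamma+\gamma'-1-\alpha(B))n$ via inclusion--exclusion, and compare it with the upper bound $t\bigl(\sum_{l\in C_j}x_l\bigr)\bigl(\sum_{k\in R_i}y_k\bigr)$, which is tied back to $x_i$ and $y_j$ through row $i$ and column $j$. Your execution is slightly cleaner: the paper excludes $i$ and $j$ from the index sets and then must solve a quadratic inequality in $B_{i,j}$ (absorbing a $-2$ loss), whereas multiplying your two one-sided bounds gives the claim directly.
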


The next lemma, a key ingredient in the proof of \Cref{lem-structural-stability}, shows that each entry of the scaled matrix $A^{(k)}$ is bounded above by a constant multiple of the corresponding entry of $C$.

\begin{lemma}\label{lem-lower-bound-elements-new}
Suppose \(a>0 \),   \(b>0 \), \(\gamma
\in (0,1]\) and \(\gamma' \in (1-\gamma,1]\). 
Let \( A \) be an \( n \times n \) matrix with entries in \([0,1]\).
Let $B = XAY$ where $X,Y$ are positive diagonal matrices. 
Assume $A$ and $B$ 
satisfy the following conditions:
\begin{itemize}
\item $\forall i,j\in [n]$, $A_{i,j} \leq a/n$ and $B_{i,j} \leq b/n$;
\item $\alpha(B)< 1- 1/(\gamma+\gamma')$;
\item each row of $A$ contains at least \(\gamma n\) entries with values at least \(\rho/n\);
\item each column of $A$ contains at least \(\gamma' n\) entries with values at least \(\rho/n\).
\end{itemize}
For any $i,j\in [n]$, we have $B_{i,j} \leq \delta A_{i,j}$ where 
\begin{align}\label{eq-def-theta-new-constant}
\delta\triangleq \frac{ab^2}{\rho^2((\gamma+\gamma')(1 -\alpha(B)) - 1)}.
\end{align}
\end{lemma}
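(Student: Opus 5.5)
The plan is to exploit the multiplicative structure $B_{i,j} = x_i A_{i,j} y_j$, where $X = \diag(x_1,\dots,x_n)$ and $Y = \diag(y_1,\dots,y_n)$, and to reduce the claim $B_{i,j}\le \delta A_{i,j}$ to the bound $x_i y_j \le \delta$. If $A_{i,j}=0$ then $B_{i,j}=0$ and there is nothing to prove, so assume $A_{i,j}>0$.

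\textbf{Step 1 (sets and upper bounds).} Fix $i,j$. Let $S_i\subseteq[n]$ be the set of columns $t$ with $A_{i,t}\ge \rho/n$; by hypothesis $|S_i|\ge \gamma n$. Let $T_j\subseteq [n]$ be the set of rows $s$ with $A_{s,j}\ge \rho/n$; by hypothesis $|T_j|\ge\gamma' n$. For $t\in S_i$ we have $x_i y_t\cdot \rho/n \le x_iA_{i,t}y_t = B_{i,t}\le b/n$, so $x_i y_t\le b/\rho$. Symmetrically, $x_s y_j \le b/\rho$ for every $s\in T_j$.

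\textbf{Step 2 (bridging identity).} For any $s\in T_j$ and $t\in S_i$ we have the identity
\[
x_i y_j=\frac{(x_i y_t)(x_s y_j)}{x_s y_t}\le \frac{b^2}{\rho^2}\cdot\frac{1}{x_s y_t}.
\]
Moreover, whenever $A_{s,t}>0$ we have $x_s y_t = B_{s,t}/A_{s,t}\ge B_{s,t}\,n/a$. So it suffices to find a pair $(s,t)\in T_j\times S_i$ with
\[
B_{s,t}\ge \frac{(\gamma+\gamma')(1-\alpha(B))-1}{n}.
\]
Such a pair automatically has $A_{s,t}>0$, because $B_{s,t}>0$ forces it. With this pair, $x_iy_j\le ab^2/\bigl(\rho^2((\gamma+\gamma')(1-\alpha(B))-1)\bigr)=\delta$, as required.

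\textbf{Step 3 (averaging, the main obstacle).} I would find the pair by averaging: show that the total mass $M=\sum_{s\in T_j,\,t\in S_i}B_{s,t}$ is at least $((\gamma+\gamma')(1-\alpha(B))-1)\,n$. Since $|T_j\times S_i|\le n^2$, some entry of the block is then at least $M/n^2$, which is the bound needed in Step 2.

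Suppose without loss of generality that $B$ has column-accuracy, so every row sum is $1$; the row-accuracy case is symmetric with the roles of $S_i$ and $T_j$ exchanged. Then
\[
M=\sum_{s\in T_j}\Bigl(1-\sum_{t\notin S_i}B_{s,t}\Bigr)\ge |T_j|-\sum_{t\notin S_i}c_t(B).
\]
Using $c_t(B)\le 1+\alpha_t$, the subtracted term is at most $(n-|S_i|)+\sum_t\alpha_t$. This gives $M\ge(\gamma+\gamma'-1)n-\alpha(B)\,n$.

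This is where I expect the difficulty: the elementary estimate produces $(\gamma+\gamma'-1-\alpha(B))n$ rather than the stated $((\gamma+\gamma')(1-\alpha(B))-1)n$. The hypothesis $\alpha(B)<1-1/(\gamma+\gamma')$ is exactly what makes the stated quantity positive, which suggests the intended accounting weighs the deviations against the sizes of $S_i$ and $T_j$. I would try to recover the stated form by bounding $\sum_{t\in S_i}c_t(B)$ from below by a factor $(1-\alpha(B))$ times $|S_i|$, and similarly for the rows in $T_j$. Failing that, I would note that the two expressions differ by $(\gamma+\gamma'-1)\alpha(B)\,n$, and adjust either the averaging step or the constant accordingly. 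Either way, the resulting estimate feeds into Step 2 to close the proof.
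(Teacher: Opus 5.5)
Your proposal is correct and is essentially the paper's argument. The paper argues by contradiction: it rescales so that $x_k=y_\ell>\sqrt{\delta}$, deduces that $x_s$ is small on the heavy rows of column $\ell$ and $y_t$ is small on the heavy columns of row $k$, and contradicts the same block-mass lower bound on $R\times C$; your identity $x_iy_j=(x_iy_t)(x_sy_j)/(x_sy_t)$ is a direct, rescaling-free form of that step.

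The difficulty you flag in Step 3 does not exist. We have $(\gamma+\gamma'-1-\alpha(B))n-\bigl((\gamma+\gamma')(1-\alpha(B))-1\bigr)n=(\gamma+\gamma'-1)\alpha(B)\,n\ge 0$, so your estimate is already stronger than what Step 2 needs. The paper gets the weaker form only because it loosens the total deviation $n\alpha(B)$ to $(|R|+|C|)\alpha(B)$. Your proof therefore closes as written, and even yields the slightly better constant $ab^2/\bigl(\rho^2(\gamma+\gamma'-1-\alpha(B))\bigr)$.
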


\begin{proof}
Assume for contradiction that there exists some $k,\ell\in [n]$ where $B_{k,\ell} > \delta A_{k,\ell}$.
Combined with $B_{k,\ell} = x_kA_{k,\ell}y_{\ell}$, 
we have $x_{k}y_{\ell} > \delta$.
Thus, we have either $x_{k} \geq \sqrt{\delta}$ or $y_{\ell} \geq \sqrt{\delta}$.
Since the factorization 
$B=XAY$ is invariant under the rescaling 
$(X,Y)\rightarrow (cX,Y/c)$ for any 
$c>0$, there is one degree of freedom in the choice of the diagonal scalings. Hence, without loss of generality, we may rescale so that 
$x_{k} = y_{\ell}$, which together with 
$x_{k}y_{\ell} > \delta$ implies
$x_{k} = y_{\ell} > \sqrt{\delta}$.

Let $C = \{j\in [n]| A_{k,j}\geq \rho/n\}$.
For each $j\in C$, we have
\[
B_{k,j} = x_kA_{k,j}y_{j}\leq \frac{b}{n}.\]
Combined with $x_{k}\geq \sqrt{\delta}$ and $A_{k,j}> \rho/n$,
we have 
\begin{equation*}
\begin{aligned}
y_j\leq \frac{b}{n x_kA_{k,j}} < \frac{b}{\rho\sqrt{\delta}}.
\end{aligned}
\end{equation*}
Combined with \eqref{eq-def-theta-new-constant}, we have 
\begin{equation}
\begin{aligned}\label{eq-yj-small-new}
y_j< \frac{b}{\rho\sqrt{\delta}} = \sqrt{\frac{((\gamma+\gamma')(1 -\alpha(B)) - 1)}{a}}.
\end{aligned}
\end{equation}
Similarly, 
let $R = \{i\in [n]| A_{i,\ell}\geq \rho/n\}$.
For each $i\in R$, we have
\[
B_{i,\ell} = x_iA_{i,\ell}y_{\ell}\leq \frac{b}{n}.\]
Combined with $y_{\ell}> \sqrt{\delta}$ and $A_{i,\ell}\geq \rho/n$,
we have 
\begin{equation*}
\begin{aligned}
x_i< \frac{b}{n y_\ell A_{i,\ell}} \leq \frac{b}{\rho\sqrt{\delta}}.
\end{aligned}
\end{equation*}
Combined with \eqref{eq-def-theta-new-constant}, we have 
\begin{equation}
\begin{aligned}\label{eq-yj-small-new-xi}
x_i< \frac{b}{\rho\sqrt{\delta}} = \sqrt{\frac{((\gamma+\gamma')(1 -\alpha(B)) - 1)}{a}}.
\end{aligned}
\end{equation}
Combining \eqref{eq-yj-small-new}, \eqref{eq-yj-small-new-xi} with $A_{i,j}\leq a/n$ for each $i,j\in [n]$,
we have 
\begin{equation}\label{eq-app-upper-bound-rc-new}
\begin{aligned}
&\quad \sum_{i\in R}\sum_{j\in C}x_{i}A_{i,j}y_{j}\\
&< n^2\cdot \frac{a}{n}\cdot \frac{((\gamma+\gamma')(1 -\alpha(B)) - 1)}{a}.\\
&= ((\gamma+\gamma')(1 -\alpha(B)) - 1)n.
\end{aligned}
\end{equation}

Assume without loss of generality that $c_t(B) = 1$ for each $t\in [n]$.
The case where $r_t(B) = 1$ for each $t\in [n]$ can be proved analogously.
By \eqref{eq-def-alpha},
we have 
\[(\abs{C}+\abs{R})\alpha(B)\geq (\gamma + \gamma')n\alpha(B) > n\alpha(B) \geq \sum_{r\in [n]}\abs{\sum_{t\in [n]}B_{r,t} -1} \geq \sum_{r\in C}\abs{\sum_{t\in [n]}B_{r,t} -1} \geq \abs{C} - \sum_{r\in C}\sum_{t\in [n]}B_{r,t}.\]
Thus, 
we have
\[\sum_{r\in C}\sum_{t\in [n]}B_{r,t} \geq \abs{C} -\alpha(B)\cdot(\abs{C}+\abs{R}).\]
In addition, 
\[\sum_{r\in [n]}\sum_{t\in [n]}B_{r,t} = \sum_{t\in [n]}\sum_{r\in [n]}B_{r,t}  = \sum_{t\in [n]}c_t(B) = n .\]
Hence, 
\[\sum_{r\not\in C}\sum_{t\in [n]}B_{r,t} \leq n - \abs{C} + \alpha(B)\cdot(\abs{C}+\abs{R}).\]
Moreover, we also have
\[\sum_{r\in [n]}\sum_{t\not\in R}B_{r,t} =\sum_{t\not\in R}c_t(B) = n - \abs{R}.\]
Thus, we have 
\begin{align*}
\sum_{r\in R}\sum_{t\in C} B_{r,t} \geq n - \sum_{r\not\in R}\sum_{t\in [n]}B_{r,t} - \sum_{r\in [n]}\sum_{t\not\in C}B_{r,t} \geq n - (2n - (1 -\alpha(B))(\abs{R} + \abs{C})).
\end{align*}
Combined with $\abs{R} + \abs{C} \geq (\gamma'+\gamma) n$, we have
\begin{align*}
\sum_{r\in R}\sum_{t\in C} B_{r,t}\geq n - (2n - (1 -\alpha(B))\cdot (\gamma + \gamma')n) \geq ((\gamma+\gamma')(1 -\alpha(B)) - 1)n.
\end{align*}
This is contradictory with \eqref{eq-app-upper-bound-rc-new}.
The lemma is proved.
\end{proof}

The next lemma, adapted from Lemma 3.6 in \cite{he2025phase}, shows that any considerable entry of $C$ stays $\Omega(1/n)$ in the scaled matrix $A^{(k)}$.
The proof of the lemma is provided in the appendix.

% Let $A^{(k)} = X A^{(0)} Y$, where the diagonal matrices $X$ and $Y$ represent the cumulative products of all row and column normalizations, respectively, applied during the execution of SK from $A^{(0)}$ to $A^{(k)}$. 
% In \cite{he2025phase}, the initial matrix is exactly $C$ (i.e., $A^{(0)} = C$). 
% Thus, the desired result follows directly from the favorable properties of $X$ and $Y$, specifically that their permanents are strictly greater than $1$.
% In our setting, however, the SK algorithm takes a stretched matrix of $B$ as input, meaning $A^{(0)}$ is also a stretched version of $C$, expressed as $A^{(0)} = X' C Y'$. 
% This yields $A^{(k)} = (X X') C (Y' Y)$, which prevents us from directly invoking the properties of $X$ and $Y$. 
% To circumvent this issue, we exploit the degrees of freedom inherent in matrix scaling to define specific diagonal matrices $X''$ and $Y''$ such that $A^{(k)} = X'' C Y''$. 
% Crucially, we choose $X''$ and $Y''$ such that their shrinking effects are perfectly balanced for the entry in $A^{(k)}$ that experiences the maximum relative shrinking compared to $C$. 
% By leveraging the favorable properties of these newly constructed $X''$ and $Y''$, we establish the final result.

\begin{lemma}\label{lem-lower-bound-elements}
Assume the conditions in \Cref{lem-lower-bound-elements-new}.
Suppose further that there exists $d>0$
such that $r_t(B)\geq d$ and 
$c_t(B)\geq d$ for each $t\in [n]$.
Then for any $i,j\in [n]$ with $A_{i,j}\geq \rho/n$, we have $B_{i,j} > \theta/n$ where 
\begin{align}\label{eq-def-theta-constant}
\theta \triangleq \frac{\rho^3 d^2((\gamma+\gamma')(1 -\alpha(B)) - 1)}{a^3b^2}.
\end{align}
\end{lemma}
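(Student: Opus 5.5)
We argue by contradiction, following the template of \Cref{lem-lower-bound-elements-new} but with the inequalities reversed. Write $B=XAY$ with $X=\diag(x_1,\dots,x_n)$ and $Y=\diag(y_1,\dots,y_n)$. Suppose some $(k,\ell)$ has $A_{k,\ell}\ge\rho/n$ and $B_{k,\ell}\le\theta/n$. Then
\[
x_ky_\ell=\frac{B_{k,\ell}}{A_{k,\ell}}\le\frac{\theta}{\rho}.
\]
Using the one degree of freedom $(X,Y)\to(cX,Y/c)$, we rescale so that $x_k=y_\ell\le\sqrt{\theta/\rho}$. This is the symmetric shrinking choice: both scaling factors attached to the bad entry are small.

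\textbf{Step 1: the partners of row $k$ and column $\ell$ have large scalings.}
Row sums give $\sum_j x_kA_{k,j}y_j=r_k(B)\ge d$. Since $A_{k,j}\le a/n$, this yields
\[
\frac{1}{n}\sum_j y_j\ \ge\ \frac{d}{a\,x_k}.
\]
Symmetrically, $\frac{1}{n}\sum_i x_i\ge d/(a\,y_\ell)$. So the $y$'s are large on average, and likewise the $x$'s. Averages are not enough, however; we need largeness on the dense sets.

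\textbf{Step 2: use the overlap of dense sets.} Let $R'$ be the set of rows $i$ with $x_i\ge\tau$ and $C'$ the set of columns $j$ with $y_j\ge\tau$, for a threshold $\tau$ chosen later. For each row $i$, the entries of row $i$ in $C'$ carry
\[
\sum_{j\in C'}B_{i,j}\ \ge\ r_i(B)-x_i\tau\sum_j A_{i,j}\ \ge\ d-x_i\tau a .
\]
This suggests working instead with the upper bound $B\le b/n$, as the previous lemma did. The cleaner route: for $j$ with $A_{k,j}\ge\rho/n$ (at least $\gamma n$ columns), we have
\[
B_{k,j}=x_kA_{k,j}y_j\ \ge\ x_k\frac{\rho}{n}y_j .
\]
Also $\sum_j B_{k,j}=r_k(B)\ge d$ while each $B_{k,j}\le b/n$. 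So at least $dn/b$ columns $j$ satisfy $B_{k,j}\ge d/(2n)$, roughly (a Markov-type count). For such $j$,
\[
y_j\ \ge\ \frac{d}{2x_kA_{k,j}n}\ \ge\ \frac{d}{2a\,x_k}\ \ge\ \frac{d}{2a}\sqrt{\rho/\theta}.
\]
Call this column set $J$. Symmetrically there is a row set $I$ with $|I|\gtrsim dn/b$ and $x_i\ge \frac{d}{2a}\sqrt{\rho/\theta}$.

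\textbf{Step 3: derive the contradiction.} Combine the largeness of $x_i$ on $I$ and $y_j$ on $J$ with the density of $A$. The main obstacle is that $I$ and $J$ need not be the dense sets, so we need a block $I\times J$ containing entries with $A_{i,j}\ge\rho/n$. For that block, $B_{i,j}\ge x_iy_j\rho/n$ would then exceed $b/n$ once $\theta$ is small, contradicting $B_{i,j}\le b/n$.

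To manufacture overlap, we mimic the counting in \Cref{lem-lower-bound-elements-new}. Using $\alpha(B)<1-1/(\gamma+\gamma')$ and the fact that $B$ is standardized, we show that the submatrix of $B$ on (rows with $A_{i,\ell}\ge\rho/n$) $\times$ (columns with $A_{k,j}\ge\rho/n$) carries mass at least $((\gamma+\gamma')(1-\alpha(B))-1)n$. Hence some dense-row/dense-column pairs $(i,j)$ have $x_iy_j$ bounded below.

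Concretely, we iterate the two-hop argument: row $k$, then a column $j$ with large $y_j$, then a row $i$ with $A_{i,j}\ge\rho/n$ whose $x_i$ we then bound. The bound
\[
\theta=\frac{\rho^3d^2\bigl((\gamma+\gamma')(1-\alpha(B))-1\bigr)}{a^3b^2}
\]
should come out of this chain: the factor $d^2$ from the two row/column-sum lower bounds, $a^3$ from three uses of $A\le a/n$, $b^2$ from two uses of $B\le b/n$, and $\rho^3$ from three dense-entry lower bounds.

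I expect the delicate point to be choosing the right intermediate sets so that the overlap mass bound applies. I would first try restricting to the sets $R=\{i: A_{i,\ell}\ge\rho/n\}$ and $C=\{j: A_{k,j}\ge\rho/n\}$, exactly as in the previous proof.
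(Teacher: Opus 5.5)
\textbf{There is a genuine gap.} The proposal never settles on a chain of inequalities that produces a contradiction, and both routes you sketch fail.

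Your main plan in Step~3 is to find $(i,j)\in I\times J$ with $A_{i,j}\ge\rho/n$ and contradict $B_{i,j}\le b/n$. Nothing guarantees such an entry. $I$ and $J$ are only known to have size about $dn/(2b)$, which in the application is a tiny fraction of $n$. A row of $I$ can place all of its $\gamma n$ dense entries outside $J$ as soon as $|J|\le(1-\gamma)n$.

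Your fallback counts on $R=\{i: A_{i,\ell}\ge\rho/n\}$ and $C=\{j: A_{k,j}\ge\rho/n\}$, and these are the wrong sets. The only information linking $x_i$ ($i\in R$) and $y_j$ ($j\in C$) to the bad entry is $B_{i,\ell}\le b/n$ and $B_{k,j}\le b/n$, i.e.\ $x_iy_\ell\le b/\rho$ and $x_ky_j\le b/\rho$. Write $\Gamma\triangleq(\gamma+\gamma')(1-\alpha(B))-1$. A lower bound $x_iy_j\ge\Gamma/a$ on some pair then only yields $x_ky_\ell\le ab^2/(\rho^2\Gamma)$. That is the \emph{upper} bound of \Cref{lem-lower-bound-elements-new}, not a lower bound. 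Those sets work in the previous lemma because there $x_k,y_\ell$ are large; here they are small, so one more hop is needed.

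\textbf{The missing idea} is to show the block mass is too \emph{small}, with the dense sets taken around \emph{large partners}. Your Step~1 averaging already gives what is needed; you dismissed it too early.
\begin{itemize}
\item From $c_\ell(B)\ge d$, $A\le a/n$ and $y_\ell\le\sqrt{\theta/\rho}$, there is a single row $i^*$ with $x_{i^*}\ge (d/a)\sqrt{\rho/\theta}$.
\item Then $B_{i^*,j}\le b/n$ on $C^*=\{j: A_{i^*,j}\ge\rho/n\}$ forces $y_j\le \frac{ab}{\rho d}\sqrt{\theta/\rho}=\sqrt{\Gamma/a}$ for all $j\in C^*$.
\item Symmetrically, $r_k(B)\ge d$ gives a column $j^*$ with large $y_{j^*}$, and then $x_t\le\sqrt{\Gamma/a}$ for all $t\in R^*=\{t: A_{t,j^*}\ge\rho/n\}$.
\item Since $|R^*|\ge\gamma' n$ and $|C^*|\ge\gamma n$, the counting from \Cref{lem-lower-bound-elements-new} gives $\sum_{t\in R^*,\,j\in C^*}B_{t,j}\ge\Gamma n$.
\item On this block, $B_{t,j}=x_tA_{t,j}y_j\le\Gamma/n$, so the mass is at most $|R^*||C^*|\Gamma/n\le\Gamma n$. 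This is how the paper reaches its contradiction.
\end{itemize}

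Your final paragraph describes one half of this chain, and your exponent bookkeeping $\rho^3d^2/(a^3b^2)$ is exactly what it produces. But you then revert to the dense sets of row $k$ and column $\ell$, which is where the argument breaks.
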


\vspace{0.5cm}

Now we can prove \Cref{lem-structural-stability}.

\begin{proof}[Proof of \Cref{lem-structural-stability}]
Assume without loss of generality that $k$ is odd.
We first prove \ref{lemma-structure-stablility-regime}.  
By $k\not \in K$ and \eqref{eq-def-K-set} we have
\[
\Delta^{(k-2)} \leq \frac{9n}{10}\left(1 - \frac{1}{\gamma+\gamma'}\right).
\]
Combined with \eqref{eq-def-alpha} and $\gamma+\gamma'>1$,
we have
\begin{align}\label{eq-upperbound-alphakminustwo}
\alpha\left(A^{(k-2)}\right) = \frac{1}{n}\cdot\sum_{i\in [n] }\alpha_{i} = \frac{1}{n}\cdot \Delta^{(k-2)} \leq \frac{9}{10}\left(1 - \frac{1}{\gamma+\gamma'}\right) < \frac{9}{10}\left(\gamma+\gamma' - 1\right).
\end{align}
In addition, 
by \Cref{fact-c-a-a0-a1},
we have $A^{(k-2)}_{i,j}\in [0,1]$ for each $i,j\in [n]$.
By $k-2$ is odd, we have $c_j\left(A^{(k-2)}\right) = 1$ for each $j\in [n]$.
Moreover, by \eqref{eq-ak-aprime-relation}, we have 
$A^{(k-2)} = XAY$ for some diagonal $X$ and $Y$. 
Combined with $UAV = B$,
we have $A^{(k-2)} = XU^{-1}BV^{-1}Y$. 
Thus, one can apply \Cref{lem-upper-bound-max-element-initial} to $A^{(k-2)}$ and obtain
\begin{align}\label{eq-upper-bound-max-element-older-kminusone}
\forall i,j\in [n],\quad A^{(k-2)}_{i,j} \leq \frac{r_i\left(A^{(k-2)}\right)}{\rho^2(\gamma + \gamma' - 1 - \alpha(A^{(k-2)}))n } \leq \frac{10 \cdot r_i\left(A^{(k-2)}\right)}{\rho^2(\gamma + \gamma' - 1)n } .
\end{align}
By $k-1$ is even, we have
\begin{align*}
\forall i,j\in [n],\quad A^{(k-1)}_{i,j}  = \frac{A^{(k-2)}_{i,j}}{r_i\left(A^{(k-2)}\right)} \leq \frac{10}{\rho^2(\gamma + \gamma' - 1)n }.
\end{align*}
Therefore, we have 
\begin{align*}
\forall j\in [n],\quad c_j\left(A^{(k-1)}\right) \leq \frac{10}{\rho^2(\gamma + \gamma' - 1) }.
\end{align*}
Combined with \Cref{lem-monotone-rsum-csum}, we have
\begin{align*}
\forall i\in [n],\quad  r_i\left(A^{(k)}\right) \geq \left(\max_{j\in [n]}c_j\left(A^{(k-1)}\right)\right)^{-1} \geq\frac{\rho^2(\gamma + \gamma' - 1) }{10}.
\end{align*}
Similar to \eqref{eq-upper-bound-max-element-older-kminusone}, by applying \Cref{lem-upper-bound-max-element-initial} to $A^{(k-1)}$, we have 
\begin{align*}
\forall i,j\in [n],\quad A^{(k-1)}_{i,j} \leq \frac{c_j\left(A^{(k-1)}\right)}{\rho^2(\gamma + \gamma' - 1 - \alpha(A^{(k-1)}))n } \leq \frac{10 \cdot c_j\left(A^{(k-2)}\right)}{\rho^2(\gamma + \gamma' - 1)n }.
\end{align*}
Thus,
\begin{align}\label{eq-structure-stablility-regime-item-new}
\forall i,j\in [n],\quad A^{(k)}_{i,j}  = \frac{A^{(k-1)}_{i,j}}{c_j\left(A^{(k-1)}\right)} \leq \frac{10}{\rho^2(\gamma + \gamma' - 1)n }.
\end{align}
Therefore, we have 
\begin{align*}
\forall i\in [n],\quad r_i\left(A^{(k)}\right) \leq \frac{10}{\rho^2(\gamma + \gamma' - 1) }.
\end{align*}
In summary, \eqref{eq-structure-stablility-regime-row} is proved.
Furthermore, by $k$ is odd, we also have 
\begin{align*}
\forall j\in [n],\quad \frac{\rho^2(\gamma + \gamma' - 1) }{10} \leq  c_j\left(A^{(k)}\right) = 1 \leq \frac{10}{\rho^2(\gamma + \gamma' - 1) }.
\end{align*}
% In summary, we have 
% $r_t\left(A^{(k)}\right)\geq \rho^2(\gamma + \gamma' - 1)/10$ and 
% $c_t\left(A^{(k)}\right)\geq \rho^2(\gamma + \gamma' - 1) /10$ for each $t\in [n]$.
Thus, \eqref{eq-structure-stablility-regime-column} is proved.
Furthermore, \eqref{eq-structure-stablility-regime-item} is immediate by \eqref{eq-structure-stablility-regime-item-new}.
Hence, \ref{lemma-structure-stablility-regime} is proved.

\vspace{0.5cm}

In the next, We prove \ref{lemma-structure-stablility-one}.  
We claim that all the conditions in \Cref{lem-lower-bound-elements} are satisfied by $C$ and $A^{(k)}$ with $a = 1/(\rho\gamma)$, $b =10/(\rho^2(\gamma + \gamma' - 1)) $, $d = \rho^2(\gamma + \gamma' - 1)/10$.
\begin{itemize}
\item Let $\ell$ denote $\max_{i,j} B_{i,j}$.
By that $B$ is $(\gamma,\gamma',\rho)$-dense $n\times n$ matrix with respect to  $(\=1,\=1)$,
we have each row of $B$ contains at least \(\gamma n\) entries with values at least \(\rho \ell\) and each column contains at least \(\gamma' n\) entries with values at least \(\rho \ell\).
Moreover, we have $r_i(B) \leq \ell n$ for each $i\in [n]$.
Combined with \eqref{eq-def-c-normlized-b}, 
we have 
each row of $C$ contains at least \(\gamma n\) entries with values at least \(\rho/n\) and
each column contains at least \(\gamma' n\) entries with values at least \(\rho/n\).

\item By each row of $B$ contains at least \(\gamma n\) entries with values at least \(\rho \ell\), we have $r_i(B) \geq \rho \ell \cdot \gamma n$ for each $i\in [n]$.
Combined with \eqref{eq-def-c-normlized-b}, 
we have $C_{i,j}< \ell/(\rho \ell\gamma n) = 1/(\rho \gamma n)$ for each $i,j\in [n]$.
\item By \eqref{eq-def-c-normlized-b}, we have 
\[C =  \left(\diag(r_1(B),\dots,r_n(B))\right)^{-1}B.\]
Thus, 
$B = \diag(r_1(B),\dots,r_n(B)) \cdot C$.
In addition, by \eqref{eq-ak-aprime-relation} we have 
$A^{(k)} = XAY$ for some diagonal $X$ and $Y$. 
Combined with $UAV = B$,
we have $A^{(k)} = XU^{-1}BV^{-1}Y$.
Combined with $B = \diag(r_1(B),\dots,r_n(B))\cdot  C$,
we have $A^{(k)} = XU^{-1}\cdot \diag(r_1(B),\dots,r_n(B))\cdot CV^{-1}Y$. 

\item By \ref{lemma-structure-stablility-regime}, we have 
$r_t\left(A^{(k)}\right)\geq \rho^2(\gamma + \gamma' - 1)/10$ and 
$c_t\left(A^{(k)}\right)\geq \rho^2(\gamma + \gamma' - 1) /10$ for each $t\in [n]$.
\item By \ref{lemma-structure-stablility-regime}, we also have $A^{(k)}_{i,j}  \leq 10/\left(\rho^2(\gamma + \gamma' - 1)n\right)$ for each $i,j\in [n]$.
\item Similar to \eqref{eq-upperbound-alphakminustwo}, by $k\not\in K$ we have 
\begin{align}\label{eq-upperbound-alohaak-structure-stable}
\alpha\left(A^{(k)}\right)\leq \frac{9}{10}\left(1 - \frac{1}{\gamma+\gamma'}\right) < 1 - \frac{1}{\gamma+\gamma'}.
\end{align}
\end{itemize}
Thus, 
by \Cref{lem-lower-bound-elements} we have $A^{(k)}_{i,j} > \theta/n$ for any $i,j\in [n]$ with $C_{i,j}\geq \rho/n$,
because  
\begin{align*}
&\quad \frac{\rho^3 d^2\left((\gamma+\gamma')\left(1 -\alpha\left(A^{(k)}\right)\right) - 1\right)}{a^3b^2} \\
&= \rho^3 \cdot 10^{-4}\cdot  \rho^{8}(\gamma + \gamma' - 1) ^{4}\left((\gamma+\gamma')\left(1 -\alpha\left(A^{(k)}\right)\right) - 1\right)\cdot (\rho \gamma)^3 \\
&= 10^{-4}\cdot \rho^{14} \cdot \gamma^3 \cdot (\gamma + \gamma' - 1) ^{4}\left((\gamma+\gamma')\left(1 -\alpha\left(A^{(k)}\right)\right) - 1\right)\\
(\text{by \eqref{eq-upperbound-alohaak-structure-stable}})\quad &\geq 10^{-5}\cdot \rho^{14} \cdot \gamma^3 \cdot (\gamma + \gamma' - 1) ^{5}\\
& = \theta.
\end{align*}
Combined with that each row of $C$ contains at least \(\lceil\gamma n \rceil\) entries with values at least \(\rho/n\), \eqref{eq-structure-stability-2} is immediate. 

\vspace{0.5cm}

In the following, we prove \ref{lemma-structure-stablility-two}.
Similar to the proof of \ref{lemma-structure-stablility-one}, we also have that all the conditions in \Cref{lem-lower-bound-elements-new} are satisfied by $C$ and $A^{(k)}$ with $a = 1/(\rho\gamma)$, $b =10/(\rho^2(\gamma + \gamma' - 1)) $.
Thus, by \Cref{lem-lower-bound-elements-new} we have $A^{(k)}_{i,j} \leq \delta C_{i,j}$ for any $i,j\in [n]$, because 
\begin{align*}
&\quad \frac{ab^2}{\rho^2\left((\gamma+\gamma')\left(1 -\alpha\left(A^{(k)}\right)\right) - 1\right)}\\
&=100\cdot \rho^{-2}\left((\gamma+\gamma')\left(1 -\alpha\left(A^{(k)}\right)\right) - 1\right)^{-1}(\rho\gamma)^{-1}(\rho^2(\gamma + \gamma' - 1))^{-2}\\
&=100\cdot \rho^{-7}\gamma^{-1}(\gamma + \gamma' - 1)^{-2}\left((\gamma+\gamma')\left(1 -\alpha\left(A^{(k)}\right)\right) - 1\right)^{-1}\\
(\text{by \eqref{eq-upperbound-alohaak-structure-stable}})\quad &\leq 10^{3}\cdot \rho^{-7}\gamma^{-1} (\gamma + \gamma' - 1) ^{-3}\\
(\text{by $\gamma'\leq 1$})\quad &\leq 10^{3}\cdot \rho^{-7} (\gamma + \gamma' - 1) ^{-4}\\
& = \delta.
\end{align*}
Moreover, by \eqref{eq-def-h-maximumratio} and \eqref{eq-def-c-normlized-b},
we have 
\begin{align*}
\forall i,j\in[n],\quad A^{(0)}_{i,j} = \frac{A_{i,j}}{r_i(A)} \geq \frac{C_{i,j}}{h}.
\end{align*}
Combined with $A^{(k)}_{i,j} \leq \delta C_{i,j}$ for any $i,j\in [n]$, we have 
\begin{align*}
\forall i,j\in[n],\quad A^{(0)}_{i,j} \geq \frac{A^{(k)}_{i,j}}{h\delta}.
\end{align*}
Therefore,
\begin{align*}
\perman\left(A^{(k)}\right) &= \sum_{\sigma }\prod_{i\in [n]}A^{(k)}_{i,\sigma(i)} \leq \sum_{\sigma }\prod_{i\in [n]}\left(A^{(0)}_{i,\sigma(i)}\cdot h\delta\right) =  \perman\left(A^{(0)}\right)\cdot \left(h
\delta\right)^n.
\end{align*}
Hence, \eqref{eq-structure-stability-4} is immediate.
The lemma is proved.
\end{proof}

\subsection{Rapid Decay of Error} In this subsection, we prove \Cref{thm-two-phases}.

The proof proceeds in two main phases. The first phase establishes an upper bound on the size of the set $K$ defined in \eqref{eq-def-K-set}, demonstrating that the error rapidly decreases to below $O(n)$. During this stage, \Cref{lemma-condition-upperbound} guarantees that the permanent of the matrix $A^{(k)}$ generated in each iteration of the SK algorithm increases by a specific factor. Concurrently, Item \ref{lemma-structure-stablility-two} of \Cref{lem-structural-stability} imposes a strict upper bound on the permanent of $A^{(k)}$ relative to the initial matrix $A^{(0)}$. By combining this guaranteed per-round growth with the global upper bound, we can deduce a theoretical maximum for the size of $K$.

The second phase proves that once the error falls below the $O(n)$ threshold, its subsequent decay is exponential. For iterations beyond $K$, Item \ref{lemma-structure-stablility-one} of \Cref{lem-structural-stability} ensures that each row of the matrix generated by the SK algorithm contains $\lceil\gamma n \rceil$ elements of magnitude $\Omega(1/n)$, and similarly, each column contains $\lceil\gamma' n \rceil$ such elements. 
According to \Cref{lem-max-accuracy-decrease}, if $\gamma > 1/2$, the deviation from $1$ of at least one of the maximum column sum and the reciprocal of the minimum column sum decays by a constant factor in each iteration.
By symmetry, if  $ \gamma' > 1/2$, the corresponding deviation for the row sums decays at the same rate.
Given the condition $\gamma+\gamma'>1$, at least one of $\gamma$ or $\gamma'$ must be strictly greater than $1/2$. Consequently, the deviation in either the row sums or the column sums must experience rapid decay in every subsequent step, which ultimately drives the exponential convergence of the total error. 

Combining the analyses of these two phases yields the proof of \Cref{thm-two-phases}.

\vspace{0.5cm}

The following two lemmas, adapted from \cite{he2025phase}, are used in our proof.

\begin{lemma}\label{lemma-condition-upperbound}
Assume the conditions in \Cref{thm-two-phases}.
Let $L$ be the maximum number in $K$ defined in \eqref{eq-def-K-set}. 
Then we have 
\begin{align*}
\perman\left(A^{(L+1)}\right) \geq \perman\left(A^{(0)}\right)\exp\left( \frac{n(\abs{K}-2)}{24}\cdot \left(\frac{9}{10}\left(1 - \frac{1}{\gamma+\gamma'}\right) \right)^2\right).
\end{align*}
\end{lemma}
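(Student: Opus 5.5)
We will prove \Cref{lemma-condition-upperbound} by telescoping the permanent identities of \Cref{lem-facts-ak} over the first $L+1$ steps. For each step $k\le L$, \eqref{eq-increase-perm-ri} and \eqref{eq-increase-perm-ci} give
\[
\perman\left(A^{(k+1)}\right)=\perman\left(A^{(k)}\right)\prod_{i\in[n]} s_i^{-1},
\]
where $s_i$ denotes the row sums of $A^{(k)}$ if $k$ is odd and its column sums if $k$ is even. By \eqref{eq-ub-ri} and \eqref{eq-ub-ci}, each factor $\prod_i s_i^{-1}$ is at least $1$. Hence the permanent never decreases, and
\[
\perman\left(A^{(L+1)}\right)\ge \perman\left(A^{(0)}\right)\prod_{k\in S}\prod_i s_i^{-1}(k)
\]
for any subset $S\subseteq\{0,\dots,L\}$. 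It therefore suffices to exhibit at least $(|K|-2)/3$ distinct steps $k\le L$, each contributing a factor at least $\exp\bigl(\tfrac{n}{8}\tau^2\bigr)$, where $\tau\triangleq\frac{9}{10}\bigl(1-\frac{1}{\gamma+\gamma'}\bigr)$.

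\textbf{Per-step gain.} At each step $k$ exactly one of the two marginals is already exact, so $\Delta^{(k)}=\sum_i|s_i-1|$ with $\sum_i s_i=n$. We will use the standard inequality, as in the Linial--Samorodnitsky--Wigderson analysis \cite{Linial2000Deterministic} and in \cite{he2025phase}: for nonnegative $s_i$ with $\sum_i s_i=n$,
\[
-\sum_i\log s_i\ \ge\ c\cdot \frac{\bigl(\sum_i|s_i-1|\bigr)^2}{n}
\]
for an absolute constant $c$ (say $c=1/8$). To prove it, write $x_i=s_i-1$ with $\sum_i x_i=0$. Use $-\log(1+x)\ge -x+x^2/(2(1+x))$ for $x\ge -1$. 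For terms with $x_i\le 0$, this gives $-\log(1+x_i)\ge -x_i+x_i^2/2$. For terms with $x_i>0$, treat large and small $x_i$ separately and use Cauchy--Schwarz. Since the negative parts carry exactly half of the total deviation $D=\sum_i|x_i|$, the negative terms alone already yield $\sum_{x_i<0}x_i^2/2\ge (D/2)^2/(2n)$, while the remaining positive terms contribute nonnegatively because the linear parts cancel. So whenever $\Delta^{(k)}>n\tau$, step $k$ multiplies the permanent by at least $\exp(n\tau^2/8)$.

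\textbf{Counting.} Every $k\in K\setminus\{0,1\}$ has some index $j\in\{k-2,k-1,k\}$ with $\Delta^{(j)}>n\tau$, and $j\le k\le L$. Each such $j$ is charged by at most three values of $k$, namely $j$, $j+1$ and $j+2$. So the set of "bad" steps $j\le L$ has size at least $(|K|-2)/3$. Multiplying their gains gives exponent at least $\frac{|K|-2}{3}\cdot\frac{n\tau^2}{8}=\frac{n(|K|-2)}{24}\tau^2$, which is the claimed bound.

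\textbf{Main obstacle.} The hard step is obtaining the clean constant $1/8$ in the per-step inequality. A sum of $\ell_1$ deviations can be dominated by a few huge entries $s_i\gg 1$, and then the quadratic bound fails for those terms. The fix is the observation above: the negative deviations $x_i\in[-1,0]$ account for exactly half of $D$, and on them the quadratic lower bound holds uniformly. One also has to check that the $\log$ terms from large positive $s_i$ do not spoil the bound; they are controlled because $-x+x^2/(2(1+x))\ge -x$ and the linear terms sum to zero.
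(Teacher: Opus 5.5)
Your proof is correct and is essentially the argument the paper imports from \cite{he2025phase}. You telescope the permanent via \Cref{lem-facts-ak}, gain a factor $\exp\bigl(\tfrac{n}{8}\tau^2\bigr)$ at every step $j\le L$ whose $\ell_1$ error exceeds $n\tau$, and lose a factor $3$ from the three-step window in \eqref{eq-def-K-set}; this is exactly where $24=3\cdot 8$ comes from.

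One local slip needs fixing. The inequality $-\log(1+x)\ge -x+x^2/(2(1+x))$ holds only for $x\ge 0$; at $x=-1/2$ the left side is $\log 2\approx 0.693$, which is less than the right side $0.75$. So you cannot derive the negative-term bound from it. Instead, justify $-\log(1+x)\ge -x+x^2/2$ on $(-1,0]$ directly from $-\log(1-y)=\sum_{m\ge 1}y^m/m$ with $y=-x$. For $x>0$ you only need $\log(1+x)\le x$.
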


\begin{lemma}\label{lem-max-accuracy-decrease}
Given any $n\in \mathbb{Z}_{> 0}$,
let $A\in \mathbb{R}_{\geq 0}^{n \times n}$ be a nonzero matrix,
and $A^{(0)},A^{(1)},\dots$ be the sequence of matrices generated by SK with input $(A,(\=1,\=1))$.
Let $L > n/2$ be an integer and $\theta>0$. Define $\tau = 1 - \theta\left(L/n - 1/2 \right)$.
Given any even $k\geq 0$ where 
\begin{align*}
\forall i\in [n], \quad
\abs{\left\{t\mid A^{(k)}_{i,t}> \frac{\theta}{n}\right\}}\geq L,
\end{align*} 
we have at least one of the following inequalities holds:
\begin{align*}
\max_{j\in [n]} c_j\left(A^{(k+2)}\right) - 1 &\leq
\tau\cdot\left(\max_{j\in [n]} c_j\left(A^{(k)}\right) - 1\right),\\
\left(\min_{j\in [n]} c_j\left(A^{(k+2)}\right)\right)^{-1} - 1 &\leq \tau\cdot\left(\left(\min_{j\in [n]} c_j\left(A^{(k)}\right)\right)^{-1} - 1\right).
\end{align*}
% Similarly, given any odd $k\geq 0$ where 
% \begin{align*}
% \forall i\in [n],
% \abs{\left\{t\mid A^{(k)}_{t,i}>\frac{\theta}{n}\right\}}\geq L,
% \end{align*} 
% we have at least one of the following inequalities holds:
% \begin{align*}
% \max_{j\in [n]} r_j\left(A^{(k+2)}\right) - 1 &\leq
% \tau\cdot\left(\max_{j\in [n]} r_j\left(A^{(k)}\right) - 1\right)\\
% \left(\min_{j\in [n]} r_j\left(A^{(k+2)}\right)\right)^{-1} - 1 &\leq \tau\cdot\left(\left(\min_{j\in [n]} r_j\left(A^{(k)}\right)\right)^{-1} - 1\right).
% \end{align*}
\end{lemma}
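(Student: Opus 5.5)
The plan is to compute one full round of SK explicitly. Because $k$ is even, $A^{(k)}$ is row-normalized, so $r_i(A^{(k)})=1$ for every $i$. Write $c_t \triangleq c_t(A^{(k)})$, $M\triangleq\max_t c_t$ and $m\triangleq\min_t c_t$. Then $A^{(k+1)}_{i,j}=A^{(k)}_{i,j}/c_j$. The row sums of $A^{(k+1)}$ are
\[
\rho_i \triangleq r_i\bigl(A^{(k+1)}\bigr)=\sum_{t} A^{(k)}_{i,t}\,\frac{1}{c_t},
\]
which is a convex combination of the values $1/c_t$, since the weights $A^{(k)}_{i,t}$ sum to $1$. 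The new column sums are
\[
c_j\bigl(A^{(k+2)}\bigr)=\frac{1}{c_j}\sum_i \frac{A^{(k)}_{i,j}}{\rho_i}.
\]
As $\sum_i A^{(k)}_{i,j}=c_j$, this is a weighted average of the $1/\rho_i$. Hence
\[
\bigl(\max_i \rho_i\bigr)^{-1}\le c_j\bigl(A^{(k+2)}\bigr)\le \bigl(\min_i\rho_i\bigr)^{-1}.
\]
So it suffices to bound $\min_i\rho_i$ from below, or $\max_i\rho_i$ from above, in a way that beats the trivial bounds $1/M$ and $1/m$.

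Split the columns into $D\triangleq\{t: c_t\le 1\}$ and $U\triangleq\{t:c_t>1\}$. Since $|D|+|U|=n$, either $|U|\le n/2$ or $|D|< n/2$; I treat the two cases separately. Set $s\triangleq\theta(L/n-1/2)$, so that $\tau=1-s$.

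\emph{Case $|U|\le n/2$.} Every row $i$ has at least $L$ entries exceeding $\theta/n$. At most $n/2$ of these lie in $U$, so at least $L-n/2$ of them lie in $D$. The total weight $w_i\triangleq\sum_{t\in D}A^{(k)}_{i,t}$ therefore satisfies $w_i\ge s$. On $D$ we have $1/c_t\ge 1$, and everywhere $1/c_t\ge 1/M$. This gives
\[
\rho_i\ge w_i+\frac{1-w_i}{M}\ge s+\frac{1-s}{M},
\]
where the second inequality uses monotonicity in $w_i$ (note $M\ge 1$). Consequently
\[
\max_j c_j\bigl(A^{(k+2)}\bigr)-1\le \frac{M}{1-s+sM}-1=\frac{(1-s)(M-1)}{1-s+sM}\le (1-s)(M-1),
\]
using $1-s+sM\ge 1$. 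This is the first alternative.

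\emph{Case $|D|<n/2$.} Symmetrically, at least $L-n/2$ large entries of each row lie in $U$, where $1/c_t<1$, while everywhere $1/c_t\le 1/m$. This gives $\rho_i\le s+(1-s)/m$. Hence
\[
\Bigl(\min_j c_j\bigl(A^{(k+2)}\bigr)\Bigr)^{-1}-1\le \max_i\rho_i-1\le (1-s)\Bigl(\frac1m-1\Bigr),
\]
which is the second alternative.

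The only delicate step is the counting that guarantees weight at least $s$ on the favourable side. This requires choosing the partition so that the ``bad'' side has at most $n/2$ columns, which is why ties $c_t=1$ are placed in $D$. One must also check that the bound on $\rho_i$ is monotone in $w_i$ in the right direction. That holds because $1\ge 1/M$ in the first case and $1\le 1/m$ in the second, so the worst case is $w_i=s$. Everything else is direct algebra.
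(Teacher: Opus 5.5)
Your proof is correct and complete. The paper does not prove this lemma itself; it imports it from \cite{he2025phase}, so there is no in-paper argument to compare against, and yours works as a self-contained proof. It combines two ingredients: the convex-combination sandwich $(\max_i\rho_i)^{-1}\le c_j(A^{(k+2)})\le(\min_i\rho_i)^{-1}$, and the pigeonhole split of the columns according to whether $c_t\le 1$ or $c_t>1$. Together these give exactly the one-round contraction the lemma claims.

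Three small points are worth tightening.

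\begin{itemize}
\item The last step of your first case, $\frac{(1-s)(M-1)}{1+s(M-1)}\le (1-s)(M-1)$, needs $(1-s)(M-1)\ge 0$, so you should say explicitly that $s\le 1$. This follows from your own bound $s\le w_i\le r_i(A^{(k)})=1$.
\item Your remark that ties $c_t=1$ must be placed in $D$ is not actually needed. The dichotomy ``$\abs{U}\le n/2$ or $\abs{D}<n/2$'' holds for any partition of the columns. Both cases use only the weak inequalities $1/c_t\ge 1$ on $D$ and $1/c_t\le 1$ on $U$, so ties can go on either side.
\item You tacitly assume that every column sum of $A^{(k)}$ is positive, so that $m>0$. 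This is needed for the SK iteration to be defined at all, and the lemma implicitly assumes it too.
\end{itemize}
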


Now we can prove \Cref{thm-two-phases}
\begin{proof}[Proof of \Cref{thm-two-phases}]
Assume without loss of generality that $\gamma \geq \gamma'$. 
Recall the set $K$ in \eqref{eq-def-K-set}.
We claim that there exists some even $k$ where 
\begin{align}\label{eq-def-k-ub-two-phase}
k - 2\abs{K} = O\left(\rho^{-14}\cdot\left(\gamma+\gamma'-1\right)^{-6}\cdot\left(- \log \varepsilon - \log \rho - \log (\gamma+\gamma'-1)\right)\right)
\end{align}
such that one of the following holds:
\begin{align}
\max_{i\in [n]} c_i\left(A^{(k)}\right) \leq 1 +  \frac{\varepsilon}{2},\label{eq-conditon-maxcak}\\
\left(\min_{i\in [n]} c_i\left(A^{(k)}\right)\right)^{-1} \leq 1+  \frac{\varepsilon}{2}.\label{eq-conditon-mincak-inverse}
\end{align}
Assume without loss of generality that \eqref{eq-conditon-maxcak} holds.
The case \eqref{eq-conditon-mincak-inverse} holds can be proved analogously.
For each even $\ell \geq k$, by \eqref{eq-conditon-maxcak} and \Cref{lem-monotone-rsum-csum}, we also have 
\begin{align*}
\max_{i\in [n]} c_i\left(A^{(\ell)}\right) \leq 1 +  \frac{\varepsilon}{2}.
\end{align*}
Thus, we have
\begin{align*}
\norm{\=c\left(A^{(\ell)}\right)- \boldsymbol{1}}_1 &= \sum_{i\in [n]} \left(c_i\left(A^{(\ell)}\right) - 1\right)\cdot \id{c_i\left(A^{(\ell)}\right)>1} + \sum_{i\in [n]} \left(1 - c_i\left(A^{(\ell)}\right)\right)\cdot \id{c_i\left(A^{(\ell)}\right)<1}\\
&=2\sum_{i\in [n]} \left(c_i\left(A^{(\ell)}\right) - 1\right)\cdot \id{c_i\left(A^{(k)}\right)>1}\\
&\leq 2n\cdot\left(\max_{i\in [n]} c_i\left(A^{(\ell)}\right) - 1 \right)\\
&=n\varepsilon.
\end{align*}
Moreover, by $\ell$ is even, we have $\norm{\=r\left(A^{(\ell)}\right)- \boldsymbol{1}}_1 = 0$.
Thus, 
\begin{align*}
\forall \text{ even } \ell \geq k, \quad \norm{\=r\left(A^{(\ell)}\right)- \boldsymbol{1}}_{1} + \norm{\=c\left(A^{(\ell)}\right)- \boldsymbol{1}}_{1} \leq n\varepsilon.
\end{align*}
In addition, for each odd $\ell >k$, by \eqref{eq-conditon-maxcak} and \Cref{lem-monotone-rsum-csum}, we have 
\begin{align}
\left(\min_{i\in [n]}r_i\left(A^{(\ell)}\right)\right)^{-1} \leq \max_{i\in [n]}c_i\left(A^{(k)}\right) \leq 1 + \frac{\varepsilon}{2}.
\end{align}
Thus, by $\min_{i\in [n]} r_i\left(A^{(\ell)}\right)\leq 1$ we have
\begin{align*}
1 - \min_{i\in [n]} r_i\left(A^{(\ell)}\right) = \min_{i\in [n]} r_i\left(A^{(\ell)}\right)\cdot\left(\left(\min_{i\in [n]} r_i\left(A^{(\ell)}\right)\right)^{-1} - 1\right)  \leq \left(\min_{i\in [n]} r_i\left(A^{(\ell)}\right)\right)^{-1} - 1 \leq \frac{\varepsilon}{2}.
\end{align*}
Therefore, 
\begin{align*}
\norm{\=r\left(A^{(\ell)}\right)- \boldsymbol{1}}_1 &= \sum_{i\in [n]} \left(r_i\left(A^{(\ell)}\right) - 1\right)\cdot \id{r_i\left(A^{(\ell)}\right)>1} + \sum_{i\in [n]} \left(1 - r_i\left(A^{(\ell)}\right)\right)\cdot \id{r_i\left(A^{(\ell)}\right)<1}\\
&=2\sum_{i\in [n]} \left(1 - r_i\left(A^{(k)}\right)\right)\cdot \id{r_i\left(A^{(k)}\right)<1}\\
&\leq 2n\cdot\left(1 - \min_{i\in [n]} r_i\left(A^{(\ell)}\right)\right)\\
&=n\varepsilon.
\end{align*}
Moreover, by $\ell$ is odd, we have $\norm{\=c\left(A^{(k)}\right)- \boldsymbol{1}}_1 = 0$.
Therefore, we have
\begin{align*}
\forall \text{ odd } \ell \geq k, \quad \norm{\=r\left(A^{(\ell)}\right)- \boldsymbol{1}}_{1} + \norm{\=c\left(A^{(\ell)}\right)- \boldsymbol{1}}_{1} \leq n\varepsilon.
\end{align*}
In summary, we always have
\begin{align*}
\forall \ell \geq k, \quad \norm{\=r\left(A^{(\ell)}\right)- \boldsymbol{1}}_{1} + \norm{\=c\left(A^{(\ell)}\right)- \boldsymbol{1}}_{1} \leq n\varepsilon.
\end{align*}
Furthermore, let $L$ be the maximum number in $K$. We have $L+1\not \in K$.
By \Cref{lem-structural-stability}, we have 
\begin{align*}
\frac{\perman\left(A^{\left(L+1\right)}\right)}{\perman\left(A^{\left(0\right)}\right)} \leq \left(10^{3}\cdot h\rho^{-7} (\gamma + \gamma' - 1) ^{-4}\right)^n.
\end{align*}
Combined with \Cref{lemma-condition-upperbound}, we have 
\begin{align*}
\abs{K} = O\left((\gamma+\gamma' - 1)^{-2}\left(\log h -\log (\gamma + \gamma' - 1)  -  \log \rho\right)\right).
\end{align*}
Combined with \eqref{eq-def-k-ub-two-phase}, we have \eqref{eq-k-twophase} is satisfied.
In the following, we prove the claim that there exists some even $k$ satisfying \eqref{eq-def-k-ub-two-phase} such that one of \eqref{eq-conditon-maxcak} and \eqref{eq-conditon-mincak-inverse} is true. Then the theorem is immediate.

Define 
\begin{align}
q &\triangleq  1 - 10^{-5} \rho^{14} \gamma^3 (\gamma + \gamma' - 1) ^{5}\left(\gamma -\frac{1}{2}\right),\label{eq-def-g1-gamma}\\
t & \triangleq (\log \varepsilon + 2\log \rho + \log (\gamma+\gamma'-1)- 5)/\log q, \label{eq-def-t}\\
k & \triangleq \min \left\{i \geq 0\mid i \text{ is even and } i\geq 2\abs{K} + 4t + 2\right\}.\label{eq-def-k-upper-bound}
\end{align}
One can verify that 
\begin{align}\label{eq-log-ggamma}
-\log q \geq 10^{-5} \rho^{14} \gamma^3 (\gamma + \gamma' - 1) ^{5}\left(\gamma -\frac{1}{2}\right).
\end{align}
Recall that $\gamma>\gamma'$ and $\gamma' \in (1-\gamma,1]$.
We have 
\[2\left(\gamma - \frac{1}{2}\right) = 2\gamma - 1 \geq   \gamma+ \gamma' -  1 >0.\]
Combined with \eqref{eq-log-ggamma},
we have 
\begin{align}\label{eq-log-ggamma-new}
-\log q \geq 10^{-6} \rho^{14} \gamma^3 (\gamma + \gamma' - 1) ^{6} \geq 10^{-7} \rho^{14} (\gamma + \gamma' - 1) ^{6} .
\end{align}
By \eqref{eq-def-t}, \eqref{eq-def-k-upper-bound}, and \eqref{eq-log-ggamma-new},
we have
\eqref{eq-def-k-ub-two-phase} is satisfied.

Define 
$
T \triangleq \left\{0\leq j< k\ | j \text{ is even and } j\not\in K   \right\}.$
We have 
\begin{align}\label{eq-size-T}
\abs{T}\geq (k - 2)/2 - \abs{K} \geq 2t.
\end{align}
For each $j\in T$, we have $j$ is even and $j\not\in K$.
By \Cref{lem-structural-stability}, we have
\begin{align*}
\forall j\in T, i\in [n], \quad
\abs{\left\{\ell\mid A^{(j)}_{i,\ell}> \frac{ \rho^{14} \gamma^3 (\gamma + \gamma' - 1) ^{5}}{10^{5} n}\right\}}\geq \lceil\gamma n \rceil.
\end{align*}
Combined with \Cref{lem-max-accuracy-decrease}, we have
one of the following two inequalities is true:
\begin{align}
\max_{i\in [n]} c_i\left(A^{(j+2)}\right) - 1 &\leq q\left(\max_{i\in [n]} c_i\left(A^{(j)}\right) - 1\right),\label{eq-maxcj-decrease}\\
\left(\min_{i\in [n]} c_i\left(A^{(j+2)}\right)\right)^{-1} - 1 &\leq q\left(\left(\min_{i\in [n]} c_i\left(A^{(j)}\right)\right)^{-1} - 1\right).\label{eq-maxcj-inverse-decrease}
\end{align}

\vspace{0.2cm}

Let $S \triangleq \left\{j\in T\ | \ \eqref{eq-maxcj-decrease} \text{ holds for $j$}\right\}.$ 
At first, consider the case $\abs{S} \geq \abs{T}/2$. 
By \eqref{eq-size-T}, we have $\abs{S} \geq t$.
Let $\ell_0$ be the minimum element in $T$. We have
\begin{align*}
\max_{i\in [n]} c_i\left(A^{(k)}\right) - 1 = 
\left(\max_{i\in [n]} c_i\left(A^{(\ell_0)}\right) - 1\right)\cdot \prod_{j= \ell_0/2}^{k/2-1}\frac{\left(\max_{i\in [n]} c_i\left(A^{(2j+2)}\right) - 1\right)}{\left(\max_{i\in [n]} c_i\left(A^{(2j)}\right) - 1\right)}.
\end{align*}
By \Cref{lem-monotone-rsum-csum},
we have 
\[
\forall j\geq 0,\quad \frac{\left(\max_{i\in [n]} c_i\left(A^{(2j+2)}\right) - 1\right)}{\left(\max_{i\in [n]} c_i\left(A^{(2j)}\right) - 1\right)} \leq 1.
\]
Recall that $k$ is even.
By the definitions of $T$ and $S$, one can verify that $j$ is even and $\ell_0 \leq j \leq k-2$ for each $j\in S$.
Combined with the above two inequalities, we have 
\begin{align*}
\max_{i\in [n]} c_i\left(A^{(k)}\right) - 1 \leq 
\left(\max_{i\in [n]} c_i\left(A^{(\ell_0)}\right) - 1\right)\cdot\prod_{j\in S}\frac{\left(\max_{i\in [n]} c_i\left(A^{(j+2)}\right) - 1\right)}{\left(\max_{i\in [n]} c_i\left(A^{(j)}\right) - 1\right)}.
\end{align*}
Combined with \eqref{eq-maxcj-decrease} and $\abs{S}\geq t$, we have
\begin{align*}
\max_{i\in [n]} c_i\left(A^{(k)}\right) - 1 \leq 
q^{\abs{S}}\cdot\left(\max_{i\in [n]} c_i\left(A^{(\ell_0)}\right) - 1\right) \leq q^{t}\cdot\left(\max_{i\in [n]} c_i\left(A^{(\ell_0)}\right) - 1\right).
\end{align*}
Meanwhile, by $\ell_0\in T$, we have $\ell_0 \not \in K$.
Combined with \Cref{lem-structural-stability}, we have
$c_i\left(A^{(\ell_0)}\right) \leq 10/\left(\rho^2(\gamma + \gamma' - 1)\right)$ for each $i\in [n]$.
Hence, 
\begin{align}\label{eq-maxci-upperbound-ggamma}
\max_{i\in [n]} c_i\left(A^{(k)}\right) - 1 \leq q^{t}\left(10\rho^{-2}(\gamma + \gamma' - 1)^{-1} - 1\right)< 10q^{t}\rho^{-2}(\gamma + \gamma' - 1)^{-1} < \frac{\varepsilon}{2},
\end{align}
where the last inequality is by \eqref{eq-def-t}.
Thus, \eqref{eq-conditon-maxcak} is proved.
At last, consider the other case $\abs{S} < \abs{T}/2$.
By \eqref{eq-size-T}, we have $\abs{T\setminus S}\geq t$.
Similar to \eqref{eq-maxci-upperbound-ggamma}, we can also prove \eqref{eq-conditon-mincak-inverse}.
In summary, there exists some even $k$ satisfying \eqref{eq-def-k-ub-two-phase} such that one of \eqref{eq-conditon-maxcak} and \eqref{eq-conditon-mincak-inverse} is true. 
The theorem is proved.
\end{proof}

\newpage

%% file: target.tex
\section{Upper bound for \texorpdfstring{$(\={u},\={v})$}{(u,v)}-scaling}
\label{sec-ub-uv-scaling}

\subsection{Proof of Theorems \ref{thm-rc-scaling-log} and \ref{thm-rc-scaling-prenormalization}}

Now we can prove Theorems \ref{thm-rc-scaling-log} and \ref{thm-rc-scaling-prenormalization}.

\begin{proof}[Proof of \Cref{thm-rc-scaling-log}]
For any integer $L$, let $R = R(\boldsymbol{u},\boldsymbol{v},L)$, $\boldsymbol{u}' \triangleq (u'_1,\dots,u'_m) = f_1(\boldsymbol{u},\boldsymbol{v},L)$, $\boldsymbol{v}' \triangleq (v'_1,\dots,v'_n) = f_2(\boldsymbol{u},\boldsymbol{v},L)$,
$G = G(B,f_1(\boldsymbol{u},\boldsymbol{v},L),f_2(\boldsymbol{u},\boldsymbol{v},L))$,
$G' = \+D\left(\boldsymbol{u}'\right)\cdot G \cdot \+D\left(\boldsymbol{v}'\right)$.
In addition, let $t \triangleq \norm{f_1(\boldsymbol{u},\boldsymbol{v},L)}_1$.
By \Cref{def-integer-vector}, one can verify that 
$t = \norm{f_2(\boldsymbol{u},\boldsymbol{v},L)}_1$
and $t \leq L\norm{\boldsymbol{u}}_1 = L$.
Combined with \Cref{def-splitted-matrix}, we have $G$ is a matrix of size $t\times t$.

By \Cref{lemma-gamma-rho-dense-error}, we have there exists a sufficient large $\ell_1>0$, such that for each integer $L\geq \ell_1$,
$G'$ is at least $(\alpha\gamma, \alpha\gamma',\rho)$-dense with respect to $(\boldsymbol{1},\boldsymbol{1})$ where $\alpha =  (\gamma+\gamma'+1)/(2(\gamma+\gamma'))$.
Fix any $L \geq \ell_1$.
By $\gamma + \gamma'>1$, we have
$\alpha\gamma+\alpha\gamma' > 1$.
Thus, all the assumptions of \Cref{thm-two-phases} are satisfied, with \(G\) and \(G'\) playing the roles of the matrices \(A\) and \(B\) in that theorem.
Let $G^{(0)},G^{(1)},\dots$ denote the sequence of matrices generated by SK
with $(G,(\boldsymbol{1},\boldsymbol{1}))$ as input.
Define
\begin{align*}
h \triangleq \max_{i,j\in [n]} \left\{ \frac{r_i(G)\cdot G'_{i,j}}{r_i(G') \cdot G_{i,j} }\right\}.
\end{align*}
By \Cref{thm-two-phases}, we have the conclusion that there exists some 
\begin{align}\label{eq-k-thm-rc-scaling-log-upperbound}
k &= O\left(\rho^{-14} \cdot\left(\alpha\gamma+\alpha\gamma'-1\right)^{-6}\left(\log h - \log \left(\frac{\varepsilon}{2}\right)  - \log \rho-  \log (\alpha\gamma+\alpha\gamma'-1)\right)\right)
\end{align}
such that 
\begin{align}\label{eq-k-thm-rc-scaling-log-error-1}
\norm{r\left(G^{(k)}\right)- \boldsymbol{1}}_{1} + \norm{c\left(G^{(k)}\right)- \boldsymbol{1}}_{1} \leq \frac{t\varepsilon}{2} \leq  \frac{L\varepsilon}{2}.
\end{align}

Moreover, by \Cref{thm-dynamic-dense} we have there exists a sufficient large $\ell_2>0$, such that 
$h \leq n/(\alpha \rho \gamma)$
for each integer $L\geq \ell_2$.
Combined with \eqref{eq-k-thm-rc-scaling-log-upperbound}, we have
\begin{align*}
k &=  O\left(\rho^{-14} \cdot\left(\gamma+\gamma'-1\right)^{-6}\left(\log n - \log \varepsilon - \log \gamma - \log \rho - \log (\gamma+\gamma'-1)\right)\right).
\end{align*}
By $\gamma'\leq 1$, we have $\gamma+\gamma'-1\leq \gamma$.
Thus, 
\begin{align}\label{eq-k-thm-rc-scaling-log-upperbound-new}
k &=  O\left(\rho^{-14} \cdot\left(\gamma+\gamma'-1\right)^{-6}\left(\log n - \log \varepsilon - \log \rho - \log (\gamma+\gamma'-1)\right)\right).
\end{align}

Let $A$ denote the output of SK after $k$ iterations with $(B,(\boldsymbol{u}, \boldsymbol{v}))$ as input. 
By \Cref{thm-reduction-precision}, there exists a sufficiently large $\ell_3>0$ such that for each integer $L\geq \ell_3$,
\begin{align}\label{eq-k-thm-rc-scaling-log-error-2}
\abs{\norm{r\left(A\right)- \boldsymbol{u}}_{1} + \norm{c\left(A\right)- \boldsymbol{v}}_{1} - \frac{\norm{r\left(G^{(k)}\right)- \boldsymbol{1}}_{1} + \norm{c\left(G^{(k)}\right)- \boldsymbol{1}}_{1}}{L}}\leq \frac{\varepsilon}{2}.
\end{align}

Fix any $L \geq \max\{\ell_1,\ell_2,\ell_3\}$.
By \eqref{eq-k-thm-rc-scaling-log-error-1} and \eqref{eq-k-thm-rc-scaling-log-error-2}, 
we have  
\[\norm{r\left(A\right)- \boldsymbol{u}}_{1} + \norm{c\left(A\right)- \boldsymbol{v}}_{1} \leq \varepsilon.\]
Combined with \eqref{eq-k-thm-rc-scaling-log-upperbound-new},
the theorem is proved.
\end{proof}

\vspace{10pt}
\begin{proof}[Proof of \Cref{thm-rc-scaling-prenormalization}]
For any integer $L$, let $R = R(\boldsymbol{u},\boldsymbol{v},L)$, $\boldsymbol{u}' \triangleq (u'_1,\dots,u'_m) = f_1(\boldsymbol{u},\boldsymbol{v},L)$, $\boldsymbol{v}' \triangleq (v'_1,\dots,v'_n) = f_2(\boldsymbol{u},\boldsymbol{v},L)$,
$D = G(\diag(\=u)\cdot B\cdot \diag(\=v),f_1(\boldsymbol{u},\boldsymbol{v},L),f_2(\boldsymbol{u},\boldsymbol{v},L))$.
In addition, let $t \triangleq \norm{f_1(\boldsymbol{u},\boldsymbol{v},L)}_1$.
By \Cref{def-integer-vector}, one can verify that 
$t = \norm{f_2(\boldsymbol{u},\boldsymbol{v},L)}_1$
and $t \leq L\norm{\boldsymbol{u}}_1 = L$.
Combined with \Cref{def-splitted-matrix}, we have $D$ is a matrix of size $t\times t$.

By \Cref{lemma-gamma-rho-dense-error-prenormalization}, we have there exists a sufficient large $\ell_1>0$, such that for each integer $L\geq \ell_1$,
$D$ is at least $(\alpha\gamma, \alpha\gamma',\rho)$-dense with respect to  $(\boldsymbol{1},\boldsymbol{1})$ where $\alpha =  (\gamma+\gamma'+1)/(2(\gamma+\gamma'))$.
Fix any $L \geq \ell_1$.
By $\gamma + \gamma'>1$, we have
$\alpha\gamma+\alpha\gamma' > 1$.
Thus, all the assumptions of \Cref{thm-two-phases} are satisfied, with \(D\) playing the roles of the matrices \(A\) and \(B\) and $h = 1$ in that theorem.
Let $D^{(0)},D^{(1)},\dots$ denote the sequence of matrices generated by SK
with $(D,(\boldsymbol{1},\boldsymbol{1}))$ as input.
By \Cref{thm-two-phases}, we have the conclusion that there exists some 
\begin{align}\label{eq-k-thm-rc-scaling-log-upperbound-new-prenorm}
k &= O\left(\rho^{-14} \cdot\left(\alpha\gamma+\alpha\gamma'-1\right)^{-6}\left(\log \left(\frac{\varepsilon}{2}\right)  - \log \rho-  \log (\alpha\gamma+\alpha\gamma'-1)\right)\right)
\end{align}
such that 
\begin{align}\label{eq-k-thm-rc-scaling-log-error-1-prenorm}
\norm{r\left(D^{(k)}\right)- \boldsymbol{1}}_{1} + \norm{c\left(D^{(k)}\right)- \boldsymbol{1}}_{1} \leq \frac{t\varepsilon}{2} \leq  \frac{L\varepsilon}{2}.
\end{align}

Let $A$ denote the output of SK after $k$ iterations with $(\diag(\=u)\cdot B\cdot \diag(\=v),(\boldsymbol{u}, \boldsymbol{v}))$ as input. 
By \Cref{thm-reduction-precision}, there exists a sufficiently large $\ell_2>0$ such that for each integer $L\geq \ell_2$,
\begin{align}\label{eq-k-thm-rc-scaling-log-error-2-prenorm}
\abs{\norm{r\left(A\right)- \boldsymbol{u}}_{1} + \norm{c\left(A\right)- \boldsymbol{v}}_{1} - \frac{\norm{r\left(D^{(k)}\right)- \boldsymbol{1}}_{1} + \norm{c\left(D^{(k)}\right)- \boldsymbol{1}}_{1}}{L}}\leq \frac{\varepsilon}{2}.
\end{align}

Fix any $L \geq \max\{\ell_1,\ell_2\}$.
By \eqref{eq-k-thm-rc-scaling-log-error-1-prenorm} and \eqref{eq-k-thm-rc-scaling-log-error-2-prenorm}, 
we have  
\[\norm{r\left(A\right)- \boldsymbol{u}}_{1} + \norm{c\left(A\right)- \boldsymbol{v}}_{1} \leq \varepsilon.\]
Combined with \eqref{eq-k-thm-rc-scaling-log-upperbound-new-prenorm},
the theorem is proved.
\end{proof}

\subsection{Proof of Theorems \ref{thm-upper-bound-general-ot-uv} and \ref{thm-upper-bound-general-ot}}
Finally, we can prove Theorems \ref{thm-upper-bound-general-ot-uv} and \ref{thm-upper-bound-general-ot}.

\begin{proof}[Proof of \Cref{thm-upper-bound-general-ot-uv}]
Let $T = \eta C, K = \exp(-T), \gamma = r_{\rho}(T;\={v}), \gamma' = c_{\rho}(T;\={u})$.
By that $T$ is $(\rho,\kappa)$-well-bounded with respect to  $(\={u},\={v})$,
we have $\gamma+\gamma'\geq 1+\kappa$ where 
\[
\gamma
=\min_{i\in[m]} \sum_{j\in[n]} v_j\,\id{T_{ij}\leq\rho},
\qquad
\gamma'
=
\min_{j\in[n]} \sum_{i\in[m]} u_i\,\id{T_{ij}\leq \rho}.
\]
Hence, 
\[
\gamma
=\min_{i\in[m]} \sum_{j\in[n]} v_j\,\id{K_{ij}\geq \exp(-\rho)},
\qquad
\gamma'
=
\min_{j\in[n]} \sum_{i\in[m]} u_i\,\id{K_{ij}\geq \exp(-\rho)}.
\]
Moreover, since $C$ is positive, every entry of $K$ is at most $\exp(0) = 1$. 
Combined with $\norm{\boldsymbol{u}}_1 = \norm{\boldsymbol{v}}_1 = 1$, it follows from \Cref{def-gamma-rho-dense} that $K$ is $(\gamma,\gamma',\rho)$-dense with respect to  $\=u,\=v$.
By $\gamma+\gamma'\geq 1 + \kappa$ and \Cref{thm-rc-scaling-log}, we have with $(\exp(-\eta C),(\=u,\=v))$ as input, SK can output a matrix $A$ satisfying 
\[\norm{\=r\left(A\right)- \boldsymbol{u}}_{1} + \norm{\=c\left(A\right)- \boldsymbol{v}}_{1} \leq \varepsilon.\]
in $O\left(\exp(14\rho) \cdot\left(\gamma+\gamma'-1\right)^{-6}\left(\log n - \log \varepsilon + \rho - \log (\gamma+\gamma'-1)\right)\right)$ iterations.
The theorem is immediate.
\end{proof}

\begin{proof}[Proof of \Cref{thm-upper-bound-general-ot}]
Recall that $\exp(-\eta C)$ is $(\gamma,\gamma',\rho)$-dense with respect to  $\=u,\=v$ for some $\gamma+\gamma'\geq 1 + \kappa$
as shown in the proof of \Cref{thm-upper-bound-general-ot-uv}.
Combined with \Cref{thm-rc-scaling-prenormalization}, 
we have with $(\diag(\=u) \cdot \exp(-\eta C) \cdot \diag(\=v),(\=u,\=v))$ as input, SK can output a matrix $A$ satisfying 
\[\norm{\=r\left(A\right)- \boldsymbol{u}}_{1} + \norm{\=c\left(A\right)- \boldsymbol{v}}_{1} \leq \varepsilon.\]
in $O\left(\exp(14\rho) \cdot\left(\gamma+\gamma'-1\right)^{-6}\left(\rho - \log \varepsilon  - \log (\gamma+\gamma'-1)\right)\right)$ iterations.
The theorem is immediate.
\end{proof}

\newpage

%% file: Expansity.tex
\section{Lower bound}\label{sec-lb}
In this section, we prove \Cref{thm-lower-bound-main-rc}.

To construct a hard instance $A$ for $(\={u}, \={v})$-scaling that requires $\Omega(-\log \nu(A) - \log \varepsilon)$ SK iterations to converge, a convenient approach is to design $A$ as a block matrix. Let $C = G(A, f_1(\boldsymbol{u}, \boldsymbol{v}, L), f_2(\boldsymbol{u}, \boldsymbol{v}, L))$ be the corresponding $(\boldsymbol{1}, \boldsymbol{1})$-scaling instance reduced from $(A,(\=u,\=v))$, and let $C^{(0)}, C^{(1)}, C^{(2)}, \dots$ denote the matrix sequence generated by applying the SK algorithm on $(C, (\boldsymbol{1}, \boldsymbol{1}))$. 
The main idea of the proof proceeds as follows.

First, we construct a base matrix $B$ that strictly requires $\Omega(-\log \nu (B) - \log \, (L\varepsilon))$ iterations to converge under standard $(\boldsymbol{1}, \boldsymbol{1})$-scaling. To facilitate our subsequent analysis, $B$ is specifically designed to be a block matrix.
Ideally, we would like the initial reduced matrix $C$ to exactly equal our hard instance $B$. However, as in \Cref{def-splitted-matrix}, the reduction process divides the entries by non-uniform scaling factors, inherently destroying the block structure of the original matrix. Consequently, we cannot directly embed $B$ as the matrix $C$.

Fortunately, \Cref{lem-lowerbound-reduction-rctooneone} establishes that if $A$ is a block matrix, this structural loss is only temporary: the block structure is completely recovered at state $C^{(2)}$, after a sequence of row, column, and row normalizations. 
Leveraging this property, we can carefully tune the initial entries of $A$ such that the intermediate state $C^{(2)}$ exactly matches our constructed hard instance $B$ with $L \cdot \nu(B) \leq \nu(A)$.
Ultimately, because the SK algorithm starting from state $C^{(2)} = B$ requires $\Omega(-\log \nu (B) - \log \, (L\varepsilon))$ iterations to converge, it immediately follows that the overall iteration complexity for the $(\boldsymbol{u}, \boldsymbol{v})$-scaling of $A$ is lower bounded by $\Omega(-\log \nu(A) - \log \varepsilon)$.

\subsection{Counter-example for $(\=1,\=1)$-scaling}
In this subsection, we prove the following theorem, which establishes the existence of a matrix for which the SK algorithm requires $\Omega(-\log \nu - \log \varepsilon)$ iterations to converge for the $(\={1}, \={1})$-scaling.

\begin{theorem}\label{thm-main-lb-rc-ab}
Let $n,s,t \in \mathbb{Z}_{>0}$.
Assume that there exist constants $\alpha,\beta,\gamma \in (0,1)$ such that
\begin{align}\label{eq-relation-s-t-n}
\alpha n \le t < s \le \beta n, \quad \quad  \quad s - t \geq \gamma n.
\end{align}
Let $\varepsilon \in (0, s-t)$ and $\nu > 0$.
Suppose $\theta_{1,1}, \theta_{1,2}, \theta_{2,1}$, and $\theta_{2,2}$ are positive real numbers satisfying the following conditions:
\begin{align}
&\theta_{2,1} = \nu, \label{eq-invariant-a1thetaij-qminusaqthetak2j-2-theta21}\\
&\theta_{1,2}<  
\frac{6n}
{6n^2 + 5s(n-t)},\label{eq-invariant-a1thetaij-qminusaqthetak2j-2-theta12}\\
&\forall i\leq 2,  \quad s\theta_{i,1} + (n-s)\theta_{i,2} = 1,\label{eq-invariant-a1thetaij-qminusaqthetak2j-2-kequal0}\\
&\frac{t-n}{n}< t \theta_{1,1} + (n-t) \theta_{2,1} - 1 < 0,\label{eq-invariant-a1thetaij-qminusaqthetak2j-1-kequal0-1}\\
& \frac{st(s-t)}
{4n^3}< t\cdot \theta_{1,2} + (n - t)\cdot \theta_{2,2} - 1<\frac{s(n-t)}{n(n-s)}.
\label{eq-invariant-a1thetaij-qminusaqthetak2j-1-kequal0-2}
\end{align}
Let \(A\) be an \(n \times n\) matrix partitioned into four blocks defined as follows:
\begin{itemize}
\item The top-left block is of size \(t \times s\), where all entries are equal to \(\theta_{1,1}\).
\item The bottom-left block is of size \((n - t) \times s\), where all entries are equal to \(\theta_{2,1}\).
\item The top-right block is of size \(t \times (n - s)\), where all entries are equal to \(\theta_{1,2}\).
\item The bottom-right block is of size \((n - t) \times (n - s)\), where all entries are equal to \(\theta_{2,2}\).
\end{itemize}
Let $A^{(0)},A^{(1)},\dots$
denote the sequence of matrices generated by SK with $(A,(\mathbf{1}, \mathbf{1}))$ as input.
Let $K$ be the minimum integer $k$ such that 
\begin{align}\label{eq-condition-thm-main-lb-rc-ab}
\norm{\=r\left(A^{(k)}\right)- \=1}_{1} + \norm{\=c\left(A^{(k)}\right)- \=1}_{1} \leq \varepsilon.
\end{align}
Then we have 
$K =\Omega\left(-\log \nu - \log \varepsilon\right)$.
\end{theorem}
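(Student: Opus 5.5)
The plan is to exploit the fact that SK preserves the $2\times 2$ block structure, so the whole run reduces to a four-variable recursion. By induction on $k$, every $A^{(k)}$ is again a block matrix with the same block shapes $t\times s$, $t\times(n-s)$, $(n-t)\times s$, $(n-t)\times(n-s)$. Its block values $\theta^{(k)}_{i,j}$ follow explicit rules:
\begin{itemize}
\item at a column step, each block value in column-block $j$ is divided by $t\theta_{1,j}+(n-t)\theta_{2,j}$;
\item at a row step, each block value in row-block $i$ is divided by $s\theta_{i,1}+(n-s)\theta_{i,2}$.
\end{itemize}
The marginal error is then an explicit function of the $\theta$'s. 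For example, after a row step (even $k$) it equals
\[
s\,\bigl|t\theta^{(k)}_{1,1}+(n-t)\theta^{(k)}_{2,1}-1\bigr| + (n-s)\,\bigl|t\theta^{(k)}_{1,2}+(n-t)\theta^{(k)}_{2,2}-1\bigr|,
\]
with the symmetric formula after a column step. Conditions \eqref{eq-invariant-a1thetaij-qminusaqthetak2j-2-kequal0}--\eqref{eq-invariant-a1thetaij-qminusaqthetak2j-1-kequal0-2} are exactly the $k=0$ case of a family of sign and size invariants that I would prove inductively for all $k$. After every row step, the left column-block sum stays below $1$ and the right column-block sum stays above $1$, with quantitative bounds of the form in \eqref{eq-invariant-a1thetaij-qminusaqthetak2j-1-kequal0-1}--\eqref{eq-invariant-a1thetaij-qminusaqthetak2j-1-kequal0-2}. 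After every column step, the analogous statement holds for the row-block sums.

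With the invariants in hand, I would get the two terms of the bound separately and then take the maximum, which is $\Omega(-\log\nu-\log\varepsilon)$.

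The $-\log\nu$ term comes from the \emph{growth bottleneck}. Because all row-block sums lie in a bounded window, as do the column-block sums (using $\alpha n\le t<s\le\beta n$), each normalization multiplies $\theta_{2,1}$ by at most a constant $c=c(\alpha,\beta,\gamma)$. Hence $\theta^{(k)}_{2,1}\le c^k\nu$. Row normalization forces $\theta^{(k)}_{1,1}\le 1/s$, so the left column-block sum is at most
\[
t/s+(n-t)\theta^{(k)}_{2,1}\le 1-\gamma n/s+n c^k\nu .
\]
Until $c^k\nu=\Omega(1/n)$, the error is therefore at least $s\cdot\Omega(\gamma n/s)=\Omega(\gamma n)>\varepsilon$. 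This forces $K=\Omega(-\log\nu-\log n)$, and the $\log n$ loss is absorbed using $\varepsilon<s-t\le n$ together with the regime split in the overview.

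The $-\log\varepsilon$ term comes from the \emph{decay bottleneck}. Let $e_k$ denote the signed block-sum deviation, for instance $e_k=t\theta^{(k)}_{1,2}+(n-t)\theta^{(k)}_{2,2}-1$ after row steps. I would show that one SK round shrinks $e_k$ by at most a constant factor, i.e.\ $e_{k+2}\ge c' e_k$ with $c'>0$ depending only on $\alpha,\beta,\gamma$. To do this, linearize the update: dividing by sums of the form $1+O(e_k)$ changes each block value by a relative amount $O(e_k)$. The mismatch $t<s$ keeps the first-order coefficient bounded away from total cancellation, which is where the upper bound on $\theta_{1,2}$ in \eqref{eq-invariant-a1thetaij-qminusaqthetak2j-2-theta12} is used to keep the top-right block small. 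Since $e_0\ge st(s-t)/(4n^3)=\Omega(1)$ and the error is at least $(n-s)e_k$, reaching error $\le\varepsilon$ requires $\Omega(\log(1/\varepsilon))$ rounds.

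I expect the main obstacle to be the inductive maintenance of the quantitative invariants, above all the lower bound in \eqref{eq-invariant-a1thetaij-qminusaqthetak2j-1-kequal0-2} propagated through two consecutive normalizations with a uniform constant. This requires careful algebra with the explicit $2\times 2$ update maps. I would first try to express $e_{k+2}$ as $e_k$ times a rational function of the block values, and bound that function below using the invariant ranges. Only if this fails would I fall back on tracking $\max c_j$ and $(\min c_j)^{-1}$ via \Cref{lem-monotone-rsum-csum}.
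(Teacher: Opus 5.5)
Your plan is the paper's: block recursion, bounded per-step growth of $\theta_{2,1}$, bounded per-step decay of the normalized error, and a regime split. Three points need repair.

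\textbf{Decay step.} Linearization alone will not do, because $e_0$ is of constant size. Also, what keeps the first-order coefficient away from zero is not $t<s$ but the size of $\theta_{1,2}$. With $a=t/n$ and $b=s/n$, the exact identity for odd $k$ is
\[
s\theta^{(k)}_{1,1}+(n-s)\theta^{(k)}_{1,2}=\frac{1+\bigl(b-(1-b)n\theta^{(k-1)}_{1,2}\bigr)\varepsilon^{(k-1)}}{\bigl(1-(1-b)\varepsilon^{(k-1)}\bigr)\bigl(1+b\varepsilon^{(k-1)}\bigr)},
\]
and its denominator is below $1+(2b-1)\varepsilon^{(k-1)}$. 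So the excess over $1$ is driven by $(1-b)\bigl(1-n\theta^{(k-1)}_{1,2}\bigr)\varepsilon^{(k-1)}$. At row steps the analogue is $a\bigl(1-n\theta^{(k-1)}_{1,2}\bigr)\varepsilon^{(k-1)}$.

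You therefore need $n\theta^{(k)}_{1,2}$ bounded away from $1$ at \emph{every} $k$, and normalization alone does not give this: after a row step it only gives $\theta_{1,2}\le 1/(n-s)$. The paper gets it by monotonicity. Each column step divides $\theta_{1,2}$ by the right column-block sum, and each row step divides it by the top row-block sum. Both sums are $>1$ by your sign invariants, so $\theta^{(k)}_{1,2}\le\theta_{1,2}<6n/(6n^2+5s(n-t))$.

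The sign invariants in turn follow from $\varepsilon^{(k)}>c\,\varepsilon^{(k-1)}>0$. Hence the bound on $\theta_{1,2}$, the signs, and the decay ratio must be proved in one joint induction. The factor $1+(2b-1)\varepsilon^{(k-1)}$ (resp.\ $1+(1-2a)\varepsilon^{(k-1)}$) is then controlled by the upper window bounds. These come directly from \Cref{lem-monotone-rsum-csum} applied to $A^{(0)}$ and $A^{(1)}$.

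\textbf{Combining.} The decay bound must be $\Omega(\log n-\log\varepsilon)$, not $\Omega(\log(1/\varepsilon))$. Take $\nu=\Theta(1/n)$ and $\varepsilon=\Theta(1)$, which the hypotheses allow. There the theorem asserts $K=\Omega(\log n)$, but your two bounds give only $O(1)$. Your ingredients already yield the stronger form: by conservation, the error after a row step is exactly $2(n-s)e_k$, and $(n-s)e_0=\Omega(n)$. The maximum of $\Omega(-\log\nu-\log n)$ and $\Omega(\log n-\log\varepsilon)$ is then at least half their sum, which is $\Omega(-\log\nu-\log\varepsilon)$.

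\textbf{Growth step.} The claim ``$\Omega(\gamma n)>\varepsilon$'' is unjustified, since only $\varepsilon<s-t$ is assumed. Let $L$ be the left column-block sum. The one-sided estimate $s(1-L)\ge(s-t)-s(n-t)\theta^{(k)}_{2,1}$ beats $\varepsilon$ only by the margin $(s-t)-\varepsilon$, which may be arbitrarily small. Use conservation $sL+(n-s)R=n$ instead. It makes the error exactly $2s|1-L|\ge 2(s-t)-2s(n-t)\theta^{(k)}_{2,1}$, which exceeds $\varepsilon$ as long as $\theta^{(k)}_{2,1}<(s-t)/(2s(n-t))$. Column steps work the same way via $t\theta_{1,1}=1-(n-t)\theta_{2,1}$. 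This is exactly the paper's threshold.
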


Under the conditions of \Cref{thm-main-lb-rc-ab}, one can verify that each matrix $A^{(k)}$ where $k\geq 0$ can be partitioned into four blocks, with all entries within each block being identical:
\begin{itemize}
\item The top-left block has size \(t \times s\), and its entries are denoted by \(\theta^{(k)}_{1,1}\).
\item The bottom-left block has size \((n-t) \times s\), and its entries are denoted by \(\theta^{(k)}_{2,1}\).
\item The top-right block has size \(t \times (n-s)\), and its entries are denoted by \(\theta^{(k)}_{1,2}\).
\item The bottom-right block has size \((n-t) \times (n-s)\), and its entries are denoted by  \(\theta^{(k)}_{2,2}\).
\end{itemize}

The intuition of our construction is as follows.
Initially, $\theta^{(0)}_{2,1}$ is tiny.
For each even $k\geq 0$, the algorithm normalizes every row sum to $1$, so in particular,
$s\cdot \theta^{(k)}_{2,1} + (n-s) \cdot \theta^{(k)}_{2,2} = 1$.
If $\theta^{(k)}_{2,1}$ is small, then $(n-s) \cdot \theta^{(k)}_{2,2}$ must be very close to $1$.
Using $s>t$, this implies that the column sum
$t\cdot \theta^{(k)}_{1,2} + (n-t) \cdot \theta^{(k)}_{2,2}$ is significantly larger than 1.
Consequently, the subsequent column normalization shrinks the top-right block: $\theta^{(k+1)}_{1,2} = \theta^{(k)}_{1,2}\cdot \left( t\cdot \theta^{(k)}_{1,2} + (n-t) \cdot \theta^{(k)}_{2,2}\right)^{-1}< \theta^{(k)}_{1,2}$.
A symmetric argument shows that for each odd $k\geq 0$, as long as $\theta^{(k)}_{2,1}$ remains small, we also have $\theta^{(k+1)}_{1,2} < \theta^{(k)}_{1,2}$.
In other words, $\theta^{(k)}_{1,2}$ keeps decreasing until $\theta^{(k)}_{2,1}$ becomes sufficiently large.
We therefore consider two cases:
\begin{itemize}
\item \(\varepsilon/n > n\nu\). 
In this regime, bounding the error by $\varepsilon$ forces $\theta^{(k)}_{2,1}$ to become relatively large (on the order of $1/n$). Heuristically, once the total matrix scaling error becomes sufficiently small, the row-sum constraint yields $s\theta^{(k)}_{1,1} + (n-s)\theta^{(k)}_{1,2} \approx 1$. Combined with the column-sum constraint, $t\theta^{(k)}_{1,1} + (n-t)\theta^{(k)}_{2,1} \approx 1$, this implies that $\theta^{(k)}_{2,1}$ must exceed $(s-t)/((n-t)s)$, which is $\Theta(1/n)$.
Furthermore, it can be shown that throughout every iteration, all row and column sums remain bounded below by a positive constant. Because the algorithm alternates between row and column normalizations, the per-iteration growth factor of $\theta^{(k)}_{2,1}$ is bounded above by a constant. Given the initial condition $\theta^{(0)}_{2,1} = \nu$, it follows that at least $\Omega(\log n - \log \nu)$ iterations are required for $\theta^{(k)}_{2,1}$ to grow from $\nu$ to $\Theta(1/n)$.

\item \(\varepsilon/n \leq n\nu\). 
In this regime, the bottleneck is the slow decay of a normalized error parameter 
$\varepsilon^{(k)}$, defined by
$(s/n) \cdot \varepsilon^{(k)} = t\cdot \theta^{(k)}_{1,2} + (n-t)\cdot \theta^{(k)}_{2,2} - 1 $ for each even $k$ and
$((n-t)/n)\cdot \varepsilon^{(k)} = s \cdot \theta^{(k)}_{1,1} + (n-s)\cdot \theta^{(k)}_{1,2} - 1$ for each odd $k$.
The key point is that a single update changes the relevant entries only by a relative 
$O\left(\varepsilon^{(k-1)}\right)$ amount. 
In particular, for each odd 
$k$, the update rule gives
$\theta^{(k)}_{1,1} = \theta^{(k-1)}_{1,1}\left(1+O\left(\varepsilon^{(k-1)}\right)\right)$ and 
$\theta^{(k)}_{1,2} = \theta^{(k-1)}_{1,2}\left(1-O\left(\varepsilon^{(k-1)}\right)\right)$.
On the other hand, at step $k-1$, we have the exact normalization $s\cdot \theta^{(k-1)}_{1,1} + (n-s)\cdot \theta^{(k-1)}_{1,2} = 1$.
Substituting the above multiplicative perturbations into this identity immediately yields
$s\cdot \theta^{(k)}_{1,1} + (n-s)\cdot \theta^{(k)}_{1,2} - 1 = O\left(\varepsilon^{(k-1)}\right)$, which implies that $\varepsilon^{(k)} = O(\varepsilon^{(k-1)})$.
An analogous relationship holds for even $k$. This indicates that $\varepsilon^{(k)}$, and consequently the true error of $A^{(k)}$, decreases by at most a constant factor per iteration. 
Furthermore, one can verify that the initial error, $\norm{\=r\left(A^{(0)}\right) - \mathbf{1}}_1 + \norm{\=c\left(A^{(0)}\right) - \mathbf{1}}_1$, is $\Theta(n)$. Therefore, reaching an error of $\varepsilon$ requires at least $\Omega(\log n - \log \varepsilon)$ iterations.
\end{itemize}
In summary, we have the number of iteration is $\Omega\left(-\log \nu - \log \varepsilon\right)$.

The following lemma is used in the proof of \Cref{thm-main-lb-rc-ab}.

\begin{lemma}\label{lem-invariant-property}
Assume the conditions of \Cref{thm-main-lb-rc-ab}.
For any $k\geq 0$, define 
\begin{align}\label{eq-def-epsilon-k}
\varepsilon^{(k)} \triangleq 
\begin{cases}
\dfrac{n}{n-t}\cdot \left(s \cdot \theta^{(k)}_{1,1} + (n-s)\cdot \theta^{(k)}_{1,2} - 1\right), & \text{if $k$ is odd}, \\[1.2em]
\dfrac{n}{s}\cdot \left(t\cdot \theta^{(k)}_{1,2} + (n - t)\cdot \theta^{(k)}_{2,2} - 1\right), & \text{otherwise}.
\end{cases}
\end{align}
Then for each $k\geq 0$, we have 
\begin{align}\label{eq-lb-nu}
\theta^{(k+1)}_{1,2}  < \theta^{(k)}_{1,2} < \frac{6n}
{6n^2 + 5s(n-t)},
\end{align}
\begin{align}\label{eq-lb-varepsilon}
\varepsilon^{(k+1)} > \varepsilon^{(k)}\cdot \min\left\{ \frac{5n(n-s) \min\left\{s,n-s \right\}}{6n^3+5ns(n-t)} , \frac{5t\min\left\{t,n-t\right\}}{6n^2+5s(n-t)}  \right\} > 0.
\end{align}
\end{lemma}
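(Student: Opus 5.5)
The plan is to prove \eqref{eq-lb-nu} and \eqref{eq-lb-varepsilon} together by induction on $k$. The induction carries three invariants:
\begin{itemize}
\item[(i)] the four-block structure of $A^{(k)}$;
\item[(ii)] $\varepsilon^{(k)}>0$;
\item[(iii)] $\theta^{(k)}_{1,2}<\frac{6n}{6n^2+5s(n-t)}$.
\end{itemize}
For the base case $k=0$: $A$ is already row-normalized by \eqref{eq-invariant-a1thetaij-qminusaqthetak2j-2-kequal0}, so $A^{(0)}=A$. Invariant (iii) is \eqref{eq-invariant-a1thetaij-qminusaqthetak2j-2-theta12}, and $\varepsilon^{(0)}>0$ is the left inequality of \eqref{eq-invariant-a1thetaij-qminusaqthetak2j-1-kequal0-2}. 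Invariant (i) is immediate, since SK multiplies whole blocks of rows or columns by the same factor.

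\emph{Even step $k\to k+1$ (column normalization).} Write $c_L=t\theta_{1,1}+(n-t)\theta_{2,1}$ and $c_R=t\theta_{1,2}+(n-t)\theta_{2,2}$ for the left and right column sums of $A^{(k)}$. Since every row sums to $1$, the total mass is $n$, so
\[
s c_L+(n-s)c_R=n, \qquad\text{hence}\qquad s(1-c_L)=(n-s)(c_R-1).
\]
Because $\varepsilon^{(k)}>0$ means $c_R>1$, we get $c_L<1$. Therefore $\theta^{(k+1)}_{1,2}=\theta^{(k)}_{1,2}/c_R<\theta^{(k)}_{1,2}$, which gives \eqref{eq-lb-nu} and keeps (iii).

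The new top-row sum is $s\theta_{1,1}/c_L+(n-s)\theta_{1,2}/c_R$. Subtract $1=s\theta_{1,1}+(n-s)\theta_{1,2}$ and substitute the identity above. This yields the exact formula
\[
r_T-1=(n-s)(c_R-1)\Bigl(\frac{\theta_{1,1}}{c_L}-\frac{\theta_{1,2}}{c_R}\Bigr).
\]
So $\varepsilon^{(k+1)}=\frac{n}{n-t}(r_T-1)$ equals $\varepsilon^{(k)}$ times $\frac{s(n-s)}{n-t}\bigl(\frac{\theta_{1,1}}{c_L}-\frac{\theta_{1,2}}{c_R}\bigr)$. It remains to bound this bracket below explicitly. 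Since $c_L\le 1$, we have $\theta_{1,1}/c_L\ge\theta_{1,1}=(1-(n-s)\theta_{1,2})/s$, which is bounded below by (iii). For the subtracted term I use $\theta_{1,2}/c_R<\theta_{1,2}$, again controlled by (iii). Plugging in the bound $\frac{6n}{6n^2+5s(n-t)}$ should give the first term in the $\min$ of \eqref{eq-lb-varepsilon}.

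\emph{Odd step $k\to k+1$ (row normalization).} This is the mirror computation. Now the column sums equal $1$, and the total mass gives $t(r_T-1)=-(n-t)(r_B-1)$ with $r_T>1>r_B$. Then $\theta^{(k+1)}_{1,2}=\theta^{(k)}_{1,2}/r_T$ decreases, which again gives \eqref{eq-lb-nu}. The new right column sum minus $1$ factors as
\[
t(r_T-1)\Bigl(\frac{\theta_{2,2}}{r_B}-\frac{\theta_{1,2}}{r_T}\Bigr).
\]
Here $\theta_{2,2}=(1-t\theta_{1,2})/(n-t)$ is bounded below via (iii), and $r_B\le 1$. This should produce the second term of the $\min$.

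The main obstacle is the constant bookkeeping: showing that the bracketed differences are bounded below by exactly the stated expressions using only (iii). In particular, one must check that $\theta_{1,1}-\theta_{1,2}$ (respectively $\theta_{2,2}-\theta_{1,2}$) stays positive with a margin; I expect that this is where the specific constant $6n/(6n^2+5s(n-t))$ was tuned. If the crude bounds $c_L\le 1$ and $c_R\ge 1$ are not enough, I would additionally use the upper bound $c_R\le 1+\frac{s(n-t)}{n(n-s)}$, propagated from \eqref{eq-invariant-a1thetaij-qminusaqthetak2j-1-kequal0-2} through the monotonicity of \Cref{lem-monotone-rsum-csum}.
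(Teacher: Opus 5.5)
Your proof is correct. The crude bounds $c_L<1<c_R$ (respectively $r_B<1<r_T$) already suffice, so the fallback via $c_R\le 1+\frac{s(n-t)}{n(n-s)}$ is not needed. Here is the arithmetic you left open.

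\emph{Column step.} The bracket satisfies $\theta_{1,1}/c_L-\theta_{1,2}/c_R>\theta_{1,1}-\theta_{1,2}=(1-n\theta^{(k)}_{1,2})/s$. Invariant (iii) gives $1-n\theta^{(k)}_{1,2}>\frac{5s(n-t)}{6n^2+5s(n-t)}$. Hence $\varepsilon^{(k+1)}>\frac{5s(n-s)}{6n^2+5s(n-t)}\,\varepsilon^{(k)}$.

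\emph{Row step.} Similarly $\theta_{2,2}/r_B-\theta_{1,2}/r_T>\theta_{2,2}-\theta_{1,2}=(1-n\theta^{(k)}_{1,2})/(n-t)$. Hence $\varepsilon^{(k+1)}>\frac{5t(n-t)}{6n^2+5s(n-t)}\,\varepsilon^{(k)}$.

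Note that $\frac{5n(n-s)\min\{s,n-s\}}{6n^3+5ns(n-t)}=\frac{5(n-s)\min\{s,n-s\}}{6n^2+5s(n-t)}$. Since $s\ge\min\{s,n-s\}$ and $n-t\ge\min\{t,n-t\}$, your ratios dominate the corresponding terms of the stated minimum. Positivity then follows from $\varepsilon^{(k)}>0$.

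The paper uses the same induction, invariants and parity split, but handles the key one-step inequality differently. It writes the new top-row sum as a single fraction in $\varepsilon^{(k-1)}$, namely $\bigl(1+n\varepsilon^{(k-1)}(b^2\theta_{1,1}-(1-b)^2\theta_{1,2})\bigr)/\bigl((1-(1-b)\varepsilon^{(k-1)})(1+b\varepsilon^{(k-1)})\bigr)$ with $b=s/n$, and the analogous fraction for the column step. It then discards the $-b(1-b)(\varepsilon^{(k-1)})^2$ term in the denominator. This forces it to bound $1+(2b-1)\varepsilon^{(k-1)}$ from above, using the a priori estimates $\varepsilon^{(k-1)}<(1-a)/(1-b)$ (respectively $<1/a$) from \Cref{lem-invariant-property-lowerbound-ell}. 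That detour is exactly where the factors $\min\{s,n-s\}$ and $\min\{t,n-t\}$ come from.

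Your exact factorization $r_T-1=(n-s)(c_R-1)\bigl(\theta_{1,1}/c_L-\theta_{1,2}/c_R\bigr)$, with the termwise bounds $1/c_L>1>1/c_R$, needs only the total-mass identity and invariant (iii). It therefore avoids \Cref{lem-invariant-property-lowerbound-ell} for this lemma and gives a slightly better contraction ratio. That lemma is still used elsewhere, in the second case of the proof of \Cref{thm-main-lb-rc-ab}, to bound the growth of $\theta^{(k)}_{2,1}$.
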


% \Cref{lem-invariant-property} is the key to prove \Cref{thm-main-lb-rc-ab}.
% The lemma shows that the decay ratio of $\varepsilon^{(k)}$  
% is upper bound by a constant in each iteration, and that the entries in the top-right block $\theta^{(k+1)}_{1,2} $ decays in each iteration.
% The parameter $\varepsilon^{(k)}$ in \eqref{eq-def-epsilon-k} is a normalized version of the error $\norm{r\left(A^{(k)}\right)- \=1}_{1} + \norm{c\left(A^{(k)}\right)- \=1}_{1}$.
% If $k$ is odd, then $\norm{r\left(A^{(k)}\right)- \=1}_{1} + \norm{c\left(A^{(k)}\right)- \=1}_{1} = 2a(1-a)\varepsilon^{(k)}$.
% Otherwise, $k$ is even, then $\norm{r\left(A^{(k)}\right)- \=1}_{1} + \norm{c\left(A^{(k)}\right)- \=1}_{1} = 2b(1-b)\varepsilon^{(k)}$.
% Thus, by \eqref{eq-lb-varepsilon}, we have that the decay ratio of $\norm{r\left(A^{(k)}\right)- \=1}_{1} + \norm{c\left(A^{(k)}\right)- \=1}_{1}$  
% is upper bound by a constant in each iteration, immediately.
% We bound the decay ratio of $\varepsilon^{(k)}$ rather than that of the real error to simplify the calculation.

\Cref{lem-invariant-property} is the key ingredient in the proof of \Cref{thm-main-lb-rc-ab}. The lemma shows that, in each iteration, the decay ratio of \(\varepsilon^{(k)}\) is bounded above by a constant, and that the entries in the top-right block \(\theta^{(k+1)}_{1,2}\) decrease from one iteration to the next.
Note that \(\varepsilon^{(k)}\) in \eqref{eq-def-epsilon-k} is a normalized version of the error
$\norm{\=r\left(A^{(k)}\right)- \=1}_{1} + \norm{\=c\left(A^{(k)}\right)- \=1}_{1}$.
If \(k\) is odd, then
$\norm{\=r\left(A^{(k)}\right)- \=1}_{1} + \norm{\=c\left(A^{(k)}\right)- \=1}_{1} = (2t(n-t)/n)\cdot \varepsilon^{(k)}$
If \(k\) is even, then
$\norm{\=r\left(A^{(k)}\right)- \=1}_{1} + \norm{\=c\left(A^{(k)}\right)- \=1}_{1} = (2s(n-s)/n)\cdot \varepsilon^{(k)}$.
Therefore, by \eqref{eq-lb-varepsilon}, it follows immediately that the decay ratio of
$\norm{\=r\left(A^{(k)}\right)- \=1}_{1} + \norm{\=c\left(A^{(k)}\right)- \=1}_{1}$
is also bounded above by a constant in each iteration.
We work with the decay ratio of \(\varepsilon^{(k)}\), rather than that of the original error, in order to simplify the calculations.

The following lemma is used in the proof of \Cref{lem-invariant-property}. In particular, \eqref{eq-invariant-a1thetaij-qminusaqthetak2j-1} provides lower and upper bounds on the column sums of the matrix $A^{(k)}$, while \eqref{eq-invariant-a1thetaij-qminusaqthetak2j-2} provides lower and upper bounds on the row sums of $A^{(k)}$.
We establish these bounds as follows. We first compute the lower and upper bounds for the column sums of $A^{(0)}$; by \Cref{lem-monotone-rsum-csum}, the same bounds continue to hold for $A^{(k)}$ throughout all even iterations. Similarly, we first compute the lower and upper bounds for the row sums of $A^{(1)}$; again by \Cref{lem-monotone-rsum-csum}, the same bounds continue to hold for $A^{(k)}$ throughout all odd iterations.

\begin{lemma}\label{lem-invariant-property-lowerbound-ell}
Under the condition of \Cref{lem-invariant-property}, we have
\begin{align}
\forall j\leq 2,\text{ even } k\geq 0, &\quad \, \frac{t-n}{n}< t \theta^{(k)}_{1,j} + (n-t) \theta^{(k)}_{2,j} - 1<\frac{s(n-t)}{n(n-s)},\label{eq-invariant-a1thetaij-qminusaqthetak2j-1}\\
\forall i\leq 2, \text{ odd } k>0,&\quad \,   \frac{s(t-n)}{n^2-st}<s\theta^{(k)}_{i,1} + (n-s)\theta^{(k)}_{i,2} -1< \frac{n-t}{t}.\label{eq-invariant-a1thetaij-qminusaqthetak2j-2}
\end{align}
\end{lemma}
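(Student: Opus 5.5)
The plan is to establish the column-sum bounds \eqref{eq-invariant-a1thetaij-qminusaqthetak2j-1} at $k=0$ directly from the hypotheses of \Cref{thm-main-lb-rc-ab}. We then propagate them to all even $k$ using the monotonicity in \Cref{lem-monotone-rsum-csum}. Finally, we transfer them to the row sums at odd $k$ using the last part of the same lemma. Throughout we use the block structure noted after \Cref{thm-main-lb-rc-ab}. Every column in the left block of $A^{(k)}$ has sum $t\theta^{(k)}_{1,1}+(n-t)\theta^{(k)}_{2,1}$, and every column in the right block has sum $t\theta^{(k)}_{1,2}+(n-t)\theta^{(k)}_{2,2}$; analogously for rows. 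Hence $\max_j c_j(A^{(k)})$ and $\min_j c_j(A^{(k)})$ are the larger and smaller of these two block quantities, and similarly for row sums.

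\emph{Base case $k=0$.} By \eqref{eq-invariant-a1thetaij-qminusaqthetak2j-2-kequal0}, every row of $A$ already sums to $1$. The initial row normalization therefore leaves $A$ unchanged, so $A^{(0)}=A$ and $\theta^{(0)}_{i,j}=\theta_{i,j}$. For the left column block, \eqref{eq-invariant-a1thetaij-qminusaqthetak2j-1-kequal0-1} gives exactly
\[
\frac{t-n}{n}< t\theta_{1,1}+(n-t)\theta_{2,1}-1<0<\frac{s(n-t)}{n(n-s)}.
\]
For the right column block, \eqref{eq-invariant-a1thetaij-qminusaqthetak2j-1-kequal0-2} gives
\[
\frac{t-n}{n}<0<\frac{st(s-t)}{4n^3}< t\theta_{1,2}+(n-t)\theta_{2,2}-1<\frac{s(n-t)}{n(n-s)}.
\]
Hence \eqref{eq-invariant-a1thetaij-qminusaqthetak2j-1} holds for $k=0$. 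Equivalently,
\[
\frac{t}{n}<\min_j c_j\bigl(A^{(0)}\bigr)\quad\text{and}\quad \max_j c_j\bigl(A^{(0)}\bigr)<1+\frac{s(n-t)}{n(n-s)}=\frac{n^2-st}{n(n-s)}.
\]

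\emph{Even $k$.} By the second display of \Cref{lem-monotone-rsum-csum}, for even $k$ the minimum column sum is nondecreasing and the maximum column sum is nonincreasing from $A^{(k)}$ to $A^{(k+2)}$. By induction, $\min_j c_j(A^{(k)})\ge \min_j c_j(A^{(0)})>t/n$ and $\max_j c_j(A^{(k)})\le\max_j c_j(A^{(0)})<(n^2-st)/(n(n-s))$ for every even $k$. Since each block column sum lies between the minimum and the maximum, the strict bounds \eqref{eq-invariant-a1thetaij-qminusaqthetak2j-1} follow for all even $k$.

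\emph{Odd $k$.} Apply the last display of \Cref{lem-monotone-rsum-csum} with $k-1$ even. It gives
\[
\Bigl(\max_j c_j\bigl(A^{(k-1)}\bigr)\Bigr)^{-1}\le r_i\bigl(A^{(k)}\bigr)\le \Bigl(\min_j c_j\bigl(A^{(k-1)}\bigr)\Bigr)^{-1}.
\]
Plugging in the strict even-step bounds yields
\[
\frac{n(n-s)}{n^2-st}<r_i\bigl(A^{(k)}\bigr)<\frac{n}{t}.
\]
Subtracting $1$ gives $\frac{n(n-s)-(n^2-st)}{n^2-st}=\frac{s(t-n)}{n^2-st}$ on the left and $\frac{n-t}{t}$ on the right, which is exactly \eqref{eq-invariant-a1thetaij-qminusaqthetak2j-2}. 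Here we use that the row sum of block row $i$ is $s\theta^{(k)}_{i,1}+(n-s)\theta^{(k)}_{i,2}$.

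The only points needing care, which are the main obstacle albeit a mild one, are these. First, we must observe $A^{(0)}=A$ so that the hypotheses apply verbatim. Second, we must keep strictness: the monotonicity lemma gives non-strict inequalities, but strictness is inherited from the strict base-case bounds. Third, we must check the algebraic identity $n(n-s)+s(n-t)=n^2-st$.
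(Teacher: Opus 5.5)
Your proof is correct and follows the paper's argument: identify $A^{(0)}=A$, read off the $k=0$ column-sum bounds from the hypotheses, and propagate them to all even $k$ via \Cref{lem-monotone-rsum-csum}. The only difference is at odd $k$: you apply the reciprocal bound (last display of \Cref{lem-monotone-rsum-csum}) directly at every odd step, whereas the paper proves the same bound at $k=1$ by a weighted-average (``Jensen'') computation and then extends it using row-sum monotonicity; the two routes are equivalent.
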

\begin{proof}
At first, we prove \eqref{eq-invariant-a1thetaij-qminusaqthetak2j-1}.
By \eqref{eq-invariant-a1thetaij-qminusaqthetak2j-2-kequal0} we have 
\begin{align}\label{eq-relation-thetazero-theta}
\forall i,j\in [2],\quad \theta^{(0)}_{i,j} = \frac{\theta_{i,j}}{s\theta_{i,1} + (n-s)\theta_{i,2}} = \theta_{i,j}.
\end{align}
Combined with \eqref{eq-invariant-a1thetaij-qminusaqthetak2j-1-kequal0-1} and \eqref{eq-invariant-a1thetaij-qminusaqthetak2j-1-kequal0-2}, we have
\begin{align*}
\forall j\leq 2, &\quad \frac{t-n}{n}< t \theta^{(0)}_{1,j} + (n-t) \theta^{(0)}_{2,j} - 1<\frac{s(n-t)}{n(n-s)}.
\end{align*}
Combined with \Cref{lem-monotone-rsum-csum},
\eqref{eq-invariant-a1thetaij-qminusaqthetak2j-1} is immediate.

In the following, we prove \eqref{eq-invariant-a1thetaij-qminusaqthetak2j-2}.
Note that
\begin{align}\label{eq-b1ntheta021-1minusb1ntheta022}
\forall i\in [2],\quad s\cdot \theta^{(0)}_{i,1} + (n-s)\cdot \theta^{(0)}_{i,2} = 1.
\end{align}
In addition, 
\begin{align}\label{eq-theta211}
\theta^{(1)}_{2,1} =  \frac{\theta^{(0)}_{2,1}}{t\cdot \theta^{(0)}_{1,1} + (n-t)\cdot \theta^{(0)}_{2,1}}.
\end{align}
\begin{align}\label{eq-theta221}
\theta^{(1)}_{2,2} =  \frac{\theta^{(0)}_{2,2}}{t\cdot \theta^{(0)}_{1,2} + (n-t)\cdot \theta^{(0)}_{2,2}}.
\end{align}
Thus, for each $i\in [2]$, we have 
\begin{equation*}
\begin{aligned}
&\quad s\cdot \theta^{(1)}_{i,1} + (n-s)\cdot \theta^{(1)}_{i,2} - 1\\
\left(\text{by } \eqref{eq-theta211},\eqref{eq-theta221}\right) \quad &= \frac{s\cdot \theta^{(0)}_{i,1}}{t\cdot \theta^{(0)}_{1,1} + (n-t)\cdot \theta^{(0)}_{2,1}} + \frac{(n-s)\cdot \theta^{(0)}_{i,2}}{t\cdot \theta^{(0)}_{1,2} + (n-t)\cdot \theta^{(0)}_{2,2}} - 1\\
\left(\text{by \eqref{eq-b1ntheta021-1minusb1ntheta022} and Jansen's inequality} \right) \quad &\geq \min_{j\leq 2}\left\{\frac{1}{t\cdot \theta^{(0)}_{1,j} + (n-t)\cdot \theta^{(0)}_{2,j}}\right\} -1 \\
\left(\text{by \eqref{eq-invariant-a1thetaij-qminusaqthetak2j-1}} \right)  \quad &> \frac{s(t-n)}{n^2-st}.
\end{aligned}
\end{equation*}
Similarly, we also have 
\begin{equation*}
\begin{aligned}
&\quad s\cdot\theta^{(1)}_{i,1} + (n-s)\cdot \theta^{(1)}_{i,2} -1 <\frac{n-t}{t}.
\end{aligned}
\end{equation*}
In summary, we have 
\begin{equation*}
\begin{aligned}
\frac{s(t-n)}{n^2-st}< s\cdot \theta^{(1)}_{i,1} + (n-s)\cdot \theta^{(1)}_{i,2} -1  < \frac{n-t}{t}.
\end{aligned}
\end{equation*}
Combined with \Cref{lem-monotone-rsum-csum}, \eqref{eq-invariant-a1thetaij-qminusaqthetak2j-2} is immediate.
\end{proof}

Now we prove \Cref{lem-invariant-property}. The proof proceeds by induction on \(k\) and treats odd and even iterations separately, since the algorithm alternates between column and row normalizations.
Consider an odd iteration \(k\). 
The crucial step is to establish a recurrence relation for the linear combination $s\theta^{(k)}_{1,1} + (n-s)\theta^{(k)}_{1,2}$, which exactly defines the error parameter $\varepsilon^{(k)}$. 
Specifically, by applying the SK update rule, we explicitly express this combination entirely in terms of the variables from the preceding iteration: $\theta^{(k-1)}_{1,1}$, $\theta^{(k-1)}_{1,2}$, and the prior error $\varepsilon^{(k-1)}$. 
This substitution yields a rational expression whose numerator essentially takes the form $1 + n\varepsilon^{(k-1)}\bigl(c_1\theta^{(k-1)}_{1,1} - c_2\theta^{(k-1)}_{1,2}\bigr)$ for some constants $c_1, c_2 > 0$ (see \eqref{eq-bnthetak11-oneminusbnthetak12-tightbound}).
The argument then consists of the following ingredients:
\begin{itemize}
\item By \eqref{eq-def-epsilon-k}, the quantity \(s\cdot \theta^{(k)}_{1,1}+(n-s)\cdot\theta^{(k)}_{1,2}\) is \(1 + \Theta\left(\varepsilon^{(k)}\right)\).
\item For the numerator, since the row sums at round $k-1$ equal $1$ and the inductive hypothesis states that $\theta^{(k-1)}_{1,2} < 6n/(6n^2 + 5s(n-t))$, it follows that the numerator is bounded below by $1 + c_3\varepsilon^{(k-1)}$ for some constant $c_3 > 0$ (see \eqref{eq-bnthetak11-oneminusbthetak12-lowerbound}).
\item For the denominator, \Cref{lem-invariant-property-lowerbound-ell} guarantees a proper upper bound (see \eqref{eq-sixplusfiveboneminusa-bound}). 
\end{itemize}
Combining these bounds, we obtain
$\varepsilon^{(k)} \;\ge\; c\,\varepsilon^{(k-1)}$
for an explicit constant $c>0$. The argument for even iterations is analogous.

\begin{proof}[Proof of \Cref{lem-invariant-property}]
We prove this lemma by induction. 
For simplicity, let $a = t/n$ and $b = s/n$.
For the base case, by \eqref{eq-relation-thetazero-theta} and \eqref{eq-invariant-a1thetaij-qminusaqthetak2j-2-theta12}, we have
\[
\theta^{(0)}_{1,2} < \frac{6n}
{6n^2 + 5s(n-t)}.\]
By \eqref{eq-invariant-a1thetaij-qminusaqthetak2j-1-kequal0-2} and \eqref{eq-def-epsilon-k}, we have $\varepsilon^{(0)} > 0$.
For the inductive step, we will prove that for each $k>0$, 
\begin{align}
& \quad   \quad  \quad    \quad \theta^{(k)}_{1,2} < \theta^{(k-1)}_{1,2} < \frac{6n}
{6n^2 + 5s(n-t)},\label{eq-lb-varepsilon-key-thetak12} \\
\varepsilon^{(k)} &> \varepsilon^{(k-1)}\cdot \min\left\{ \frac{5n(n-s) \min\left\{s,n-s \right\}}{6n^3+5ns(n-t)} , \frac{5t\min\left\{t,n-t\right\}}{6n^2+5s(n-t)} \right\}> 0.\label{eq-lb-varepsilon-key}
\end{align}
Then the lemma is immediate.

For the inductive step, we first consider the case $k$ is odd.
In this case, we have
\begin{align*}
\forall j\in [2],\quad t \cdot \theta^{(k)}_{1,j} + (n-t)\cdot \theta^{(k)}_{2,j} = 1.
\end{align*}
Thus, we have 
\begin{equation}\label{eq-a1nb1ntheta11k}
\begin{aligned}
&\quad t\left(s\cdot \theta^{(k)}_{1,1} + (n-s)\cdot \theta^{(k)}_{1,2}\right) + (n-t)\left(s\cdot \theta^{(k)}_{2,1} + (n-s)\cdot \theta^{(k)}_{2,2}\right)\\
&= s\left( t \cdot \theta^{(k)}_{1,1} + (n-t)\cdot \theta^{(k)}_{2,1}\right) + (n-s)\left( t \cdot \theta^{(k)}_{1,2} + (n-t)\cdot \theta^{(k)}_{2,2}\right)
= n.
\end{aligned}
\end{equation}
Moreover, by \eqref{eq-def-epsilon-k} we have
\begin{align}\label{eq-antheta12kminus1-1minusantheta22kminus1}
t\cdot \theta^{(k-1)}_{1,2} + (n-t)\cdot \theta^{(k-1)}_{2,2} = 1 + b\varepsilon^{(k-1)}.
\end{align}
Combined with \eqref{eq-a1nb1ntheta11k}, we have 
\begin{align}\label{eq-1minus1minusbepsilonkminus1}
t \cdot \theta^{(k-1)}_{1,1} + (n-t)\cdot \theta^{(k-1)}_{2,1} = 1 - (1-b)\varepsilon^{(k-1)}.
\end{align}
Moreover, by that $k$ is odd, we have
\begin{align}
\forall i,j\in [2],\quad \theta^{(k)}_{i,j} = \frac{\theta^{(k-1)}_{i,j}}{t\cdot \theta^{(k-1)}_{1,j} + (n-t)\cdot \theta^{(k-1)}_{2,j}}.
\end{align}
Thus, by the above three equalities, we have 
\begin{equation}
\begin{aligned}\label{eq-bthetak11-1minusbthetak12}
s\cdot \theta^{(k)}_{1,1} + (n - s)\cdot \theta^{(k)}_{1,2} & = \frac{s\cdot \theta^{(k-1)}_{1,1}}{t\cdot \theta^{(k-1)}_{1,1} + (n-t)\cdot \theta^{(k-1)}_{2,1}} +  \frac{(n - s)\cdot\theta^{(k-1)}_{1,2}}{t\cdot \theta^{(k-1)}_{1,2} + (n-t)\cdot \theta^{(k-1)}_{2,2}}\\
& = \frac{s\cdot \theta^{(k-1)}_{1,1}}{1 - (1-b)\varepsilon^{(k-1)}} +  \frac{(n - s)\cdot\theta^{(k-1)}_{1,2}}{1 + b\varepsilon^{(k-1)}}.
\end{aligned}
\end{equation}
In addition, we have 
\begin{align}\label{eq-bnthetakminus11-1minusbnthetakminus112}
s\cdot \theta^{(k-1)}_{1,1} + (n-s)\cdot \theta^{(k-1)}_{1,2} = 1.
\end{align}
Hence, 
\begin{equation}\label{eq-bnthetak11-oneminusbnthetak12-tightbound}
\begin{aligned}
&\quad s\cdot \theta^{(k)}_{1,1} + (n-s)\cdot \theta^{(k)}_{1,2} \\
\left(\text{by } \eqref{eq-bthetak11-1minusbthetak12}\right)\quad &=\frac{s\cdot \theta^{(k-1)}_{1,1} + (n-s)\cdot \theta^{(k-1)}_{1,2}+ nb^2\varepsilon^{(k-1)}\cdot \theta^{(k-1)}_{1,1} - n(1 - b)^2\varepsilon^{(k-1)}\cdot\theta^{(k-1)}_{1,2}}{\left(1 - (1-b)\varepsilon^{(k-1)}\right)\left(1 + b\varepsilon^{(k-1)}\right)}\\
\left(\text{by } \eqref{eq-bnthetakminus11-1minusbnthetakminus112}\right)\quad &=\frac{1+ nb^2\varepsilon^{(k-1)}\cdot \theta^{(k-1)}_{1,1} - n(1 - b)^2\varepsilon^{(k-1)}\cdot\theta^{(k-1)}_{1,2}}{\left(1 - (1-b)\varepsilon^{(k-1)}\right)\left(1 + b\varepsilon^{(k-1)}\right)}.
\end{aligned}
\end{equation}
By \eqref{eq-antheta12kminus1-1minusantheta22kminus1} and \eqref{eq-1minus1minusbepsilonkminus1}, we also have
\begin{align*}
&\quad \left(1 - (1-b)\varepsilon^{(k-1)}\right)\left(1 + b\varepsilon^{(k-1)}\right) \\
&= \left(t\cdot \theta^{(k-1)}_{1,1} + (n-t)\cdot \theta^{(k-1)}_{2,1}\right)\left(t\cdot \theta^{(k-1)}_{1,2} + (n-t)\cdot \theta^{(k-1)}_{2,2} \right)>0. 
\end{align*}
Combined with $b\in (0,1)$, we have
\begin{align}\label{eq-one-twobminusonevarepsilonkminusone-larger-zero}
1 + (2b-1)\varepsilon^{(k-1)} = \left(1 - (1-b)\varepsilon^{(k-1)}\right)\left(1 + b\varepsilon^{(k-1)}\right) + b(1-b)\varepsilon^{(k-1)}\cdot \varepsilon^{(k-1)} > 0.
\end{align}
Combined with $s\cdot \theta^{(k)}_{1,1} + (n-s)\cdot \theta^{(k)}_{1,2}>0$ and \eqref{eq-bnthetak11-oneminusbnthetak12-tightbound}, we have
\begin{equation}\label{eq-bnthetak11-oneminusbthetak12}
\begin{aligned}
\quad s\cdot \theta^{(k)}_{1,1} + (n-s)\cdot \theta^{(k)}_{1,2}>\frac{1+ nb^2\varepsilon^{(k-1)}\cdot \theta^{(k-1)}_{1,1} - n(1 - b)^2\varepsilon^{(k-1)}\cdot\theta^{(k-1)}_{1,2}}{1 + (2b-1)\varepsilon^{(k-1)}}.
\end{aligned}
\end{equation}
Moreover, by the inductive assumption, we have 
\begin{align}\label{eq-thetakminus112-leq-nu-leq-bovertenoneminusbn}
\theta^{(k-1)}_{1,2} <  \frac{6n}
{6n^2 + 5s(n-t)} = \frac{1}{n} \cdot \frac{6n^2}
{6n^2 + 5s(n-t)} = \frac{1}{n}\cdot\frac{6}{6+5b(1-a)} .
\end{align}
Thus,
\begin{equation*}
\begin{aligned}
&\quad b^2\cdot \theta^{(k-1)}_{1,1} - (1 - b)^2\cdot\theta^{(k-1)}_{1,2} \\
\left(\text{by } \eqref{eq-bnthetakminus11-1minusbnthetakminus112}\right)\quad &=b\left(\frac{1}{n} - (1 - b)\cdot \theta^{(k-1)}_{1,2}\right) - (1 - b)^2\cdot\theta^{(k-1)}_{1,2}\\
&=\frac{b}{n} - (1 - b)\cdot \theta^{(k-1)}_{1,2}\\
\left(\text{by } \eqref{eq-thetakminus112-leq-nu-leq-bovertenoneminusbn}\right)\quad  &> \frac{6(2b-1)+5b^2(1-a)}{(6+5b(1-a))n}.
\end{aligned}
\end{equation*}
Combined with \eqref{eq-bnthetak11-oneminusbthetak12}, we have
\begin{equation}
\begin{aligned}\label{eq-bnthetak11-oneminusbthetak12-lowerbound}
s\cdot \theta^{(k)}_{1,1} + (n-s)\cdot \theta^{(k)}_{1,2} &> \frac{1 + (6(2b-1)+5b^2(1-a))\varepsilon^{(k-1)}/(6+5b(1-a))}{1+(2b-1)\varepsilon^{(k-1)}} \\
& =1 +\frac{5b(1-a)(1-b)\varepsilon^{(k-1)}}{\left(6+5b(1-a)\right) \cdot \left(1+(2b-1)\varepsilon^{(k-1)}\right)}.
\end{aligned}
\end{equation}
Moreover, by \eqref{eq-def-epsilon-k} and \Cref{lem-invariant-property-lowerbound-ell},
we have 
\begin{align}\label{eq-varepsilonkminusone-upbound}
\varepsilon^{(k-1)} = \frac{t\cdot \theta^{(k)}_{1,2} + (n-t)\cdot \theta^{(k)}_{2,2} - 1}{b} < \frac{s(n-t)}{n(n-s)}\frac{1}{b} = \frac{1-a}{1-b}.
\end{align}
Recall that $b = s/n \in(0,1)$. If $b\in(0,1/2]$, by the inductive assumption $\varepsilon^{(k-1)}>0$, we have
\begin{align}
(2b - 1)\varepsilon^{(k-1)} < 0.
\end{align}
Combined with \eqref{eq-one-twobminusonevarepsilonkminusone-larger-zero}, we have
\begin{equation}
\begin{aligned}
0< \left(6+5b(1-a)\right) \cdot \left(1+(2b-1)\varepsilon^{(k-1)}\right) < 6+ 5b(1-a).
\end{aligned}
\end{equation}
If $b\in (1/2,1)$, by \eqref{eq-varepsilonkminusone-upbound} we have 
\[
\varepsilon^{(k-1)} < \frac{1}{1-b}.
\]
Hence,
\begin{equation}
\begin{aligned} 
0<\left(6+5b(1-a)\right) \cdot \left(1+(2b-1)\varepsilon^{(k-1)}\right)< (6+5b(1-a))\cdot\frac{b}{(1-b)}.
\end{aligned}
\end{equation}
In summary, we always have 
\begin{equation}\label{eq-sixplusfiveboneminusa-bound}
\begin{aligned} 
0< \left(6+5b(1-a)\right) \cdot \left(1+(2b-1)\varepsilon^{(k-1)}\right) < (6+ 5b(1-a))\cdot \frac{\max\left\{b,1-b \right\}}{1-b}.
\end{aligned}
\end{equation}
Combined with $\varepsilon^{(k-1)}>0$ and \eqref{eq-bnthetak11-oneminusbthetak12-lowerbound}, we have
\begin{align*}
s\cdot \theta^{(k)}_{1,1} + (n-s)\cdot \theta^{(k)}_{1,2}
> 1 + 
\frac{5(1-a)(1-b)\cdot \min\left\{b,1-b \right\}}{6+5b(1-a)} \cdot \varepsilon^{(k-1)}.
\end{align*}
Combined with \eqref{eq-def-epsilon-k}, 
we have 
\begin{align*}
\varepsilon^{(k)} > \frac{n}{n-t}\cdot \frac{5(1-a)(1-b)\cdot \min\left\{b,1-b \right\}}{6+5b(1-a)} \cdot  \varepsilon^{(k-1)} = \frac{5n(n-s)\cdot \min\left\{s,n-s \right\}}{6n^3+5ns(n-t)} \cdot  \varepsilon^{(k-1)}> 0.
\end{align*}
Thus, \eqref{eq-lb-varepsilon-key} is proved.
Moreover, by $\varepsilon^{(k-1)} > 0$ and \eqref{eq-def-epsilon-k}, 
we have 
\begin{align*}
t\cdot \theta^{(k-1)}_{1,2} + (n-t)\cdot \theta^{(k-1)}_{2,2} >  1.
\end{align*}
Combined with 
\begin{align*}
\theta^{(k)}_{1,2} = \frac{\theta^{(k-1)}_{1,2}}{t\cdot \theta^{(k-1)}_{1,2} + (n-t)\cdot \theta^{(k-1)}_{2,2}} < \theta^{(k-1)}_{1,2}.
\end{align*}
Combined with the inductive assumption $\theta^{(k-1)}_{1,2} <  6n/
(6n^2 + 5s(n-t))$, we 
have 
\begin{align}
\theta^{(k)}_{1,2} < \theta^{(k-1)}_{1,2} \leq  \frac{6n}
{6n^2 + 5s(n-t)}.
\end{align}
Thus, the inductive step for odd $k$ is finished.

If $k$ is even, we also have \eqref{eq-a1nb1ntheta11k} holds.
Moreover, by \eqref{eq-def-epsilon-k} we have
\begin{align}\label{eq-1plus1minusaepsilonkminus1}
s \cdot \theta^{(k-1)}_{1,1} + (n-s)\cdot \theta^{(k-1)}_{1,2} = 1 + (1-a)\varepsilon^{(k-1)}.
\end{align}
Combined with \eqref{eq-a1nb1ntheta11k}, we have
\begin{align}\label{eq-bntheta21kminus1-1minusbntheta22kminus1}
s\cdot \theta^{(k-1)}_{2,1} + (n-s)\cdot \theta^{(k-1)}_{2,2} = 1 - a\varepsilon^{(k-1)}.
\end{align}
Moreover, by that $k$ is even, we have
\begin{align}
\forall i,j\in [2],\quad \theta^{(k)}_{i,j} = \frac{\theta^{(k-1)}_{i,j}}{s\cdot \theta^{(k-1)}_{i,1} + (n-s)\cdot \theta^{(k-1)}_{i,2}}.
\end{align}
Thus, by the above three equalities, we have 
\begin{equation}
\begin{aligned}\label{eq-athetak12-1minusathetak22-iteration}
t \cdot \theta^{(k)}_{1,2} + (n - t)\cdot \theta^{(k)}_{2,2} &= \frac{t\cdot \theta^{(k-1)}_{1,2}}{s\cdot \theta^{(k-1)}_{1,1} + (n-s)\cdot \theta^{(k-1)}_{1,2}} +  \frac{(n - t)\cdot\theta^{(k-1)}_{2,2}}{s\cdot \theta^{(k-1)}_{2,1} + (n-s)\cdot \theta^{(k-1)}_{2,2}}\\
&= \frac{t\cdot \theta^{(k-1)}_{1,2}}{1 + (1-a)\varepsilon^{(k-1)}} +  \frac{(n - t)\cdot\theta^{(k-1)}_{2,2}}{1 - a\varepsilon^{(k-1)}}.
\end{aligned}
\end{equation}
In addition, we have 
\begin{align}\label{eq-anthetakminus112-1minusanthetakminus122}
t\cdot \theta^{(k-1)}_{1,2} + (n-t)\cdot \theta^{(k-1)}_{2,2} = 1.
\end{align}
Hence, 
\begin{equation}
\begin{aligned}\label{eq-anthetak12-oneminusanthetak22-tightbound}
&\quad t\cdot \theta^{(k)}_{1,2} + (n-t)\cdot \theta^{(k)}_{2,2}\\
\left(\text{by } \eqref{eq-athetak12-1minusathetak22-iteration}\right)\quad & = \frac{t\cdot \theta^{(k-1)}_{1,2} + (n-t)\cdot \theta^{(k-1)}_{2,2}- na^2\varepsilon^{(k-1)}\cdot \theta^{(k-1)}_{1,2} + n(1 - a)^2\varepsilon^{(k-1)}\cdot\theta^{(k-1)}_{2,2}}{\left(1 + (1-a)\varepsilon^{(k-1)}\right)\left(1 - a\varepsilon^{(k-1)}\right)}\\
\left(\text{by } \eqref{eq-anthetakminus112-1minusanthetakminus122}\right)\quad &=\frac{1- na^2\varepsilon^{(k-1)}\cdot \theta^{(k-1)}_{1,2} + n(1 - a)^2\varepsilon^{(k-1)}\cdot\theta^{(k-1)}_{2,2}}{\left(1 + (1-a)\varepsilon^{(k-1)}\right)\left(1 - a\varepsilon^{(k-1)}\right)}.
\end{aligned}
\end{equation}
By \eqref{eq-1plus1minusaepsilonkminus1} and \eqref{eq-bntheta21kminus1-1minusbntheta22kminus1}, we also have
\begin{align*}
&\quad \left(1 + (1-a)\varepsilon^{(k-1)}\right)\left(1 - a\varepsilon^{(k-1)}\right) \\
&= \left(s \cdot \theta^{(k-1)}_{1,1} + (n-s)\cdot \theta^{(k-1)}_{1,2}\right)\left(s\cdot \theta^{(k-1)}_{2,1} + (n-s)\cdot \theta^{(k-1)}_{2,2} \right)>0. 
\end{align*}
Combined with $a\in (0,1)$, we have
\begin{align}\label{eq-oneplusoneminustwoaepsilon-positive}
1 + (1-2a)\varepsilon^{(k-1)} = \left(1 + (1-a)\varepsilon^{(k-1)}\right)\left(1 - a\varepsilon^{(k-1)}\right) + a(1-a)\varepsilon^{(k-1)}\cdot \varepsilon^{(k-1)} >0.
\end{align}
Combined with \eqref{eq-anthetak12-oneminusanthetak22-tightbound} and $t\cdot \theta^{(k)}_{1,2} + (n-t)\cdot \theta^{(k)}_{2,2}>0$, we have
\begin{equation}\label{eq-anthetak12-oneminusathetak22}
\begin{aligned}
\quad t\cdot \theta^{(k)}_{1,2} + (n-t)\cdot \theta^{(k)}_{2,2}>\frac{1- na^2\varepsilon^{(k-1)}\cdot \theta^{(k-1)}_{1,2} + n(1 - a)^2\varepsilon^{(k-1)}\cdot\theta^{(k-1)}_{2,2}}{1 + (1-2a)\varepsilon^{(k-1)}}.
\end{aligned}
\end{equation}
In addition, 
by the inductive assumption, we have 
we have 
\begin{align}\label{eq-thetakminusone12-oneoverfourn}
\theta^{(k-1)}_{1,2} <  \frac{6n}
{6n^2 + 5s(n-t)} = \frac{1}{n}\cdot\frac{6}{6+5b(1-a)} .
\end{align}
Thus,
\begin{equation}
\begin{aligned}\label{eq-1minusa2thetakminus122}
&\quad (1-a)^2\cdot \theta^{(k-1)}_{2,2} - a^2\cdot\theta^{(k-1)}_{1,2} \\
\left(\text{by } \eqref{eq-anthetakminus112-1minusanthetakminus122}\right)\quad &=(1-a)\left(\frac{1}{n} - a\cdot \theta^{(k-1)}_{1,2}\right) - a^2\cdot\theta^{(k-1)}_{1,2}\\
&=\frac{1-a}{n} - a\cdot \theta^{(k-1)}_{1,2}\\
\left(\text{by } \eqref{eq-thetakminusone12-oneoverfourn}\right)\quad &> \frac{1}{n}\cdot \frac{5b(1-a)^2 + 6(1 - 2a)}{6+5b(1-a)}.
\end{aligned}
\end{equation}
Combined with \eqref{eq-anthetak12-oneminusathetak22}, we have
\begin{equation}
\begin{aligned}\label{eq-anthetak12-oneminusanthetak22-lowerbound}
t\cdot \theta^{(k)}_{1,2} + (n-t)\cdot \theta^{(k)}_{2,2} &> \frac{1+ (5b(1-a)^2 + 6(1-2a))\varepsilon^{(k-1)}/(6+5b(1-a))}{1 + (1-2a)\varepsilon^{(k-1)}} \\
&= 1+\frac{5ab(1 - a)\varepsilon^{(k-1)}}{(6+5b(1-a)) \cdot (1 + (1-2a)\varepsilon^{(k-1)})}.
\end{aligned}
\end{equation}
Moreover, by \eqref{eq-def-epsilon-k} and \Cref{lem-invariant-property-lowerbound-ell},
we have 
\begin{equation}
\begin{aligned}\label{eq-epsilon-kminusone-a-case}
\varepsilon^{(k-1)} &= \frac{s \cdot \theta^{(k-1)}_{1,1} + (n-s)\cdot \theta^{(k-1)}_{1,2} - 1}{1-a} \\
&< \frac{1}{1-a}\cdot\frac{n-t}{t} = \frac{1}{a}.
\end{aligned}
\end{equation}
Recall that $a = t/n \in(0,1)$. If $a\in (1/2,1)$, by the inductive assumption $\varepsilon^{(k-1)}>0$, we have
\begin{align}
(1 - 2a)\varepsilon^{(k-1)} < 0.
\end{align}
Combined with \eqref{eq-oneplusoneminustwoaepsilon-positive}, we have
\begin{equation*}
\begin{aligned}
0<(6+5b(1-a)) \cdot \left(1 + (1-2a)\varepsilon^{(k-1)}\right) < 6+5b(1-a).
\end{aligned}
\end{equation*}
If $0< a \leq 1/2$, 
by \eqref{eq-epsilon-kminusone-a-case} we have
\begin{equation*}
\begin{aligned}
&\quad 0 < (6+5b(1-a)) \cdot \left(1 + (1-2a)\varepsilon^{(k-1)}\right) < (6+5b(1-a)) \cdot \frac{1-a}{a}.
\end{aligned}
\end{equation*}
In summary, we always have
\begin{align*}
0<(6+5b(1-a)) \cdot \left(1 + (1-2a)\varepsilon^{(k-1)}\right) &< (6+5b(1-a))\cdot \max\left\{1,\frac{1 - a}{a}\right\}\\
&< (6+5b(1-a))\cdot \frac{1-a}{\min\left\{a,1-a\right\}}.
\end{align*}
Combined with $\varepsilon^{(k-1)}>0$ and \eqref{eq-anthetak12-oneminusanthetak22-lowerbound}, we have
\begin{align*}
t\cdot \theta^{(k)}_{1,2} + (n-t)\cdot \theta^{(k)}_{2,2} 
> 1 + 
\frac{5ab(1 - a)}{6+5b(1-a)}\cdot \frac{\min\left\{a,1-a\right\}}{1-a}\cdot \varepsilon^{(k-1)}=1 + 
\frac{5ab\min\left\{a,1-a\right\}}{6+5b(1-a)}\cdot \varepsilon^{(k-1)}.
\end{align*}
Combined with \eqref{eq-def-epsilon-k}, 
we have 
\begin{align*}
\varepsilon^{(k)} &= \frac{n}{s}\cdot \frac{5ab\min\left\{a,1-a\right\}}{6+5b(1-a)}\cdot \varepsilon^{(k-1)} \\
&> \frac{5t\min\left\{t,n-t\right\}}{6n^2+5s(n-t)}\cdot \varepsilon^{(k-1)} > 0.
\end{align*}
Thus, \eqref{eq-lb-varepsilon-key} is proved.
Moreover, by $\varepsilon^{(k-1)} > 0$ and \eqref{eq-def-epsilon-k}, 
we have 
\begin{align*}
s \cdot \theta^{(k-1)}_{1,1} + (n-s)\cdot \theta^{(k-1)}_{1,2} >  1.
\end{align*}
Hence,
\begin{align}
\theta^{(k)}_{1,2} = \frac{\theta^{(k-1)}_{1,2}}{s \cdot \theta^{(k-1)}_{1,1} + (n-s)\cdot \theta^{(k-1)}_{1,2}} < \theta^{(k-1)}_{1,2}.
\end{align}
Combined with the inductive assumption $\theta^{(k-1)}_{1,2} <  6n/
(6n^2 + 5s(n-t))$, we 
have 
\begin{align}
\theta^{(k)}_{1,2} < \theta^{(k-1)}_{1,2} \leq  \frac{6n}
{6n^2 + 5s(n-t)}.
\end{align}
Thus, the inductive step for even 
$k$ is complete, and the lemma follows.
\end{proof}

Now we prove the main theorem of this subsection. The proof splits into two cases.
If $\varepsilon / n \le n \nu$, then \eqref{eq-lb-varepsilon} implies that the decay ratio of $\varepsilon^{(k)}$, and consequently that of the true error of $A^{(k)}$, is bounded above by a constant in each iteration. 
Furthermore, since the initial error $\lVert \={r}(A^{(0)}) - \mathbf{1} \rVert_1 + \lVert \={c}(A^{(0)}) - \mathbf{1} \rVert_1$ is $\Theta(n)$, it follows that $\Omega(\log n - \log \varepsilon)$ iterations are necessary to reduce the error to at most $\varepsilon$.
If \(\varepsilon/n > n\nu\), then achieving error at most \(\varepsilon\) forces \(\theta^{(k)}_{2,1}\) to grow to \(\Omega(1/n)\). Meanwhile, in each iteration the growth factor of \(\theta^{(k)}_{2,1}\) is also bounded by a constant, since \Cref{lem-invariant-property-lowerbound-ell} guarantees that all row and column sums are bounded below by a constant throughout the algorithm. Together with the initial condition \(\theta^{(0)}_{2,1}=\nu\), this implies that \(\Omega(-\log \nu-\log n)\) iterations are needed to reach error \(\varepsilon\).
Combining the two cases, we conclude that the number of iterations is
$\Omega\left(-\log \nu-\log \varepsilon\right)$.

\begin{proof}[proof of \Cref{thm-main-lb-rc-ab}]
We prove this theorem by considering two separate cases.
The first case is $\varepsilon/n \leq n\nu$. 
Without loss of generality, assume that 
$K$ is odd.
Thus,
\begin{align*}
&\quad \norm{\=r\left(A^{(K)}\right)- \=1}_{1} + \norm{\=c\left(A^{(K)}\right)- \=1}_{1}\\ &= t\abs{s \cdot \theta^{(K)}_{1,1} + (n-s)\cdot \theta^{(K)}_{1,2} - 1}
+ (n-t)\abs{s \cdot \theta^{(K)}_{2,1} + (n-s)\cdot \theta^{(K)}_{2,2} - 1}.
\end{align*}
Moreover, by \eqref{eq-a1nb1ntheta11k} we have 
\begin{align*}
t\left(s\cdot \theta^{(K)}_{1,1} + (n-s)\cdot \theta^{(K)}_{1,2} - 1\right) = - (n-t)\left(s\cdot \theta^{(K)}_{2,1} + (n-s)\cdot \theta^{(K)}_{2,2} -1\right).
\end{align*}
Thus, we have
\begin{align}\label{eq-normrakminus11-normcakminus11}
\norm{\=r\left(A^{(K)}\right)- \=1}_{1} + \norm{\=c\left(A^{(K)}\right)- \=1}_{1} = 2t\abs{s \cdot \theta^{(K)}_{1,1} + (n-s)\cdot \theta^{(K)}_{1,2} - 1}.
\end{align}
Combined with \eqref{eq-def-epsilon-k}, we have
\begin{align}
\norm{\=r\left(A^{(K)}\right)- \=1}_{1} + \norm{\=c\left(A^{(K)}\right)- \=1}_{1} = \frac{2t(n-t)}{n}\cdot \varepsilon^{(K)}.
\end{align}
Combined with \eqref{eq-condition-thm-main-lb-rc-ab}, we have 
\begin{align}\label{eq-epsilonk-up}
\varepsilon^{(K)} \leq \frac{n\varepsilon}{2t(n-t)}.
\end{align}
Define
\begin{align}
L \triangleq \min\left\{ \frac{5n(n-s)\min\left\{s,n-s \right\}}{6n^3+5ns(n-t)} , \frac{5t\min\left\{t,n-t\right\}}{6n^2+5s(n-t)} \right\}.
\end{align}
By \eqref{eq-invariant-a1thetaij-qminusaqthetak2j-1-kequal0-2}, \eqref{eq-def-epsilon-k} and \eqref{eq-relation-thetazero-theta}, we have 
\begin{align}
\varepsilon^{(0)} > \frac{st(s-t)}
{4n^3}\cdot \frac{n}{s} = \frac{t(s-t)}
{4n^2}.
\end{align}
Combined with \eqref{eq-lb-varepsilon}, we have 
\begin{align}
\varepsilon^{(K)} > L^{K}\varepsilon^{(0)}  >\frac{t(s-t)\cdot L^{K} }{4n^2}.
\end{align}
Combined with \eqref{eq-epsilonk-up} and \eqref{eq-relation-s-t-n}, we have
\begin{align}
K > \log_{L} \frac{2n^3\varepsilon}{t^2(n-t)(s-t)} = \Omega(\log n - \log \varepsilon).
\end{align}
Combined with $\varepsilon/n \leq n\nu$, the theorem is immediate.

The other case is $\varepsilon/n > n\nu$. 
By \eqref{eq-normrakminus11-normcakminus11} and \eqref{eq-condition-thm-main-lb-rc-ab}, we have 
\begin{align}
\abs{s \cdot \theta^{(K)}_{1,1} + (n-s)\cdot \theta^{(K)}_{1,2} - 1} \leq \frac{\varepsilon}{2t}.
\end{align}
Thus,
we have 
\begin{align}
\theta^{(K)}_{1,1} \leq \frac{1 + \varepsilon/(2t)}{s}.
\end{align}
Combined with $\varepsilon < s-t$,
we have 
\begin{align}
t\cdot \theta^{(K)}_{1,1} \leq t\cdot \frac{2t + (s-t)}{2st} = \frac{s + t}{2s}.
\end{align}
Combined with
\begin{align}
t\cdot \theta^{(K)}_{1,1} + (n-t)\cdot \theta^{(K)}_{2,1}  = 1,
\end{align}
we have 
\begin{align}\label{eq-thetabigk21-lb}
\theta^{(K)}_{2,1}  > \frac{s-t}{2s(n-t)}
\end{align}
Moreover,
by \Cref{lem-invariant-property-lowerbound-ell}, 
we have 
\begin{align*}
\forall \text{ even } k, \quad \theta^{(k+1)}_{2,1} &= \frac{\theta^{(k)}_{2,1}}{t\cdot \theta^{(k)}_{1,1} + (n-t)\cdot \theta^{(k)}_{2,1}} < \frac{n}{t}\cdot \theta^{(k)}_{2,1},\\
\forall \text{ odd } k, \quad \theta^{(k+1)}_{2,1} &= \frac{\theta^{(k)}_{2,1}}{s\cdot \theta^{(k)}_{2,1} + (n-s)\cdot \theta^{(k)}_{2,2}} < \frac{n^2-st}{n(n-s)}\cdot \theta^{(k)}_{2,1}.
\end{align*}
Define 
\begin{align}
T \triangleq \max\left\{\frac{n}{t},\frac{n^2-st}{n(n-s)}\right\}.
\end{align}
Thus, we have 
\begin{align*}
\theta^{(K)}_{2,1} \leq \theta^{(0)}_{2,1}\cdot T^{K}.
\end{align*}
Moreover, by \eqref{eq-invariant-a1thetaij-qminusaqthetak2j-2-theta21} and \eqref{eq-relation-thetazero-theta},
we have $\theta^{(0)}_{2,1} = \nu$.
Hence,
\begin{align}
\theta^{(K)}_{2,1} \leq \nu\cdot T^{K}.
\end{align}
Combined with \eqref{eq-thetabigk21-lb} and \eqref{eq-relation-s-t-n}, we have
\begin{align}
K \geq \log_{\,T} \left((s-t)/(2(n-t)s)\right) - \log_{\,T} \nu  = \Omega(- \log \nu - \log n).
\end{align}
Combined with $\varepsilon/n > n\nu$, the theorem is immediate.
\end{proof}

\subsection{Counter-example for $(\=u,\=v)$-scaling}
In this subsection, we complete the proof of \Cref{thm-lower-bound-main-rc}.

The following lemma is used in the proof of \Cref{thm-lower-bound-main-rc}.
Let $A$ be the matrix defined in \Cref{lem-lowerbound-reduction-rctooneone}, and let $A^{(0)}, A^{(1)}, A^{(2)},\dots $ denote the sequence of matrices generated by the SK algorithm on the input $(A, (\={1}, \={1}))$. 
This lemma demonstrates that by carefully tuning the entries of $A$, the matrix $A^{(2)}$ can be constructed to strictly satisfy the conditions required by \Cref{thm-main-lb-rc-ab}. The proof of this lemma is deferred to the appendix.

\begin{lemma}\label{lem-lowerbound-reduction-rctooneone-new}
Let the notation and conditions of \Cref{lem-lowerbound-reduction-rctooneone} hold. Furthermore, suppose that
\begin{align}\label{eq-condition-c-upperbound}
d \leq \frac{(s-t)S_2}{6(S_1+S_2)\big(n-t+|s+t-n|\big)}.
\end{align}
Then we have
\begin{align}
& y<  
\frac{6n}
{6n^2 + 5s(n-t)}, \label{eq-y-keyelement-bound}\\
&z < \frac{dn\lambda}{t(n-s)}, \label{eq-z-minimumelement-bound}\\
\forall j\leq s,\quad &  \frac{t-n}{n}<\left(\sum_{i \leq n}A^{(2)}_{i,j}\right) - 1 <0,\label{eq-lem-lowerbound-reduction-rctooneone-jleqs}\\
\forall j> s,\quad & \frac{st(s-t)}
{4n^3}<\left(\sum_{i \leq n}A^{(2)}_{i,j}\right) - 1 < \frac{s\,(n-t)}
{n\bigl(n-s\bigr)}. \label{eq-lem-lowerbound-reduction-rctooneone-jges}
\end{align}
\end{lemma}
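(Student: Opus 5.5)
The proof is a direct computation from the explicit formulas \eqref{eq-reduction-block-partitioned-matrix} of \Cref{lem-lowerbound-reduction-rctooneone}. Since $A^{(2)}$ has the $2\times 2$ block structure with values $x,y,z,q$, the column sums are $tx+(n-t)z$ for $j\le s$ and $ty+(n-t)q$ for $j>s$. The row sums equal $1$, so $sx+(n-s)y=1$ and $sz+(n-s)q=1$, and the total mass is $n$. This gives the balance identity
\[s\bigl(tx+(n-t)z-1\bigr)+(n-s)\bigl(ty+(n-t)q-1\bigr)=0.\]
Thus the bounds on the left columns follow from those on the right columns, up to the ratio $(n-s)/s$. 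The plan is to treat $d$ as a small parameter. By \eqref{eq-def-lambda-s1-s2}, $\lambda\in[1,(S_1+S_2)/S_2]$ when $d\le 1$. I will check that \eqref{eq-condition-c-upperbound} forces $d\lambda\le (s-t)/(6(n-t+|s+t-n|))$, which is small. All four quantities are then perturbations of their values at $d=0$.

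First, \eqref{eq-z-minimumelement-bound} is immediate from the formula for $z$: drop the term $d\lambda n(n-t)$ from the denominator and bound $t+\lambda(n-t)\le \lambda n$ (valid since $\lambda\ge1$). Second, for \eqref{eq-y-keyelement-bound}, write
\[y=\frac{t+d\lambda(n-t)}{nt+\lambda(n-t)(s+d(n-s))}\le\frac{t+d\lambda(n-t)}{nt+\lambda s(n-t)}.\]
At $d=0$ this is at most $t/(nt+s(n-t))$, which is strictly below $6n/(6n^2+5s(n-t))$ with a margin of order $s(n-t)(n-t)/n^3$ coming from $t<n$. The small correction $d\lambda(n-t)$ must fit within this margin, and the constant $6$ in \eqref{eq-condition-c-upperbound} is presumably designed for exactly that. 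I will cross-multiply and reduce the claim to a linear inequality in $d\lambda$.

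Third, the column bounds. For $j>s$, compute $ty+(n-t)q-1$. At $d=0$ we have $y\approx t/(nt+\lambda s(n-t))$ and $q=1/(n-s)$, so the excess is about $(n-t)/(n-s)-1+ty=(s-t)/(n-s)+O(\cdot)$, which is positive and of order $(s-t)/n$. I will lower bound it by $st(s-t)/(4n^3)$ (a very generous target) after accounting for the $d\lambda$-perturbation, again using \eqref{eq-condition-c-upperbound}. The term $|s+t-n|$ there is presumably what controls the sign of the perturbation term in the numerator differences. For the upper bound $s(n-t)/(n(n-s))$, I will invoke \Cref{lem-monotone-rsum-csum}: column sums of $A^{(2)}$ lie within the range of column sums of $A^{(0)}$. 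Alternatively I will bound $q\le 1/(n-s)$ and $y\le 1/n$ directly, which gives $ty+(n-t)q-1\le t/n+(s-t)/(n-s)$. Comparing this with the target needs care, so I would instead compute $y$ more precisely.

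For $j\le s$, the balance identity gives $tx+(n-t)z-1=-\frac{n-s}{s}(ty+(n-t)q-1)<0$, which settles the upper bound. The lower bound $>(t-n)/n$ is equivalent to $tx+(n-t)z>t/n$, and this follows from $x\ge 1/n$-type estimates, namely $x\ge y\cdot\frac{t+\lambda(n-t)}{t+d\lambda(n-t)}$ with $z\ge0$. I expect the main obstacle to be the sharp inequalities \eqref{eq-y-keyelement-bound} and the upper bound in \eqref{eq-lem-lowerbound-reduction-rctooneone-jges}. Their margins are only polynomially small in the block ratios, so the algebra must be organized carefully, by clearing denominators and isolating the $d\lambda$ terms, to confirm that \eqref{eq-condition-c-upperbound} is strong enough.
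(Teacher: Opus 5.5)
Your overall route is the paper's: read $x,y,z,q$ off \eqref{eq-reduction-block-partitioned-matrix}, use that the rows of $A^{(2)}$ sum to $1$ together with the balance identity, and bound $z$ and $y$ directly. Your $z$ bound is fine. Your $y$ bound is also fine: cross-multiplying reduces it to $5ts+d\lambda\bigl(6n^2+5s(n-t)\bigr)<6n\lambda s$, which follows from $d\lambda\le (s-t)/(6(n-t))$ and $\lambda\ge1$. The paper needs only $d<1/6$ here, because it keeps the $d\lambda(n-t)(n-s)$ term in the denominator. Your treatment of the columns $j\le s$ is fine as well.

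However, the first route you propose for the upper bound in \eqref{eq-lem-lowerbound-reduction-rctooneone-jges} fails. \Cref{lem-monotone-rsum-csum} only gives $\max_j c_j(A^{(2)})\le\max_j c_j(A^{(0)})$, and $A^{(0)}$ is not block-constant. For $j>s$,
\[
c_j\bigl(A^{(0)}\bigr)=\frac{1}{v_j}\Bigl(\frac{t}{S_1+S_2}+\frac{n-t}{dS_1+S_2}\Bigr),
\]
which tends to $n>1+\frac{s(n-t)}{n(n-s)}$ when one $v_j$ with $j>s$ is much smaller than the other entries of $\boldsymbol{v}$. Your fallback, on the other hand, is already a complete proof and needs no extra care. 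You have $q\le 1/(n-s)$ from $sz+(n-s)q=1$, and $y<1/n$ from $x>y$ (as $d<1$) and $sx+(n-s)y=1$. Moreover, $\frac{t}{n}+\frac{s-t}{n-s}=\frac{s(n-t)}{n(n-s)}$ is an exact identity.

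The real gap is the lower bound $ty+(n-t)q-1>st(s-t)/(4n^3)$. This is the only place where \eqref{eq-condition-c-upperbound} is essential: at $d=1$ one gets $x=y=z=q=1/n$ and the excess is $0$. Saying you will account for the $d\lambda$-perturbation does not prove it. One way to finish along your lines starts from the formula for $q$:
\[
(n-t)q-1=\frac{t\bigl(s-t-ds-d\lambda(n-t)\bigr)}{t(n-s+ds)+d\lambda n(n-t)}.
\]
Put $D\triangleq n-t+|s+t-n|\ge\max\{s,n-t\}$. The hypothesis gives $d<(s-t)/(6D)<1/6$. Hence $\lambda\ge1$ and $d\lambda\le d(S_1+S_2)/S_2\le (s-t)/(6D)$, so $ds\le (s-t)/6$ and $d\lambda(n-t)\le (s-t)/6$. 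The numerator is therefore at least $\frac23 t(s-t)$, and the denominator is at most $tn+n(s-t)/6\le n^2$. This gives $ty+(n-t)q-1>(n-t)q-1\ge\frac{2t(s-t)}{3n^2}>\frac{st(s-t)}{4n^3}$.

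The paper instead expands $ty+(n-t)q-1$ as a single rational function. It shows the factor $t-d(n-t)\lambda^2-\bigl(t-s+d(s+t-n)\bigr)\lambda$ exceeds $t+(s-t)\lambda/2$, using $d\lambda(n-t)\le(s-t)/4$ and $d(s+t-n)<(s-t)/4$. Isolating $(n-t)q-1$ as above is shorter.
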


Finally, we can prove \Cref{thm-lower-bound-main-rc}.

\begin{proof}[Proof of \Cref{thm-lower-bound-main-rc}.]
By $(\gamma,\gamma')$ is feasible  with respect to \((\boldsymbol{u}, \boldsymbol{v})\),
we have $\gamma$ is the sum of some entries in $\boldsymbol{v}$, and $\gamma'$ is the sum of some entries in $\boldsymbol{u}$.
Assume without loss of generality that there exist some positive integers $a,b$ such that
\begin{align}\label{eq-def-u-v-gamma-gamma'}
\=u = (u_1,\dots,u_m),\quad\quad \=v = (v_1,\dots,v_n),\quad \quad \gamma = \sum_{j= b+1}^{n}v_j,\quad \text{ and } \quad  \gamma' = \sum_{i= 1}^{a}u_i.
\end{align}
Let 
\begin{align}\label{eq-defofc-with-gamma}
d = \frac{n - b}{n \cdot 2^{n/\varepsilon} - b}\cdot \frac{(1-\gamma-\gamma')(n-b)}{12n\big(1-\gamma'+|\gamma'-\gamma|\big)}\leq \frac{(1-\gamma-\gamma')(n-b)}{12n\big(1-\gamma'+|\gamma'-\gamma|\big)}.
\end{align}
One can verify that $d<1$.
Let $A$ be a nonnegative matrix of size $m\times n$ where 
\begin{align*}
\forall i\leq a, j \leq b, \quad A_{i,j} &= 1,\\
\forall i\leq a, j > b, \quad A_{i,j} &=  1,\\
\forall i> a, j \leq b, \quad A_{i,j} &= d,\\
\forall i> a, j > b, \quad A_{i,j} &= 1.
\end{align*}
One can verify that $A$ is $(\gamma,\gamma',1)$-dense. 
Let $\nu \triangleq \nu(A)$.
By \eqref{eq-defofc-with-gamma} we have
\begin{align}\label{eq-nu-dnplusnminusbovern}
\nu = \frac{\min_{A_{i,j} > 0} \left( A_{i,j} / r_i(A) \right)}{\max_{i,j} \left( A_{i,j} / r_i(A) \right)} = \frac{dn}{db + n-b} \leq 2^{\varepsilon/n}.
\end{align}
Define
\begin{align}\label{def-k-alpha-nu-gamma}
K = \alpha \left(-\log \frac{2\nu}{(1-\gamma)\gamma'} - \log \, (3\varepsilon) \right),
\end{align}
where $\alpha$ is the constant hidden in the time complexity $\Omega(-\log \nu - \log \varepsilon)$
in \Cref{thm-main-lb-rc-ab}. 
Let $A^{(0)}, A^{(1)},\dots$ denote the sequence of matrices generated by SK with input $(A,(\boldsymbol{u}, \boldsymbol{v}))$. 
Furthermore, given any positive integer $L$, assume
\begin{align*}
&\=u' = (u'_1,\dots,u'_m) \triangleq f_1(\boldsymbol{u}, \boldsymbol{v}, L) ,\quad  \=v' = (v'_1,\dots,v'_n) \triangleq f_2(\boldsymbol{u}, \boldsymbol{v}, L),\\
&N \triangleq \norm{\=v'}_1,  \quad \; \; R \triangleq R(\boldsymbol{u}, \boldsymbol{v}, L), \quad \; \; t \triangleq \sum_{i\leq a} u'_i,\quad \; \; s \triangleq \sum_{j \leq b}v'_j,\\
&\+D(\=u') = \diag(U_1,\dots,U_N) , \quad \quad \quad   \quad \; \; \+D(\=v') = \diag(V_1,\dots,V_N).
\end{align*}
Therefore, combining Definitions \ref{def-integer-vector} and \ref{def-splitted-matrix} with \eqref{eq-def-u-v-gamma-gamma'}, it follows that
\begin{align*}
\lim_{L\rightarrow \infty}\frac{N}{L} = 1,\quad\quad\quad \lim_{L\rightarrow \infty}\frac{s}{L} = 1- \gamma, \quad\quad\quad \lim_{L\rightarrow \infty}\frac{t}{L} = \gamma'.
\end{align*}
Hence, we have
\begin{align*}
\lim_{L\rightarrow \infty}\frac{s-t}{L} = 1 - \gamma -\gamma',\quad \lim_{L\rightarrow \infty}\frac{N L}{t(N-s)} = \frac{1}{\gamma'(1-\gamma)},\quad \lim_{L\rightarrow \infty}\frac{s-t}{N - t+\abs{s+t-N}} = \frac{(1-\gamma-\gamma')}{1-\gamma'+|\gamma'-\gamma|}.
\end{align*}
Thus, one can choose a sufficiently large $L$ such that 
\begin{equation}\label{eq-condition-large-L}
\begin{aligned}
\quad \quad R &> 2, \quad \quad \quad \quad\quad\quad \quad \quad \quad \frac{1 - \gamma -\gamma'}{3} \leq \frac{s - t}{L}, \\
\frac{N L}{t(N-s)} &\leq \frac{2}{\gamma'(1-\gamma)},\quad\quad \quad \frac{(1-\gamma-\gamma')}{2\big(1-\gamma'+|\gamma'-\gamma|\big)}  \leq \frac{s-t}{N - t+\abs{s+t-N}}.
\end{aligned}
\end{equation}
Let $C = G(A, f_1(\boldsymbol{u}, \boldsymbol{v}, L), f_2(\boldsymbol{u}, \boldsymbol{v}, L))$.
Also let $C^{(0)}, C^{(1)},\dots$ denote the sequence of matrices generated by SK with input $(C,(\boldsymbol{1}, \boldsymbol{1}))$.
By \Cref{thm-reduction-precision}, there exists a sufficiently large $\ell$ such that for every $L > \ell$ satisfying \eqref{eq-condition-large-L} and any $k \leq K$,
\begin{align}\label{eq-reduction-deviation-error}
\abs{\norm{\=r\left(A^{(k)}\right)- \boldsymbol{u}}_{1} + \norm{\=c\left(A^{(k)}\right)- \boldsymbol{v}}_{1} - \frac{\norm{\=r\left(C^{(k)}\right)- \boldsymbol{1}}_{1} + \norm{\=c\left(C^{(k)}\right)- \boldsymbol{1}}_{1}}{L}}\leq \varepsilon.
\end{align}
Furthermore, we claim for each $L$ satisfying \eqref{eq-condition-large-L} and each $k\leq K$,
\begin{align}\label{eq-claim-error-bound-l}
\norm{\=r\left(C^{(k)}\right)- \boldsymbol{1}}_{1} + \norm{\=c\left(C^{(k)}\right)- \boldsymbol{1}}_{1} \geq 3 L\varepsilon.
\end{align}
Combined with \eqref{eq-reduction-deviation-error}, we have
\begin{align}
\norm{\=r\left(A^{(k)}\right)- \boldsymbol{u}}_{1} + \norm{\=c\left(A^{(k)}\right)- \boldsymbol{v}}_{1} > \varepsilon.
\end{align}
for each $k\leq K$.
Combined with \eqref{def-k-alpha-nu-gamma}, the theorem is immediate.
In the following, we establish the claim to complete the proof.

By \Cref{def-splitted-matrix} and $N=\norm{\=v'}_1$, we have $C$ is a matrix of size $N\times N$.
Let
\begin{align*}
x \triangleq C^{(2)}_{1,1},\quad   y \triangleq C^{(2)}_{1,s+1},   \quad \, z \triangleq C^{(2)}_{t+1,1},\quad  q \triangleq C^{(2)}_{t+1,s+1}.
\end{align*}
Also by \Cref{def-splitted-matrix}, one can verify that $C$ satisfies
\begin{align*}
\forall i\leq t, j \leq s, \quad C_{i,j} &= \frac{1}{U_i\cdot V_j},\\
\forall i\leq t, j>s, \quad C_{i,j} &=  \frac{1}{U_i\cdot V_j},\\
                                    \forall i>t, j \leq s, \quad C_{i,j} &= \frac{d}{U_i\cdot V_j},\\
\forall i>t, j>s, \quad C_{i,j} &= \frac{1}{U_i\cdot V_j}.
\end{align*}
Combined with \Cref{lem-lowerbound-reduction-rctooneone}, we have
\begin{align}\label{eq-d2ij-block}
\setlength{\jot}{4pt} 
C^{(2)}_{i,j}=
\begin{cases}
x, & i\le t,\ j\le s,\\[4pt]
y, & i\le t,\ j> s,\\[4pt]
z, & i> t,\ j\le s,\\[4pt]
q, & i> t,\ j> s.
\end{cases}
\end{align}
Let 
\begin{align}
S_1 \triangleq \sum_{j\leq s}\frac{1}{V_j},\quad S_2 \triangleq \sum_{s< j\leq N}\frac{1}{V_j},\quad \lambda \triangleq  \frac{S_1+S_2}{d S_1+S_2}. 
\end{align}
Combined with $s = \sum_{j \leq b}v'_j$, $\+D(\=v') = \diag(V_1,\dots,V_N)$ and \eqref{eq-def-diag},
we have 
\begin{align}\label{eq-s1eqy-s2equalnminusy}
S_1 = \sum_{j\leq s}\frac{1}{V_j} = \sum_{j\leq b}\frac{v'_j}{v'_j} = b, \quad S_2 = \sum_{s< j\leq N}\frac{1}{V_j} = \sum_{b< j\leq n} \frac{v'_j}{v'_j} = n - b, \quad \lambda = \frac{n}{d b+n - b}.
\end{align}
Combined with \eqref{eq-nu-dnplusnminusbovern},
we have 
\begin{align}\label{eq-lambda-nu-relation}
d\lambda = \nu.
\end{align}
In addition, by \eqref{eq-defofc-with-gamma}, \eqref{eq-condition-large-L} and \eqref{eq-s1eqy-s2equalnminusy}, we have 
\begin{align*}
d \leq \frac{(1-\gamma-\gamma')(n-b)}{12n\big(1-\gamma'+|\gamma'-\gamma|\big)} \leq \frac{(s-t)(n-b)}{6n\big(N - t+\abs{s+t-N}\big)} = \frac{(s-t)S_2}{6(S_1+S_2)\big(N-t+|s+t-N|\big)}.
\end{align*}
Hence, by \Cref{lem-lowerbound-reduction-rctooneone-new} we have
\begin{equation}\label{eq-yz-sum-bound}
\begin{aligned}
& y<  
\frac{6N}
{6N^2 + 5s(N-t)}, \\
&z < \frac{dN\lambda}{t(N-s)}, \\
\forall j\leq s,\quad &  \frac{t-N}{N}<\left(\sum_{i \leq N}C^{(2)}_{i,j}\right) - 1 <0,\\
\forall j> s,\quad & \frac{st(s-t)}
{4N^3}<\left(\sum_{i \leq N}C^{(2)}_{i,j}\right) - 1 <  \frac{s(N-t)}{N(N-s)}. 
\end{aligned}
\end{equation}
Combined with \eqref{eq-lambda-nu-relation} and \eqref{eq-condition-large-L}, we have
\begin{align}\label{eq-z-nu-bound}
z < \frac{d\lambda N}{t(N-s)} \leq \frac{2\nu}{\gamma'(1-\gamma) L}.
\end{align}
In addition, recall that $\varepsilon \leq (1 - \gamma -\gamma')/3$.
Combined with \eqref{eq-condition-large-L}, we have
\begin{align}\label{eq-epsilon-bound}
\varepsilon \leq \frac{1 - \gamma -\gamma'}{3} \leq \frac{s - t}{L}.
\end{align}
Combining \eqref{eq-d2ij-block}, \eqref{eq-yz-sum-bound}, \eqref{eq-z-nu-bound}, \eqref{eq-epsilon-bound} with \Cref{thm-main-lb-rc-ab},
we have with $\left(C^{(2)},(\=1,\=1)\right)$ as input, the error of SK can not get less than $3L\varepsilon$ in 
\begin{align*}
\alpha \left(-\log z - \log (3L\varepsilon)\right) \geq \alpha \left(-\log \frac{2\nu}{(1-\gamma)\gamma'} - \log \, (3\varepsilon) \right) = K
\end{align*}
iterations.
This establishes \eqref{eq-claim-error-bound-l} and concludes the proof of the theorem.
\end{proof}

\newpage

%% file: Tightness.tex
\section{On the tightness of the results}\label{sec-tightness}
In this section, we prove Theorems \ref{thm-tightness-dense-complexity}, \ref{thm-lower-bound-main-rc-tight}, and \ref{thm-lower-bound-main-rc-threshould-tight}.

\subsection{Tight iteration complexity for dense matrix}
In this subsection, we prove \Cref{thm-tightness-dense-complexity}.

The following theorem constructs a pair of vectors $(\=u,\=v)$ and a
$\left(\frac{3}{5}, \frac{1}{2}\right)$-dense, \((\boldsymbol{u}, \boldsymbol{v})\)-scalable matrix  matrix $A$, such that with $(A,\=u,\=v)$ as input, 
SK takes $\Omega(\log n - \log \varepsilon)$ iterations to output a matrix with error less than $\varepsilon$.
Hence, \Cref{thm-tightness-dense-complexity} is immediate.

Furthermore, one can verify that $A = \exp(-\eta C)$ for some $(0,1/10)$-well-bounded scaled cost matrix $\eta C$ with respect to $(\=u,\=v)$, where the $0$ entries in $A$ naturally map to $+\infty$ in $\eta C$, and the $1$ entries in $A$ naturally map to $0$ in $\eta C$. 
Thus, the construction in the following theorem yields a matching $\Omega(\log n - \log \varepsilon)$ lower bound, demonstrating that the iteration complexity in \Cref{thm-upper-bound-general-ot-uv} is tight.

\begin{theorem}\label{thm-tightness-lognterm}
Let $n$ be a positive integer multiple of $10$. 
Assume $\varepsilon \in (0,1/10)$.
Define
\begin{align}
\boldsymbol{u} \triangleq \Bigl(
\underbrace{\frac{1}{n},\ldots,\frac{1}{n}}_{n\text{ entries}}
\Bigr),\quad 
\boldsymbol{v} \triangleq \Bigl(
\underbrace{\frac{1}{n},\ldots,\frac{1}{n}}_{2n/5\text{ entries}},
\frac{1}{5}, \frac{2}{5}
\Bigr).
\end{align}
Thus, we have
$\norm{\boldsymbol{u}}_1 = \norm{\boldsymbol{v}}_1 = 1$.
Let $A$ be a nonnegative matrix of size $n\times (2+2n/5)$ where $A_{i,j} = 0$ if 
\begin{align}
i \leq \frac{n}{2}, j\leq \frac{2n}{5} \quad \text{ or } \quad i > \frac{n}{2}, j= 2 + \frac{2n}{5}.
\end{align}
Otherwise, $A_{i,j} = 1$.
With $(A,(\=u,\=v))$ as input,
SK takes $\Omega(\log n - \log \varepsilon)$ iterations to output a matrix $B$ satisfying 
\[\norm{\=r\left(B\right)- \boldsymbol{u}}_{1} + \norm{\=c\left(B\right)- \boldsymbol{v}}_{1} \leq \varepsilon.\]
\end{theorem}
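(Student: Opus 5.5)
The plan is to exploit the block symmetry of the instance and reduce SK to a low-dimensional recurrence whose constants do not depend on $n$. Write $m_0 = 2n/5$. Every iterate $A^{(k)}$ keeps the following structure, which is a direct induction on the SK updates.
\begin{itemize}
\item Each top row $i\le n/2$ has zeros in columns $1,\dots,m_0$, an entry $a_k$ in column $m_0+1$ (target $1/5$) and an entry $b_k$ in column $m_0+2$ (target $2/5$).
\item Each bottom row has a common entry $p_k$ in columns $1,\dots,m_0$, an entry $q_k$ in column $m_0+1$, and a zero in the last column.
\end{itemize}
Solving the marginal equations gives the unique limit $b=4/(5n)$, $a=1/(5n)$, $p=2/n^2$, $q=1/(5n)$. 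The initial row normalization instead gives $q_0 = (1/n)/(m_0+1) = \Theta(1/n^2)$. So $q_k$ must grow by a factor $\Theta(n)$.

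I will prove two separate lower bounds: $\Omega(\log n)$ and $\Omega(\log(1/\varepsilon))$. Their maximum is at least half their sum, which is $\Omega(\log n-\log\varepsilon)$.

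\emph{The $\Omega(\log n)$ bound (growth bottleneck).} After every column step the two columns that contain only one block are fixed exactly: $b_k = (2/5)/(n/2)=4/(5n)$ and $p_k = (1/n)/(n/2)=2/n^2$. Consequently, in the following row step the bottom row sum is $m_0p_k+q_k \ge 4/(5n)$. The row multiplier applied to $q$ is therefore at most $5/4$.

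In a column step, $q$ is multiplied by $(1/5)/\bigl((n/2)(a_k+q_k)\bigr) \le 2/(5n a_k)$. So I need the invariant $a_k \ge c/n$ for an absolute constant $c$. I would prove it by induction from $a_0 = 1/(2n)$, using two facts:
\begin{itemize}
\item After a column step, $a = (2/(5n))\,a/(a+q)$.
\item After a row step, $a = (1/n)\,a/(a+4/(5n))$ with $a\le 2/(5n)$.
\end{itemize}
Together these show $a$ cannot drop below a constant multiple of $1/n$ while $q=O(1/n)$. An alternative route is to transfer Lemma~\ref{lem-monotone-rsum-csum} through the reduction of Lemma~\ref{lemma-reduction-key}.

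With the invariant, $q_k \le q_0 C^k$ for an absolute constant $C$. On the other hand, while $q_k \le 1/(10n)$ the error is bounded below by a constant. In an even step, the column $m_0+1$ has sum $(n/2)(a_k+q_k)$, and this sum is bounded away from $1/5$ whenever $q_k$ is far from its limit; the odd case is symmetric, using the bottom row sums. Since $\varepsilon<1/10$, the algorithm cannot stop before $q_k \ge 1/(10n)$, which forces $k=\Omega(\log n)$.

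\emph{The $\Omega(\log(1/\varepsilon))$ bound (decay bottleneck).} Mirroring Lemma~\ref{lem-invariant-property}, I rescale to dimensionless variables $\hat a = na$, $\hat q = nq$ (with $\hat b,\hat p$ fixed after each column step). In these variables the SK map is an $n$-independent rational map in two effective variables, up to $O(1/n)$ corrections that I expect to be harmless.

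Define a normalized error $\varepsilon^{(k)}$, namely the deviation of the one non-normalized marginal block, so that the true $\ell_1$ error equals a constant times $\varepsilon^{(k)}$. Using the exact normalization from the previous half-step, the same algebra as in \eqref{eq-bnthetak11-oneminusbnthetak12-tightbound} shows that $\varepsilon^{(k+1)} \ge c'\,\varepsilon^{(k)}$ for an absolute $c'>0$, as long as the entries stay in the constant-factor window already established. Since $\varepsilon^{(k)}$ starts at a constant, reaching error $\varepsilon$ needs $\Omega(\log(1/\varepsilon))$ steps.

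\emph{Main obstacle.} The hard step is keeping every constant uniform in $n$. This affects both the lower bound $a_k\ge c/n$ across the whole trajectory and the contraction-ratio bound $c'$. In particular, $c'$ must not degenerate because of the $+1$ in $m_0+1$ or because the error changes sign. I would handle this first by proving explicit two-sided bounds on $\hat a_k$ and $\hat q_k$ that are invariant under one full row–column step. If necessary, I would also track the signs of the deviations, as in Lemma~\ref{lem-invariant-property}, where positivity of $\varepsilon^{(k)}$ was propagated inductively.
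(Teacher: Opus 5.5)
Your argument is correct, but it is organized differently from the paper's. The paper uses a fact your rescaling nearly exposes: the dynamics are exactly $n$-independent (there are no $O(1/n)$ corrections). They are also one-dimensional after every column step, since then $a+q=2/(5n)$. With $\theta=5na/2$ one has $q=\frac{2}{5n}(1-\theta)$, the error after a column step is $\frac{1}{5}\abs{2\theta-1}$, and two SK steps act by $\theta\mapsto\theta(3-\theta)/\bigl(2(1+\theta-\theta^2)\bigr)$. The paper then tracks the single quantity $\omega=(2\theta-1)/(1-\theta)$, which is essentially the error divided by $nq$. It proves $\omega^{(k+2)}=\frac{2\omega^{(k)}(2+\omega^{(k)})}{5+3\omega^{(k)}}>\frac{2}{3}\omega^{(k)}$ with $\omega^{(1)}=(2n-5)/10$, and this one geometric bound yields the $\log n$ and $\log(1/\varepsilon)$ terms together. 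You instead bound the denominator and the numerator of $\omega$ separately, mirroring the two-case proof of \Cref{thm-main-lb-rc-ab}. Crude multiplier estimates show $q$ grows by at most a constant factor per step, which gives $\Omega(\log n)$, and an error-ratio bound gives $\Omega(\log(1/\varepsilon))$. Your growth half needs no closed form and would survive perturbations of the instance; the paper's route is shorter and gives explicit constants. Your decay half, however, is only sketched, and making it rigorous requires the same one-step computation. In the exact variables, $(2\theta^{(k+2)}-1)/(2\theta^{(k)}-1)=1/\bigl(1+\theta^{(k)}-(\theta^{(k)})^2\bigr)>4/5$, and the interval $(1/2,1)$ is invariant under the map, which also settles the sign issue you flagged.

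Two small repairs are needed.

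\emph{The invariant $a_k\ge c/n$.} The hypothesis ``while $q=O(1/n)$'' is too weak, because after a column step $a=2/(5n)-q$. Use instead that $q<1/(10n)$ throughout the phase you need. This gives $a\ge 3/(10n)$ after each column step, and hence $a\ge 3/(11n)$ after the following row step.

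\emph{The error lower bound after a row step.} Column $2n/5+1$ alone deviates from $1/5$ by only about $1/30$, since its deviation is quadratic in $2\theta-1$. That is not enough to beat every $\varepsilon<1/10$. Use the first $2n/5$ columns instead: their total deviation is $\abs{1/10-nq/2}$. Since the three column-block deviations sum to zero, the $\ell_1$ error is at least $2\abs{1/10-nq/2}>1/10$ whenever $nq<1/10$.
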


\begin{proof}
Let $A^{(0)}, A^{(1)},\dots$ denote the sequence of matrices generated by SK with $(A,(\=u,\=v))$ as input.
Define 
\begin{align*}
\forall k \geq 0,\quad \varepsilon^{(k)} &\triangleq \norm{\=r\left(A^{(k)}\right)- \boldsymbol{u}}_{1} + \norm{\=c\left(A^{(k)}\right)- \boldsymbol{v}}_{1}.
\end{align*}
One can verify that $A^{(k)}$ always has the form
\begin{align*}
A^{(k)}_{i,j} = 0 \quad\quad &\text{ if } i\leq \frac{n}{2}, j\leq \frac{2n}{5};\quad\quad\quad\quad\;\, A^{(k)}_{i,j} = c^{(k)} \quad \text{ if } i> \frac{n}{2}, j\leq \frac{2n}{5};\\
A^{(k)}_{i,j} = a^{(k)} \quad &\text{ if } i\leq \frac{n}{2}, j= 1+\frac{2n}{5}; \quad\quad\quad 
A^{(k)}_{i,j} = d^{(k)} \quad \text{ if } i> \frac{n}{2}, j= 1+\frac{2n}{5};\\
A^{(k)}_{i,j} = b^{(k)} \quad &\text{ if } i\leq \frac{n}{2}, j= 2+\frac{2n}{5};\quad \quad\quad
A^{(k)}_{i,j} = 0 \quad \quad \text{ if } i> \frac{n}{2}, j= 2+ \frac{2n}{5}.
\end{align*}
Moreover, for each odd $k>0$, we have 
\begin{align*}
\forall j\leq \frac{2n}{5},\quad &\frac{1}{n}= \sum_{i\in [n]}A^{(k)}_{i,j} = \frac{n c^{(k)}}{2};\\
&\frac{1}{5}= \sum_{i\in [n]}A^{(k)}_{i,1+2n/5} = \frac{n \left(a^{(k)}+d^{(k)}\right)}{2};\\
&\frac{2}{5}= \sum_{i\in [n]}A^{(k)}_{i,2+2n/5} = \frac{n b^{(k)}}{2}.
\end{align*}
Thus, we have
\begin{align}
a^{(k)}+d^{(k)} = \frac{2}{5n}, \quad\quad b^{(k)} = \frac{4}{5n}, \quad\quad c^{(k)} = \frac{2}{n^2}.
\end{align}
Define 
\begin{align}\label{eq-def-thetak-relation-ak}
    \theta^{(k)} \triangleq \frac{5na^{(k)}}{2}.
\end{align}
We have 
\[
d^{(k)} \triangleq \frac{2}{5n}\left(1-\theta^{(k)}\right).
\]
Thus,
\begin{align*}
\forall i\leq \frac{n}{2},\quad &\left(\sum_{j\leq 2+2n/5}A^{(k)}_{i,j}\right) -\frac{1}{n} = a^{(k)}+b^{(k)} - \frac{1}{n} = \frac{2\theta^{(k)}}{5n} - \frac{1}{5n};\\
\forall i> \frac{n}{2},\quad &\left(\sum_{j\leq 2+2n/5}A^{(k)}_{i,j}\right) -\frac{1}{n} =\frac{2nc^{(k)}}{5}+d^{(k)} - \frac{1}{n} =  \frac{1}{5n} - \frac{2\theta^{(k)}}{5n}.
\end{align*}
Hence,
\[\varepsilon^{(k)} = n\cdot \abs{\frac{2\theta^{(k)}}{5n} - \frac{1}{5n}} =\frac{1}{5} \abs{2\theta^{(k)} - 1}.\]
Moreover, we claim that
\begin{align}\label{eq-thetakplusone-thetak-reduction}
\theta^{(k+2)} = \frac{\theta^{(k)}\left(3-\theta^{(k)}\right)}{2\left(1+\theta^{(k)} - \left(\theta^{(k)}\right)^2\right)}.
\end{align}
Define 
\[\omega^{(k)} = \frac{2\theta^{(k)} - 1}{1 - \theta^{(k)}}.\]
We have 
\begin{align}\label{eq-thetak-omegak-varepsilonk-omegak}
\theta^{(k)} = \frac{\omega^{(k)}+1}{\omega^{(k)}+2}, \quad \quad \varepsilon^{(k)} =\frac{1}{5} \abs{2\theta^{(k)} - 1} = \frac{1}{5} \abs{\frac{\omega^{(k)}}{\omega^{(k)}+2}}.
\end{align}
Combined with \eqref{eq-thetakplusone-thetak-reduction}, 
we have
\begin{align}
\theta^{(k+2)} = \frac{\left(2\left(\omega^{(k)}\right)^2 + 7\omega^{(k)} + 5\right)\left(2+\omega^{(k)}\right)^{-2}}{2\left(\left(\omega^{(k)}\right)^2 + 5\omega^{(k)} + 5\right)\left(2+\omega^{(k)}\right)^{-2}} =\frac{2\left(\omega^{(k)}\right)^2 + 7\omega^{(k)} + 5}{2\left(\left(\omega^{(k)}\right)^2 + 5\omega^{(k)} + 5\right)} .
\end{align}
Thus,
\begin{align*}
\omega^{(k+2)} = \frac{2\theta^{(k+2)} -1}{1 - \theta^{(k+2)}} =\frac{2\omega^{(k)}\left(2 + \omega^{(k)}\right)}{5 + 3\omega^{(k)}} > \frac{2\omega^{(k)}}{3}.
\end{align*}
If $\omega^{(k)}>0$,
we have 
\begin{align}\label{eq-induction-omegarelation}
\omega^{(k+2)} > \frac{2\omega^{(k)}}{3}.
\end{align}
Moreover, one can verify that 
\begin{align*}
a^{(1)} = \frac{2(2n+5)}{5n(2n+15)}. 
\end{align*}
Hence, we have
\begin{align*}
\theta^{(1)} = \frac{2n+5}{2n+15},\quad \omega^{(1)} = \frac{2n-5}{10}. 
\end{align*}
Combined with \eqref{eq-induction-omegarelation},
we have 
\begin{align}
\forall \text{ even } k\leq 2\log_{3/2} \frac{2n-5}{200\varepsilon}, \quad \omega^{(k)} > 20\varepsilon.
\end{align}
Combined with \eqref{eq-thetak-omegak-varepsilonk-omegak} and $\varepsilon < 1/10$, we have
\begin{align}
\forall \text{ even } k\leq 2\log_{3/2} \frac{2n-5}{200\varepsilon}, \quad \varepsilon^{(k)} =\frac{1}{5} \abs{\frac{\omega^{(k)}}{\omega^{(k)}+2}} > \varepsilon.
\end{align}
A similar result can be proved for odd $k$.
In summary, we have SK takes $k=\Omega(\log n - \log \varepsilon)$ iterations to output a matrix $A^{(k)}$ with
$\varepsilon^{(k)}\leq \varepsilon$.

In the following, we prove \eqref{eq-thetakplusone-thetak-reduction}. Then the lemma is immediate.
We have
\begin{align*}
a^{(k+1)} &= \frac{a^{(k)}}{n\cdot r\left(A^{(k)}\right)} =\frac{a^{(k)}}{n(a^{(k)}+b^{(k)})} =  \frac{a^{(k)}}{n}\cdot \left(\frac{2\theta^{(k)}}{5n} + \frac{4}{5n}\right)^{-1} = \frac{5a^{(k)}}{2(\theta^{(k)}+2)}.\\
d^{(k+1)} &=\frac{d^{(k)}}{n\cdot r\left(A^{(k)}\right)} = \frac{d^{(k)}}{n\left(2nc^{(k)}/5+d^{(k)} \right)} = \frac{d^{(k)}}{n}\cdot \left(\frac{6}{5n} - \frac{2\theta^{(k)}}{5n}\right)^{-1} = \frac{5d^{(k)}}{6 - 2\theta^{(k)}}.
\end{align*}
Hence, we have
\begin{align*}
\frac{n}{2}\cdot \left(a^{(k+1)} + d^{(k+1)} \right)&= \frac{5na^{(k)}}{4(\theta^{(k)}+2)} + \frac{5nd^{(k)}}{2\left(6 - 2\theta^{(k)}\right)}= \frac{\theta^{(k)}}{2(\theta^{(k)}+2)} + \frac{1-\theta^{(k)}}{6 - 2\theta^{(k)}}.
\end{align*}
Therefore,
\begin{align*}
a^{(k+2)} &= \frac{a^{(k+1)}}{5\cdot c\left(A^{(k+1)}\right)} = \frac{a^{(k+1)}}{5}\cdot \left(\frac{na^{(k+1)}}{2} + \frac{nd^{(k+1)}}{2}\right)^{-1} = \frac{a^{(k+1)}}{5}\cdot \left(\frac{\theta^{(k)}}{2(\theta^{(k)}+2)} + \frac{1-\theta^{(k)}}{6 - 2\theta^{(k)}}\right)^{-1} \\
&= \frac{a^{(k)}}{2(\theta^{(k)}+2)}\cdot \left(\frac{\theta^{(k)}}{2(\theta^{(k)}+2)} + \frac{1-\theta^{(k)}}{6 - 2\theta^{(k)}}\right)^{-1}. 
\end{align*}
Combined with \eqref{eq-def-thetak-relation-ak}, we have
\begin{align}
\theta^{(k+2)} = \frac{\theta^{(k)}}{2(\theta^{(k)}+2)}\cdot \left(\frac{\theta^{(k)}}{2(\theta^{(k)}+2)} + \frac{1-\theta^{(k)}}{6 - 2\theta^{(k)}}\right)^{-1} = \frac{\theta^{(k)}\left(3-\theta^{(k)}\right)}{2\left(1+\theta^{(k)} - \left(\theta^{(k)}\right)^2\right)}. 
\end{align}
Thus, \eqref{eq-thetakplusone-thetak-reduction} holds, which concludes the proof.
\end{proof}

\subsection{Tight iteration complexity for sparse matrix}
In this subsection, we prove \Cref{thm-lower-bound-main-rc-tight}.

Given $\=u= \left(\frac{5}{6},\frac{1}{6}\right),\=v= \left(\frac{7}{8},\frac{1}{8}\right)$, one can verify the existence of feasible $(\gamma_1, \gamma_2)$ with respect to $(\={u}, \={v})$ with $\gamma_1 + \gamma_2 < 1$. Consequently, \Cref{thm-lower-bound-main-rc-tight} follows as an immediate corollary of the result below.

\begin{theorem}\label{thm-tight-rc-scaling}
Without loss of generality, assume $\varepsilon \in (0, 1/100)$.
Let 
\begin{equation*}
\begin{aligned}
\=u= (u_1,u_2)\triangleq\left(\frac{5}{6},\frac{1}{6}\right),\quad \quad 
\=v=(v_1,v_2)\triangleq \left(\frac{7}{8},\frac{1}{8}\right).
\end{aligned}
\end{equation*}
Given any nonnegative matrix $A$ of size $2\times 2$ which is $(\=u,\=v)$-scalable, 
let $A^{(0)}, A^{(1)},\dots$ denote the sequence of matrices generated by SK with $(A,(\=u,\=v))$ as input.
Let $K$ be the minimum integer $k$ such that 
\begin{align*}
\norm{\=r\left(A^{(k)}\right)- \=u}_{1} + \norm{\=c\left(A^{(k)}\right)- \=v}_{1} \leq \varepsilon.
\end{align*}
Then we have 
$K =O\left(-\log \nu (A) - \log \varepsilon\right)$.
\end{theorem}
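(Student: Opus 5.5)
We would exploit the fact that for a $2\times 2$ instance the SK dynamics collapse to a one-dimensional recursion, and follow the two-phase scheme sketched in the technique overview.

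\textbf{Reduction to one parameter.} Up to scaling invariance, a positive $2\times 2$ matrix is characterized by the cross ratio $\kappa(A)=A_{1,1}A_{2,2}/(A_{1,2}A_{2,1})$, which SK preserves. If $A$ has a zero entry, scalability together with $u_1\neq v_1$, $u_2\neq v_2$ restricts the support pattern. The triangular patterns with a zero diagonal entry are not $(\boldsymbol{u},\boldsymbol{v})$-scalable here. For instance, a zero at $(2,1)$ would force $u_2=v_2$, and a zero at $(1,2)$ would force $u_1=v_1$. The only surviving zero patterns are therefore degenerate, and we would dispose of them by direct computation, since there SK converges after one full sweep. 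For positive $A$, after each column normalization the matrix is determined by a single scalar, namely $p^{(k)}=A^{(k)}_{1,1}/v_1\in(0,1)$ for odd $k$. The row error is then $2\,|v_1p^{(k)}+v_2q^{(k)}-u_1|$, where $q^{(k)}$ is a function of $p^{(k)}$ and $\kappa$. One SK round acts as a monotone map $\Phi_\kappa$ of the interval with a unique fixed point $p^\ast$, corresponding to the true scaling. We would write $\Phi_\kappa$ explicitly as a M\"obius-type rational map.

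\textbf{Phase 1: escaping the trap.} We define the small quantity as the minimal normalized entry, which is comparable to $\nu(A)$ at $k=0$. Because $(5/6,1/6)$ and $(7/8,1/8)$ are asymmetric with $u_1<v_1$ and $u_2>v_2$, mass must flow from column $1$ into row $2$. We would show that whenever the entry $A^{(k)}_{2,1}$ (the entry forced to grow) is below a fixed constant $c_0$, a full row-plus-column round multiplies it by at least a constant $\lambda>1$. The mechanism is that row $2$ must carry mass $1/6$, but if $A_{2,1}$ is tiny then $A_{2,2}\approx 1/6>v_2=1/8$. Column $2$ is then overfull and gets shrunk by factor about $(1/8)/(1/6+\cdot)$, which in turn inflates the row-$2$ normalization onto column $1$ by a constant factor. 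The specific numbers $5/6$ and $7/8$ are chosen so that this factor is bounded away from $1$ uniformly. The symmetric case of a small $A_{1,2}$ is handled the same way. This yields $O(-\log\nu(A))$ rounds until all normalized entries exceed $c_0$.

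\textbf{Phase 2: linear convergence.} Once every entry of the current matrix is at least a constant times its marginal, the matrix is $(\gamma,\gamma',\rho)$-dense with $\gamma=\gamma'=1$ (the full support) and constant $\rho$. Two routes are available. We could restart SK from this iterate and invoke \Cref{thm-rc-scaling-log} with $n=2$, obtaining $O(-\log\varepsilon)$ further iterations; this works because SK started from a scaled matrix coincides with continuing the original run. Alternatively, $|\Phi_\kappa'(p^\ast)|\le\sigma_2^2<1$ is bounded uniformly on the region reached, giving a contraction directly. Summing the two phases gives $K=O(-\log\nu(A)-\log\varepsilon)$.

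The main obstacle is Phase 1. We need a constant growth factor that holds uniformly in $\kappa$ and in the starting point, including the delicate intermediate regime where $A_{2,1}$ is neither tiny nor already of order $c_0$. Our first attempt would be to prove the monotonicity inequality $A^{(k+2)}_{2,1}\ge\min\{\lambda A^{(k)}_{2,1},c_0\}$ by explicit algebra on the rational map, using \Cref{lem-monotone-rsum-csum}-type bounds on the column sums to keep the denominators controlled.
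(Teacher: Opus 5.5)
There is a genuine gap in what your Phase 1 is supposed to achieve and what Phase 2 then assumes. Phase 1 runs until ``all normalized entries exceed $c_0$'', and Phase 2 then uses density with $\gamma=\gamma'=1$. For extreme cross ratios this regime is never reached, because the scaling limit itself can have an entry of size $\Theta(\nu(A))$.

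Write the limit as $P^\ast=\begin{pmatrix}x&5/6-x\\ 7/8-x&x-17/24\end{pmatrix}$ with $x\in(17/24,5/6)$ fixed by $\kappa$. Then $P^\ast_{1,2}\approx 5/(2\kappa)$ as $\kappa\to\infty$ and $P^\ast_{2,2}\approx\kappa/34$ as $\kappa\to 0$, whereas $P^\ast_{1,1}\ge 17/24$ and $P^\ast_{2,1}\ge 1/24$ always. Two examples:
\begin{itemize}
\item For $A=\begin{pmatrix}1&\delta\\ 1&1\end{pmatrix}$ one has $\nu(A)=\delta$, and the $(1,2)$ entry of the iterates tends to about $5\delta/2$.
\item For $A=\begin{pmatrix}\delta&1\\ 1&1\end{pmatrix}$ the $(2,2)$ entry tends to about $\delta/34$. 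Here the entry that starts tiny is $A_{1,1}$, a case your Phase 1 does not treat at all.
\end{itemize}
So a small $A_{1,2}$ is \emph{not} ``handled the same way'' as a small $A_{2,1}$. Only the column-1 entries are transient traps; a small column-2 entry produced by an extreme $\kappa$ is permanent. Hence your Phase 1 need not terminate, and the hypothesis of Phase 2 need never hold. Your alternative route via $|\Phi_\kappa'(p^\ast)|$ has the same problem, since it needs a neighbourhood of uniform size that the iterates provably enter.

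Your zero-pattern paragraph has the same flaw. A zero at $(2,1)$ is excluded because the limit's $(1,2)$ entry would have to be $v_2-u_2<0$, not because it forces $u_2=v_2$. Zeros at $(1,2)$ or $(2,2)$, however, are scalable, with limits $\begin{pmatrix}5/6&0\\ 1/24&1/8\end{pmatrix}$ and $\begin{pmatrix}17/24&1/8\\ 1/6&0\end{pmatrix}$; these are exactly the $\kappa\to\infty$ and $\kappa\to 0$ endpoints. SK does not finish in one sweep there: with $A_{1,2}=0$ the row-normalized $(2,1)$ entry follows $c\mapsto 7c/(48c+5)$, which converges only linearly, with ratio $5/7$ at $c=1/24$.

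The missing idea is that you never need all four entries to be of constant size. Since $v_1>u_1$ and $u_1+v_1>1$, any iterate with marginal error below a small constant has $A_{1,1}\gtrsim u_1+v_1-1$, $A_{2,1}\gtrsim v_1-u_1$ and $A_{1,2}+A_{2,2}\approx v_2$. It is therefore dense with $\gamma=7/8$ and $\gamma'\in\{5/6,1/6\}$, according to which column-2 entry is $\Omega(1)$. Both choices give $\gamma+\gamma'>1$ however small the other column-2 entry is, and restarting with \Cref{thm-rc-scaling-log} at $n=2$ yields $O(-\log\varepsilon)$ further iterations. This is the paper's second phase.

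For the first phase, the paper targets ``error at most $1/1000$'' rather than entry sizes. It bounds the number of iterations with larger error by $O(-\log\nu(A))$ through a permanent potential on the $48\times 48$ reduced $(\boldsymbol{1},\boldsymbol{1})$-instance:
\begin{itemize}
\item each such iteration multiplies the permanent by a constant factor larger than $1$;
\item the initial permanent is at least $(\nu/336)^{48}$, since scalability gives a perfect matching in the support;
\item a standardized matrix has permanent at most $1$.
\end{itemize}
This sidesteps the $\kappa$-uniform entrywise analysis you identified as the main obstacle.

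Your entrywise route can also be repaired by tracking only the two column-1 entries after each column normalization. For instance, $A^{(k+2)}_{2,1}\ge 7A^{(k)}_{2,1}/(48A^{(k)}_{2,1}+6)$ and $A^{(k+2)}_{1,1}\ge 35A^{(k)}_{1,1}/(48A^{(k)}_{1,1}+6)$. Both are increasing maps with growth factor bounded away from $1$ for small arguments. Once both column-1 entries are $\Omega(1)$, a column-normalized iterate is already dense with $\gamma=7/8$ and $\gamma'\ge 1/6$.
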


Our analysis relies on a two-phase framework that leverages a key structural observation: 
by exploiting the engineered asymmetry of the target marginal distributions, the intermediate scaling matrices are guaranteed to enter a dense regime. This denseness, in turn, triggers local exponential convergence. This property allows us to decouple the algorithm's trajectory into two distinct phases, bounded as follows:

\begin{enumerate}[(1)]
\item \emph{\underline{Structural Property and Local Denseness.}}
First, we construct specific asymmetric target probability vectors $\boldsymbol{u} = (u, 1 - u)$ and $\boldsymbol{v} = (v, 1-v)$ satisfying $v > u > 1/2$ (as instantiated in \Cref{thm-tight-rc-scaling} with $u=5/6$ and $v=7/8$). The purpose of this construction is to guarantee a crucial structural property: any non-negative matrix whose row and column sums are sufficiently close to $\boldsymbol{u}$ and $\boldsymbol{v}$ must be strictly dense.
Specifically, assume the marginal errors of the current matrix are extremely small. Since the sum of the second column is approximately $1-v$, it necessarily follows that $A_{1,2} \lesssim 1-v$. To satisfy the condition that the first row sum approaches $u$, $A_{1,1}$ must have a strictly positive lower bound: $A_{1,1} \gtrsim u - (1-v) = u + v - 1 > 0$. Through similar deductions, we obtain $A_{2,1} \gtrsim v - u > 0$ and $A_{1,2} + A_{2,2} \approx 1-v$. Consequently, once the marginal errors fall below a specific constant threshold, the matrix is guaranteed to be dense. The strictly positive bounds on $A_{1,1}$ and $A_{2,1}$, combined with the strictly positive sum $A_{1,2} + A_{2,2}$, preclude the matrix from degrading into a sparse configuration.

\item \emph{\underline{Phase I: Global Convergence via the Permanent.}}
During the initial iterations of the SK algorithm, the marginal errors can be arbitrarily large. To analyze the algorithm's progress during this phase, we reduce the $(\boldsymbol{u}, \boldsymbol{v})$-scaling problem to a standard $(\boldsymbol{1}, \boldsymbol{1})$-scaling problem. 
Through this reduction, we demonstrate that the permanent of the reduced matrix grows by a constant factor in each iteration. 
Because the initial permanent is lower bounded by $\poly(\nu(A))$ and the permanent of a doubly stochastic matrix has a theoretical upper bound, this large-error phase is guaranteed to terminate in at most $O(-\log \nu(A))$ iterations.

\item \emph{\underline{Phase II: Local Exponential Decay.}}
Once the algorithm advances past the initial $O(-\log \nu(A))$ iterations, the error drops below the aforementioned constant threshold. 
At this stage, the inherent imbalance of the marginal distributions dictates the matrix dynamics, ensuring the intermediate matrices $A^{(k)}$ remain in a dense regime.
Because the SK algorithm exhibits rapid linear convergence on dense matrices, this structural property immediately yields a local exponential decay of the error. 
Therefore, achieving the final $\varepsilon$ accuracy from this point requires only a logarithmic number of additional steps, bounded by $O(-\log \varepsilon)$.
\end{enumerate}

Combining the iteration bounds of these two phases yields the convergence time of $O(-\log \nu(A) - \log \varepsilon)$.

\vspace{0.5cm}

The following lemma is used in the proof of \Cref{thm-tight-rc-scaling}.

\begin{lemma}\label{lem-l-lognu}
Assume the conditions of \Cref{thm-tight-rc-scaling}.
Let $L$ be the minimum integer $k$ such that 
\begin{align*}
\norm{\=r\left(A^{(k)}\right)- \=u}_{1} + \norm{\=c\left(A^{(k)}\right)- \=v}_{1} \leq \frac{1}{1000}.
\end{align*}
Then we have $L = O(-\log \nu (A))$.
\end{lemma}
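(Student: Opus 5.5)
We follow the Phase I outline given just before the statement: pass to an equivalent $(\mathbf{1},\mathbf{1})$-scaling instance and use the permanent as a potential. Since $\boldsymbol{u}=(5/6,1/6)$ and $\boldsymbol{v}=(7/8,1/8)$ become integral after multiplying by $L=24$, \Cref{thm-reduction-precision} (identity \eqref{thm-reduction-precision-condition-eq}) gives an exact correspondence. Let $C=G(A,24\boldsymbol{u},24\boldsymbol{v})$, a $24\times 24$ matrix with $\boldsymbol{u}'=(20,4)$ and $\boldsymbol{v}'=(21,3)$, and let $C^{(0)},C^{(1)},\dots$ be the SK iterates on $(C,(\mathbf{1},\mathbf{1}))$. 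Then the $(\boldsymbol{u},\boldsymbol{v})$-error of $A^{(k)}$ equals $\Delta^{(k)}/24$, where $\Delta^{(k)}=\norm{\boldsymbol{r}(C^{(k)})-\mathbf{1}}_1+\norm{\boldsymbol{c}(C^{(k)})-\mathbf{1}}_1$. It therefore suffices to show that $\Delta^{(k)}\le 24/1000$ for some $k=O(-\log\nu(A))$. Because the dimension is the constant $N=24$, all dimension-dependent constants are absolute.

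The potential argument has three steps.
\begin{enumerate}[(i)]
\item \emph{Upper bound.} By \eqref{eq-ub-ri}--\eqref{eq-increase-perm-ci} the permanent is nondecreasing along the iterates. Every $C^{(k)}$ is standardized with entries in $[0,1]$ (\Cref{fact-c-a-a0-a1}), so $\perman(C^{(k)})\le \prod_i r_i(C^{(k)})\le 1$ when rows are normalized, and similarly for columns.
\item \emph{Per-step growth.} While $\Delta^{(k)}>24/1000$, the sums being renormalized deviate from $1$ by a total $\ell_1$ amount at least a constant $c_0$. Their product is at most $1$ by AM--GM, and a quantitative AM--GM bound (the same estimate behind \Cref{lemma-condition-upperbound}) gives $\prod_i r_i^{-1}\ge \exp(c_1\Delta^2/N)$. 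Hence each such step multiplies $\perman$ by at least a constant $1+c>1$.
\item \emph{Initial lower bound.} Since $A$ is $(\boldsymbol{u},\boldsymbol{v})$-scalable, we need $\perman(C^{(0)})\ge \poly(\nu(A))$. Entries of $C^{(0)}$ are $A_{i,j}/(r_i(A)\,v'_j)$ up to constants, so each positive entry is at least $\nu(A)\cdot\max_{i,j}(A_{i,j}/r_i(A))/21\ge \nu(A)/42$, using that the row-normalized maximum is at least $1/2$. Scalability means the support of $A$ admits a positive transportation plan with marginals $\boldsymbol{u},\boldsymbol{v}$. After splitting, this plan yields a perfect matching in the support of $C$: the split matrix of the scaled limit is doubly stochastic with the same support, so by Birkhoff it contains a permutation. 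Thus $\perman(C^{(0)})\ge(\nu(A)/42)^{24}$.
\end{enumerate}

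Combining (i)--(iii), the number of steps with $\Delta^{(k)}>24/1000$ before the first good step is at most $\log_{1+c}\bigl((42/\nu(A))^{24}\bigr)=O(-\log\nu(A))$, since $L$ is defined as the first index at which the error drops below $1/1000$.

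The main obstacle is step (ii): making sure the growth factor is a genuine constant for the error measure at hand. Only the sums being renormalized at the next step are off, since the others are exactly $1$. Because $N=24$, a total $\ell_1$ deviation of $c_0$ forces some coordinate to deviate by at least $c_0/24$, and then $\prod_i x_i\le 1-\Omega(c_0^2)$ whenever $\sum_i x_i=N$; this follows from strict concavity of $\log$. A secondary point is checking in (iii) that the limit is strictly positive on a support containing a permutation, which follows from scalability as argued.
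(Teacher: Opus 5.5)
Your proposal is correct and follows the same route as the paper. Both arguments pass to the exact split $(\mathbf{1},\mathbf{1})$-instance (the paper splits with $48\boldsymbol{u}=(40,8)$, $48\boldsymbol{v}=(42,6)$, you with the smaller factor $24$), cap the permanent by $1$, bound the initial permanent below by $\poly(\nu(A))$ using the perfect matching that scalability forces in the support, and show a constant-factor permanent gain at every step whose error exceeds $1/1000$. The differences are cosmetic: you get the per-step factor from strict concavity of $\log$ with $N$ constant, where the paper uses the block structure to reduce to $f(x)=42\ln(1+x)+6\ln(1-7x)$; and you bound the positive entries $C^{(0)}_{i',j'}=A_{i,j}/(r_i(A)\,v'_j)$ directly by $\nu(A)/42$, which is cleaner than the paper's detour through $\tau_-/\tau_+$.
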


\begin{proof}
Let $t = 1/1000$, $\nu \triangleq \nu(A)$, and define the constant 
$$\alpha = \min\left\{\left(1-\frac{4t}{7}\right)^{42}(1+4t)^6, \left(1-\frac{3t}{5}\right)^{40}(1+3t)^8\right\}.$$ We then define $S$ as the set of iteration indices satisfying
\begin{align}\label{eq-def-k-counterexample}
S = \left\{k\geq 0 \;\middle|\; \norm{\=r\left(A^{(k)}\right)- \=u}_{1} + \norm{\=c\left(A^{(k)}\right)- \=v}_{1} \geq t\right\}.
\end{align}
We have $L \leq \abs{S}$.
Thus, to prove \Cref{lem-l-lognu}, it is sufficient to prove 
\[S \leq 
\frac{50\cdot(\log 336 - \log \nu)}{\log \alpha} = O(-\log \nu).\]
Assume for contradiction that 
\begin{align}\label{eq-lb-K}
\abs{S}> 
\frac{50\cdot(\log 336 - \log \nu)}{\log \alpha} = \log_{\alpha} \left(\frac{336}{\nu}\right)^{50}.
\end{align}

Define 
\[
\=a = (a_1,a_2) \triangleq 48\cdot \=u = (40,8),\quad \=b = (b_1,b_2) \triangleq 48\cdot \=v = (42,6).
\]
Let $B$ be a matrix of size $48 \times 48$ where 
\begin{align*}
\forall i\leq 40,j\leq 42,\quad \quad B_{i,j} \triangleq \frac{A_{1,1}}{a_1b_1}; \quad \quad &\quad \quad \forall i\leq 40, 42<j\leq 48,\quad \quad B_{i,j} \triangleq \frac{A_{1,2}}{a_1b_2};\\
\forall 40<i\leq 48, j\leq 42,\quad \quad B_{i,j} \triangleq \frac{A_{2,1}}{a_2b_1}; \quad \quad &
\forall 40 <i\leq 48, 42<j\leq 48,\quad \quad B_{i,j} \triangleq \frac{A_{2,2}}{a_2b_2}.
\end{align*}
Hence,
\begin{align*}
\forall i\leq 42, \quad &r_i\left(B\right) = \frac{r_1(A)}{a_1},\\
\forall  40<i\leq 48,\quad  &r_i\left(B\right) = \frac{r_2(A)}{a_2}.
\end{align*}
We have 
\begin{equation}\label{eq-minaijlargerthanzero-maxinaijoverria}
\begin{aligned}
\min_{A_{i,j} > 0} \left( A_{i,j} / r_i(A) \right) &\leq \min_{B_{i,j} > 0} \left( B_{i,j} / r_i(B) \right)\cdot \max\{b_1,b_2\} = 42\min_{B_{i,j} > 0} \left( B_{i,j} / r_i(B) \right).\\
\max_{i,j} \left( A_{i,j} / r_i(A) \right)&\geq \max_{i,j} \left( B_{i,j} / r_i(B) \right)\cdot \min\{b_1,b_2\} = 6 \max_{i,j} \left( B_{i,j} / r_i(B) \right).
\end{aligned}    
\end{equation}
Let $B^{(0)}, B^{(1)},\dots$ denote the sequence of matrices generated by SK with $(B,(\=1,\=1))$ as input.
Define $\tau_{-}$ and $\tau_{+}$ as the minimum nonzero elements and the maximum elements in $B$, respectively.
By \eqref{eq-minaijlargerthanzero-maxinaijoverria}, we have
\begin{align}\label{eq-relation-nu-tau}
\nu = \frac{\min_{A_{i,j} > 0} \left( A_{i,j} / r_i(A) \right)}{\max_{i,j} \left( A_{i,j} / r_i(A) \right)} \leq \frac{42\min_{B_{i,j} > 0} \left( B_{i,j} / r_i(B) \right)}{6 \max_{i,j} \left( B_{i,j} / r_i(B) \right)} =  \frac{7\tau_{-}}{\tau_{+}}. 
\end{align}
Moreover, 
by \Cref{thm-reduction-precision} we have
\begin{align}\label{eq-a-b-identicalerror}
\forall k\geq 0,\quad \norm{\=r\left(A^{(k)}\right)- \boldsymbol{u}}_{1} + \norm{\=c\left(A^{(k)}\right)- \boldsymbol{v}}_{1} = \frac{1}{48}\left(\norm{\=r\left(B^{(k)}\right)- \boldsymbol{1}}_{1} + \norm{\=c\left(B^{(k)}\right)- \boldsymbol{1}}_{1}\right).
\end{align}
Therefore, by \eqref{eq-def-k-counterexample} we have 
\begin{align}\label{eq-condition-thm-main-lb-rc-ab-b}
S = \left\{k\geq 2 \mid \norm{\=r\left(B^{(k)}\right)- \=1}_{1} + \norm{\=c\left(B^{(k)}\right)- \=1}_{1} \geq 48t\right\}.
\end{align}
Furthermore, one can verify that each matrix $B^{(k)}$ where $k\geq 0$ can be partitioned into $2\times 2$ blocks, with all entries within each block being identical. 
For each $i,j\leq 2$,
the $(i,j)$-block of $B^{(k)}$ has size \(a_i\times b_j\), and all its entries are denoted by \(\theta^{(k)}_{i,j}\).
Thus, we have 
\begin{align}
\forall i,j\leq 40 \text{ or } 40<i,j\leq 48, \quad &r_i\left(B^{(k)}\right) = r_{j}\left(B^{(k)}\right),\label{eq-rowsum-eq}\\
\forall i,j\leq 42 \text{ or } 42<i,j\leq 48, \quad &c_i\left(B^{(k)}\right) = c_{j}\left(B^{(k)}\right).\label{eq-columnsum-eq}
\end{align}

We claim that for each even $k\in S$, 
\begin{align}\label{eq-up-prodcolumnsum-counterexample}
\prod_{j\leq 48}c_j\left(B^{(k)}\right)\leq \left(1-\frac{4t}{7}\right)^{42}(1+4t)^6.
\end{align}
Similarly, for each odd $k\in S$, 
\begin{align}\label{eq-up-prodrowsum-counterexample}
\prod_{i\leq 48}r_i\left(B^{(k)}\right)\leq \left(1-\frac{3t}{5}\right)^{40}(1+3t)^8.
\end{align}
Let $\ell$ be the maximum number in $S$. 
By \eqref{eq-increase-perm-ri} and \eqref{eq-increase-perm-ci}, we have 
\begin{align*}
\perman\left(B^{(\ell+1)}\right) &= \perman\left(B^{(0)}\right)\left(\prod_{k = 0 }^{\lfloor\ell /2\rfloor}\prod_{i\leq 48} c_i^{-1}\left(B^{(2k)}\right)\right)\left(\prod_{k = 0}^{\lfloor(\ell -1) /2\rfloor}\prod_{i\leq 48} r_i^{-1}\left(B^{(2k+1)}\right)\right).
\end{align*}
Combined with \eqref{eq-ub-ri} and \eqref{eq-ub-ci},
we have 
\begin{align*}
\perman\left(B^{(\ell+1)}\right) &\geq \perman\left(B^{(0)}\right)\left(\prod_{k:2k\in S}\prod_{i\leq 48} c_i^{-1}\left(B^{(2k)}\right)\right)\left(\prod_{k:2k+1\in S}\prod_{i\leq 48} r_i^{-1}\left(B^{(2k+1)}\right)\right).
\end{align*}
Combined with \eqref{eq-up-prodcolumnsum-counterexample} and \eqref{eq-up-prodrowsum-counterexample},
we have 
\begin{align}\label{eq-perm-alplusone}
\perman\left(B^{(\ell+1)}\right) &\geq \perman\left(B^{(1)}\right)\cdot \alpha^{
\abs{S}}.
\end{align}

Moreover, by $A$ is $(\=u,\=v)$-scalable and \eqref{eq-a-b-identicalerror}, we have 
$B$ is $(\=1,\=1)$-scalable.
Thus, we have $\perman(B)>0$.
Otherwise, $\perman(B)= 0$.
By the definition of $(\=1,\=1)$-scalable, we have 
there are positive diagonal \(X,Y\) such that \(XBY\) is double stochastic.
Combined with $\perman(B)= 0$, we have
$\perman(XBY) = 0$,
which is contradictory with \Cref{lem-perm-double-stochastic} and that \(XBY\) is double stochastic.
Since $\perman(B)>0$, 
it immediately follows that there exist $48$ non-zero entries in $B$, no two of which share a row or a column.
Recall that $\tau_{-},\tau_{+}$ are the minimum nonzero elements and the maximum elements in $B$, respectively.
Hence, there exist $48$ entries no less than $\tau_{-}$ in $B$, no two of which share a row or a column.
Thus,
\[\perman\left(B^{(0)}\right) \geq \frac{\tau_{-}^{48}}{\prod_{i\in 48}r_i\left(B\right)} \geq \frac{\tau_{-}^{48}}{\prod_{i\in 48}48\cdot \tau_{+}} = 
\left(\frac{\nu}{336}\right)^{48},\]
where the last equality is by \eqref{eq-relation-nu-tau}.
Combined with \eqref{eq-lb-K} and \eqref{eq-perm-alplusone},
we have 
\begin{align*}
\perman\left(B^{(\ell+1)}\right) >\left(\frac{\nu}{336}\right)^{48} \cdot \left(\frac{336}{\nu}\right)^{50} > 1.
\end{align*}
However, by \Cref{fact-c-a-a0-a1} we have either $c_i\left(B^{(\ell+1)}\right) = 1$ for each $i\in [n]$, or $r_i\left(A^{(L+1)}\right) = 1$ for each $i\in [n]$.
Hence, we have $\perman\left(B^{(\ell+1)}\right) \leq 1$, a contradiction.
Thus, we have 
\begin{align*}
\abs{S}\leq 
\frac{50\cdot(\log 336 - \log \nu)}{\log \alpha} = O(-\log \nu) .
\end{align*}

Finally, we establish \eqref{eq-up-prodcolumnsum-counterexample} and \eqref{eq-up-prodrowsum-counterexample}, which completes the proof of the lemma. We prove only \eqref{eq-up-prodcolumnsum-counterexample} here; the proof of \eqref{eq-up-prodrowsum-counterexample} is analogous.
Assume without loss of generality that $k$ is even.
By \Cref{fact-c-a-a0-a1}, we have $r_i\left(B^{(k)}\right) = 1$ for each $i\leq 48$.
Combined with $k\in S$ and \eqref{eq-condition-thm-main-lb-rc-ab-b}, we have 
\begin{align*}
\norm{\=c\left(B^{(k)}\right)- \=1}_{1} = \sum_{j\leq 48}\abs{ c_j\left(B^{(k)}\right)- 1} \geq 48t.
\end{align*}
Combined with \eqref{eq-columnsum-eq},
we have 
\begin{align}\label{eq-sum-absx}
42\abs{ c_1\left(B^{(k)}\right)- 1} + 6\abs{ c_{43}\left(B^{(k)}\right)- 1} \geq 48t.
\end{align}
Let 
$x = c_1\left(B^{(k)}\right)- 1$.
In addition, by
\[ 42c_1\left(B^{(k)}\right) + 6c_{43}\left(B^{(k)}\right) = \sum_{j\leq 48}c_j\left(B^{(k)}\right) = \sum_{i \leq 48}r_i\left(B^{(k)}\right) = 48,\]
we have 
$c_{43}\left(B^{(k)}\right)- 1 = -7x$.
Combined with \eqref{eq-sum-absx},
we have $\abs{x}\geq 4t/7$.
In addition, by \eqref{eq-columnsum-eq} we have 
\begin{align*}
\prod_{j\leq 48}c_j\left(B^{(k)}\right) = \left(c_1\left(B^{(k)}\right)\right)^{42}\cdot \left(c_{43}\left(B^{(k)}\right)\right)^{6} = (1+x)^{42}\cdot (1 - 7x)^6.
\end{align*}
Let 
\begin{align}\label{eq-definition-fx-prod-columnsum}
f(x) = \ln \left(\prod_{j\leq 48}c_j\left(B^{(k)}\right) \right) = \ln \left((1+x)^{42}\cdot (1 - 7x)^6\right) = 42\ln (1+x)+ 6\ln(1-7x).
\end{align}
We have its derivative is 
\begin{align*}
f'(x) = \frac{42}{1+x} - \frac{42}{1-7x} = \frac{-336x}{(1+x)(1-7x)}.
\end{align*}
Assume $x\in (-1,1/7)$. 
We have $f'(x)>0$ if and only if $x<0$.
Thus, $f(x)$ takes its maximum value if $|x|$ is minimized.
Combined with $\abs{x}>4t/7$ and $t = 1/1000$,
we have $f(x)\leq \max\{f(4t/7),f(-4t/7)\}$.
Combined with \eqref{eq-definition-fx-prod-columnsum},
we have 
\begin{align*}
\forall x\in \left(-1,\frac{1}{7}\right),\quad \prod_{j\leq 48}c_j\left(B^{(k)}\right) &\leq \max
\left
\{\left(1+\frac{4t}{7}\right)^{42}(1-4t)^6, \left(1-\frac{4t}{7}\right)^{42}\left(1+4t\right)^6\right\}\\
& =\left(1-\frac{4t}{7}\right)^{42}(1+4t)^6.
\end{align*}
In addition, by
\begin{align*}
x = c_1\left(B^{(k)}\right)- 1,\quad c_{43}\left(B^{(k)}\right)- 1 = -7x, \quad c_1\left(B^{(k)}\right)>0, \quad c_{43}\left(B^{(k)}\right)>0,
\end{align*}
we have 
$x\in (-1,1/7)$.
Thus, we have 
\begin{align*}
\prod_{j\leq 48}c_j\left(B^{(k)}\right) \leq \left(1-\frac{4t}{7}\right)^{42}(1+4t)^6,
\end{align*}
which finishes the proof of \eqref{eq-up-prodcolumnsum-counterexample}.
The lemma is immediate.

\end{proof}

Now we can prove \Cref{thm-tight-rc-scaling}.
\begin{proof}[Proof of \Cref{thm-tight-rc-scaling}.]
Let $t = 1/1000$.
Let $L$ be the minimum integer $k$ such that 
\begin{align}\label{eq-lem-tight-rc-scaling-error-ub-t}
\norm{\=r\left(A^{(k)}\right)- \=u}_{1} + \norm{\=c\left(A^{(k)}\right)- \=v}_{1} \leq t.
\end{align}
Thus, we have 
\[
\max_{i\leq 2 ,j\leq 2}A_{i,j} \leq 1+ t.
\]
Otherwise,
\[
\norm{\=r\left(A^{(L)}\right)- \=u}_{1} + \norm{\=c\left(A^{(L)}\right)- \=v}_{1} \geq \norm{\=r\left(A^{(L)}\right)- \=u}_{1} \geq \max_{i\leq 2 ,j\leq 2}A_{i,j}  - 1 \geq t.
\]
Thus, we have 
\[ \max_{i\leq 2 ,j\leq 2}\frac{A_{i,j}}{u_i\cdot v_j}\leq 6\times 8 \times (1+t) = 48(1+t).\]
We further claim that
\begin{align}\label{eq-claim-elementrange}
A^{(L)}_{1,1} \geq \frac{1}{5},\quad A^{(L)}_{2,1} \geq \frac{1}{50},\quad A^{(L)}_{1,2} + A^{(L)}_{2,2} \geq 1/20.
\end{align}
By $A^{(L)}_{1,2} + A^{(L)}_{2,2} \geq 1/20$,
we have either $A^{(L)}_{1,2} \geq 1/40$ or $A^{(L)}_{2,2} \geq 1/40$.
\begin{itemize}
\item If $A^{(L)}_{1,2} \geq 1/40$, 
by $A^{(L)}_{1,1} \geq 1/5$, $A^{(L)}_{2,1} \geq 1/50$,
we have for each $(s,t) \in \{(1,1),(2,1),(1,2)\}$,
\[
\frac{A^{(L)}_{s,t}}{u_s\cdot v_t} \geq \frac{1}{10000}\cdot \max_{i\leq 2 ,j\leq 2}\frac{A_{i,j}}{u_i\cdot v_j}.
\]
Combined with \Cref{def-gamma-rho-dense},
we have $A^{(L)}$ is $\left(\frac{7}{8},\frac{5}{6},10^{-4}\right)$-dense.
\item If $A^{(L)}_{2,2} \geq 1/40$, 
by $A^{(L)}_{1,1} \geq 1/5$, $A^{(L)}_{2,1} \geq 1/50$,
we have for each $(s,t) \in \{(1,1),(2,1),(2,2)\}$,
\[
\frac{A^{(L)}_{s,t}}{u_s\cdot v_t} \geq \frac{1}{10000}\cdot \max_{i\leq 2 ,j\leq 2}\frac{A_{i,j}}{u_i\cdot v_j}.
\]
Combined with \Cref{def-gamma-rho-dense},
we have $A^{(L)}$ is $\left(\frac{7}{8},\frac{1}{6},10^{-4}\right)$-dense.
\end{itemize}
In both cases, by \Cref{thm-rc-scaling-log} we have 
\[\norm{\=r\left(A^{L+T}\right)- \boldsymbol{u}}_{1} + \norm{\=c\left(A^{(L+T)}\right)- \boldsymbol{v}}_{1} \leq \varepsilon\]
for some $T = O(-\log \varepsilon)$.
Furthermore, by \Cref{lem-l-lognu} we have $L = O(-\log \nu(A))$.
By the definition of $K$ in the theorem,
we have 
\[K \leq L + T= O(-\log \nu(A) -\log \varepsilon).\]

In the following, we prove \eqref{eq-claim-elementrange}. Then the theorem is immediate.
To prove \eqref{eq-claim-elementrange}, it is sufficient to prove $A^{(L)}_{1,1} \geq 1/5$ and $A^{(L)}_{1,2} + A^{(L)}_{2,2} \geq 1/20$.
The proof of $A^{(L)}_{2,1} \geq 1/50$ is analogous.
At first, we prove $A^{(L)}_{1,1} \geq 1/5$.
Assume for contradiction that $A^{(L)}_{1,1} < 1/5$.
By \eqref{eq-lem-tight-rc-scaling-error-ub-t} we have
\begin{align*}
t &\geq \norm{\=r\left(A^{(L)}\right)- \=u}_{1} + \norm{\=c\left(A^{(L)}\right)- \=v}_{1} \geq r_2\left(A^{(L)}\right)- \frac{1}{6}.
\end{align*}
Thus, 
\begin{align*}
r_2\left(A^{(L)}\right)\leq \frac{1}{6}+t.
\end{align*}
Similarly, we also have 
\begin{align*}
c_2\left(A^{(L)}\right)\leq \frac{1}{8}+t.
\end{align*}
Thus, we have
\begin{align*}
\sum_{i\leq 2,j\leq 2}A^{(L)}_{i,j} &\leq A^{(L)}_{1,1}+ r_2\left(A^{(L)}\right)  + c_2\left(A^{(L)}\right) < \frac{1}{5} + \frac{1}{6}+t + \frac{1}{8}+t.
\end{align*}
Therefore,
\begin{align*}
&\quad\norm{\=r\left(A^{(L)}\right)- \=u}_{1} + \norm{\=c\left(A^{(L)}\right)- \=v}_{1} \geq \norm{\=r\left(A^{(L)}\right)- \=u}_{1} \geq u_1+u_2 - \sum_{i\leq 2}r_i\left(A^{(L)}\right)\\
&= 1 - \sum_{i\leq 2,j\leq 2}A^{(L)}_{i,j} \geq 1 - \left(\frac{1}{5} + \frac{1}{6}+t + \frac{1}{8}+t\right) >t,
\end{align*}
which is contradictory with \eqref{eq-lem-tight-rc-scaling-error-ub-t}.
Thus, we have $A^{(L)}_{1,1} \geq 1/5$.
In the following, we prove $A^{(L)}_{1,2} + A^{(L)}_{2,2} \geq 1/20$.
Assume for contradiction that 
$A^{(L)}_{1,2} + A^{(L)}_{2,2} < 1/20$.
By \eqref{eq-lem-tight-rc-scaling-error-ub-t} we have
\begin{align*}
t &\geq \norm{\=r\left(A^{(L)}\right)- \=u}_{1} + \norm{\=c\left(A^{(L)}\right)- \=v}_{1} \geq c_1\left(A^{(L)}\right) - \frac{7}{8}.
\end{align*}
Thus, 
\begin{align*}
c_1\left(A^{(L)}\right) \leq \frac{7}{8} + t.
\end{align*}
Hence,
\begin{align*}
&\quad \sum_{i\leq 2,j\leq 2}A^{(L)}_{i,j} = A^{(L)}_{1,2} + A^{(L)}_{2,2} + c_1\left(A^{(L)}\right) \leq \frac{1}{20} + \frac{7}{8} + t.
\end{align*}
Therefore,
\begin{align*}
&\quad\norm{\=r\left(A^{(L)}\right)- \=u}_{1} + \norm{\=c\left(A^{(L)}\right)- \=v}_{1} \geq \norm{\=r\left(A^{(L)}\right)- \=u}_{1} \geq u_1+u_2 - \sum_{i\leq 2}r_i\left(A^{(L)}\right)\\
&= 1 - \left(\frac{1}{20} + \frac{7}{8} + t\right) >t,
\end{align*}
which is contradictory with \eqref{eq-lem-tight-rc-scaling-error-ub-t}.
Thus, we have $A^{(L)}_{1,2} + A^{(L)}_{2,2} > \frac{1}{20}$.
In summary, we have proved \eqref{eq-claim-elementrange}. This completes the proof of the theorem.
\end{proof}

\subsection{On the threshold of phase transition}
In this subsection, we prove \Cref{thm-lower-bound-main-rc-threshould-tight}.

Observe that for $\={u} = \={v} = (1/2, 1/2)$, the feasible set with respect to $(\={u}, \={v})$ is non-empty, as it clearly contains the pair $(1/2, 1/2)$. Consequently, \Cref{thm-lower-bound-main-rc-threshould-tight} follows as a direct corollary of the following result.

\begin{theorem}
Fix $\varepsilon > 0$ and let $\={u} = \={v} = (1/2, 1/2)$. Let $A$ be an arbitrary $(\={u}, \={v})$-scalable matrix, and let $A^{(0)}, A^{(1)}, \dots$ denote the sequence of matrices generated by the SK algorithm on input $(A, (\={u}, \={v}))$.
Then there exists some $K = O(1/\varepsilon)$ such that
\begin{align*}
\norm{\=r\left(A^{(K)}\right)- \boldsymbol{u}}_{1} + \norm{\=c\left(A^{(K)}\right)- \boldsymbol{v}}_{1} \leq \varepsilon.
\end{align*}
\end{theorem}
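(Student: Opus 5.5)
The plan is to reduce to $(\boldsymbol{1},\boldsymbol{1})$-scaling of a $2\times 2$ matrix and then track a single scalar error through an explicit one-step recursion. That recursion will be independent of $\nu(A)$ and of the degenerate structure of $A$.

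\textbf{Step 1 (reduction).} Take $L=2$, so that $L\boldsymbol{u}=L\boldsymbol{v}=(1,1)$ are integer vectors. By \eqref{thm-reduction-precision-condition-eq} in \Cref{thm-reduction-precision}, the $(\boldsymbol{u},\boldsymbol{v})$-error of $A^{(k)}$ equals exactly $\tfrac12$ times the $(\boldsymbol{1},\boldsymbol{1})$-error of the $k$-th SK iterate on $G(A,(1,1),(1,1))=A$, viewed as a $2\times 2$ doubly stochastic scaling instance. It therefore suffices to show that for any $(\boldsymbol{1},\boldsymbol{1})$-scalable $2\times2$ matrix $A$, the $\ell_1$ error after $k$ iterations is $O(1/k)$.

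\textbf{Step 2 (the recursion).} Consider an even iterate, which is row-normalized. Write it as $\begin{pmatrix}p&1-p\\ q&1-q\end{pmatrix}$ with $p,q\in[0,1]$, and set $e\triangleq p+q-1$. Then the column sums are $1+e$ and $1-e$, so the total error is $2|e|$.
\begin{itemize}
\item If $|e|=1$, the matrix has a zero column. This contradicts scalability: scaled matrices keep the zero pattern, and a doubly stochastic matrix has no zero column. So $|e|<1$.
\item Column normalization gives entries $p/(1+e)$, $q/(1+e)$, $(1-p)/(1-e)$, $(1-q)/(1-e)$. A direct computation gives $r_1-1=e(q-p)/(1-e^2)=-(r_2-1)$.
\item From $p,q\in[0,1]$ and $p+q=1+e$ we get $|q-p|\le 1-|e|$. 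Hence the new row deviation $e'$ satisfies $|e'|\le |e|/(1+|e|)$, that is, $1/|e'|\ge 1/|e|+1$.
\end{itemize}
The symmetric computation handles the passage from an odd (column-normalized) iterate to the next row-normalized one. Iterating, after $k$ steps the deviation satisfies $|e_k|\le 1/k$, and the total error is at most $2/k$. Choosing $K=\lceil 4/\varepsilon\rceil$ or so, and multiplying by $\tfrac12$ from Step 1, gives error at most $\varepsilon$.

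\textbf{Main obstacle.} The delicate part is Step 2's bookkeeping rather than any deep estimate. First, the inequality $|q-p|\le 1-|e|$ must hold at every iterate, including when some entries are $0$; this is where the permutation-support case, e.g.\ diagonal $A$, must be covered, and there convergence is immediate. Second, the recursion must be applied symmetrically for the row step, using the transposed parametrization. I would first verify the identity $r_1-1=e(q-p)/(1-e^2)$ and the constraint $|q-p|\le 1-|e|$ carefully. Once these hold, the $O(1/\varepsilon)$ bound follows from the harmonic-type recursion with no dependence on $\nu(A)$.
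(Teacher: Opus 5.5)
Your proof is correct and is essentially the paper's argument. Both use the explicit $2\times 2$ SK map together with the box constraint on the entries to get a harmonic-type recursion: your $|q-p|\le 1-|e|$ is exactly the paper's $|s|\le \tfrac12(1-\Delta^{(k)})$ after rescaling by $2$. Applying your half-step bound $|e'|\le |e|/(1+|e|)$ twice reproduces the paper's two-step bound $\Delta^{(k+2)}\le \Delta^{(k)}/(1+2\Delta^{(k)})$. Your version is a bit more economical, since it avoids the full two-step algebra and also controls the odd iterates.

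One small point: with $L=2$ one has $R(\boldsymbol{u},\boldsymbol{v},2)=1$, whereas the setup of \Cref{thm-reduction-precision} assumes $R\ge 2$. So instead of citing that theorem, simply note that for $\boldsymbol{u}=\boldsymbol{v}=(1/2,1/2)$ the SK iterates are exactly half of the $(\boldsymbol{1},\boldsymbol{1})$ iterates on the same $A$, which gives the factor $\tfrac12$ directly.
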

\begin{proof}
For simplicity, define 
\[
\forall k\geq 0,\quad \Delta^{(k)} \triangleq\norm{\=r\left(A^{(k)}\right)- \boldsymbol{u}}_1 + \norm{\=c\left(A^{(k)}\right)- \boldsymbol{v}}_1.
\]
For any even $k$, assume that for some $p,q\in [0,1/2]$,
\begin{align}
A^{(k)}_{1,1} = p, \quad A^{(k)}_{1,2} = \frac{1}{2}- p,\quad A^{(k)}_{2,1} = q, \quad A^{(k)}_{2,2} = \frac{1}{2}- q.
\end{align}
Also assume without loss of generality that $p+q \geq 1/2$.
Then we have
\begin{align}
\Delta^{(k)} = 2(p+ q) - 1.
\end{align}
In addition, one can verify that
\begin{align*}
A^{(k+1)}_{1,1} = \frac{p}{2(p+q)}, \quad A^{(k+1)}_{1,2} = \frac{1- 2p}{4(1 -(p+q))},\quad A^{(k+1)}_{2,1} = \frac{q}{2(p+q)}, \quad A^{(k+1)}_{2,2} = \frac{1- 2q}{4(1 -(p+q))}.
\end{align*}
Thus,
\begin{align}
r_1\left(A^{(k+1)}\right) = \frac{p}{1 + \Delta^{(k)} } +  \frac{1- 2p}{2\left(1 - \Delta^{(k)}\right) }, \quad \quad r_2\left(A^{(k+1)}\right) = \frac{q}{1 + \Delta^{(k)} } +  \frac{1- 2q}{2\left(1 - \Delta^{(k)}\right)}.
\end{align}
Hence,
\begin{align*}
A^{(k+2)}_{1,1} = \frac{p\left(1 - \Delta^{(k)}\right)}{1+(1-4p)\Delta^{(k)}}, \quad \quad A^{(k+2)}_{2,1} = \frac{q\left(1 - \Delta^{(k)}\right)}{1+(1-4q)\Delta^{(k)}}.
\end{align*}
Therefore,
\begin{align*}
A^{(k+2)}_{1,1} + A^{(k+2)}_{2,1} - \frac{1}{2} &=  \frac{p\left(1 - \Delta^{(k)}\right)}{1+(1-4p)\Delta^{(k)}} + \frac{q\left(1 - \Delta^{(k)}\right)}{1+(1-4q)\Delta^{(k)}} - \frac{1}{2}.
\end{align*}
Let $s \triangleq p - q$.
We have $1 - 4p = -\Delta^{(k)} - 2s$, $1 - 4q = -\Delta^{(k)} + 2s$. 
We have 
\begin{align}\label{eq-upperbound-positive-denominator}
\left(1 - \left(\Delta^{(k)}\right)^2\right)^2 - 4s^2\left(\Delta^{(k)}\right)^2 = \left(1+(1-4p)\Delta^{(k)}\right)\left(1+(1-4q)\Delta^{(k)}\right)>0.
\end{align}
Similarly, 
\begin{align}\label{eq-akplustwo-11-akplustwo-21}
A^{(k+2)}_{1,1} + A^{(k+2)}_{2,1} - \frac{1}{2} &= \frac{2s^2\Delta^{(k)}}{\left(1 - \left(\Delta^{(k)}\right)^2\right)^2 - 4s^2\left(\Delta^{(k)}\right)^2}.
\end{align}
Therefore, by 
\[A^{(k+2)}_{1,1} + A^{(k+2)}_{1,2} + A^{(k+2)}_{2,1} + A^{(k+2)}_{2,2} = \frac{1}{2} + \frac{1}{2} = 1.\]
We have 
\[\abs{A^{(k+2)}_{1,1} + A^{(k+2)}_{2,1} - \frac{1}{2}} = \abs{A^{(k+2)}_{1,2} + A^{(k+2)}_{2,2} - \frac{1}{2}}.\]
Hence,
\begin{align*}
\Delta^{(k+2)} &=  \abs{A^{(k+2)}_{1,1} + A^{(k+2)}_{2,1} - \frac{1}{2}} + \abs{A^{(k+2)}_{1,2} + A^{(k+2)}_{2,2} - \frac{1}{2}} = \frac{4s^2\Delta^{(k)}}{\abs{\left(1 - \left(\Delta^{(k)}\right)^2\right)^2 - 4s^2\left(\Delta^{(k)}\right)^2}} \\
&= \frac{4s^2\Delta^{(k)}}{\left(1 - \left(\Delta^{(k)}\right)^2\right)^2 - 4s^2\left(\Delta^{(k)}\right)^2},
\end{align*}
where the last equality is by \eqref{eq-upperbound-positive-denominator}.
In addition, by $p,q\leq 1/2$ we have
\[
\abs{s} = \abs{p - q} \leq \abs{ \frac{1}{2} - \left(p+q - \frac{1}{2}\right)} = 
\abs{ \frac{1}{2} - \left(\frac{1 + \Delta^{(k)}}{2} - \frac{1}{2}\right)} =  \frac{1}{2} \abs{ 1- \Delta^{(k)}}.
\]
Thus, we have
\begin{align*}
\Delta^{(k+2)} &\leq \frac{\left(1 - \Delta^{(k)}\right)^2\Delta^{(k)}}{\left(1 - \left(\Delta^{(k)}\right)^2\right)^2 - \left(1 - \Delta^{(k)}\right)^2\left(\Delta^{(k)}\right)^2} = \frac{\left(1 - \Delta^{(k)}\right)^2\Delta^{(k)}}{\left(1 - \Delta^{(k)}\right)^2\left(\left(1 + \Delta^{(k)}\right)^2 - \left(\Delta^{(k)}\right)^2\right)}\\
\\ &= \frac{\Delta^{(k)}}{1 + 2\Delta^{(k)}}.
\end{align*}
Hence, if $\Delta^{(k)}>0$, we have
\begin{align*}
\frac{1}{\Delta^{(k+2)}} \geq \frac{1}{\Delta^{(k)}} + 2.
\end{align*}
Hence,
\begin{align*}
\frac{1}{\Delta^{(k+2)}} \geq  \frac{1}{\Delta^{(0)}} + 2+k.
\end{align*}
Therefore,
\begin{align*}
\Delta^{(k+2)} \leq  \frac{1}{ k+2}.
\end{align*}
By setting $K = 2\lceil 1/(2\varepsilon)\rceil$, we obtain $\Delta^{(K)} \leq \varepsilon$.
The theorem is proved.
\end{proof}

\newpage

%% file: MissingProofs.tex
% \section{Proof of Fact 2.5}\label{appendix-fact}
% \hktodo{whether fact 2.5 is used in the proof}

% \Cref{item-first-fact-a0-a1} of this lemma is immediate.
% \Cref{item-forth-fact-c-a-a0-a1} of this lemma is immediate by Item \ref{item-second-fact-c-a-a0-a1} and \Cref{lem-monotone-rsum-csum}.
% Here, we only prove \Cref{item-second-fact-c-a-a0-a1}.
% Since $A$ is $(\gamma,\rho)$-dense,
% we have
% \begin{align}\label{eq-lower-upper-bound-r-c}
% \forall i,j\in [n], \quad \rho\lceil\gamma n\rceil \leq  r_i(A),c_i(A) \leq n.
% \end{align}
% Thus 
% \begin{align*}
% \forall i,j\in [n], \quad 0 \leq A^{(0)}_{i,j} = \frac{A_{i,j}}{r_i(A)} \leq \frac{1}{\rho\lceil\gamma n\rceil} \leq \frac{1}{\rho\gamma n}.
% \end{align*}
% In addition, by $r_i(A)\leq n$ for each $i\in [n]$,
% we have $A^{(0)}_{i,j}\geq \rho/n$ if $A_{i,j} \geq \rho$.
% Combined with the fact that $A$ is $(\gamma,\gamma',\rho)$-dense, we have that every row of $A^{(0)}$ contains at least $\gamma n$ entries that are no less than $\rho/n$, and every column of $A^{(0)}$ contains at least $\gamma' n$ entries that are no less than $\rho/n$.

\section{Proof of Lemma \ref{lem-upper-bound-max-element-initial}}
To prove this lemma, it suffices to establish \eqref{eq-upper-bound-max-element-older}. The proof of \eqref{eq-upper-bound-max-element-older-column} is analogous.
Assume that $r_t(B) = 1$ for each $t\in [n]$.
Define
\begin{align}\label{eq-def-tau-aprime}
\tau = \max_{i,j\in [n]}A_{i,j},\quad A' = \diag\left(\frac{1}{\tau},\dots,\frac{1}{\tau}\right) \cdot A.
\end{align}
Thus, we have $A'_{i,j} \leq 1$ for each $i,j\in [n]$.
Furthermore, for each $i\in [n]$,
denote
\begin{align*}
\alpha_i \triangleq \abs{c_i(B) - 1},\quad \quad x_i \triangleq \tau\cdot X_{i,i}, \quad \quad y_i \triangleq Y_{i,i}.
\end{align*}
Fix one index $i$ of row and another index $j$ of column.
Then $B_{i,j} = x_i A'_{i,j} y_j$.
Furthermore,
we have 
\begin{align}
\sum_{\ell\in [n]}x_i A'_{i,\ell} y_{\ell} = r_i(B) = 1, \quad \sum_{k\in [n]}x_k A'_{k,j}y_j =c_j(B) \leq 1+\alpha_j.
\end{align}
Hence,
\begin{align}\label{eq-upperbound-xkakj-ylail}
\sum_{\ell \neq j}y_{\ell}A'_{i,\ell} = (1- B_{i,j})/x_i, \quad \sum_{k\neq i}x_k A'_{k,j} = \left(c_j(B) - B_{i,j}\right)/y_j \leq (1+\alpha_j - B_{i,j})/y_j.
\end{align}
Define
\begin{align}\label{eq-definition-R-C}
R \triangleq \{\ell \neq j | A'_{i,\ell} \geq \rho\},\quad C \triangleq \{k \neq i | A'_{k,j} \geq \rho\}.
\end{align}
By $A$ is $(\gamma,\gamma',\rho)$-dense and \eqref{eq-def-tau-aprime}, 
we have 
\[\abs{R} \geq \lceil \gamma n \rceil - 1,\quad \abs{C} \geq \lceil \gamma' n \rceil - 1, \quad \abs{C}+\abs{R} \geq (\gamma + \gamma') n - 2.\]
By $r_t(B) = 1$ and $\alpha_t = \abs{c_t(B) - 1}$ for each $t\in [n]$,
we have 
\begin{align*}
\sum_{k\in C}\sum_{\ell\in R}B_{k,\ell} + \sum_{k\in C}\sum_{\ell\not\in R}B_{k,\ell} = \abs{C}, \quad
\sum_{k\in C}\sum_{\ell\in R}B_{k,\ell} + \sum_{k\not\in C}\sum_{\ell\in R}B_{k,\ell} \geq \abs{R}-\sum_{\ell\in R}\alpha_\ell.
\end{align*}
Thus, we have 
\begin{equation}
\begin{aligned}\label{eq-n-largerthan-s1tos4}
n &= \sum_{k\in C}\sum_{\ell\in R}B_{k,\ell} + \sum_{k\in C}\sum_{\ell\not\in R}B_{k,\ell} + \sum_{k\not\in C}\sum_{\ell\in R}B_{k,\ell} + \sum_{k\not\in C}\sum_{\ell\not \in R}B_{k,\ell}\\
&\geq \left(\sum_{k\in C}\sum_{\ell\in R}B_{k,\ell} + \sum_{k\in C}\sum_{\ell\not\in R}B_{k,\ell}\right) + \left(\sum_{k\in C}\sum_{\ell\in R}B_{k,\ell}+\sum_{k\not\in C}\sum_{\ell\in R}B_{k,\ell}\right) +  B_{i,j} - \sum_{k\in C}\sum_{\ell\in R}B_{k,\ell} \\
&\geq  B_{i,j}+\abs{C} + \abs{R} -\sum_{\ell\in R}\alpha_\ell - \sum_{k\in C}\sum_{\ell\in R}B_{k,\ell}.
\end{aligned}
\end{equation}
In addition, 
\begin{equation}\label{eq-upperbound-s1b}
\begin{aligned}
&\quad \sum_{k\in C}\sum_{\ell\in R}B_{k,\ell} 
\\\left(\text{ by $B_{k,\ell} = x_kA'_{k,\ell}y_{\ell}$}\right)\quad &= \sum_{k\in C}\sum_{\ell \in R}x_kA'_{k,\ell}y_{\ell}
\\(\text{ by $A'_{i,j} \leq 1$})\quad      &\leq \left(\sum_{k\in C}x_k\right) \left(\sum_{\ell \in R}y_{\ell}\right) \\
(\text{ by \eqref{eq-definition-R-C}})\quad &\leq \frac{1}{\rho^2}\cdot \left(\sum_{k\in C}x_kA'_{k,j}\right) \left(\sum_{\ell \in R}y_{\ell}A'_{i,\ell}\right)\\
(\text{ by \eqref{eq-definition-R-C}})\quad       &\leq \frac{1}{\rho^2}\cdot\left(\sum_{k\neq i}x_kA'_{k,j}\right) \left(\sum_{\ell \neq j}y_{\ell}A'_{i,\ell}\right)\\
(\text{ by \eqref{eq-upperbound-xkakj-ylail}})\quad       &\leq \frac{(c_j(B) - B_{i,j})}{\rho y_j}\cdot\frac{(1-B_{i,j})}{\rho x_i}. 
\end{aligned}
\end{equation}
Recall that $\abs{C}+\abs{R} \geq (\gamma + \gamma') n - 2$.
Combined with \eqref{eq-n-largerthan-s1tos4} and \eqref{eq-upperbound-s1b}, we have
\begin{align*}
n 
\geq B_{i,j} + (\gamma + \gamma') n - 2-\sum_{\ell\in R}\alpha_\ell - \frac{(c_j(B) - B_{i,j})}{\rho y_j}\cdot\frac{(1-B_{i,j})}{\rho x_i}.
\end{align*}
Thus, we have
\begin{equation}\label{eq-nbij-geq-tgamman-bij}
\begin{aligned}
&\quad n B_{i,j}\\
&\geq B^2_{i,j} + \left((\gamma + \gamma') n - 2-\sum_{\ell\in R}\alpha_\ell\right)B_{i,j} -  \frac{(c_j(B) - B_{i,j})(1 - B_{i,j}) \cdot B_{i,j}}{\rho^2 x_iy_j}\\
(\text{by $B_{i,j} = x_iA'_{i,j}y_{j}\leq x_iy_i$})\quad &\geq B^2_{i,j} + \left((\gamma + \gamma') n - 2-\sum_{\ell\in R}\alpha_\ell\right)B_{i,j} -\frac{(c_j(B) - B_{i,j})(1 - B_{i,j})}{\rho^2} \\
  &= \left((\gamma + \gamma') n - 2 + \frac{1+c_j(B)}{\rho^2}-\sum_{\ell\in R}\alpha_\ell \right)B_{i,j} - \frac{c_j(B)}{\rho^2} + \frac{(\rho^2-1)B^2_{i,j}}{\rho^2}\\
\left(\text{by $\rho\leq 1$}\right)\quad &\geq \left((\gamma + \gamma') n + (c_j(B) - 1) -\sum_{\ell\in R}\alpha_\ell \right)B_{i,j} - \frac{c_j(B)}{\rho^2}.
\end{aligned}
\end{equation}
In addition, by 
\eqref{eq-def-alpha} we have
\[1 -c_j(B) + \sum_{\ell\in R}\alpha_\ell \leq \alpha_j + \sum_{\ell\in R}\alpha_\ell \leq \sum_{\ell\in [n]}\alpha_\ell \leq n\alpha(B).\]
Combined with $\gamma + \gamma' - \alpha(B)>0$ and $\rho\in (0,1]$, we have 
\[(\gamma + \gamma') n + (c_j(B) - 1) -\sum_{\ell\in R}\alpha_\ell - n \geq (\gamma + \gamma') n -  n\alpha(B) - n \geq (\gamma + \gamma' - 1 - \alpha(B))n>0.\]
Combined with \eqref{eq-nbij-geq-tgamman-bij}, we have
\begin{align*}
B_{i,j} \leq  \frac{c_j(B)}{\rho^2((\gamma + \gamma') n + (c_j(B) - 1) -\sum_{\ell\in R}\alpha_\ell - n ) } \leq \frac{c_j(B)}{\rho^2(\gamma + \gamma' - 1 - \alpha(B))n }.
\end{align*}
This establishes \eqref{eq-upper-bound-max-element-older}, thus completing the proof of the lemma.

\section{Proof of Lemma \ref{lem-lower-bound-elements}}
Assume for contradiction that there exists some $k,\ell\in [n]$ where $A_{k,\ell}\geq \rho/n$ and $B_{k,\ell} \leq \theta/n$.
Let $X \triangleq \diag(x_1,\dots,x_n)$ and $Y \triangleq \diag(y_1,\dots,y_n)$.
Thus, $B_{k,\ell} = x_kA_{k,\ell}y_{\ell}$.
Combined with $A_{k,\ell}\geq \rho/n$ and $B_{k,\ell} \leq \theta/n$,
we have $x_{k}y_{\ell} \leq \theta/\rho$.
Since the factorization 
$B=XAY$ is invariant under the rescaling 
$(X,Y)\rightarrow (cX,Y/c)$ for any 
$c>0$, there is one degree of freedom in the choice of the diagonal scalings. Hence, without loss of generality we may rescale so that 
$x_{k} = y_{\ell}$, which together with 
$x_{k}y_{\ell} \leq \theta/\rho$ implies
$x_{k} = y_{\ell} \leq \sqrt{\theta/\rho}$.

By $y_{\ell}\leq \sqrt{\theta/\rho}$, $A_{i,\ell} \leq a/n$ for each $i\in [n]$,  and
\[
\sum_{i\in [n]}B_{i,\ell} = \sum_{i\in [n]}x_iA_{i,\ell}y_{\ell}\geq d,\]
we have there exists some $i\in [n]$ where 
\begin{align}\label{eq-app-upper-bound-yj-rhogammathetaa}
x_{i}\geq \frac{d}{a}\sqrt{\frac{\rho}{\theta}}.
\end{align}
Let $C = \{j\in [n]| A_{i,j}\geq \rho/n\}$.
We have
\begin{align}\label{eq-app-upper-bound-bij}
\forall j\in C, \quad  B_{i,j} = x_iA_{i,j}y_j\leq \frac{b}{n}.
\end{align}
Thus, for each $j\in C$, we have 
\begin{equation}
\begin{aligned}\label{eq-app-xi-small}
&\quad y_j\\
(\text{by \eqref{eq-app-upper-bound-yj-rhogammathetaa} and \eqref{eq-app-upper-bound-bij}})\quad &\leq \frac{ab}{d\rho} \cdot \sqrt{\frac{\theta}{\rho}} \\
(\text{by \eqref{eq-def-theta-constant}})\quad & \leq \sqrt{\frac{((\gamma+\gamma')(1 -\alpha(B)) - 1)}{a}}.
\end{aligned}
\end{equation}
Similarly, 
by $x_{k}\leq \sqrt{\theta/\rho}$, $A_{k,j} \leq a/n$ for each $j\in [n]$,  and
\[
\sum_{j\in [n]}B_{k,j} = \sum_{j\in [n]}x_kA_{k,j}y_{j}\geq d,\]
we have there exists some $j\in [n]$ where 
\begin{align}\label{eq-app-upper-bound-yj-rhogammathetaa-new}
y_{j}\geq \frac{d}{a}\sqrt{\frac{\rho}{\theta}}.
\end{align}
Let $R = \{t\in [n]| A_{t,j}\geq \rho/n\}$.
We have
\begin{align}\label{eq-app-upper-bound-bij-new}
\forall t\in R, \quad  B_{t,j} = x_tA_{t,j}y_j\leq \frac{b}{n}.
\end{align}
Thus, for each $t\in R$, we have 
\begin{equation}
\begin{aligned}\label{eq-app-xi-small-new}
&\quad x_t\\
(\text{by \eqref{eq-app-upper-bound-yj-rhogammathetaa-new} and \eqref{eq-app-upper-bound-bij-new}})\quad &\leq \frac{ab}{d\rho} \cdot \sqrt{\frac{\theta}{\rho}} \\
(\text{by \eqref{eq-def-theta-constant}})\quad & \leq \sqrt{\frac{((\gamma+\gamma')(1 -\alpha(B)) - 1)}{a}}.
\end{aligned}
\end{equation}
Combining \eqref{eq-app-xi-small}, \eqref{eq-app-xi-small-new} with $A_{t,j}\leq a/n$ for each $t,j\in [n]$,
we have 
\begin{equation}\label{eq-app-upper-bound-rc}
\begin{aligned}
&\quad \sum_{t\in R}\sum_{j\in C}x_{t}A_{t,j}y_{j}\\
&< n^2\cdot \frac{a}{n}\cdot \frac{((\gamma+\gamma')(1 -\alpha(B)) - 1)}{a}\\
\quad &= ((\gamma+\gamma')(1 -\alpha(B)) - 1)n.
\end{aligned}
\end{equation}

Without loss of generality, assume that $c_t(B) = 1$ for each $t\in [n]$.
The case where $r_t(B) = 1$ for each $t\in [n]$ can be proved analogously.
By \eqref{eq-def-alpha},
we have 
\[(\abs{C}+\abs{R})\alpha(B)\geq (\gamma + \gamma')n\alpha(B) > n\alpha(B) \geq \sum_{r\in [n]}\abs{\sum_{t\in [n]}B_{r,t} -1} \geq \sum_{r\in C}\abs{\sum_{t\in [n]}B_{r,t} -1} \geq \abs{C} - \sum_{r\in C}\sum_{t\in [n]}B_{r,t}.\]
Thus, 
we have
\[\sum_{r\in C}\sum_{t\in [n]}B_{r,t} \geq \abs{C} -\alpha(B)\cdot(\abs{C}+\abs{R}).\]
In addition, 
\[\sum_{r\in [n]}\sum_{t\in [n]}B_{r,t} = \sum_{t\in [n]}\sum_{r\in [n]}B_{r,t}  = \sum_{t\in [n]}c_t(B) = n .\]
Hence, 
\[\sum_{r\not\in C}\sum_{t\in [n]}B_{r,t} \leq n - \abs{C} +\alpha(B)\cdot(\abs{C}+\abs{R}).\]
Moreover,
\[\sum_{r\in [n]}\sum_{t\not\in R}B_{r,t} =\sum_{t\not\in R}c_t(B) = n - \abs{R}.\]
Thus, we have 
\begin{align*}
\sum_{r\in R}\sum_{t\in C} B_{r,t} \geq n - \sum_{r\not\in R}\sum_{t\in [n]}B_{r,t} - \sum_{r\in [n]}\sum_{t\not\in C}B_{r,t} \geq n - (2n - (1 -\alpha(B))(\abs{R} + \abs{C})).
\end{align*}
Combined with $\abs{R} + \abs{C} \geq (\gamma'+\gamma) n$, we have
\begin{align*}
\sum_{r\in R}\sum_{t\in C} B_{r,t}\geq n - (2n - (1 -\alpha(B))\cdot (\gamma + \gamma')n) \geq ((\gamma+\gamma'))(1 -\alpha(B)) - 1)n.
\end{align*}
This is contradictory with \eqref{eq-app-upper-bound-rc}.
The lemma is proved.

\section{Proof of Lemma \ref{lem-lowerbound-reduction-rctooneone-new}}

\begin{proof}
By \eqref{eq-reduction-block-partitioned-matrix}, we have 
\begin{align}\label{eq-typlusnminustqminusone}
ty+(n-t)q-1 = \frac{s\,t\,(1-d)\,(n-t)\Bigl(t- d(n-t)\lambda^2-\bigl(t-s+d(s+t-n)\bigr)\lambda\Bigr)}
{\Bigl(nt+(n-t)\lambda\bigl(s+d(n-s)\bigr)\Bigr)\Bigl(t\bigl((n-s)+ds\bigr)+d\lambda\,n(n-t)\Bigr)}.
\end{align}
In addition, by \eqref{eq-def-lambda-s1-s2} we have 
\begin{align*}
\lambda \leq \frac{S_1+S_2}{S_2}.
\end{align*}
Combined with \eqref{eq-condition-c-upperbound}, we have
\begin{align*}
d\lambda (n-t)\leq \frac{(s-t)(n-t)}{6(n-t+\abs{s+t -n})} \leq \frac{s-t}{4}.
\end{align*}
Also by \eqref{eq-condition-c-upperbound}, we have
\begin{align*}
d(s+t - n) < \frac{s-t}{4}.
\end{align*}
Thus,
\[
d(n-t)\lambda^2+\bigl(t-s+d(s+t-n)\bigr)\lambda = \lambda\left(d(n-t)\lambda+\bigl(t-s+d(s+t-n)\bigr)\right)< \frac{(t-s)\lambda}{2} < 0.
\]
Hence, 
\begin{align*}
ty+(n-t)q-1 &> \frac{s\,t\,(1-d)\,(n-t)\,(2t+(s-t)\lambda)}
{2\Bigl(nt+(n-t)\lambda\bigl(s+d(n-s)\bigr)\Bigr)\Bigl(t\bigl((n-s)+ds\bigr)+d\lambda\,n(n-t)\Bigr)}.
\end{align*}
Moreover, by \eqref{eq-condition-c-upperbound} we have 
\begin{align*}
d \leq \frac{s-t}{6(n-t)} <\frac{1}{6}.
\end{align*}
Combined with \eqref{eq-def-lambda-s1-s2}, we have $\lambda > 1$ and $d\lambda < 1$.
Thus,
\begin{equation}\label{eq-tyminusnminustqminusone-positive}
\begin{aligned}
ty+(n-t)q-1 &> \frac{s\,t\,(1-d)\,(n-t)\,(2t+(s-t)\lambda)}
{2\Bigl(nt+(n-t)\lambda\bigl(s+d(n-s)\bigr)\Bigr)\Bigl(t\bigl((n-s)+ds\bigr)+d\lambda\,n(n-t)\Bigr)}\\
\left(\text{by }d<\frac{1}{6}\right)\quad &\geq \frac{s\,t\,\,(n-t)\,(2t+(s-t)\lambda)}
{4\Bigl(nt+(n-t)\lambda\bigl(s+n-s)\bigr)\Bigr)\Bigl(t\bigl(n-s+s\bigr)+\,n(n-t)\Bigr)}\\
& = \frac{st(n-t)(2t+(s-t)\lambda)}
{4n^3(t+(n-t)\lambda)} > \frac{st(n-t)(s-t)\lambda}
{4n^3(n-t)\lambda} = \frac{st(s-t)}
{4n^3} > 0.
\end{aligned}
\end{equation}

In addition, by \eqref{eq-typlusnminustqminusone} and $\lambda>1$, we have
\begin{equation}
\begin{aligned}\label{eq-ty-nminustqminusone-simplifiedupperbound}
ty+(n-t)q-1 
&< \frac{s\,t\,(1-d)\,(n-t)\Bigl(\bigl(s-t-ds\bigr)\lambda+t\Bigr)}
{\Bigl(nt+(n-t)\lambda\bigl(s+d(n-s)\bigr)\Bigr)\Bigl(t\bigl((n-s)+ds\bigr)+d\lambda\,n(n-t)\Bigr)}\\
\left(\text{by }d<\frac{1}{6}\right)\quad &\leq \frac{s\,t\,\,(n-t)\Bigl(\bigl(s-t-ds\bigr)\lambda+t\Bigr)}
{\Bigl(nt+(n-t)\lambda\bigl(s+d(n-s)\bigr)\Bigr)\Bigl(t\bigl((n-s)+ds\bigr)+d\lambda\,n(n-t)\Bigr)}.
\end{aligned}
\end{equation}
Furthermore, we have
\begin{align*}
\frac{\bigl(s-t-ds\bigr)\lambda+t}
{nt+(n-t)\lambda\bigl(s+d(n-s)\bigr)}\leq \frac{\bigl(s-t\bigr)\lambda+t}
{nt+(n-t)\lambda s} \leq \max\left\{\frac{s-t}
{(n-t)s},\frac{1}{n} \right\} = \frac{1}{n}.
\end{align*}
Hence, 
\begin{align*}
\frac{s\,t\,\,(n-t)\Bigl(\bigl(s-t-ds\bigr)\lambda+t\Bigr)}
{\Bigl(nt+(n-t)\lambda\bigl(s+d(n-s)\bigr)\Bigr)\Bigl(t\bigl(n-s+ds\bigr)+d\lambda\,n(n-t)\Bigr)}
&\leq \frac{s\,t\,\,(n-t)}
{t\bigl(n-s\bigr)} \cdot \frac{\bigl(s-t-ds\bigr)\lambda+t}
{\Bigl(nt+(n-t)\lambda\bigl(s+d(n-s)\bigr)\Bigr)}\\
& = \frac{s\,(n-t)}
{n\bigl(n-s\bigr)}.
\end{align*}
Combined with \eqref{eq-ty-nminustqminusone-simplifiedupperbound}, we have
\begin{align*}
ty+(n-t)q-1 < \frac{s\,(n-t)}
{n\bigl(n-s\bigr)}.
\end{align*}
Combined with \eqref{eq-tyminusnminustqminusone-positive},
we have
\begin{align}\label{eq-tyminusnminustqminusone-positive-upperbound}
\frac{st(s-t)}
{4n^3}< ty+(n-t)q-1 < \frac{s\,(n-t)}
{n\bigl(n-s\bigr)}.
\end{align}
Moreover, by \eqref{eq-reduction-block-partitioned-matrix} we have
\begin{align*}
\forall j> s, \quad  \sum_{i\leq n}A^{(2)}_{i,j} = ty+(n-t)q.
\end{align*}
Thus, \eqref{eq-lem-lowerbound-reduction-rctooneone-jges} is immediate by \eqref{eq-tyminusnminustqminusone-positive-upperbound}.

In the next, we prove \eqref{eq-lem-lowerbound-reduction-rctooneone-jleqs}.
By \eqref{eq-reduction-block-partitioned-matrix} we have 
\begin{align*}
n =\sum_{i\in [n]}\sum_{j\in [n]}A^{(2)}_{i,j} = \sum_{j>s}\sum_{i\leq n}A^{(2)}_{i,j} + \sum_{j\leq s}\sum_{i\leq n}A^{(2)}_{i,j} =s(tx+(n-t)z) + (ty+(n-t)q)(n-s).
\end{align*}
Hence, 
\begin{align*}
tx+(n-t)z - 1 = -\frac{(ty+(n-t)q - 1)(n-s)}{s} 
\end{align*}
Combined with \eqref{eq-tyminusnminustqminusone-positive-upperbound},
we have 
\begin{align}\label{eq-txnminustzminusone-bound}
\frac{t-n}{n}< tx+(n-t)z-1 < 0.
\end{align}
Moreover, by \eqref{eq-reduction-block-partitioned-matrix} we have
\begin{align*}
\forall j\leq s, \quad  \sum_{i\leq n}A^{(2)}_{i,j} = tx+(n-t)z.
\end{align*}
Thus, \eqref{eq-lem-lowerbound-reduction-rctooneone-jleqs} is immediate by \eqref{eq-txnminustzminusone-bound}.

Then we prove \eqref{eq-z-minimumelement-bound}.
Recall that $\lambda > 1$.
Combined with \eqref{eq-reduction-block-partitioned-matrix}, we have
\begin{align*}
z = \frac{d\bigl(t+\lambda(n-t)\bigr)}
{\,t\bigl(n-s+ds\bigr)+d\lambda n(n-t)\,} < \frac{d\bigl(t\lambda+\lambda(n-t)\bigr)}
{\,t\bigl(n-s\bigr)\,} = \frac{dn\lambda}{t(n-s)}.
\end{align*}
Thus, \eqref{eq-z-minimumelement-bound} is immediate.

At last, we prove \eqref{eq-y-keyelement-bound}.
By \eqref{eq-reduction-block-partitioned-matrix}, we have
\begin{align*}
y = \frac{1}{n}\cdot \frac{nt+dn\lambda(n-t)}
{\,nt+\lambda(n-t)\bigl(s+d(n-s)\bigr)\,} = \frac{1}{n}\cdot 
\frac{nt+dn\lambda(n-t)}
{nt+dn\lambda(n-t) + \lambda(n-t)s(1-d)}.
\end{align*}
In addition, by $d<1/6$ and $\lambda > 1$, we have
\begin{align*}
\lambda(n-t)s(1-d) > \frac{5s(n-t)\lambda n^2}{6n^2}  = \frac{5s(n-t)}{6n^2} \cdot \lambda n^2 \geq \frac{5s(n-t)}{6n^2} \cdot (nt+dn\lambda(n-t)).
\end{align*}
Combining the above two inequalities, we have
\begin{align*}
y <\frac{1}{n}\cdot 
\frac{6n^2}
{6n^2 + 5s(n-t)} = \frac{6n}
{6n^2 + 5s(n-t)}.
\end{align*}
Thus, \eqref{eq-y-keyelement-bound} is proved. 
The lemma is proved.
\end{proof}